\newtheorem{Def}{Definition}
\newtheorem{Thm}[Def]{Theorem}
\newtheorem{Lem}[Def]{Lemma}
\newcommand{\1}{\mathds{1}}
\newcommand{\bra}[1]{\langle {#1} \vert}
\newcommand{\ket}[1]{\vert {#1} \rangle}
\newcommand{\dket}[1]{\vert {#1} \rangle\!\rangle}
\newcommand{\braket}[2]{\langle {#1} \vert {#2} \rangle}
\newcommand{\ketbra}[2]{\vert {#1} \rangle\!\langle {#2} \vert}
\newcommand{\dketbra}[1]{\vert {#1} \rangle\!\rangle\!\langle\!\langle {#1} \vert}
\newcommand{\Tr}[0]{{\mathrm{Tr}}}
\newcommand{\dd}[0]{{\mathrm{d}}}
\renewcommand{\Im}[0]{{\mathrm{Im}}}
\newcommand{\doublewidetilde}[1]{{%
  \mathpalette\double@widetilde{#1}%
}}
\newcommand{\double@widetilde}[2]{%
  \sbox\z@{$\m@th#1\widetilde{#2}$}%
  \ht\z@=.9\ht\z@
  \widetilde{\box\z@}%
}
\newcommand{\map}[1]{\widetilde{\mathcal{#1}}}
\newcommand{\supermap}[1]{\doublewidetilde{\mathcal{#1}^{}}}
\begin{document}

\title{Universal construction of decoders from encoding black boxes}
\author{Satoshi Yoshida}
\email{satoshi.yoshida@phys.s.u-tokyo.ac.jp}
\affiliation{Department of Physics, Graduate School of Science, The University of Tokyo, Hongo 7-3-1, Bunkyo-ku, Tokyo 113-0033, Japan}
\orcid{0000-0002-0521-5209}
\author{Akihito Soeda}
\affiliation{Department of Physics, Graduate School of Science, The University of Tokyo, Hongo 7-3-1, Bunkyo-ku, Tokyo 113-0033, Japan}
\affiliation{Principles of Informatics Research Division, National Institute of Informatics, 2-1-2 Hitotsubashi, Chiyoda-ku, Tokyo 101-8430, Japan}
\affiliation{Department of Informatics, School of Multidisciplinary Sciences, SOKENDAI (The Graduate University for Advanced Studies), 2-1-2 Hitotsubashi, Chiyoda-ku, Tokyo 101-8430, Japan}
\orcid{0000-0002-7502-5582}
\author{Mio Murao}
\affiliation{Department of Physics, Graduate School of Science, The University of Tokyo, Hongo 7-3-1, Bunkyo-ku, Tokyo 113-0033, Japan}
\affiliation{Trans-scale Quantum Science Institute, The University of Tokyo, Bunkyo-ku, Tokyo 113-0033, Japan}
\orcid{0000-0001-7861-1774}

\maketitle

\begin{abstract}
Isometry operations encode the quantum information of the input system to a larger output system, while the corresponding decoding operation would be an inverse operation of the encoding isometry operation.  Given an encoding operation as a black box from a $d$-dimensional system to a $D$-dimensional system, we propose a universal protocol for isometry inversion that constructs a decoder from multiple calls of the encoding operation.  This is a probabilistic but exact protocol whose success probability is independent of $D$. For a qubit ($d=2$) encoded in $n$ qubits, our protocol achieves an exponential improvement over any tomography-based or unitary-embedding method, which cannot avoid $D$-dependence.
We present a quantum operation that converts multiple parallel calls of any given isometry operation to random parallelized unitary operations, each of dimension $d$.  Applied to our setup, it universally compresses the encoded quantum information to a $D$-independent space, while keeping the initial quantum information intact.  This compressing operation is combined with a unitary inversion protocol to complete the isometry inversion.
We also discover a fundamental difference between our isometry inversion protocol and the known unitary inversion protocols by analyzing isometry complex conjugation and isometry transposition.  General protocols including indefinite causal order are searched using semidefinite programming for any improvement in the success probability over the parallel protocols. We find a sequential ``success-or-draw'' protocol of universal isometry inversion for $d = 2$ and $D = 3$, thus whose success probability exponentially improves over parallel protocols in the number of calls of the input isometry operation for the said case.
\end{abstract}

\section{Introduction}
Universal transformations of quantum states have played an essential role in the fundamental understanding of quantum information theory and its applications \cite{Nielsen2010Quantum}. Recently,  \emph{higher-order quantum transformations}, namely, universal transformations of quantum  operations given as black boxes, have been studied in the contexts of processing unitary operations \cite{Chiribella2005Optimal, Bisio2010Optimal, Sedlak2019Optimala, Yang2020Optimal, Sedlak2020Probabilistic, Chiribella2008Optimal, Bisio2014Optimal, Dur2015Deterministic, Chiribella2015Universal, Soleimanifar2016Nogo, Micuda2016Experimental, Bisio2009Optimalb, Bisio2010Informationdisturbance, Miyazaki2019Complex, Chiribella2016Optimal, Navascues2018Resetting, Quintino2019Reversing, Quintino2019Probabilistic, Quintino2021Deterministic,Bartlett2009Quantum, Araujo2014Quantum, Bisio2016Quantum, Dong2020Controlled}, non-Markovian quantum process \cite{Milz2018Reconstructing, Milz2018Entanglement, Pollock2018Operational, Pollock2018Tomographically, Pollock2018NonMarkovian, Sakuldee2018NonMarkovian, Jorgensen2019Exploiting, Taranto2019Quantum, Taranto2019Structure, Milz2019Completely, Milz2020When, Milz2021Quantum, Giarmatzi2021Witnessing}, and dynamical resource theory \cite{theurer2019quantifying, Chitambar2019Quantum, Gour2019How, Liu2019Resource, Gour2020Dynamical, Gour2020Dynamicala, Gour2021Entanglement, Liu2020Operational, Yuan2020Oneshot, Theurer2020Quantifying, Regula2021Fundamental, Chen2020Entanglementbreaking, Kristjansson2020Resource, Hsieh2021Communication, Gour2021Uniqueness}. They can also be interpreted as quantum functional programming \cite{Altenkirch2005Functional, Ying2016Foundations}.
The concept of higher-order quantum transformation was initially introduced as single-input quantum supermaps \cite{Chiribella2008Transforming},  and a multiple-input version was introduced as quantum combs \cite{Chiribella2009Theoretical}, which are realizable by quantum circuits with a fixed ordering of input operations. A similar concept was studied as channels with memory \cite{Kretschmann2005Quantum} and quantum strategies \cite{gutoski2007toward}.

In particular, higher-order quantum transformations of unitary operations have been extensively studied for aiming to utilize in quantum information processing  (e.g., estimation of group transformations \cite{Chiribella2005Optimal}, quantum learning of unitary operations \cite{Bisio2010Optimal, Sedlak2019Optimala, Yang2020Optimal, Sedlak2020Probabilistic}, cloning of unitary operations \cite{Chiribella2008Optimal, Bisio2014Optimal, Dur2015Deterministic, Chiribella2015Universal, Soleimanifar2016Nogo, Micuda2016Experimental}, process tomography \cite{Bisio2009Optimalb, Bisio2010Informationdisturbance}, unitary complex conjugation \cite{Miyazaki2019Complex}, unitary inversion \cite{Chiribella2016Optimal, Navascues2018Resetting, Quintino2019Reversing, Quintino2019Probabilistic, Quintino2021Deterministic} and unitary controllization \cite{Araujo2014Quantum, Bisio2016Quantum, Dong2020Controlled}).
Isometry operations are also frequently used in quantum protocols, as they preserve information of the input state similarly to unitary operations. For instance, encoding of quantum information is represented by an isometry operation \cite{Nielsen2010Quantum}. In addition, many quantum algorithms use fixed auxiliary states and such algorithms can also be considered to be utilizing isometry operations (e.g.,  Harrow-Hassidim-Lloyd (HHL) algorithm \cite{Harrow2009Quantum}).  Despite their importance, higher-order quantum transformations of isometry operations are not well investigated yet.    

In this work, we study one of the fundamental higher-order quantum transformation tasks for isometry operations, namely,  \emph{isometry inversion}. 
Isometry inversion is a task to implement the inverse map of an input isometry operation, interpreted as retrieving quantum information encoded by the isometry operation. Such  a retrieval of quantum information is widely studied in the context of quantum error correction \cite{Nielsen2010Quantum}, quantum secret sharing \cite{Gottesman2000Theory}, quantum communication \cite{Wilde2017Classical}, and uncomputation \cite{Bennett1973Logical, Aaronson2015Classification}. Such studies usually assume that the complete descriptions of encoding operations are given.  Contrary, this work considers a {\it universal} protocol without knowing the descriptions except for the dimensions of the input system and the output system.
The universal protocol implements the inverse operation \emph{probabilistically} but \emph{exactly}.

The special case of probabilistic exact isometry inversion, namely probabilistic exact unitary inversion \cite{Quintino2019Reversing, Quintino2019Probabilistic}, is known. This implementation relies on the existence of two protocols: (deterministic exact) unitary complex conjugation \cite{Miyazaki2019Complex} and (probabilistic exact) unitary transposition \cite{Quintino2019Probabilistic}. However, this strategy is not directly applicable for isometry inversion. The key idea for unitary complex conjugation protocol presented in Ref.~\cite{Miyazaki2019Complex} is to utilize the knowledge of representation theory of the unitary group, but this idea  cannot be applied to isometry complex conjugation since the set of isometry operations does not form a group. In fact, we show below a no-go theorem for isometry complex conjugation.

The most trivial way to implement isometry inversion in the black box setting is to obtain  a classical description of the isometry operation by quantum process tomography \cite{Nielsen2010Quantum} and then implement the inverse map based on the description, namely, using the ``measure-and-prepare'' strategy \cite{horodecki2003entanglement, Bisio2010Optimal}.  However, known quantum process tomography protocols \cite{Mohseni2008Quantum} require  a  $D$-dependent number of experiments to obtain an approximate description of an isometry operation $\widetilde{\mathcal{V}}_{d,D}$ from a $d$-dimensional system to a $D$-dimensional system  (see the discussion in Section~\ref{subsec:comparison}). Another straightforward way is to embed an isometry operation  $\widetilde{\mathcal{V}}_{d,D}$ in a $D$-dimensional unitary operation and then apply a universal probabilistic exact unitary inversion protocol \cite{Quintino2019Reversing}. However, the success probability of such a protocol cannot avoid the dependence on $D$, either.   In particular, for isometry operations encoding quantum information of a qudit (a $d$-dimensional system) into $n$ qudits (a $d^n$-dimensional system), the exponential cost in $n$ due to the dimensionality of $D=d^n$  may seem to be inevitable for implementing isometry inversion. 

Nevertheless, we present a probabilistic but exact protocol for isometry inversion of which success probability does not depend on $D$. Due to this property, our protocol can significantly outperform the protocols based on the two strategies mentioned above.   To compare with the protocol based on unitary inversion, we consider an isometry operation that encodes a qubit into five qubits, i.e., $d=2$ and $D=2^5$. The unitary inversion requires at least $D-1=31$ calls to obtain a non-zero success probability  \cite{Quintino2019Reversing}, but our protocol achieves a success probability $p=87\%$ by $20$ calls.  Compared with the protocol based on quantum process tomography, our protocol can implement isometry inversion approximately within a fixed error $\epsilon$ by a $D$-independent number of calls.  This comparison exhibits the potential power of a higher-order quantum transformation that directly transforms a black box operation without evaluating its classical description.

We also clarify a crucial difference between the unitary inversion protocols presented in Ref.~\cite{Quintino2019Reversing} and the isometry inversion protocol. Reference \cite{Quintino2019Reversing} presents a systematic construction of unitary inversion protocols concatenating unitary transposition \cite{Quintino2019Probabilistic} and unitary complex conjugation \cite{Miyazaki2019Complex}. The unitary complex conjugation protocol utilizes the fact that the complex conjugate representation of the unitary group is unitarily equivalent to the antisymmetric subspace of the tensor representation of the unitary group \cite{Miyazaki2019Complex}. The unitary transposition protocol presented in Ref.~\cite{Quintino2019Probabilistic} uses a variant of the gate teleportation \cite{Gottesman1999Demonstrating} or the probabilistic port-based teleportation \cite{Ishizaka2008Asymptotica, Studzinski2017Portbased}. However, we show that isometry inversion cannot be implemented by concatenating the corresponding tasks since no probabilistic exact isometry complex conjugation is  possible for $D\geq 2d$. We also show that any isometry inversion protocol  transposing a ``pseudo complex conjugate'' map (see Section \ref{subsec:isometry_pseudo_cc}) of an isometry operation by a variant of gate teleportation \cite{Gottesman1999Demonstrating} is less efficient than our protocol. 

The key idea of our isometry inversion protocol is a quantum operation that universally  compresses the $D$-dimensional output spaces of the isometry operations into $d$-dimensional quantum systems. 
We first extend the irreducible decomposition of the tensor product of {unitary operators} known as the Schur-Weyl duality to isometry operators. We show that the tensor product of  isometry operators also admits a block diagonal decomposition, despite  the isometry operators not forming a group. This decomposition identifies relevant and irrelevant components to retrieve quantum information encoded by the isometry operation. The compressing quantum operation discards the irrelevant component. We utilize the compressing quantum operation to convert unitary inversion protocols to isometry inversion protocols avoiding the $D$-dependence of the success probability and the no-go theorem for isometry complex conjugation.

Our isometry inversion protocol uses input operations in parallel. Such parallel protocols form an essential class of higher-order quantum transformation because parallelization is a common technique to reduce the circuit depth \cite{Gyongyosi2020Circuit}.
However, more general protocols than parallel ones can be helpful to improve the success probability. In this work, we consider sequential protocols using input operations in a fixed ordering and general protocols including indefinite causal order \cite{Oreshkov2012Quantum, Chiribella2013Quantum, Araujo2015Witnessing, Wechs2019Definition, Bisio2019Theoretical, Yokojima2021Consequences, Vanrietvelde2021Routed} in addition to parallel ones. To see the performance improvement in our setting, we conduct semidefinite programming (SDP) to obtain the optimal success probability of parallel, sequential, and general protocols. 

The rest of this paper is organized as follows. Section \ref{sec:problem_setting} states the problem setting for implementing isometry transposition, isometry complex conjugation, and isometry inversion.
Section \ref{sec:isometry_inversion} presents the main result of this paper, constructing a parallel protocol for isometry inversion by investigating the compressing quantum operation. Section \ref{sec:difference} discusses the difference between our isometry inversion protocol and the previous {work} on unitary inversion. Section \ref{sec:SDP} shows numerical results on the optimal success probability of parallel, sequential, and general protocols including indefinite causal order for isometry inversion, isometry  (pseudo) complex conjugation, and isometry transposition. Section \ref{sec:conclusion} concludes the paper.

\section{Problem Setting}
\label{sec:problem_setting}
\subsection{Inverse maps of isometry operations}
\label{sec:decoding}
A $d$-dimensional quantum system is represented by a Hilbert space $\mathcal{H}=\mathbb{C}^d$ and a state of the system is represented by a density operator (a positive semi-definite operator with unit trace) $\rho$ on $\mathcal{H}$.  A state is called pure if its density operator has rank 1, i.e., $\rho=\ketbra{\psi}{\psi}$ for $\ket{\psi}\in \mathcal{H}$. The set of linear operators on $\mathcal{H}$ is denoted by $\mathcal{L}(\mathcal{H})$ and the set of linear operators from $\mathcal{H}$ to $\mathcal{H^\prime}$ is denoted by $\mathcal{L}(\mathcal{H} \to \mathcal{H}^\prime)$. When we explicitly specify the dimension of the set of linear operators, we denote the corresponding sets by  $\mathcal{L}(\mathbb{C}^d)$ and $\mathcal{L}(\mathbb{C}^d \to \mathbb{C}^D)$, respectively, for $\mathcal{H}=\mathbb{C}^d$ and $\mathcal{H}^\prime=\mathbb{C}^d$.
We only consider quantum systems represented by finite-dimensional Hilbert spaces in this paper.   

An isometry operation transforms a pure input state in a $d$-dimensional system $\mathcal{H}=\mathbb{C}^d$ to a pure output state in a $D$-dimensional system $\mathcal{H}'=\mathbb{C}^D$ with $D \geq d$ where the transformation preserves the inner product of two input states.  An isometry operation is regarded to encode (and spread for the case of $d<D$) quantum information represented by a $d$-dimensional quantum state into a $D$-dimensional state.  Unitary operations are special cases of isometry operations with $d=D$.  

Formally, an isometry operation $\widetilde{\mathcal{V}}: \mathcal{L}(\mathcal{H}) \to \mathcal{L}(\mathcal{H}^\prime)$ for $\mathcal{H}=\mathbb{C}^d$ and $\mathcal{H}^\prime=\mathbb{C}^D$ is a completely positive trace preserving (CPTP) map given as $\widetilde{\mathcal{V}}(\rho)=V\rho V^{\dagger}$ in terms of an isometry operator $V: \mathcal{H} \to \mathcal{H}^\prime$, an element of the set of isometry operators $\mathbb{V}_{\mathrm{iso}} (d, D)$ defined by
\begin{align}
    \mathbb{V}_\mathrm{iso} (d,D)
    &\coloneqq \{V \in \mathcal{L}  {(\mathbb{C}^d \to \mathbb{C}^D)}  |V^\dagger V=\1_d\},
\end{align}
where  $\1_{d}$ is the identity operator on {$\mathcal{H}=\mathbb{C}^d$}, and $V^\dagger: \mathcal{H}^\prime \to \mathcal{H}$ is the adjoint of $V$.  In this notation, the tilde symbol on top of $\mathcal{V}$ represents a linear map\footnote{This convention is adopted from Ref.~\cite{Quintino2019Probabilistic}.}.   We denote a set of $d$-dimensional unitary operators by $\mathbb{U}(d)$, which is equivalent to $\mathbb{V}_{\mathrm{iso}} (d, d)$, and a unitary operation corresponding to a unitary operator $U$ by $\widetilde{\mathcal{U}}$.

We define the most general map which can decode the states encoded by an isometry operation $\widetilde{\mathcal{V}}$.  We consider a completely positive  (CP) map $\widetilde{\mathcal{V}}_{\mathrm{inv}}: \mathcal{L}(\mathcal{H}^\prime)\to \mathcal{L}(\mathcal{H})$ that satisfies 
\begin{align}
    \widetilde{\mathcal{V}}_{\mathrm{inv}}\circ \widetilde{\mathcal{V}}=\widetilde{\1}_{d},
    \label{eq:inverse}
\end{align}
for an isometry operation $\widetilde{\mathcal{V}}$ corresponding to $V\in \mathbb{V}_\mathrm{iso} (d,D)$, where $\widetilde{\1}_d$ is the identity operation on $\mathcal{L}(\mathcal{H})$, which is defined by $\widetilde{\1}_d (\rho)=\rho$ for all density matrices $\rho$ of a $d$-dimensional system. 
We refer to a map $\widetilde{\mathcal{V}}_{\mathrm{inv}}$  satisfying Eq.~(\ref{eq:inverse}) as an {\it inverse map} of $\widetilde{\mathcal{V}}$.  Note that $\widetilde{\mathcal{V}}_{\mathrm{inv}}$ is not necessarily a trace preserving map, while the composition of $\widetilde{\mathcal{V}}_{\mathrm{inv}}$ and $\widetilde{\mathcal{V}}$ is trace preserving.  An inverse map is not necessarily the adjoint map $\widetilde{\mathcal{V}}^{\dagger}$ defined as $\widetilde{\mathcal{V}}^{\dagger}(\rho)=V^{\dagger}\rho V$,  either. For instance, a  CP map $\widetilde{\mathcal{V}}^\prime_\alpha$ defined by
\begin{align}
    \widetilde{\mathcal{V}}^\prime_{\alpha}(\rho)\coloneqq V^{\dagger} \rho V+\alpha  \frac{I_d}{d}
    \Tr
    \left[
    \Pi_{(\Im V)^{\perp}}\rho
    \right]
    \label{eq:def_f(V)}
\end{align}
is also an inverse map of $\widetilde{\mathcal{V}}$ for $\alpha\geq 0$, where $\Pi_{(\Im V)^{\perp}}$ is a projector onto the orthogonal subspace of  the image of $V: \mathcal{H} \to \mathcal{H^\prime}$ denoted by $\Im V\coloneqq V(\mathcal{H})$. The map  $\widetilde{\mathcal{V}}^\prime_\alpha$ is not trace preserving for $\alpha\neq 1$, but the composition of  $\widetilde{\mathcal{V}}$ and $\widetilde{\mathcal{V}}^\prime_\alpha$ is trace preserving because the second term of Eq.~(\ref{eq:def_f(V)}) vanishes for all $\rho \in \mathcal{L}(\Im V)$.

\subsection{Higher-order quantum transformations of isometry operations: parallel protocols for probabilistic exact tasks}
\label{sec:supermap_superinstrument_higher-order}

In this paper, we present a probabilistic but universal and exact protocol to construct an inverse map $\widetilde{\mathcal{V}}_{\mathrm{inv}}$ from {\it finite calls} of an unknown isometry operation $\widetilde{\mathcal{V}}$ given as a black box.  Such a protocol can be regarded as implementing a {\it higher-order quantum transformation} of an isometry operation, similarly to the preceding works considered higher-order quantum transformations of a unitary operation \cite{Quintino2019Reversing, Quintino2019Probabilistic}, of which formulation is based on the notion of {\it quantum supermaps and superinstrument} \cite{Chiribella2008Transforming}.  

We first introduce the notations of  quantum supermaps \cite{Chiribella2008Transforming} describing higher-order deterministic quantum transformations.   A quantum supermap is a linear completely CPTP preserving transformation from an input map to an output map.  We consider a $k$-input supermap $\doublewidetilde{\mathcal{C}^{}}$ that transforms $k$  input maps $\widetilde{\Lambda}^{(i)}_{\mathrm{in}}:\mathcal{L}(\mathcal{I}_i)\to \mathcal{L}(\mathcal{O}_i)$ for $i\in \{1, \cdots, k\}$, where $\mathcal{I}_i$ and $\mathcal{O}_i$ represent the input Hilbert space and the output Hilbert space of the $i$-th input map $\widetilde{\Lambda}^{(i)}_{\mathrm{in}}$, respectively, to  an output map $\widetilde{\Lambda}_{\mathrm{out}}:\mathcal{L}(\mathcal{P})\to \mathcal{L}(\mathcal{F})$, where $\mathcal{P}$ and $\mathcal{F}$ represent the input Hilbert space (also referred to as the past space) and the output Hilbert space (also referred to as the future space), respectively, of $\widetilde{\Lambda}_{\mathrm{out}}$.   The double tilde symbol on top of $\mathcal{C}$ represents a linear supermap (or a linear superinstrument, which will be introduced in the next paragraph)\footnote{This convention is also adopted from Ref.~\cite{Quintino2019Probabilistic}.}.

\begin{figure}[tbh]
    \centering
    \includegraphics[width=0.5\linewidth]{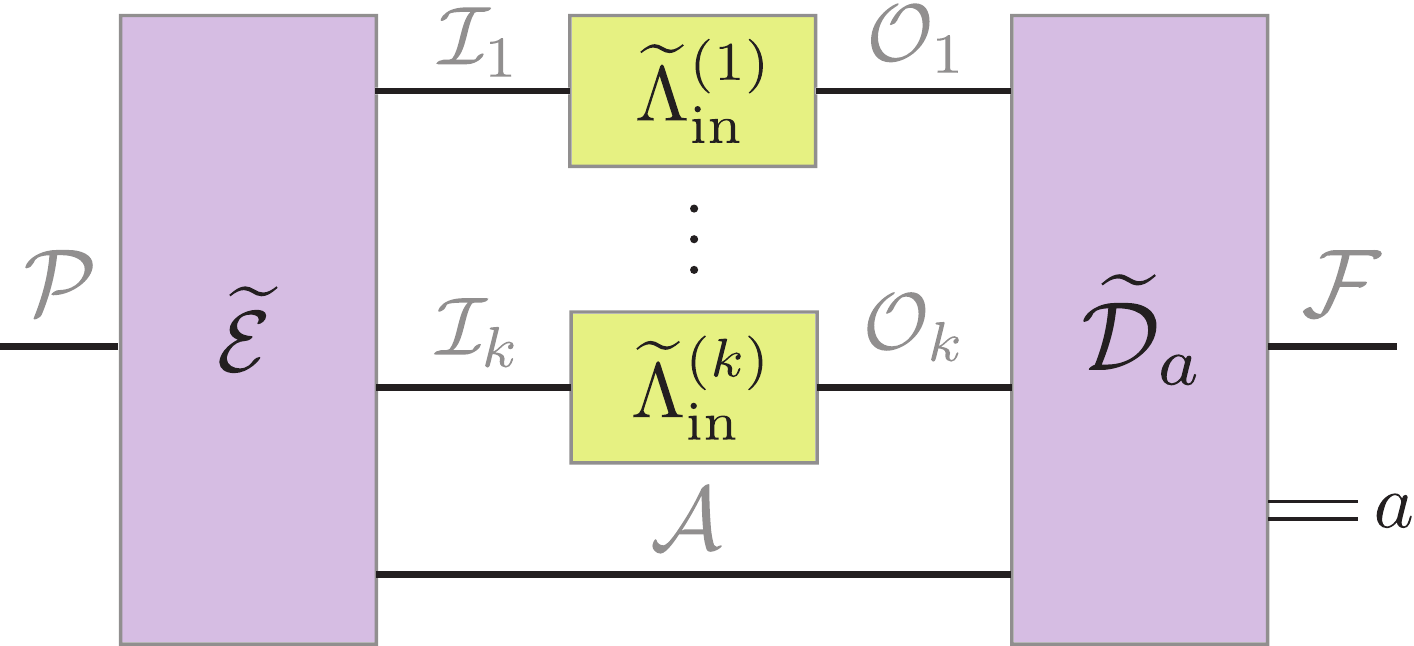}
    \caption{A quantum circuit representation of a parallel superinstrument  $\{\doublewidetilde{\mathcal{C}^{}}_a\}$ defined in Eq.~(\ref{eq:parallel_superinstrument}), where $\widetilde{\Lambda}_{\mathrm{in}}^{(i)}\;(i\in \{1, \cdots, k\})$ are input maps, $\widetilde{\mathcal{E}}$ is a CPTP map, and $\{\widetilde{\mathcal{D}}_a\}$ is a quantum instrument. A wire corresponds to a  Hilbert space and a box corresponds to an operation.  The double line in this figure represents the classical outcome of the measurement.}
    \label{fig:parallel_comb}
\end{figure}

Since we mainly focus on probabilistic parallel protocols, we introduce the notions of a {\it parallel} superinstrument representing a higher-order probabilistic quantum transformation where all input maps are called in parallel \cite{Quintino2019Probabilistic}.   For a parallel protocol, it is convenient to define the joint input Hilbert space $\mathcal{I}\coloneqq \bigotimes_{i=1}^{k}\mathcal{I}_i$ and the joint output Hilbert space as $\mathcal{O}\coloneqq \bigotimes_{i=1}^{k}\mathcal{O}_i$, as well as the joint input map $\widetilde{\Lambda}_{\mathrm{in}}\coloneqq \bigotimes_{i=1}^{k}\widetilde{\Lambda}^{(i)}_{\mathrm{in}}$.   Using these notations, a parallel superinstrument is a set of $k$-input supermaps $\{\doublewidetilde{\mathcal{C}^{}}_{a}\}: [\mathcal{L}(\mathcal{I}) \to \mathcal{L}(\mathcal{O})] \to [\mathcal{L}(\mathcal{P}) \to \mathcal{L}(\mathcal{F})]$  given by
\begin{align}
    \doublewidetilde{\mathcal{C}^{}}_{a}(\widetilde{\Lambda}_{\mathrm{in}})=\widetilde{\mathcal{D}}_a\circ 
    \left(
    \widetilde{\Lambda}_{\mathrm{in}} \otimes \widetilde{\1}_{\mathcal{A}}
    \right)\circ \widetilde{\mathcal{E}},\label{eq:parallel_superinstrument}
\end{align}
where $\widetilde{\mathcal{E}}: \mathcal{L}(\mathcal{P})\to \mathcal{L}(\mathcal{I}\otimes \mathcal{A})$ is a  CPTP map, $\{\widetilde{\mathcal{D}}_a\}: \mathcal{L}(\mathcal{O}\otimes \mathcal{A})\to \mathcal{L}(\mathcal{F})$ is a quantum instrument\footnote{ A quantum instrument $\{\widetilde{\Psi}_a\}$ is a set of CP maps such that $\sum_a \widetilde{\Psi}_a$ is a CPTP map \cite{Wilde2017Classical}.} and $\mathcal{A}$ is an auxiliary Hilbert space (see Figure~\ref{fig:parallel_comb}). We note that a parallel protocol using $k$ copies of an input isometry operation does not require the $k$ copies to be available simultaneously during the protocol. We may substitute by using a single black box isometry operation  repeatedly $k$ times.

This work also considers constructions of probabilistic exact parallel protocols for the complex conjugate map and the transposition map of an isometry operation to compare the protocol with that of unitary inversion.  For this sake, we introduce a notation of general higher-order probabilistic quantum transformation of an isometry operation $\widetilde{\mathcal{V}}$ to another map which is a function of $\widetilde{\mathcal{V}}$ denoted as $ \widetilde{f(V)}$ \footnote{We write the argument of the function $f$ as $V$ instead of $\map{V}$ to avoid a duplicate use of a tilde.}.  Note that the function $f$ is not necessarily linear in terms of a single $\widetilde{\mathcal{V}}$ , but has to be linear in terms of $\map{V}^{\otimes k}$, and not necessary to be trace preserving either.

A probabilistic exact  parallel protocol of a higher-order quantum transformation of  $k$ calls of an unknown isometry operation $\widetilde{\mathcal{V}}$ to $\widetilde{f(V)}$ is formulated as follows.
Let $\{\doublewidetilde{\mathcal{S}^{}}, \doublewidetilde{\mathcal{F}^{}}\}: [\mathcal{L}(\mathcal{I}) \to \mathcal{L}(\mathcal{O})]\to [\mathcal{L}(\mathcal{P}) \to \mathcal{L}(\mathcal{F}{)]}$ be a parallel superinstrument, where $\doublewidetilde{\mathcal{S}^{}}$ denotes the successful transformation and $\doublewidetilde{\mathcal{F}^{}}$ denotes the failure transformation. We say that  a parallel protocol $\doublewidetilde{\mathcal{S}^{}}$ is a probabilistic exact protocol to implement $ \widetilde{f(V)}$ from  $k$ calls of $\widetilde{\mathcal{V}}$ if
\begin{align}
    \doublewidetilde{\mathcal{S}^{}}(\widetilde{\mathcal{V}}^{\otimes k})=p_{\mathrm{succ}} \widetilde{f(V)}\;\;\;(\forall V\in  \mathbb{V}_\mathrm{iso} (d,D))\label{eq:def_success_probability}
\end{align}
holds.  We  require that $p_\mathrm{succ}$ is independent of the input isometry operation $\map{V}$ and the input quantum state $\rho_\mathrm{in}$, and call it the \emph{success probability} of $ \widetilde{f(V)}$  for the following reasons.
 Note that the probability to obtain the successful measurement outcome is $p_{\text{succ}} \Tr [ \widetilde{f(V)}( \rho_\mathrm{in})]$, where $ \rho_\mathrm{in}$ is the input state. The probability to obtain the successful measurement can be divided into two terms; one is the success probability of the protocol denoted by $p_{\text{succ}}$, and the other is the success probability of the map $ \widetilde{f(V)}$ denoted by $\Tr [ \widetilde{f(V)}( \rho_\mathrm{in})]$.
For the case of $ \widetilde{f(V)}=\widetilde{\mathcal{V}}_{\mathrm{inv}}$ (isometry inversion), $p_{\mathrm{succ}}$ coincides with the probability to obtain the successful measurement outcome when the input quantum state is in the image $\Im \map{V}$, since $\widetilde{\mathcal{V}}_{\mathrm{inv}}\circ \widetilde{\mathcal{V}}$ is trace preserving. Then, the success probability $p_{\mathrm{succ}}$ for isometry inversion represents the probability to obtain the quantum state $\rho$ when the input state is $\widetilde{\mathcal{V}}(\rho)$.

We summarize a list of $ \widetilde{f(V)}$ discussed in this paper.  Isometry inversion (\ref{item:isometry_inversion}) is the main topic of this work, and other tasks (\ref{item:isometry_cc}, \ref{item:isometry_pcc}, \ref{item:isometry_transposition}) are analyzed to compare isometry inversion with the previous works \cite{Quintino2019Probabilistic, Quintino2019Reversing} on unitary inversion.

\begin{enumerate}
    \item {\it Isometry inversion}:\label{item:isometry_inversion} $ \widetilde{f(V)}=\widetilde{\mathcal{V}}_{\mathrm{inv}},$\\
    such that $\widetilde{\mathcal{V}}_{\mathrm{inv}}\circ \widetilde{\mathcal{V}}= \widetilde{\1}_{d}$.
    \item {\it Isometry complex conjugation}:\label{item:isometry_cc} $ \widetilde{f(V)}=\widetilde{\mathcal{V}}^*,$\\
    where $\widetilde{\mathcal{V}}^*(\rho)=V^*\rho (V^*)^\dagger$. Here, $V^*$ denotes the complex conjugate of $V$ in  the computational basis.
     \item {\it Isometry pseudo complex conjugation} (see Section~\ref{subsec:isometry_pseudo_cc}): \label{item:isometry_pcc} $ \widetilde{f(V)}=\widetilde{\mathcal{V}}_{\mathrm{p cc}},$\\
    such that $\widetilde{\mathcal{V}}_{\mathrm{p cc}}^T\circ \widetilde{\mathcal{V}}= \widetilde{\1}_{d}$, where the transposed map $\widetilde{\Lambda}^T$ for a CP map $\widetilde{\Lambda}$ given by its action as $\widetilde{\Lambda}(\rho) = \sum_{k} K_k \rho K_k^\dagger$ in terms of the Kraus operators $\{K_k\}$ is defined as $\widetilde{\Lambda}^T(\rho) \coloneqq \sum_{k}K_k^T \rho (K_k^T)^\dagger$.
    \item {\it Isometry transposition}:\label{item:isometry_transposition} $ \widetilde{f(V)}=\widetilde{\mathcal{V}}^T,$\\
    where $\widetilde{\mathcal{V}}^{T}(\rho)=V^T\rho (V^T)^\dagger$. Here, $V^T$ denotes the transpose of $V$ in  the computational basis.
\end{enumerate}

In quantum circuits shown in the figures in the rest of the paper, we sometimes write $V \in  \mathbb{V}_\mathrm{iso} (d,D)$ as $V_{d,D}$ to represent the dimensions of its input and output  Hilbert spaces explicitly in quantum circuits.   To illustrate the dimensions of the Hilbert spaces represented by the wires of quantum circuits in the figures, we use the following color coding of wires:  a red wire corresponds to a $d$-dimensional  Hilbert space, a blue wire corresponds to a $D$-dimensional  Hilbert space, and a black wire corresponds to a  Hilbert space with an arbitrary dimension.   The dual lines in the quantum circuits represent classical information transmissions.

\section{The parallel isometry inversion protocol}
\label{sec:isometry_inversion}

\subsection{Main result: Parallel isometry inversion with $D$-independent success probability}

We present our main theorem on the optimal success probability of the probabilistic exact parallel protocol for isometry inversion. The success probability only depends on the dimension of the input Hilbert space of the isometry, and thus significantly outperforms the probabilistic exact parallel protocols based on unitary inversion  (see {Figure~\ref{fig:untiary_embedding_strategy_and_graph}~(b)}).

\begin{Thm}
\label{thm:optimal_isometry_inversion_and_PBT}
The optimal success probability of probabilistic parallel protocols that transform $k$ calls of an isometry operation $\widetilde{\mathcal{V}}: \mathcal{L}(\mathbb{C}^d)\to\mathcal{L}(\mathbb{C}^D)$ into its inverse map $\widetilde{\mathcal{V}}_{\mathrm{inv}}$ does not depend on $D$. Moreover, a parallel protocol shown in Figure~\ref{fig:isometry_inversion_protocols}~(a) achieves a success probability $p_{\mathrm{succ}}=\lfloor k/(d-1) \rfloor/[d^2+\lfloor k/(d-1)\rfloor -1]$, which is optimal for $d=2$.
\end{Thm}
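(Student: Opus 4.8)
The plan is to prove the statement in three movements — an ``isometric Schur--Weyl'' decomposition of $\widetilde{\mathcal{V}}^{\otimes k}$, a compression operation built from it, and composition with a universal parallel protocol that acts only on a $d$-dimensional unitary — and then to get optimality at $d=2$ from a converse reduction together with the known optimal value for qubit unitary inversion. The guiding principle is that the entire $D$-dependence can be ``factored out'' into an irrelevant component, leaving behind only $d$-dimensional data.

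The first step is the isometric Schur--Weyl decomposition: for any $V\in\mathbb{V}_\mathrm{iso}(d,D)$ the operator $V^{\otimes k}\colon(\mathbb{C}^d)^{\otimes k}\to(\mathbb{C}^D)^{\otimes k}$ is brought, by fixed ($V$-independent) unitaries on the input and output sides, to the block-diagonal form $\bigoplus_{\lambda}\hat{V}_\lambda\otimes\1_{\mathcal{S}_\lambda}$, the sum running over Young diagrams $\lambda\vdash k$ with at most $d$ rows, where $\mathcal{S}_\lambda$ is the corresponding $S_k$-irrep and $\hat{V}_\lambda$ is an isometry from the $\mathrm{GL}(d)$-irreducible component of type $\lambda$ to the $\mathrm{GL}(D)$ one (the Schur functor applied to $V$). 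This is the usual double-commutant argument: $V^{\otimes k}$ intertwines the $S_k$-representations, so by Schur's lemma it is block-diagonal with the symmetric-group multiplicity spaces carried through trivially, and naturality of the Schur functor identifies each block as the stated isometry, in particular unitary onto its image. The blocks relevant to decoding are the ``conjugate-type'' ones, $\lambda=(t^{\,d-1})$, whose $\mathrm{GL}(d)$-representation is, up to a determinant phase, $\overline{\mathrm{Sym}^{t}}$ of the defining representation; these are precisely the components that encode an inverse of the logical $d$-dimensional information, generalising the $\wedge^{d-1}$ trick used for unitary complex conjugation~\cite{Miyazaki2019Complex}.

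Next, the compression. Grouping the $k$ calls into $t\coloneqq\lfloor k/(d-1)\rfloor$ blocks of $d-1$ calls (and discarding the leftover calls), I would feed into each block the $d-1$ legs of the fixed totally antisymmetric states, so that after the boxes each block carries a state in $\wedge^{d-1}(\Im V)\subset\wedge^{d-1}\mathbb{C}^D$. Using that $\wedge^{d}(\Im V)$ is one-dimensional — spanned by $v_1\wedge\cdots\wedge v_d$ with $v_i\coloneqq V\ket{i}$ — a suitably normalised wedge/antisymmetrising quantum operation lets one peel off this $D$-dependent ``volume'' vector as a pure product factor and reduce the $D$-dimensional output data of the blocks to a $D$-independent space: what remains is a resource carrying the complex-conjugate information of some $d$-dimensional unitary $W=W(V)$ in a $t$-fold (symmetric) form, with the encoded quantum information kept intact. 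Composing with a universal probabilistic exact unitary transposition / probabilistic port-based teleportation protocol~\cite{Quintino2019Probabilistic, Ishizaka2008Asymptotica, Studzinski2017Portbased} — the transposition half of the unitary-inversion construction of Ref.~\cite{Quintino2019Reversing} — delivers $W^{-1}$ in each branch with success probability $t/(t-1+d^2)$, and applying $\widetilde{\mathcal{W}}^{-1}$ to the logical register completes an inverse map $\widetilde{\mathcal{V}}_{\mathrm{inv}}$. Since the teleportation step only ever acts on $d$-dimensional objects, the overall success probability is independent of $D$ and equals $\lfloor k/(d-1)\rfloor/\bigl[d^2+\lfloor k/(d-1)\rfloor-1\bigr]$.

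It remains to argue $D$-independence of the optimum and optimality at $d=2$. For an upper bound, I would specialise an arbitrary parallel isometry-inversion protocol for $\mathbb{V}_\mathrm{iso}(d,D)$ to $V=\iota U$ with $\iota\colon\mathbb{C}^d\hookrightarrow\mathbb{C}^D$ the fixed embedding and $U\in\mathbb{U}(d)$: prepending the known $\widetilde{\iota}$ to each black box and to the input register turns it into a parallel $d$-dimensional unitary-inversion protocol with the same number of calls and the same success probability, so the isometry-inversion optimum is at most the parallel unitary-inversion optimum for $\mathbb{C}^d$ with $k$ calls, a $D$-independent quantity. That the isometry-inversion optimum is genuinely $D$-independent for every $d$ I would obtain by symmetrising the parallel-protocol SDP under the $\mathbb{U}(d)\times\mathbb{U}(D)$ covariance $V\mapsto\Omega V U^\dagger$ and invoking the decomposition above: the input resource $\widetilde{\mathcal{V}}^{\otimes k}$ populates only Young diagrams with at most $d$ rows, whose number depends on $d$ and $k$ but not on $D$, so the extra room available when $D$ is large is never used and the reduced SDP — hence its optimal value — does not depend on $D\ge d$. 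Finally, for $d=2$ the grouping overhead is trivial ($t=\lfloor k/1\rfloor=k$), so the protocol attains $k/(k+3)$; since the optimal parallel unitary-inversion probability for a qubit with $k$ calls equals $k/(k+3)$ — which I would confirm via a dual-feasible point of the corresponding SDP — the upper bound above shows $k/(k+3)$ is also the isometry-inversion optimum, establishing optimality at $d=2$.

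The step I expect to be the main obstacle is the compression: both constructing it and proving that it is \emph{exactly} correct — i.e.\ that the $D$-dependent ``volume'' component decouples as a pure product state from a $d$-dimensional register that still carries the full logical state — and, above all, getting the counting right, so that $d-1$ isometry calls yield exactly the conjugate information of a single $d$-dimensional unitary and $t$ blocks yield its $t$-fold symmetric version in a form on which the port-based-teleportation step applies with probability $t/(t-1+d^2)$. A secondary difficulty is the tight converse at $d=2$, which requires solving (or certifying the solution of) the parallel unitary-inversion SDP.
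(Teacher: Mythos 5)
Your overall architecture---the isometric Schur--Weyl decomposition of $V^{\otimes k}$, a compression that removes the $D$-dependence, composition with a universal $d$-dimensional parallel unitary-inversion protocol, and a converse obtained by specialising to $V=\iota U$---is exactly the paper's, and the converse half together with the appeal to the known qubit optimum $k/(k+3)$ is sound. The genuine gap is the step you yourself flag as the main obstacle: the compression. A deterministic, exact, universal CPTP map that sends each block's output space $\wedge^{d-1}(\Im V)$ isometrically (even only up to an unknown $V$-dependent unitary $W(V)$) onto a fixed copy of $\mathbb{C}^d$ does not exist. Already for $d=2$, $D=3$ one sees this: such a map $\Phi:\mathcal{L}(\mathbb{C}^3)\to\mathcal{L}(\mathbb{C}^2)$ would have to act reversibly on every $2$-dimensional subspace of $\mathbb{C}^3$, so $\Phi(\proj{e_1})$, $\Phi(\proj{e_2})$, $\Phi(\proj{e_3})$ would be three mutually orthogonal pure states of a qubit, which is impossible. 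Contracting with the volume vector $v_1\wedge\cdots\wedge v_d$ is not an operation you can perform without knowing $V$. A second, related omission is the input register: the state to be decoded is $V\rho V^{\dagger}\in\mathcal{L}(\mathbb{C}^D)$, supported on the unknown subspace $\Im V$, and your protocol never explains how it reaches a teleportation step that ``only ever acts on $d$-dimensional objects.''

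The paper's resolution is to give up on preserving the $\mathrm{GL}$-registers altogether: the CPTP map $\widetilde{\Psi}$ of Eq.~(\ref{eq:def_Psi}) acts \emph{jointly} on the input register $\mathcal{P}$ and all $k$ black-box outputs (this is precisely why the delayed-input-state structure is needed), Schur-transforms, traces out the $\mathcal{U}^{(D)}_{\mu}$ factors and replaces them by maximally mixed states, keeping only the permutation modules $\mathcal{S}^{(k+1)}_{\mu}$. Lemma~\ref{lem:Lambda} then gives $\widetilde{\Psi}\circ\widetilde{\mathcal{V}}^{\otimes k+1}=\int \dd U\,\widetilde{\mathcal{U}}^{\otimes k+1}$: one does not obtain a single well-defined $W(V)$, but a Haar-random $d$-dimensional unitary---which is harmless, because the downstream protocol is \emph{universal} unitary inversion and $\widetilde{\mathcal{U}}^{\dagger}\circ\widetilde{\mathcal{U}}=\widetilde{\1}_d$ for every $U$. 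If you replace your per-block ``peeling'' by this single global twirl (and feed the result into the delayed-input-state unitary-inversion protocol achieving $\lfloor k/(d-1)\rfloor/[d^2+\lfloor k/(d-1)\rfloor-1]$), the rest of your argument goes through; your proposed SDP-symmetrisation proof of $D$-independence then becomes unnecessary, since achievability at $D$ plus the embedding converse already pin the optimum to the $D=d$ value.
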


{Before proceeding the proof of Theorem~\ref{thm:optimal_isometry_inversion_and_PBT} and showing the detail of the protocol shown in Figure~\ref{fig:isometry_inversion_protocols}~(a),} we show how an isometry inversion protocol is implemented in the case of $k=d-1$ calls, as the protocol is shown by a quantum circuit represented in Figure~\ref{fig:isometry_inversion_protocols}~(b).
In the quantum circuit, $\ket{A_d}\in \mathcal{I}\otimes \mathcal{F}= (\mathbb{C}^d)^{\otimes d}$ is the totally antisymmetric state defined by
\begin{align}
    \ket{A_d}\coloneqq \sum_{\vec{j}\in\{1, \cdots, d\}^{d}}\frac{\epsilon_{\vec{j}}}{\sqrt{d!}}\ket{j_1j_2\cdots j_d},\label{eq:def_Ad}
\end{align}
where $\{\ket{j_i}\}\;(i=1, \cdots, d)$ is an orthonormal basis of $\mathbb{C}^d$ and $\epsilon_{\vec{j}}$ is the antisymmetric tensor with rank $d$.
The POVM $\mathcal{M}$ is a projective measurement $\{\Pi_1=  \Pi^{\mathrm{a.s.}},  \Pi_0=  I_{\mathcal{P}\mathcal{O}}-\Pi^{\mathrm{a.s.}}\}$, where $\Pi^{\mathrm{a.s.}}$ is the orthogonal projector on $\mathcal{P}\otimes \mathcal{O}_1\otimes \cdots \otimes \mathcal{O}_{d-1}= (\mathbb{C}^D)^{\otimes d}$ onto the subspace spanned by the totally antisymmetric states. This protocol succeeds when the measurement outcome of  $\mathcal{M} = \{\Pi_a \}$ is $a=1$. This quantum circuit implements an isometry inversion protocol as shown in the following theorem. See Appendix \ref{sec:appendix_isometry_inverse} for the proof.

\begin{Thm}
\label{theorem:isometry_inverse}
A parallel protocol shown in Figure~\ref{fig:isometry_inversion_protocols}~(b) transforms $d-1$ calls of an isometry operation {$\widetilde{\mathcal{V}}$} corresponding to $V \in \mathbb{V}_\mathrm{iso} (d, D)$ into its inverse map $\widetilde{\mathcal{V}}_{\mathrm{inv}}$ with a success probability $p_{\mathrm{succ}}=1/d^2$. Moreover, this protocol implements {the inverse map} $\widetilde{\mathcal{V}}_{\mathrm{inv}}=\widetilde{\mathcal{V}}^\prime$ {given by}
\begin{align}
    \widetilde{\mathcal{V}}^\prime(\rho_\mathrm{in})\coloneqq V^{\dagger} \rho_{\mathrm{in}} V+ \1_d 
    \Tr
    \left[
    \Pi_{(\Im V)^{\perp}}\rho_\mathrm{in}
    \right]
    ,\label{eq:isometry_inverse}
\end{align}
where $\Pi_{(\Im V)^{\perp}}$ is the orthogonal projector onto the orthogonal complement $(\Im V)^{\perp}$ of  $\Im V \subset \mathbb{C}^D$.
\end{Thm}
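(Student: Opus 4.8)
The plan is to compute directly the action of the parallel superinstrument depicted in Figure~\ref{fig:isometry_inversion_protocols}~(b) on $\widetilde{\mathcal{V}}^{\otimes(d-1)}$ and show that the (unnormalized) output map equals $\frac{1}{d^2}\widetilde{\mathcal{V}}'$. The key structural fact is that the circuit sends $d-1$ copies of $V$ into a $d$-fold antisymmetrized object, and the totally antisymmetric subspace of $\mathbb{C}^d$ is one-dimensional: this is what collapses the $D$-dependence. Concretely, I would first fix Kraus/vector notation. The resource state $\ket{A_d}\in\mathcal{I}\otimes\mathcal{F}$ feeds the $\mathcal{I}$-part into the $d-1$ parallel calls of $V$ and keeps $\mathcal{F}=\mathbb{C}^d$ as the output register; the past space $\mathcal{P}=\mathbb{C}^D$ carries the genuine input state $\rho_\mathrm{in}$, which together with the $d-1$ output wires $\mathcal{O}_1,\dots,\mathcal{O}_{d-1}$ is measured by $\mathcal{M}=\{\Pi^{\mathrm{a.s.}},I-\Pi^{\mathrm{a.s.}}\}$ on $(\mathbb{C}^D)^{\otimes d}$. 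So the successful branch applies $V^{\otimes(d-1)}$ (acting on the $\mathcal{I}$-legs of $\ket{A_d}$), then projects $\mathcal{P}\mathcal{O}_1\cdots\mathcal{O}_{d-1}$ onto the totally antisymmetric subspace of $(\mathbb{C}^D)^{\otimes d}$, leaving $\mathcal{F}=\mathbb{C}^d$ as the output.

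The main computation I would carry out is to identify the single vector
$\ket{w_V} := (V^{\otimes(d-1)}\otimes\1_{\mathcal F})\ket{A_d}\in(\mathbb{C}^D)^{\otimes(d-1)}\otimes\mathbb{C}^d$,
recognizing that antisymmetrizing $\ket{A_d}$ over its first $d-1$ $d$-dimensional slots and then pushing each through $V$ produces a vector that, restricted appropriately, is proportional to the totally antisymmetric state $\ket{A_{D}^{(d)}}$ built from $d$ vectors in the $d$-dimensional subspace $\Im V\subset\mathbb{C}^D$ (padded by the $\mathcal F$-leg). The point is that $\Pi^{\mathrm{a.s.}}$ on $(\mathbb{C}^D)^{\otimes d}$ acts on the "$\mathcal{P}$ plus $d-1$ outputs" legs, and because $\mathrm{rank}(\Pi^{\mathrm{a.s.}}$ restricted to $(\Im V)^{\otimes d})=1$, only the component of $\rho_\mathrm{in}$ lying in $\Im V$, together with the $d-1$ antisymmetrized $V$-outputs, survives, and it gets mapped (via the $\mathcal F$-leg) to $V^\dagger\rho_\mathrm{in}V$; the orthogonal component $\Pi_{(\Im V)^\perp}\rho_\mathrm{in}$ fills out the antisymmetric subspace in the "new" direction and contributes the isotropic term $\1_d\,\Tr[\Pi_{(\Im V)^\perp}\rho_\mathrm{in}]$. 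Tracking the normalization constants from $\ket{A_d}$ (the $1/\sqrt{d!}$), from $V^\dagger V=\1_d$, and from the projector $\Pi^{\mathrm{a.s.}}$ should yield exactly the prefactor $1/d^2$, independently of $D$ and of $\rho_\mathrm{in}$, which also certifies that $p_\mathrm{succ}=1/d^2$ in the sense of Eq.~(\ref{eq:def_success_probability}).

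A clean way to organize the algebra is to choose the computational basis so that $\Im V=\mathrm{span}\{\ket{1},\dots,\ket{d}\}$, i.e. $V=\sum_{j=1}^d\ket{j}\bra{j}$ up to the trivial embedding, prove the identity in this gauge, and then invoke covariance: both sides of Eq.~(\ref{eq:isometry_inverse}) transform the same way under $V\mapsto WVU$ with $W\in\mathbb U(D)$, $U\in\mathbb U(d)$, and since the circuit is built only from $\ket{A_d}$ and $\Pi^{\mathrm{a.s.}}$ (both invariant under the relevant group actions), the result extends to all $V\in\mathbb V_\mathrm{iso}(d,D)$. I expect the main obstacle to be bookkeeping rather than conceptual: correctly matching the tensor factors (which $V$ acts on which leg of $\ket{A_d}$, and which legs $\Pi^{\mathrm{a.s.}}$ projects), and pinning down the scalar $1/d^2$ without sign or combinatorial-factor slips — in particular verifying that the cross terms between $\Im V$ and $(\Im V)^\perp$ in $\rho_\mathrm{in}$ vanish after the antisymmetric projection (they must, since a single "extra" direction cannot be antisymmetrized into two distinct slots), and that the coefficient of $V^\dagger\rho_\mathrm{in}V$ and that of $\1_d\Tr[\Pi_{(\Im V)^\perp}\rho_\mathrm{in}]$ come out equal, as Eq.~(\ref{eq:isometry_inverse}) (the $\alpha=1$ case of Eq.~(\ref{eq:def_f(V)})) demands.
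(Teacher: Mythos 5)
Your plan is essentially the paper's own proof: the paper likewise fixes a pure input $\ket{\psi_\mathrm{in}}$, extends $\{V\ket{i}\}_{i=0}^{d-1}$ to an orthonormal basis of $\mathbb{C}^D$ (with the $(d{+}1)$-st vector aligned with the component of $\ket{\psi_\mathrm{in}}$ in $(\Im V)^{\perp}$), expands $\Pi^{\mathrm{a.s.}}$ in the corresponding antisymmetric basis states, and finds that exactly the two contributions you identify survive --- the cross terms being killed by the partial trace over $\mathcal{P}\otimes\mathcal{O}$ since distinct antisymmetric basis states are orthogonal --- with the common prefactor $1/d^2$. The only difference is organizational: the paper adapts the basis to the input state directly rather than fixing a gauge for $V$ and invoking covariance under $V\mapsto WVU$, but both routes amount to the same calculation.
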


\begin{figure}[th]
    \centering
    \includegraphics[width=\linewidth]{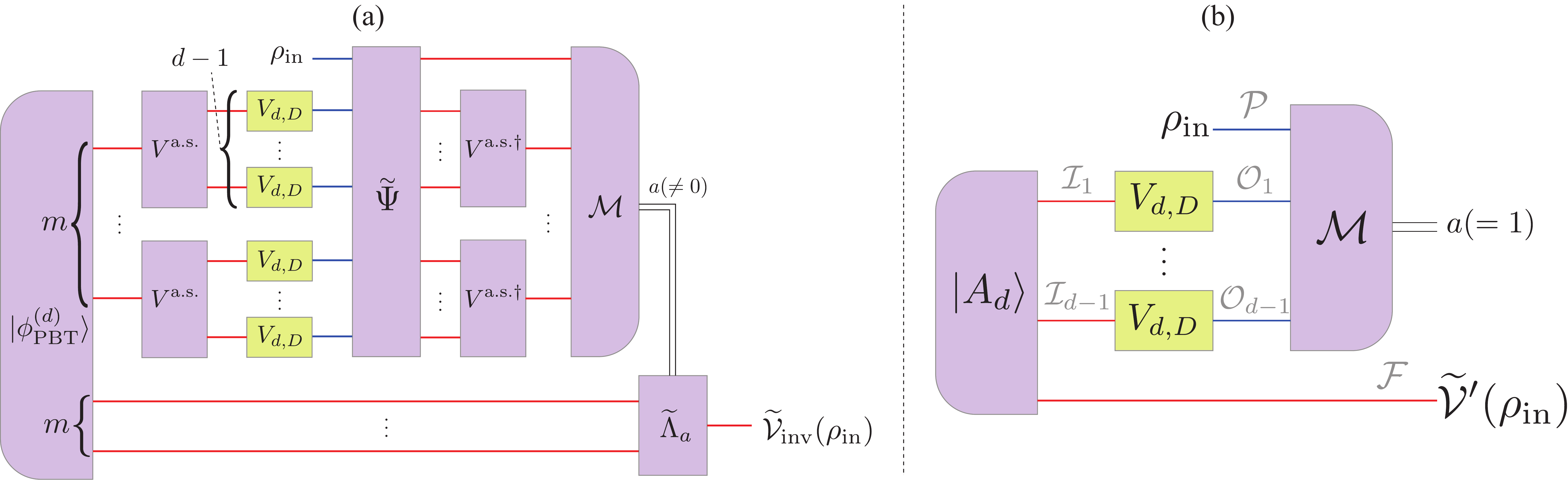}
    \caption{(a) A quantum circuit representation of a parallel delayed input-state protocol for isometry inversion that achieves the success probability $p_{\mathrm{succ}}=\lfloor k/(d-1) \rfloor/[d^2+\lfloor k/(d-1)\rfloor -1]$.   In the quantum circuit, $m(d-1)$ calls of the input isometry operation $\map{V}_{d,D}$ for $m=\lfloor k/(d-1) \rfloor$ are used, and the rest $k-m(d-1)$ calls of $\map{V}_{d,D}$ are discarded (not shown in the figure). The CPTP map $\widetilde{\Psi}$ is defined in Eq.~(\ref{eq:def_Psi}). The quantum state $ \ket{\phi^{(d)}_{\mathrm{PBT}}}$ and the POVM $\mathcal{M}=\{\Gamma_a^{(d)}\}_{a=0}^{k}$ are the optimal resource state and the POVM for the probabilistic port-based teleportation \cite{Ishizaka2008Asymptotica,Studzinski2017Portbased}, which are defined in Eqs.~(\ref{eq:def_phi_PBT}) and (\ref{eq:def_gamma_a^d}), respectively. The conditional CPTP map $\widetilde{\Lambda}_a$ is {the} post-processing operation used in port-based teleportation defined in Eq.~(\ref{eq:def_V_a}), which selects the quantum state in $\mathcal{A}_a$ corresponding to the measurement outcome $a$ of $\mathcal{M}$ as the output state for $a\neq 0$. The isometry operator $V^{\mathrm{a.s.}}$ represents an encoding of quantum information on a totally antisymmetric state defined in Eq.~(\ref{eq:def_V^as}). This protocol succeeds when the measurement outcome $a$ is $a\neq 0$. (b) {The isometry inversion protocol shown in (a) reduces to the circuit shown in this figure for $m=1$, i.e., $k=d-1$.} The quantum state $\ket{A_d}$ is the totally antisymmetric state defined in Eq.~(\ref{eq:def_Ad}) and the POVM $\mathcal{M}$ is a projective measurement  $\{\Pi_1= \Pi^{\mathrm{a.s.}}, \Pi_0=\1_{\mathcal{P}\mathcal{O}}-\Pi^{\mathrm{a.s.}}\}$, where $\Pi^{\mathrm{a.s.}}$ is the orthogonal projector on $\mathcal{P}\otimes \mathcal{O}_1\otimes \cdots \otimes \mathcal{O}_{d-1}= (\mathbb{C}^D)^{\otimes d}$ onto the subspace spanned by the totally antisymmetric states.
    This protocol {implements the inverse map $\widetilde{\mathcal{V}}_{\mathrm{inv}}=\widetilde{\mathcal{V}}'$ of $V$ defined in Eq.~(\ref{eq:isometry_inverse})} when the measurement outcome of $\mathcal{M} = \{\Pi_a \}$ is $a=1$.}
    \label{fig:isometry_inversion_protocols}
\end{figure}

Theorem~\ref{theorem:isometry_inverse} states the existence of a parallel isometry inversion protocol whose success probability does not depend on $D$ for $k=d-1$. Now, we go back to Theorem~\ref{thm:optimal_isometry_inversion_and_PBT}, which is a generalization of Theorem~\ref{theorem:isometry_inverse}. The first part of Theorem~\ref{thm:optimal_isometry_inversion_and_PBT} states the $D$-independence of the optimal success probability of parallel isometry inversion.
 In general, an isometry operation $V\in \mathbb{V}_\mathrm{iso}(d,D')$ can be embedded in $\mathbb{V}_\mathrm{iso}(d,D)$ for $D\geq D'\geq d$. Using this embedding, we can show the optimal success probability of parallel isometry inversion using $k$ calls of the input isometry operation $V\in \mathbb{V}_\mathrm{iso} (d,D)$, denoted by $p_\mathrm{opt}(d,D,k)$, is monotonically non-increasing in the output dimension $D$ of $V$, i.e.,
\begin{align}
    p_\mathrm{opt}(d,D',k) \geq p_\mathrm{opt}(d,D,k)\label{eq:opt_succ_monotonic}
\end{align}
holds for all $d$, $k$, $D$, and $D'$ such that $D\geq D'\geq d$.  We present a proof of this statement for $D'=d$ in the proof of Theorem~\ref{thm:optimal_isometry_inversion_and_PBT} (see Figure~\ref{fig:untiary_inversion_to_isometry_inversion}~(b)) and a similar construction is possible for any $D\geq D'\geq d$.  Thus, it is enough to show the converse of Eq.~(\ref{eq:opt_succ_monotonic}) given by
\begin{align}
    p_\mathrm{opt}(d,d,k) \leq p_\mathrm{opt}(d,D,k).
\end{align}
We show this relation by
constructing a CPTP map compressing the output space of isometry operations to $d$-dimensional spaces. Using this CPTP map, we show  the construction of a parallel isometry inversion protocol of $V\in \mathbb{V}_\mathrm{iso}(d,D)$ from a corresponding parallel unitary inversion protocol of $U\in \mathbb{U}(d)(=\mathbb{V}_{\mathrm{iso}}(d,d))$. The second part shows the series of parallel isometry inversion protocols with $D$-independent success probability approaching 1 for $k\to \infty$.  This asymptotic behavior is natural since one can perform the process tomography of $V\in\mathbb{V}_{\text{iso}}(d,D)$ to construct the inverse operation deterministically for $k\to \infty$. We note that the optimality of the success probability shown in Theorem~\ref{thm:optimal_isometry_inversion_and_PBT} is still open since the optimal parallel protocol for unitary inversion is not analytically known except for the case of $d=2$.

 \subsection{Construction of isometry inversion protocol from a given parallel delayed-input state protocol for unitary inversion}

We consider a subclass of parallel protocols to proceed the proof of Theorem~\ref{thm:optimal_isometry_inversion_and_PBT}.
We say that a superinstrument is realized by a parallel delayed input-state protocol if an input state is inserted after applying black box operations similarly to the one given by the quantum circuit shown in Figure~\ref{fig:isometry_inversion_protocols}~(a) (see Ref.~\cite{Quintino2019Probabilistic} for the detail).
 As shown in the following Lemma,  an isometry inversion protocol can be constructed from a given parallel delayed-input state protocol for unitary inversion.

Since the input quantum state in the Hilbert space $\mathcal{P}$ and the output state of $\map{V}_{d,D}^{\otimes k}$ in the Hilbert space $\mathcal{O}=\bigotimes_{i=1}^{k} \mathcal{O}_i$ are available at some point of the parallel delayed input-state protocol, we can apply a CPTP map $\widetilde{\Psi}$ on $\mathcal{P}\otimes \mathcal{O}$.  Using this idea, we construct an isometry inversion protocol (see the proof of Lemma \ref{lem:isometry_inversion_construction}).  Note that this construction is not valid for a general (i.e., non delayed input-state) parallel protocol since the input quantum state in the Hilbert space $\mathcal{P}$ is not available after applying the input isometry operation (see Figure~\ref{fig:parallel_comb}).

\begin{Lem}
\label{lem:isometry_inversion_construction}
Suppose we are given a parallel delayed input-state protocol that transforms $k$ calls of a unitary operation $\widetilde{\mathcal{U}}: \mathcal{L}(\mathbb{C}^d)\to\mathcal{L}(\mathbb{C}^d)$ into its inverse map $\widetilde{\mathcal{U}}^{\dagger}$ with a success probability ${p'_{\mathrm{succ}}}$. Then, we can construct a parallel delayed input-state protocol that transforms $k$ calls of an isometry operation $\widetilde{\mathcal{V}}: \mathcal{L}(\mathbb{C}^d)\to\mathcal{L}(\mathbb{C}^D)$ into its inverse map $\widetilde{\mathcal{V}}_{\mathrm{inv}}$ with the same success probability $p_{\mathrm{succ}}={p'_{\mathrm{succ}}}$.
\end{Lem}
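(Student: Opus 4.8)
The plan is to build the isometry inversion protocol from the given unitary inversion protocol by inserting a universal ``compression'' CPTP map $\widetilde{\Psi}$ between the black-box calls and the rest of the delayed-input-state circuit. Concretely, suppose the given unitary inversion protocol is described by an encoding isometry/channel preparing the joint input state, $k$ parallel calls of $\widetilde{\mathcal{U}}$ acting on $\mathbb{C}^d$ each, followed by a measurement-and-correction stage that, conditioned on success, outputs $p'_{\mathrm{succ}}\,\widetilde{\mathcal{U}}^\dagger$. In the isometry case each black box $\widetilde{\mathcal{V}}$ maps $\mathbb{C}^d\to\mathbb{C}^D$, so the joint output lives in $(\mathbb{C}^D)^{\otimes k}$ rather than $(\mathbb{C}^d)^{\otimes k}$. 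The key is to design $\widetilde{\Psi}:\mathcal{L}(\mathcal{P}\otimes\mathcal{O})\to\mathcal{L}(\mathcal{P}\otimes(\mathbb{C}^d)^{\otimes k})$ — acting on the ``past'' register $\mathcal{P}$ together with all $k$ output registers $\mathcal{O}=\bigotimes_i\mathbb{C}^D$ — that universally converts $\map{V}^{\otimes k}$ (restricted to what the circuit actually sees) into $\map{U}^{\otimes k}$ for some $d$-dimensional unitary $U$ correlated with the unknown $V$, precisely in the way the Schur–Weyl-type block decomposition for isometries (established earlier in Section~\ref{sec:isometry_inversion}) permits. This is exactly the ``compressing quantum operation'' advertised in the introduction; I would invoke it as an already-available tool and simply verify that splicing it in is legitimate for a delayed-input-state protocol.

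First I would recall the structure of a parallel delayed-input-state protocol for unitary inversion: the input state $\rho_{\mathrm{in}}$ on $\mathcal{P}$ is held aside, a fixed resource state is fed into the $k$ black boxes, and only afterwards are $\mathcal{P}$ and the black-box outputs jointly processed. Because $\mathcal{P}$ and $\mathcal{O}$ are simultaneously available at that stage, I may legitimately apply $\widetilde{\Psi}$ on $\mathcal{P}\otimes\mathcal{O}$ — this is precisely the feature the surrounding text flags as essential and as failing for general (non-delayed) protocols. Second, I would state the defining property of the compression map: for every $V\in\mathbb{V}_\mathrm{iso}(d,D)$, feeding the same fixed resource state through $\map{V}^{\otimes k}$ and then $\widetilde{\Psi}$ produces the same state as feeding it through $\map{U_V}^{\otimes k}$ for a unitary $U_V\in\mathbb{U}(d)$ that restricts $V$ to its image in the relevant block, while acting as identity on $\mathcal{P}$ and discarding the irrelevant $D$-dependent components; crucially this holds with a $D$-independent (here, unit, on the relevant subspace) weight. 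Third, I would compose: run the unitary protocol's encoding $\widetilde{\mathcal{E}}$, then $\map{V}^{\otimes k}$, then $\widetilde{\Psi}$, then the unitary protocol's instrument $\{\widetilde{\mathcal{D}}_a\}$. By the compression property the effective channel seen downstream is identical to the one in the unitary protocol with $\map{U}=\map{U_V}$, so the success branch outputs $p'_{\mathrm{succ}}\,\widetilde{\mathcal{U}_V}^\dagger$; one then checks that $\widetilde{\mathcal{U}_V}^\dagger$, read back through the image identification, acts as a valid inverse map $\widetilde{\mathcal{V}}_{\mathrm{inv}}$ in the sense of Eq.~(\ref{eq:inverse}), i.e.\ $\widetilde{\mathcal{V}}_{\mathrm{inv}}\circ\widetilde{\mathcal{V}}=\widetilde{\1}_d$. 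Since $\widetilde{\Psi}$ is CPTP and independent of $V$, and $p'_{\mathrm{succ}}$ is already $V$- and $\rho_{\mathrm{in}}$-independent, the resulting $p_{\mathrm{succ}}=p'_{\mathrm{succ}}$, and the whole thing is manifestly still a parallel delayed-input-state superinstrument.

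The main obstacle I expect is pinning down the compression map $\widetilde{\Psi}$ and proving its defining intertwining property — i.e.\ that $\widetilde{\Psi}\circ(\map{V}^{\otimes k}\otimes\widetilde{\1})$ equals $\map{U_V}^{\otimes k}\otimes\widetilde{\1}$ on the relevant subspace for all isometries $V$ simultaneously, with $D$-independent normalization. This is where the isometric Schur–Weyl duality does the real work: one must identify, inside $(\mathbb{C}^D)^{\otimes k}$, the block that carries the same $\mathbb{U}(d)$-representation content as $(\mathbb{C}^d)^{\otimes k}$, show that $V^{\otimes k}$ maps into that block up to the action of some $U_V\in\mathbb{U}(d)$, and construct $\widetilde{\Psi}$ as the CPTP map that projects onto/decodes that block (with the complementary outcome handled harmlessly, since the downstream instrument only needs the success-branch identity to hold on the relevant component and the input state lives in $\Im V$). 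The bookkeeping of where the ``past'' register $\mathcal{P}$ sits and how $U_V$ depends on $V$ — plus verifying that the inverse map one obtains genuinely satisfies Eq.~(\ref{eq:inverse}) rather than merely an approximate or non-CP relation — is the delicate part; everything else (CPTP-ness of the composite, preservation of the parallel delayed-input-state form, transfer of $p'_{\mathrm{succ}}$) is routine once $\widetilde{\Psi}$ is in hand.
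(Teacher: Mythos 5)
Your overall strategy is the paper's: splice a universal, $V$-independent compression map $\widetilde{\Psi}$ into the delayed-input-state slot where $\mathcal{P}$ and $\mathcal{O}$ are simultaneously available, and reduce to the given unitary-inversion protocol. But the ``defining property'' you invoke for $\widetilde{\Psi}$ is not the one that is actually available, and as stated it is the crux you have deferred rather than proved. You claim $\widetilde{\Psi}$ converts $\map{V}^{\otimes k}$ deterministically into $\map{U_V}^{\otimes k}$ for a \emph{single} unitary $U_V\in\mathbb{U}(d)$ depending on $V$, ``while acting as identity on $\mathcal{P}$.'' Neither half survives scrutiny. The block decomposition $V^{\otimes k}=\bigoplus_\mu V_\mu\otimes I_{\mathcal{S}_\mu}$ has $V$-dependent isometries $V_\mu$ into the \emph{larger} spaces $\mathcal{U}^{(D)}_\mu$, and no fixed CPTP map can coherently undo all the $V_\mu$ simultaneously into a consistent $(U_V)_\mu$; what the construction actually achieves (Lemma~\ref{lem:Lambda}) is obtained by \emph{discarding} the $\mathcal{U}^{(D)}_\mu$ registers and replacing them by maximally mixed states, which yields the Haar-twirl identity $\widetilde{\Psi}\circ\widetilde{\mathcal{V}}^{\otimes k+1}=\int \dd U\,\widetilde{\mathcal{U}}^{\otimes k+1}$ — a \emph{random} unitary, not a single $U_V$. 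This weaker property still suffices, but only because the given unitary-inversion protocol succeeds uniformly for every $U\in\mathbb{U}(d)$, so that $\int \dd U\,(\widetilde{\mathcal{U}}^{\dagger}\circ\widetilde{\mathcal{U}})(\rho)=\rho$; your argument never makes this averaging step, and it is the step that actually closes the proof.

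The second, related gap is the treatment of the past register. The input to the inversion protocol is $\rho_{\mathrm{in}}=\widetilde{\mathcal{V}}(\rho)\in\mathcal{L}(\mathbb{C}^D)$, so the state on $\mathcal{P}\otimes\mathcal{O}$ is $\map{V}^{\otimes k+1}$ applied to $\rho\otimes\phi'$ on $(\mathbb{C}^d)^{\otimes k+1}$: the compression must act \emph{jointly} on all $k+1$ copies (it is the multiplicity spaces $\mathcal{S}^{(k+1)}_{\mu}$ of the $(k+1)$-fold Schur--Weyl decomposition that carry the correlations between $\rho$ and the resource state), and it must map $\mathcal{P}=\mathbb{C}^D$ down to $\mathcal{P}'=\mathbb{C}^d$ so that the downstream instrument $\widetilde{\mathcal{D}}'_S$ is even type-correct. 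A $\widetilde{\Psi}$ that ``acts as identity on $\mathcal{P}$'' and compresses only the $k$ black-box outputs leaves a $D$-dimensional state facing a $d$-dimensional unitary-inversion instrument and destroys the mechanism by which $\rho$ is recovered. You flag this bookkeeping as ``the delicate part,'' and it is: resolving it is precisely the content of Eq.~(\ref{eq:def_Psi}) and Lemma~\ref{lem:Lambda}, so it cannot be left as routine. With the $(k+1)$-copy Haar-twirl property substituted for your intertwining claim, the remainder of your composition argument is correct.
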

\begin{proof}
 We construct a CPTP map $\widetilde{\Psi}$ transforming $k+1$ parallel calls of isometry $V\in\mathbb{V}_{\text{iso}}(d,D)$ into parallel calls of  a random unitary $U\in\mathbb{U}(d)$ (see Eq.~(\ref{eq:Psi_U})). This CPTP map $\widetilde{\Psi}$ compresses the output state of isometry operations on $d$-dimensional systems keeping the relevant component to retrieve the input state, which contributes  to the $D$- independence of the success probability. Then, we construct an isometry inversion protocol by inserting the CPTP map $\widetilde{\Psi}$ into a given unitary inversion protocol (see  Figure~\ref{fig:untiary_inversion_to_isometry_inversion}~(a)).  
In the following, we derive $\widetilde{\Psi}$ and show the protocol and the achievability of $p_{\mathrm{succ}}$.

We  first introduce a parallel delayed input-state $k$-input superinstrument $\{\doublewidetilde{\mathcal{S}'}, \doublewidetilde{\mathcal{F}'}\}: [\mathcal{L}(\mathcal{I})\to \mathcal{L}(\mathcal{O}')] \to [\mathcal{L}(\mathcal{P}')\to \mathcal{L}(\mathcal{F})]$ defined by
\begin{align}
    \doublewidetilde{\mathcal{S}'}(\widetilde{\Lambda}_{\mathrm{in}})(\cdot)&=
    \left[
    \widetilde{\mathcal{D}}'_{S}\circ \left(
    \widetilde{\1}_{\mathcal{P}'}\otimes \widetilde{\Lambda}_{\mathrm{in}}\otimes \widetilde{\1}_{\mathcal{A}}
    \right)
    \right]
    (\cdot \otimes {\phi'}_{\mathcal{IA}}),\label{eq:def_S'}\\
    \doublewidetilde{\mathcal{F}'}(\widetilde{\Lambda}_{\mathrm{in}})(\cdot)&=
    \left[
    {\widetilde{\mathcal{D}}'_{F}}\circ \left(
    \widetilde{\1}_{\mathcal{P}'}\otimes \widetilde{\Lambda}_{\mathrm{in}}\otimes \widetilde{\1}_{\mathcal{A}}
    \right)
    \right]
    (\cdot \otimes {\phi'}_{\mathcal{IA}}),\label{eq:def_F'}
\end{align}
where the joint Hilbert space are defined by $\mathcal{I}\coloneqq \bigotimes_{i=1}^{k}\mathcal{I}_i$ and $\mathcal{O}'\coloneqq \bigotimes_{i=1}^{k}\mathcal{O}'_i$, $\mathcal{A}$ is an auxiliary system, ${\phi'}_{\mathcal{IA}}\in \mathcal{L}(\mathcal{I}\otimes \mathcal{A})$ is a quantum state and $\{{{\widetilde{\mathcal{D}}'_S, \widetilde{\mathcal{D}}'_F}}\}: \mathcal{L}(\mathcal{P}'\otimes \mathcal{O}'\otimes \mathcal{A})\to \mathcal{L}(\mathcal{F})$ is a quantum instrument (see the left panel of Figure~\ref{fig:untiary_inversion_to_isometry_inversion}~(a)).  Note that a prime is added on the quantum state $\phi'$, the quantum instrument $\{{{\widetilde{\mathcal{D}}'_S, \widetilde{\mathcal{D}}'_F}}\}$, and the Hilbert spaces $\mathcal{O}_i'$ and $\mathcal{P}'$ to distinguish $\{\doublewidetilde{\mathcal{S}'}, \doublewidetilde{\mathcal{F}'}\}$ from another superinstrument $\{\doublewidetilde{\mathcal{S}^{}}, \doublewidetilde{\mathcal{F}^{}}\}$ defined later (see Eqs.~(\ref{eq:def_S}) and (\ref{eq:def_F})). Suppose  $\mathcal{P}'=\mathbb{C}^d$, $\mathcal{F}=\mathbb{C}^d$, $\mathcal{I}_i=\mathbb{C}^d$ and $\mathcal{O}'_i=\mathbb{C}^d$ for $i\in \{1, \cdots, k\}$.  We assume that the superinstrument  $\{\doublewidetilde{\mathcal{S}^\prime}, \doublewidetilde{\mathcal{F}^\prime}\}$ implements unitary inversion with a success probability ${p'_{\mathrm{succ}}}$, i.e.,
\begin{align}
    \doublewidetilde{\mathcal{S}'}(\widetilde{\mathcal{U}}^{\otimes k})={p'_{\mathrm{succ}}}\;\widetilde{\mathcal{U}}^{\dagger}\;\;\;(\forall U\in \mathbb{U}(d)).\label{eq:S'_unitary_inversion}
\end{align}

Next, we consider the condition that a parallel delayed input-state superinstrument $\{\doublewidetilde{\mathcal{S}^{}}, \doublewidetilde{\mathcal{F}^{}}\}$ implements isometry inversion.
Suppose $\mathcal{P}= \mathbb{C}^d$ and $\mathcal{O}_i= \mathbb{C}^{D}$ for $i\in\{1, \cdots, k\}$ and define the joint Hilbert space by $\mathcal{O}\coloneqq \bigotimes_{i=1}^{k}\mathcal{O}_i$. {We consider a parallel delayed input-state superinstrument $\{\doublewidetilde{\mathcal{S}^{}}, \doublewidetilde{\mathcal{F}^{}}\}:[\mathcal{L}(\mathcal{I})\to\mathcal{L}(\mathcal{O})]\to [\mathcal{L}(\mathcal{P})\to\mathcal{L}(\mathcal{F})]$ given by
\begin{align}
    \doublewidetilde{\mathcal{S}^{}}(\widetilde{\Lambda}_{\mathrm{in}})(\cdot)&=
    \left[
    {\widetilde{\mathcal{D}}_{S}}\circ \left(
    \widetilde{\1}_{\mathcal{P}}\otimes \widetilde{\Lambda}_{\mathrm{in}}\otimes \widetilde{\1}_{\mathcal{A}}
    \right)
    \right]
    (\cdot \otimes \phi_{\mathcal{IA}}),\label{eq:def_S}\\
    \doublewidetilde{\mathcal{F}^{}}(\widetilde{\Lambda}_{\mathrm{in}})(\cdot)&=
    \left[
    {\widetilde{\mathcal{D}}_{F}}\circ \left(
    \widetilde{\1}_{\mathcal{P}}\otimes \widetilde{\Lambda}_{\mathrm{in}}\otimes \widetilde{\1}_{\mathcal{A}}
    \right)
    \right]
    (\cdot \otimes \phi_{\mathcal{IA}}),\label{eq:def_F}
\end{align}
where $\mathcal{A}$ is an auxiliary system, $\phi_{\mathcal{IA}}\in \mathcal{L}(\mathcal{I}\otimes \mathcal{A})$ is a quantum state and $\{{{\widetilde{\mathcal{D}}_S, \widetilde{\mathcal{D}}_F}}\}: \mathcal{L}(\mathcal{P}\otimes \mathcal{O}\otimes \mathcal{A})\to \mathcal{L}(\mathcal{F})$ is a quantum instrument (see the left panel of Figure~\ref{fig:isometry_neutralization}). The condition that superinstrument $\{\doublewidetilde{\mathcal{S}^{}}, \doublewidetilde{\mathcal{F}^{}}\}$ implements isometry inversion with the success probability $p_{\mathrm{succ}}$ is given by
\begin{align}
    \doublewidetilde{\mathcal{S}^{}}(\widetilde{\mathcal{V}})=p_{\mathrm{succ}}\widetilde{\mathcal{V}}_{\mathrm{inv}}\;\;\;(\forall V\in \mathbb{V}_{\mathrm{iso}}(d,D)).
\end{align}
By definition of the inverse map $\widetilde{\mathcal{V}}_{\mathrm{inv}}$, this condition is equivalent to the condition given by
\begin{align}
    \doublewidetilde{\mathcal{S}^{}}(\widetilde{\mathcal{V}})\circ \widetilde{\mathcal{V}}=p_{\mathrm{succ}}\widetilde{\1}_d,
\end{align}
i.e.,
\begin{align}
    \left[
    {\widetilde{\mathcal{D}}_{S}}\circ \left(
    \widetilde{\mathcal{V}}^{\otimes k+1}_{\mathcal{P}''\mathcal{I}\to \mathcal{P}\mathcal{O}} \otimes \widetilde{\1}_{\mathcal{A}}
    \right)
    \right]
    (\rho_{\mathcal{P}''} \otimes \phi_{\mathcal{IA}})
    =p_{\mathrm{succ}}\rho_{\mathcal{F}}
\end{align}
for all $V\in \mathbb{V}_{\mathrm{iso}}(d,D)$ and $\rho\in \mathcal{L}(\mathcal{P}'')$ (see the right panel of Figure~\ref{fig:isometry_neutralization}). Here, the Hilbert space $\mathcal{P}''$ is given by $\mathcal{P}''=\mathbb{C}^d$. 
This condition means that we have to retrieve the quantum state $\rho$ after applying the tensor product of an isometry  operation} $ \map{V}^{\otimes k+1}$.

\begin{figure}[t]
    \centering
    \includegraphics[width=\linewidth]{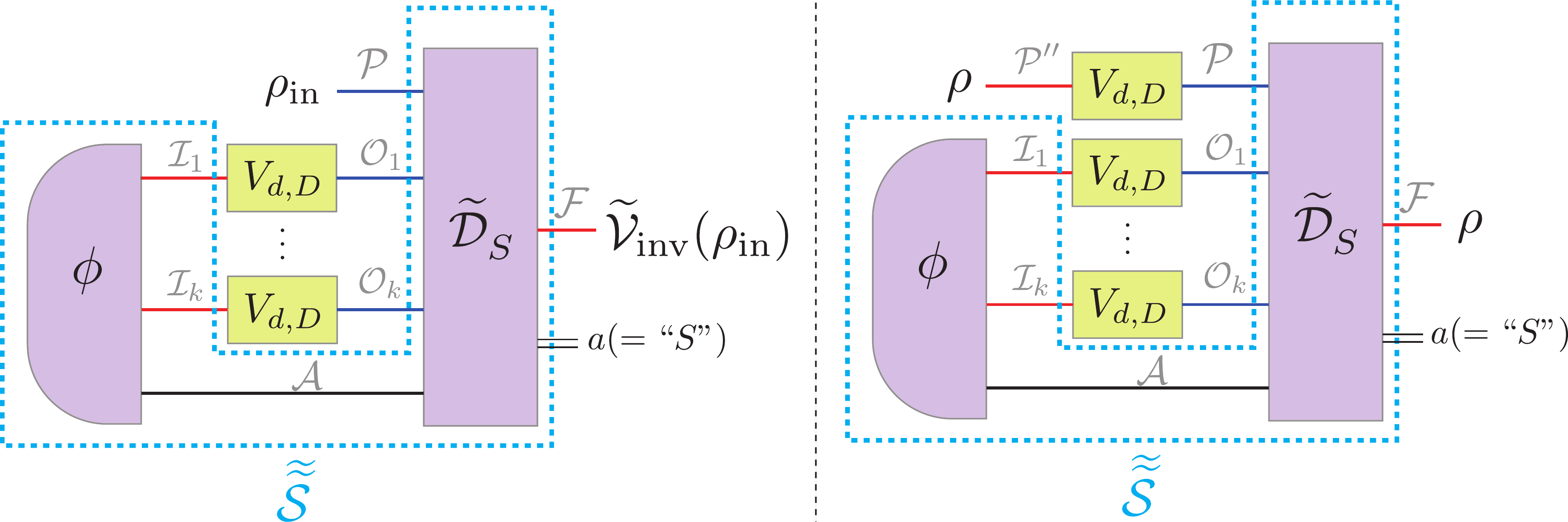}
    \caption{Left panel: A quantum circuit representation of a parallel delayed input-state protocol $\{\doublewidetilde{\mathcal{S}^{}}, \doublewidetilde{\mathcal{F}^{}}\}$.  This protocol succeeds when the successful measurement outcome is obtained ($a=\text{``}S\text{''}$). Right panel: The condition that the parallel delayed input-state protocol $\{\doublewidetilde{\mathcal{S}^{}}, \doublewidetilde{\mathcal{F}^{}}\}$ shown in the left panel implements isometry inversion with the success probability $p_{\mathrm{succ}}$ is equivalent to the condition that the output state of the quantum circuit shown in the right panel retrieves the input state $\rho$ with the success probability $p_{\mathrm{succ}}$.   This protocol succeeds when the successful measurement outcome is obtained ($a=\text{``}S\text{''}$).}
    \label{fig:isometry_neutralization}
\end{figure}

We show the decomposition of the tensor product {$V^{\otimes k+1}$} of an isometry {operator $V$} to consider how to retrieve the quantum state $\rho$ after the application of $V^{\otimes k+1}$.
The joint Hilbert spaces {$\mathcal{P}''\otimes \mathcal{I}=(\mathbb{C}^d)^{\otimes k+1}$,} $\mathcal{P}'\otimes \mathcal{O}' =(\mathbb{C}^d)^{\otimes k+1}$ and $\mathcal{P}\otimes \mathcal{O} =(\mathbb{C}^D)^{\otimes k+1}$ can be decomposed as
\begin{align}
    {\mathcal{P}''\otimes \mathcal{I}}&= {\bigoplus_{\mu\vdash k+1} \mathcal{U}_{\mu, \mathcal{P}''\mathcal{I}}^{(d)}\otimes \mathcal{S}_{\mu, \mathcal{P}''\mathcal{I}}^{(k+1)},}\label{eq:sw_p'o'}\\
    \mathcal{P}'\otimes \mathcal{O}'&= \bigoplus_{\mu\vdash k+1} \mathcal{U}_{\mu, \mathcal{P}'\mathcal{O}'}^{(d)}\otimes \mathcal{S}_{\mu, \mathcal{P}'\mathcal{O}'}^{(k+1)},\\
    \mathcal{P}\otimes \mathcal{O}&= \bigoplus_{\mu\vdash k+1} \mathcal{U}_{\mu, \mathcal{P}\mathcal{O}}^{(D)}\otimes \mathcal{S}_{\mu, \mathcal{P}\mathcal{O}}^{(k+1)}.\label{eq:sw_po}
\end{align}
by the Schur-Weyl duality , where $\mu$ is a Young diagram with $k+1$ boxes, denoted by $\mu\vdash k+1$ (see Appendix \ref{sec:schur-weyl_duality} for the detail).
Let $\{\ket{\mu, u_\mu, s_\mu}\}$ be the orthonormal basis of $\mathcal{U}_{\mu, \mathcal{P}'\mathcal{O}'}^{(d)}\otimes \mathcal{S}_{\mu, \mathcal{P}'\mathcal{O}'}^{(k+1)}$ for each $\mu$ and we call $\{\ket{\mu, u_\mu, s_\mu}\}$ the Schur basis (see also Appendix~\ref{sec:schur-weyl_duality}). We define the change of basis $U^{\mathrm{Sch}}_{\mathcal{P}'\mathcal{O}'}$ transforming the computational basis for $\mathcal{P}'\otimes \mathcal{O}'$ to the Schur basis, and suppose $\ket{\mu}$, $\ket{u_\mu}$ and $\ket{s_\mu}$ are stored in the Hilbert spaces $\mathcal{M}_{\mathcal{P}'\mathcal{O}'}$, $\mathcal{U}_{\mathcal{P}'\mathcal{O}'}$ and $\mathcal{S}_{\mathcal{P}'\mathcal{O}'}$, respectively. The unitary operator $U^{\mathrm{Sch}}_{\mathcal{P}'\mathcal{O}'}$ is called the quantum Schur transform \cite{harrow2005applications, bacon2006efficient, Krovi2019Efficient}. Using the decomposition of Hilbert spaces given by Eqs.~(\ref{eq:sw_p'o'}) and (\ref{eq:sw_po}), we can decompose the tensor product $V^{\otimes k+1}: \mathcal{P}''\otimes \mathcal{I}\to \mathcal{P}\otimes \mathcal{O}$ of an isometry operator $V\in \mathbb{V}_{\mathrm{iso}}(d,D)$ as
\begin{align}
    V^{\otimes k+1}=\bigoplus_{\substack{\mu\vdash k+1\\l(\mu)\leq d}} V_{\mu}\otimes I_{\mathcal{S}_{\mu, \mathcal{P}''\mathcal{I}}^{(k+1)}\to \mathcal{S}_{\mu, \mathcal{P}\mathcal{O}}^{(k+1)}},
\end{align}
where $l(\mu)$ is the number of rows of a Young diagram $\mu$, $V_{\mu}\in \mathcal{L}(\mathcal{U}_{\mu, \mathcal{P}''\mathcal{I}}^{(d)}\to \mathcal{U}_{\mu, \mathcal{P}\mathcal{O}}^{(D)})$ is an isometry operator depending on the isometry $V$ and $I_{\mathcal{S}_{\mu, \mathcal{P}''\mathcal{I}}^{(k+1)}\to \mathcal{S}_{\mu, \mathcal{PO}}^{(k+1)}}$ is the isomorphism between irreducible representations $\mathcal{S}_{\mu, \mathcal{P}'\mathcal{O}'}^{(k+1)}$ and $\mathcal{S}_{\mu, \mathcal{PO}}^{(k+1)}$. We prove this decomposition in Appendix~\ref{sec:schur-weyl_duality} (see Eq.~(\ref{eq:isometry_schur_weyl})). This decomposition shows that the quantum information encoded in $\mathcal{S}^{(k+1)}_{\mu, \mathcal{P}''\mathcal{I}}$ is unchanged by the action of the parallel calls of any isometry operator $V\in \mathbb{V}_{\mathrm{iso}}(d,D)$, while the quantum information encoded in $\mathcal{U}_{\mu, \mathcal{P}''\mathcal{I}}^{(d)}$ is affected by a $V$-dependent action. In addition, the Hilbert space $\mathcal{U}_{\mu, \mathcal{P}\mathcal{O}}^{(D)}\otimes \mathcal{S}_{\mu, \mathcal{P}\mathcal{O}}^{(k+1)}$ for $l(\mu)>d$ is out of the image $\Im V^{\otimes k+1}$. From these observations, the quantum information relevant to retrieve the quantum state $\rho$ after the application of  $\map{V}^{\otimes k+1}$ is considered to be encoded in the Hilbert space $\mathcal{S}_{\mu, \mathcal{P}\mathcal{O}}^{(k+1)}$ for $l(\mu)\leq d$.  Then, we can ``compress'' the output state of  $\map{V}^{\otimes k+1}$ to a smaller Hilbert space whose dimension is independent of $D$ by a CPTP map $\widetilde{\Psi}$ defined in the next paragraph.

We define a key element for constructing an isometry inversion protocol, a CPTP map $\widetilde{\Psi}: \mathcal{L}(\mathcal{P}\otimes \mathcal{O})\to \mathcal{L}(\mathcal{P}'\otimes \mathcal{O}')$, by
\begin{align}
    &\widetilde{\Psi}_{\mathcal{P}\mathcal{O}\to\mathcal{P}'\mathcal{O}'}(\rho)\nonumber\\
    &\coloneqq \bigoplus_{\substack{\mu\vdash k+1\\l(\mu)\leq d}}\frac{\1_{\mathcal{U}^{(d)}_{\mu, \mathcal{P}'\mathcal{O}'}}}{d_{\mathcal{U}_{\mu}^{(d)}}}\otimes 
    \left[
    \widetilde{\mathcal{I}}_{{\mathcal{S}^{(k+1)}_{\mu, \mathcal{PO}}\to \mathcal{S}^{(k+1)}_{\mu, \mathcal{P}'\mathcal{O}'}}}\Tr_{\mathcal{U}_{\mu, \mathcal{P}\mathcal{O}}^{(D)}}(\Pi_{\mu, \mathcal{PO}}\rho)
    \right]
    +\frac{\1_{\mathcal{P}'\mathcal{O}'}}{d_{\mathcal{P}'\mathcal{O}'}}\times \sum_{\substack{\mu\vdash k+1\\l(\mu)>d}}\Tr(\Pi_{\mu, \mathcal{PO}}\rho),\label{eq:def_Psi}
\end{align}
where $\Pi_{\mu, \mathcal{PO}}$ is a projector on the Hilbert space $\mathcal{P}\otimes \mathcal{O}$ onto its subspace $\mathcal{U}_{\mu, \mathcal{P}\mathcal{O}}^{(D)}\otimes \mathcal{S}_{\mu, \mathcal{P}\mathcal{O}}^{(k+1)}$. 
The CPTP map $\widetilde{\Psi}$ ``extracts'' the quantum information encoded in $\mathcal{S}^{(k+1)}_{\mu, \mathcal{P}\mathcal{O}}$ for $l(\mu)\leq d$ from $\mathcal{P}\otimes \mathcal{O}$ by ``discarding'' $\mathcal{U}_{\mu, \mathcal{P}\mathcal{O}}^{(D)}$ and ``embeds'' it onto $\mathcal{P}'\otimes \mathcal{O}'$. More precisely, the CPTP map $\widetilde{\Psi}$ can be implemented by using the quantum Schur transform and the ``measure-and-prepare'' strategy \cite{horodecki2003entanglement, Bisio2010Optimal} as follows (see Figure~\ref{fig:Psi_construction}). First, we apply the quantum Schur transform $U^{\mathrm{Sch}}_{\mathcal{P}\mathcal{O}}$ on $\mathcal{P}\otimes \mathcal{O}$ and measure $\ket{\mu}$. If $l(\mu)\leq d$, we replace $\ket{u_\mu}$ by the $d_{\mathcal{U}_{\mu}^{(d)}}$-dimensional maximally mixed state and apply the inverse of the quantum Schur transform $U^{\mathrm{Sch}\dagger}_{\mathcal{P}'\mathcal{O}'}$ on $\mathcal{P}'\otimes \mathcal{O}'$ to obtain the output state. Otherwise, we replace the entire quantum state by the $d_{\mathcal{P}'\mathcal{O}'}$-dimensional maximally mixed state to obtain the output state.

\begin{figure}[tb]
    \centering
    \includegraphics[width=0.8\linewidth]{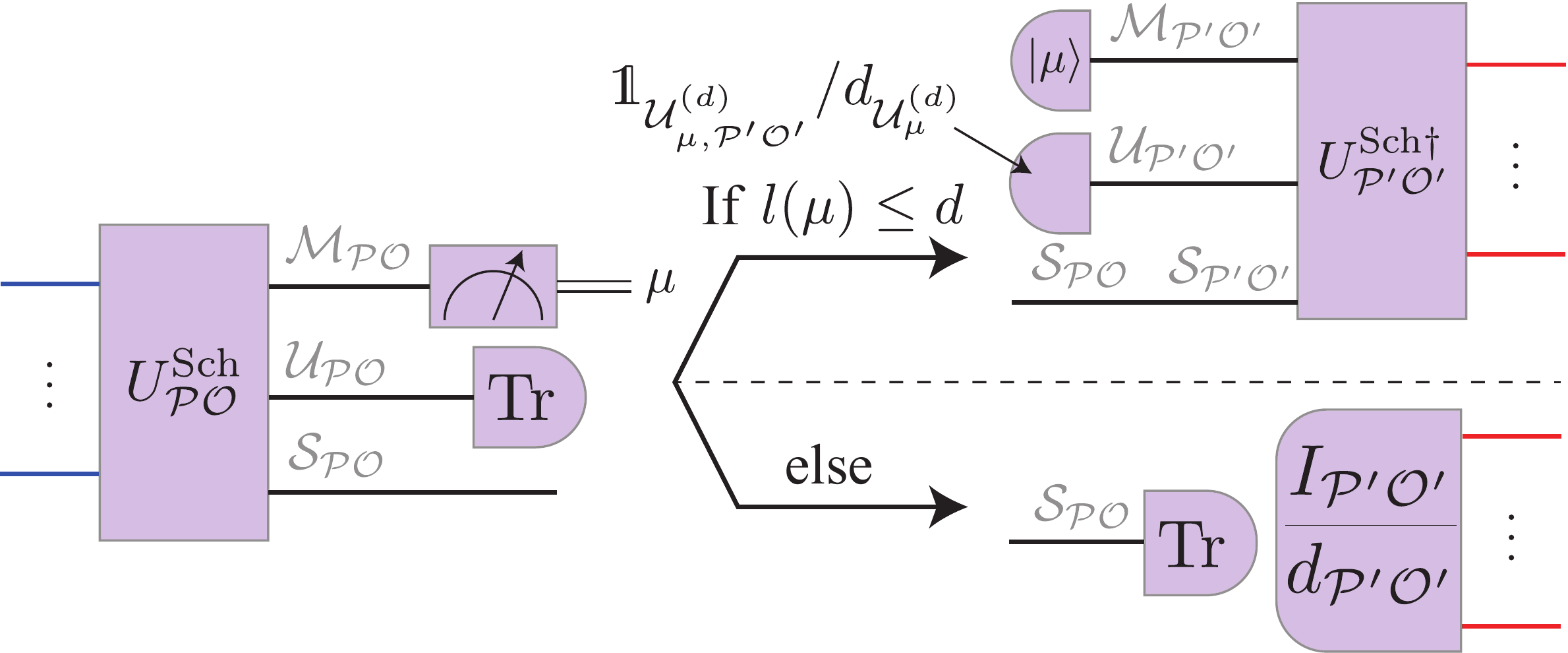}
    \caption{{Implementation of the CPTP map $\widetilde{\Psi}$ by using the quantum Schur transform \cite{harrow2005applications, bacon2006efficient, Krovi2019Efficient} and the ``measure-and-prepare'' strategy {\cite{horodecki2003entanglement, Bisio2010Optimal}}. The projective measurement is applied on the Hilbert space $\mathcal{M}_{\mathcal{PO}}$ after the quantum Schur transform. Depending on the measurement outcome $\mu$, we replace a quantum state by a fixed quantum state and apply the inverse of quantum Schur transform to obtain the output state. The wire between $\mathcal{S}_{\mathcal{PO}}$ and $\mathcal{S}_{\mathcal{P}'\mathcal{O}'}$ represents the identity map.}}
    \label{fig:Psi_construction}
\end{figure}

The CPTP map $\widetilde{\Psi}$ satisfies the following lemma, which will play a crucial role to construct a parallel isometry inversion protocol.

\begin{Lem}
\label{lem:Lambda}
For any isometry operation $\widetilde{\mathcal{V}}$ corresponding to $V\in \mathbb{V}_\mathrm{iso} (d,D)$,
\begin{align}
    \widetilde{\Psi}\circ \widetilde{\mathcal{V}}^{\otimes k+1}=\int \dd U \widetilde{\mathcal{U}}^{\otimes k+1}\label{eq:Psi_U}
\end{align}
holds, where $\widetilde{\mathcal{U}}$ is a unitary operation corresponding to $U \in \mathbb{U}(d) $, and $\dd U$ is the Haar measure on  $\mathbb{U}(d)$.
\end{Lem}

{This Lemma~shows the transformation from $k+1$ parallel calls of an isometry {operation $\widetilde{\mathcal{V}}$ corresponding to} $V\in \mathbb{V}_{\mathrm{iso}}(d,D)$ into $k+1$ parallel calls of a $d$-dimensional randomly (and independently of the isometry {operator} $V$) chosen unitary {operation}. {Note that, for $D=d$, the action of the CPTP map $\widetilde{\Psi}$ matches parallel calls of a randomly chosen unitary operation, i.e.,
\begin{align}
    \widetilde{\Psi}=\int \dd U \widetilde{\mathcal{U}}^{\otimes k+1}.
\end{align}
Then, Lemma~\ref{lem:Lambda} reduces to a well-known relation for the Haar measure $\dd U$ on the unitary group $\mathbb{U}(d)$ given by
\begin{align}
    \left( \int \dd U \widetilde{\mathcal{U}}^{\otimes k+1}\right)\circ \widetilde{\mathcal{U}}'^{\otimes k+1}=\int \dd U \widetilde{\mathcal{U}}^{\otimes k+1}
\end{align}
for all $U'\in \mathbb{U}(d)$.} In a sense, the CPTP map $\widetilde{\Psi}$ extends this relation to an arbitrary $D$ while keeping the independence on $D$ on the right hand side, which may be of independent interest.
}See Appendix \ref{sec:appendix_Lambda} for the proof of Lemma~\ref{lem:Lambda}.

\begin{figure*}[th]
    \centering
    \includegraphics[width=\linewidth]{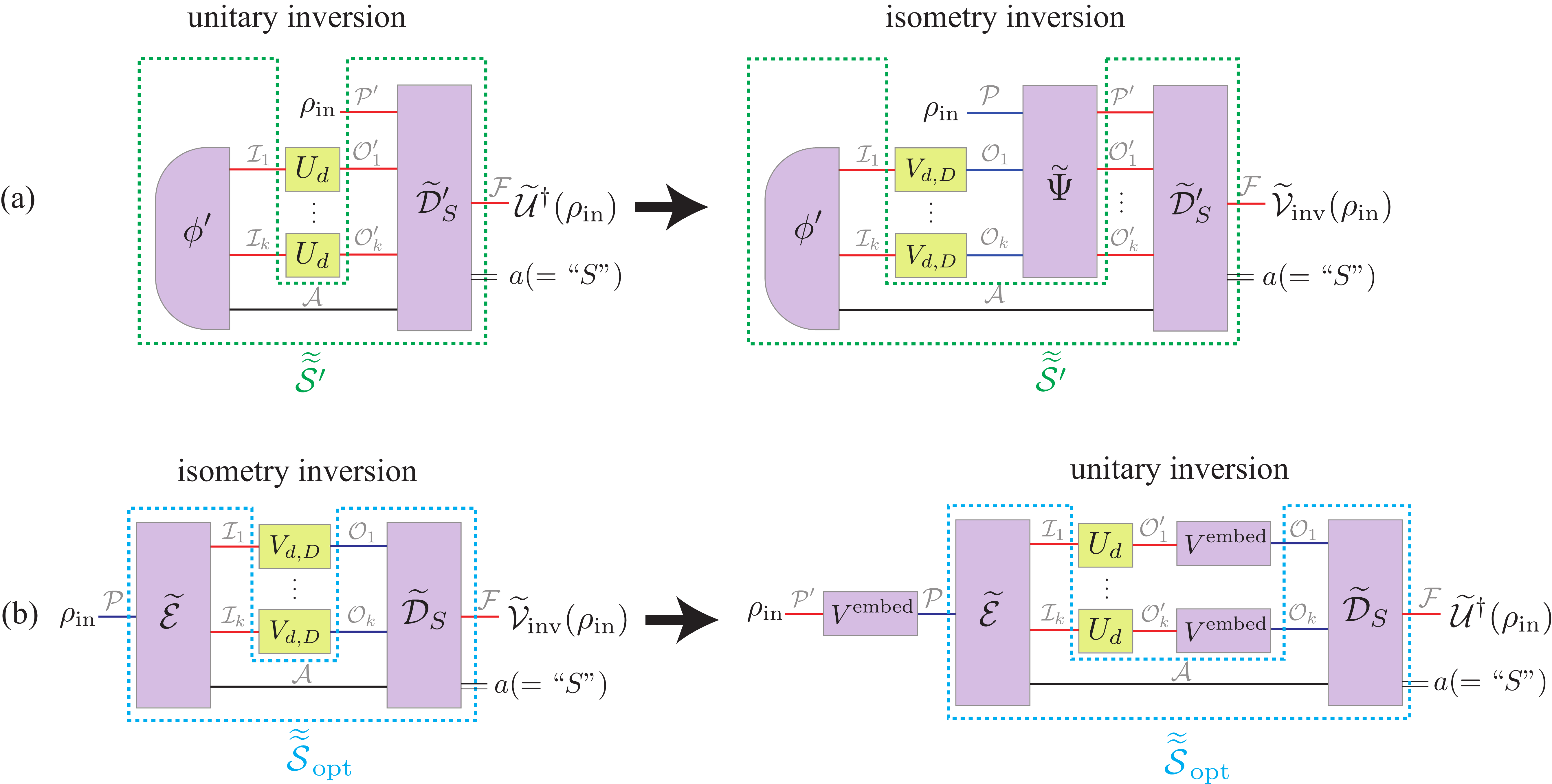}
    \caption{(a) Left panel: Parallel delayed input-state protocol $\doublewidetilde{\mathcal{S}'}$ for unitary inversion of $U_d\in \mathbb{U}(d)$ using a quantum state $\phi'$ and a quantum instrument $\{\widetilde{\mathcal{D}}'_S, \widetilde{\mathcal{D}}'_F\}$ (see also Eq.~(\ref{eq:def_S'})).
    Right panel: Construction of a parallel protocol for isometry inversion of $V_{d, D}\in \mathbb{V}_{\mathrm{iso}}(d, D)$ using a parallel delayed input-state protocol $\doublewidetilde{\mathcal{S}'}$ for unitary inversion of $U_d\in \mathbb{U}(d)$ and the CPTP map $\widetilde{\Psi}$ defined in Eq.~(\ref{eq:def_Psi}).
    This protocol achieves the same success probability as the parallel delayed-input state protocol for unitary inversion of $U_d\in \mathbb{U}(d)$ shown in the left panel. (b) Left panel: The parallel protocol $\doublewidetilde{\mathcal{S}^{}}_{\mathrm{opt}}$ for isometry inversion of $V_{d, D}\in \mathbb{V}_{\mathrm{iso}}(d, D)$ achieving the optimal success probability can be implemented using a CPTP map $\widetilde{\mathcal{E}}$ and a quantum instrument $\{\widetilde{\mathcal{D}}_S, \widetilde{\mathcal{D}}_F\}$ (see also Eq.~(\ref{eq:parallel_isometry_inversion_superinstrument})).
    Right panel: Construction of a parallel protocol for unitary inversion of $U_d\in \mathbb{U}(d)$ using the optimal parallel protocol $\doublewidetilde{\mathcal{S}^{}}_{\mathrm{opt}}$ for isometry inversion of $V_{d, D}\in \mathbb{V}_{\mathrm{iso}}(d,D)$ and the embedding isometry operator $V^{\mathrm{embed}}$ defined in Eq.~(\ref{eq:def_embedding}). This protocol achieves the same success probability as the optimal parallel protocol for isometry inversion shown in the left panel.}
    \label{fig:untiary_inversion_to_isometry_inversion}
\end{figure*}

Using the CPTP map $\widetilde{\Psi}$, we derive a parallel delayed input-state protocol for isometry inversion protocol as follows.  We define a parallel delayed input-state superinstrument $\{\doublewidetilde{\mathcal{S}^{}}, \doublewidetilde{\mathcal{F}^{}} \}: [\mathcal{L}(\mathcal{I})\to \mathcal{L}(\mathcal{O})] \to [\mathcal{L}(\mathcal{P})\to \mathcal{L}(\mathcal{F})]$ by inserting the CPTP map $\widetilde{\Psi}$ to the parallel delayed input-state superinstrument $\{\doublewidetilde{\mathcal{S}^{\prime}}, \doublewidetilde{\mathcal{F}^{\prime}} \}$ for unitary inversion as
\begin{align}
    \doublewidetilde{\mathcal{S}^{}}(\widetilde{\Lambda}_{\mathrm{in}})(\cdot)
    &=
    \left[
    {\widetilde{\mathcal{D}}'_{S}}\circ
    \left(
    \widetilde{\Psi}\otimes \widetilde{\1}_{\mathcal{A}}
    \right)
    \circ 
    \left(
    \widetilde{\1}_{\mathcal{PA}}\otimes \widetilde{\Lambda}_{\mathrm{in}}
    \right)
    \right]
    (\cdot \otimes {\phi'}_{\mathcal{IA}}), \\
    \doublewidetilde{\mathcal{F}^{}}(\widetilde{\Lambda}_{\mathrm{in}})(\cdot)&=
    \left[
    {\widetilde{\mathcal{D}}'_{F}}\circ
    \left(
    \widetilde{\Psi}\otimes \widetilde{\1}_{\mathcal{A}}
    \right)
    \circ 
    \left(
    \widetilde{\1}_{\mathcal{PA}}\otimes \widetilde{\Lambda}_{\mathrm{in}}
    \right)
    \right]
    (\cdot \otimes {\phi'}_{\mathcal{IA}}),
\end{align}
 where the quantum state $\phi'$ and the quantum instrument $\{\widetilde{\mathcal{D}}'_{S}, \widetilde{\mathcal{D}}'_{F}\}$ are in the superinstrument $\{\doublewidetilde{\mathcal{S}'}, \doublewidetilde{\mathcal{F}'}\}$ for unitary inversion (see Eqs.~(\ref{eq:def_S'}), (\ref{eq:def_F'}), and Figure~\ref{fig:untiary_inversion_to_isometry_inversion}~(a)). Note that the CPTP map $\widetilde{\Psi}$ can be inserted since $\{\doublewidetilde{\mathcal{S}'}, \doublewidetilde{\mathcal{F}'}\}$ is a parallel delayed-input state protocol.
If the input state of the parallel delayed-input state protocol $\doublewidetilde{\mathcal{S}^{}}$ is $\rho_{\mathrm{in}}=\widetilde{\mathcal{V}}(\rho)$, the output state is calculated as
\begin{align}
    {\doublewidetilde{\mathcal{S}^{}}(\widetilde{\mathcal{V}}^{\otimes k})(\rho_{\mathrm{in}})}
    &={[}\doublewidetilde{\mathcal{S}^{}}(\widetilde{\mathcal{V}}^{\otimes k})\circ \widetilde{\mathcal{V}}{]}({\rho})\\
    &=
    \left[
    {\widetilde{\mathcal{D}}'_{S}}\circ
    \left(
    \widetilde{\Psi}\circ \widetilde{\mathcal{V}}^{\otimes k+1}\otimes \widetilde{\1}_{\mathcal{A}}
    \right)
    \right]({\rho} \otimes {\phi'}_{\mathcal{IA}})\\
    &=\int \dd U 
    \left[
    {\widetilde{\mathcal{D}}'_{S}}\circ
    \left(\widetilde{\mathcal{U}}^{\otimes k+1}\otimes \widetilde{\1}_{\mathcal{A}}
    \right)
    \right]({\rho} \otimes {\phi'}_{\mathcal{IA}})\\
    &=\int \dd U \left[
    \doublewidetilde{\mathcal{S}'}(\widetilde{\mathcal{U}}^{\otimes k})\circ \widetilde{\mathcal{U}}
    \right]
    ({\rho})\\
    &={p'_{\mathrm{succ}}}\int \dd U 
    \left(
    \widetilde{\mathcal{U}}^{\dagger}\circ \widetilde{\mathcal{U}}
    \right)
    ({\rho})\\
    &={p'_{\mathrm{succ}}}{\rho}\\
    &={p'_{\mathrm{succ}}\widetilde{\mathcal{V}}_{\mathrm{inv}}(\rho_{\mathrm{in}})}.
\end{align}
from Eqs.~(\ref{eq:S'_unitary_inversion}) and (\ref{eq:Psi_U}). Therefore, the quantum superinstrument $\{\doublewidetilde{\mathcal{S}^{}}, \doublewidetilde{\mathcal{F}^{}}\}$ implements an isometry inversion protocol and its success probability is ${p'_{\mathrm{succ}}}$  when the input quantum state $\rho_\mathrm{in}$ is in the image $\mathrm{Im} \map{V}$ of the input isometry operation $\map{V}$.
\end{proof}

\subsection{Proof of the main result}

\begin{proof}[Proof of Theorem~\ref{thm:optimal_isometry_inversion_and_PBT}]
We show the first part of Theorem~\ref{thm:optimal_isometry_inversion_and_PBT}, i.e., the optimal success probability of parallel isometry inversion of $V\in\mathbb{V}_{\text{iso}}(d,D)$ coincides with the optimal success probability of parallel unitary inversion of $U\in \mathbb{U}(d)$. The achievability is shown by constructing an isometry inversion protocol using Lemma~\ref{lem:isometry_inversion_construction}. The optimality is shown by constructing a $d$-dimensional unitary inversion protocol from a given isometry inversion protocol by embedding $\mathbb{U}(d)$ to $\mathbb{V}_{\text{iso}}(d,D)$. The second part of Theorem~\ref{thm:optimal_isometry_inversion_and_PBT}  states the existence of a parallel isometry inversion protocol achieving the success probability $p_{\mathrm{succ}}=\lfloor k/(d-1) \rfloor/[d^2+\lfloor k/(d-1)\rfloor -1]$ and its optimality for $d=2$.  The existence is shown by constructing an isometry inversion protocol from a corresponding parallel delayed input-state protocol for unitary inversion of $U\in\mathbb{U}(d)$ with a success probability $p_{\mathrm{succ}}=\lfloor k/(d-1) \rfloor/[d^2+\lfloor k/(d-1)\rfloor -1]$ using Lemma~\ref{lem:isometry_inversion_construction}. Since this parallel unitary inversion protocol is known to be optimal for $d=2$, this success probability of parallel isometry inversion protocol is optimal for $d=2$ (see Appendix~\ref{sec:appendix_unitary_inversion} and Ref.~\cite{Quintino2019Reversing}). We prove the achievability and the optimality of the success probability of isometry inversion as follows.

 (Achievability)
The optimal success probability  $p'_\mathrm{opt} = p_\mathrm{opt}(d,d,k)$ of parallel unitary inversion  of $U\in \mathbb{U}(d)$ is  shown to be achieved by a delayed input-state protocol in Ref.~\cite{Quintino2019Probabilistic}.   Using the delayed input-state protocol achieving the optimal success probability $p'_\mathrm{opt}$ with Lemma~\ref{lem:isometry_inversion_construction}, we show the existence of a parallel protocol for isometry inversion of $V\in \mathbb{V}_\mathrm{iso} (d,D)$ with  the success probability $p'_\mathrm{opt}$.

 (Optimality)
We show the existence of a parallel protocol for unitary inversion of $U\in \mathbb{U}(d)$ with the same success probability as the optimal parallel protocol for isometry inversion of $V\in \mathbb{V}_\mathrm{iso} (d,D)$ to complete the proof of the first part of Theorem~\ref{thm:optimal_isometry_inversion_and_PBT}.
We consider Hilbert spaces $\mathcal{P}=\mathbb{C}^D$, $\mathcal{F}=\mathbb{C}^d$, $\mathcal{I}_i= \mathbb{C}^d$ and $\mathcal{O}_i=\mathbb{C}^D$ for $i\in \{1, \cdots, k\}$ and the joint Hilbert spaces $\mathcal{I}$ and $\mathcal{O}$ defined as $\mathcal{I}\coloneqq \bigotimes_{i=1}^k \mathcal{I}_i$ and 
$\mathcal{O}\coloneqq \bigotimes_{i=1}^k \mathcal{O}_i$, respectively. Suppose $\{\doublewidetilde{\mathcal{S}^{}}_{\mathrm{opt}}, \doublewidetilde{\mathcal{F}^{}}_{\mathrm{opt}}\}:[\mathcal{L}(\mathcal{I})\to \mathcal{L}(\mathcal{O})]\to[\mathcal{L}(\mathcal{P})\to\mathcal{L}(\mathcal{F})]$ be a $k$-input parallel superinstrument implementing isometry inversion of $V\in \mathbb{V}_{\mathrm{iso}}(d,D)$ with the optimal success probability $p_{\mathrm{opt}}  = p_\mathrm{opt}(d,D,k)$, i.e.,
\begin{align}
    \doublewidetilde{\mathcal{S}^{}}_{\mathrm{opt}}(\widetilde{\mathcal{V}}^{\otimes k}) (\rho_\mathrm{in})=p_{\mathrm{opt}}\widetilde{\mathcal{V}}_{\mathrm{inv}} (\rho_\mathrm{in})\;\;\;(\forall V\in  \mathbb{V}_\mathrm{iso} (d,D),  \forall \rho_\mathrm{in} \in \mathcal{L}(\mathrm{Im} V)).\label{eq:optimal_parallel_isometry_inversion}
\end{align}
The parallel superinstrument $\{\doublewidetilde{\mathcal{S}^{}}_{\mathrm{opt}}, \doublewidetilde{\mathcal{F}^{}}_{\mathrm{opt}}\}$ can be written as
\begin{align}
    \doublewidetilde{\mathcal{S}^{}}_{\mathrm{opt}}(\widetilde{\Lambda}_{\mathrm{in}})
    &=\widetilde{\mathcal{D}}_{S}\circ 
    \left(
    \widetilde{\Lambda}_{\mathrm{in}} \otimes \widetilde{\1}_{\mathcal{A}}
    \right)\circ \widetilde{\mathcal{E}},\label{eq:parallel_isometry_inversion_superinstrument}\\
    \doublewidetilde{\mathcal{F}^{}}_{\mathrm{opt}}(\widetilde{\Lambda}_{\mathrm{in}})
    &=\widetilde{\mathcal{D}}_{F}\circ 
    \left(
    \widetilde{\Lambda}_{\mathrm{in}} \otimes \widetilde{\1}_{\mathcal{A}}
    \right)\circ \widetilde{\mathcal{E}},
\end{align}
where $\mathcal{A}$ {is} an auxiliary Hilbert space, $\widetilde{\mathcal{E}}:\mathcal{L}(\mathcal{P})\to \mathcal{L}(\mathcal{I}\otimes \mathcal{A})$ is a CPTP map, and $\{\widetilde{\mathcal{D}}_{S}, \widetilde{\mathcal{D}}_{F}\}$ is a quantum instrument (see the left panel of Figure~\ref{fig:untiary_inversion_to_isometry_inversion}~ (b)). 

We construct a parallel protocol for unitary inversion of $U\in \mathbb{U}(d)$ with the same success probability $p_{\mathrm{opt}}$ as the parallel protocol for isometry inversion of $V_{d, D}\in \mathbb{V}_{\mathrm{iso}}(d, D)$ as follows. We consider Hilbert spaces $\mathcal{P}'=\mathbb{C}^d$ and $\mathcal{O}'_i=\mathbb{C}^d$ for $i\in \{1, \cdots, k\}$ and the joint Hilbert space $\mathcal{O}'\coloneqq \bigotimes_{i=1}^k \mathcal{O}'_i$. We define the embedding isometry operator $V^{\mathrm{embed}}:\mathbb{C}^d\to \mathbb{C}^D$ as
\begin{align}
    V^{\mathrm{embed}}\coloneqq \sum_{i=0}^{d-1}\ket{i}_{\mathbb{C}^D}\bra{i}_{\mathbb{C}^d},\label{eq:def_embedding}
\end{align}
where $\{\ket{i}\}_{i=0}^{d-1}$ and $\{\ket{i}\}_{i=0}^{D-1}$ are the computational bases of $\mathbb{C}^d$ and $\mathbb{C}^D$, respectively. By defining $\{\doublewidetilde{\mathcal{S}'}, \doublewidetilde{\mathcal{F}'}\}:[\mathcal{L}(\mathcal{I})\to \mathcal{L}(\mathcal{O}')]\to[\mathcal{L}(\mathcal{P}')\to\mathcal{L}(\mathcal{F})]$ as
\begin{align}
    \doublewidetilde{\mathcal{S}'}(\widetilde{\Lambda}_{\mathrm{in}})
    \coloneqq&
    \widetilde{\mathcal{D}}_{S}\circ 
    \left[
    \left(\bigotimes_{i=1}^{k} \widetilde{\mathcal{V}}^{\mathrm{embed}}_{\mathcal{O}'_i\to \mathcal{O}_i}\circ \widetilde{\Lambda}_{\mathrm{in}}\right) \otimes \widetilde{\1}_{\mathcal{A}}
    \right]
    \circ \widetilde{\mathcal{E}}\circ \widetilde{\mathcal{V}}^{\mathrm{embed}}_{\mathcal{P}'\to \mathcal{P}},\\
    \doublewidetilde{\mathcal{F}'}(\widetilde{\Lambda}_{\mathrm{in}})
    \coloneqq&
    \widetilde{\mathcal{D}}_{F}\circ 
    \left[
    \left(\bigotimes_{i=1}^{k} \widetilde{\mathcal{V}}^{\mathrm{embed}}_{\mathcal{O}'_i\to \mathcal{O}_i}\circ \widetilde{\Lambda}_{\mathrm{in}}\right) \otimes \widetilde{\1}_{\mathcal{A}}
    \right]
    \circ \widetilde{\mathcal{E}}\circ \widetilde{\mathcal{V}}^{\mathrm{embed}}_{\mathcal{P}'\to \mathcal{P}},
\end{align}
$\{\doublewidetilde{\mathcal{S}'}, \doublewidetilde{\mathcal{F}'}\}$ is a parallel superinstrument as shown in Figure~\ref{fig:untiary_inversion_to_isometry_inversion}~(b). Since $V^{\mathrm{embed}}U$ is an isometry operator for all $U\in \mathbb{U}(d)$, we obtain
\begin{align}
    \doublewidetilde{\mathcal{S}'}(\widetilde{\mathcal{U}}^{\otimes k})=p_{\mathrm{opt}}(\widetilde{\mathcal{V}}^{\mathrm{embed}}\circ \widetilde{\mathcal{U}})_{\mathrm{inv}} \circ \widetilde{\mathcal{V}}^{\mathrm{embed}}
\end{align}
for all $U\in \mathbb{U}(d)$ from Eq.~(\ref{eq:optimal_parallel_isometry_inversion}). Since the inverse operation of an isometry operation is defined as Eq.~(\ref{eq:inverse}),
\begin{align}
    \doublewidetilde{\mathcal{S}'}(\widetilde{\mathcal{U}}^{\otimes k})\circ \widetilde{\mathcal{U}}
    &=p_{\mathrm{opt}}(\widetilde{\mathcal{V}}^{\mathrm{embed}}\circ \widetilde{\mathcal{U}})_{\mathrm{inv}} \circ \widetilde{\mathcal{V}}^{\mathrm{embed}}\circ \widetilde{\mathcal{U}}\\
    &=p_{\mathrm{opt}} \widetilde{\1}_d,
\end{align}
i.e.,
\begin{align}
    \doublewidetilde{\mathcal{S}'}(\widetilde{\mathcal{U}}^{\otimes k})=p_{\mathrm{opt}} \widetilde{\mathcal{U}}^{\dagger}
\end{align}
holds for all $U\in \mathbb{U}(d)$. Therefore, the parallel superinstrument $\{\doublewidetilde{\mathcal{S}'}, \doublewidetilde{\mathcal{F}'}\}$ implements unitary inversion of $U\in \mathbb{U}(d)$ with the success probability $p_{\mathrm{opt}}$. This completes the proof of the first part of Theorem~\ref{thm:optimal_isometry_inversion_and_PBT}.
 As stated in the first paragraph of the proof, the second part (achievability of the success probability $p_\mathrm{succ} = \lfloor k/(d-1) \rfloor/[d^2+\lfloor k/(d-1)\rfloor -1]$) of Theorem~\ref{thm:optimal_isometry_inversion_and_PBT} follows from the combination of the first part and a parallel delayed input-state protocol for unitary inversion shown in Appendix~\ref{sec:appendix_unitary_inversion} and Ref.~\cite{Quintino2019Reversing}.
\end{proof}

\begin{figure}[tb]
    \centering
    \includegraphics[width=\linewidth]{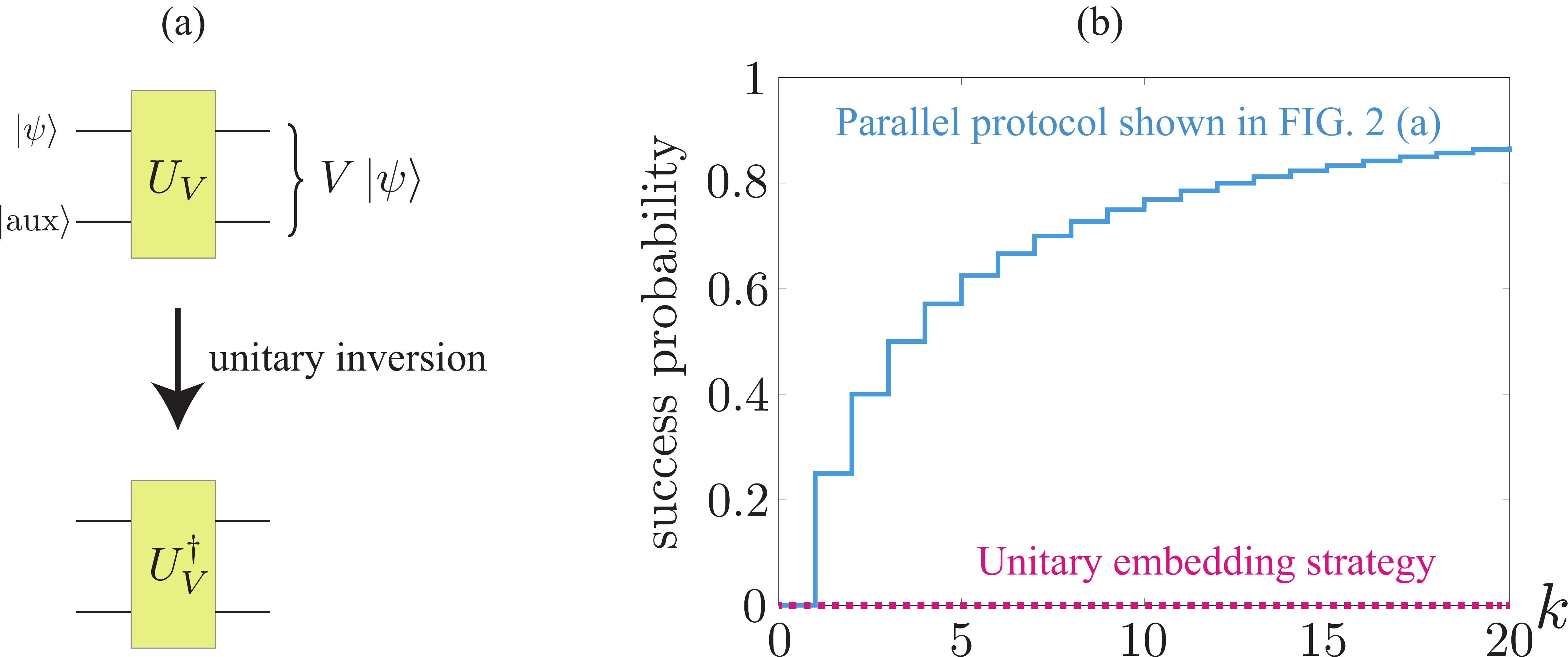}
    \caption{{(a) Unitary embedding strategy for isometry inversion. By inverting a unitary operation $U_V$ satisfying $V\ket{\psi}=U_V(\ket{\psi}\otimes \ket{\mathrm{aux}})$, we obtain the inverse map of an isometry operation {$\widetilde{\mathcal{V}}$}.} (b) Comparison of the success probability of the two protocols for isometry inversion using $k$ calls of $V\in \mathbb{V}_{\mathrm{iso}}(2, 2^5)$ or $U_V\in \mathbb{U}(2^5)$: the parallel protocol shown in Figure~\ref{fig:isometry_inversion_protocols}~(a) (blue line) and the strategy based on the embedding of the input isometry operation $V$ in a unitary operation $U_V$ (red dotted line).}
    \label{fig:untiary_embedding_strategy_and_graph}
\end{figure}

\subsection{Analysis of other related settings}
\label{subsec:comparison}

\subsubsection{Unitary embedding strategy}
for $D=dd'$, any isometry operator $V\in \mathbb{V}_\mathrm{iso} (d,D)$ can be  embedded in an appropriate unitary operation $ U_V  \in \mathbb{U}(D)$  by adding a fixed auxiliary system $\ket{\mathrm{aux}}\in\mathbb{C}^{d'}$  and seeking $U_V$ satisfying
\begin{align}
    V\ket{\psi}= U_V (\ket{\psi}\otimes \ket{\mathrm{aux}})
\end{align}
for all $\ket{\psi} \in \mathbb{C}^d$   (see Figure~\ref{fig:untiary_embedding_strategy_and_graph}~(a)).  Assuming that the black box implementing $ U_V$ and the fixed auxiliary state  are given, we can implement an inverse map of $\widetilde{\mathcal{V}}$ by applying the unitary inversion protocol. Any $D$-dimensional unitary inversion protocol needs at least $D-1$ calls of an input unitary operation  to be successful with a non-zero probability \cite{Quintino2019Reversing}. In contrast, our direct isometry inversion protocol using $k$ parallel calls of an input isometry operation $\widetilde{\mathcal{V}}$ corresponding to $V\in \mathbb{V}_\mathrm{iso} (d,D)$  implements the inverse map $\widetilde{\mathcal{V}}_{\mathrm{inv}}$ with a success probability $p_{\mathrm{succ}}=\lfloor k/(d-1) \rfloor/[d^2+\lfloor k/(d-1) \rfloor -1]$.
We compare these success probabilities for $d=2, D=2^5$ and $1\leq k\leq 20$ (see Figure~\ref{fig:untiary_embedding_strategy_and_graph}~(b)). The isometry inversion protocol achieves a success probability $p_{\mathrm{succ}}\approx 87\%$ at $k=20$, while probabilistic unitary inversion is impossible for any $k \leq 30$.

\subsubsection{Process tomography strategy}
Approximate isometry inversion can be achieved by approximating the classical description of $V\in \mathbb{V}_\mathrm{iso} (d,D)$ by quantum process tomography first \cite{Nielsen2010Quantum, Mohseni2008Quantum}, and then  applying the inverse map calculated from the classical description.  Quantum process tomography consists of three steps: preparing a quantum state called a probe state, applying a black box quantum operation to a probe state, and performing a measurement on the output state. By  repeating this procedure, we obtain a probability distribution of measurement outcomes to estimate a black box operation. An isometry operator $V \in \mathbb{V}_{\mathrm{iso}}(d, D)$ is uniquely specified by $2Dd-d^2+d-1  = \Omega(Dd)$ real parameters\footnote{We say $f(x) = \Omega(g(x))$ if $\limsup_{x\to \infty}{|f(x)/g(x)|}>0$.}, and $\Omega(\epsilon^{-2}\ln (1/(1-p)))$ measurements are necessary to estimate each parameter with  at most  error $\epsilon$ and  at least probability $p$ \cite{Mohseni2008Quantum}. Therefore, we need $ \Omega(Dd\epsilon^{-2}\ln (1/(1-p)))$ calls of an isometry operation $\widetilde{\mathcal{V}}$ for the process tomography of $V$.  On the other hand, since we can achieve the success probability $p_{\mathrm{succ}}=\lfloor k/(d-1) \rfloor/[d^2+\lfloor k/(d-1)\rfloor -1]$ by $k$ parallel calls of an isometry operation $\widetilde{\mathcal{V}}$, the success probability becomes $ p_{\text{succ}}$ for $k= (d-1)(d^2-1)/(1- p_{\text{succ}})=\mathcal{O}(d^3/(1- p_{\text{succ}}))$.\footnote{We say $f(x) = \mathcal{O}(g(x))$ if $\limsup_{x\to \infty}{|f(x)/g(x)|}<\infty$.}
 The isometry inversion protocol outperforms the strategy based on process tomography at two points. First, the isometry inversion protocol can implement the inverse operation  without error ($\epsilon=0$) using finite calls of $\widetilde{\mathcal{V}}$, while the process tomography needs $\Omega(\epsilon^{-2})$ calls to achieve the  error below $\epsilon$. Second, the isometry inversion protocol can be done with  a $D$-independent number of calls of $\widetilde{\mathcal{V}}$, while the process tomography requires a $D$-dependent number of calls.
 Though the scaling of the number of calls in our protocol with respect to $p_\mathrm{succ}$ is polynomial, we also find a protocol with $k = \mathcal{O}(\ln (1/(1-p_\mathrm{succ})))$ using a ``success-or-draw'' protocol for $d=2$ and $D=3$, whose existence is shown numerically (see Section \ref{sec:SDP}).

\section{Difference between isometry inversion protocols and unitary inversion protocols}
\label{sec:difference}

The parallel unitary inversion protocol presented in Ref. \cite{Quintino2019Reversing} consists of a concatenation of unitary complex conjugation and unitary transposition. {The implementation of unitary complex conjugation shown in Ref.~\cite{Miyazaki2019Complex} relies on the fact that the complex conjugate representation of $\mathbb{U}(d)$ is unitarily equivalent to the antisymmetric subspace in the tensor representation of $\mathbb{U}(d)$ on $(\mathbb{C}^d)^{\otimes d-1}$. The unitary transposition protocol presented in Ref.~\cite{Quintino2019Probabilistic} utilizes a variant of gate teleportation \cite{Gottesman1999Demonstrating} or the probabilistic port-based teleportation \cite{Ishizaka2008Asymptotica,Studzinski2017Portbased}.} In contrast, we show that isometry inversion protocols cannot be decomposed into isometry complex conjugation and isometry transposition.  We can consider another similar strategy by considering ``isometry pseudo complex conjugation'', but this strategy is shown to be inefficient. We first investigate protocols for isometry complex conjugation and isometry  pseudo complex conjugation, and then analyze the difference between isometry inversion and unitary inversion in this section.  Isometry transposition is investigated in Appendix \ref{sec:appendix_isometry_tranpose_PBT}.

\subsection{The no-go theorem for isometry complex conjugation}
\label{sec:isometry_conjugation_adjointation}

We prove that any isometry inversion protocol cannot be decomposed into isometry complex conjugation and isometry transposition  by showing a no-go theorem for isometry complex conjugation.

 To state the no-go theorem properly, we introduce the notion of  {\it general} superinstrument  including the ones with an indefinite causal order \cite{Oreshkov2012Quantum, Chiribella2013Quantum, Araujo2015Witnessing, Wechs2019Definition, Bisio2019Theoretical, Yokojima2021Consequences, Vanrietvelde2021Routed}, which  describes the most general higher-order probabilistic transformation  where the order of the use of the input maps is not pre-determined.  Using the same notations of the Hilbert spaces and the input maps introduced for a superinstrument in Section \ref{sec:supermap_superinstrument_higher-order}, a general superinstrument is a set of $k$-input supermaps $\{\doublewidetilde{\mathcal{C}^{}}_a\}: [\mathcal{L}(\mathcal{I})\to \mathcal{L}(\mathcal{O})] \to [\mathcal{L}(\mathcal{P})\to \mathcal{L}(\mathcal{F})]$ such that a set of output maps $\{\doublewidetilde{\mathcal{C}^{}}_a\otimes \doublewidetilde{\1}(\widetilde{\Lambda}_{\mathrm{in}})\}_ a$ is a quantum instrument for any set of input  CPTP maps $\{\widetilde{\Lambda}_{\mathrm{in}}^{(i)}\}_{i=1}^{k}$, where $\doublewidetilde{\1}$ is the identity supermap defined by $\doublewidetilde{\1}(\widetilde{\Lambda})=\widetilde{\Lambda}$.
 This definition of a general superinstrument $\{\supermap{C}_a\}$ states that a set of output maps is a quantum instrument even if the superinstrument acts on a subsystem of input CPTP maps.  Note that the condition that
$\supermap{C}_a \otimes \doublewidetilde{\1}(\widetilde{\Lambda}_\mathrm{in})$ is a quantum instrument for any input CPTP maps $\{\widetilde{\Lambda}_{\mathrm{in}}^{(i)}\}_{i=1}^{k}$,
is strictly stronger than the relaxed condition that
(b) $\supermap{C}_a(\widetilde{\Lambda}_\mathrm{in})$ is a quantum instrument for any input CPTP maps $\{\widetilde{\Lambda}_{\mathrm{in}}^{(i)}\}_{i=1}^{k}$,
which is similar to the distinction between positive maps and completely positive maps.

A general superinstrument $\{\doublewidetilde{\mathcal{C}^{}}_a\}$  is not necessarily represented by a quantum circuit,  which requires to fix a causal ordering of the use of the input maps.   To illustrate a protocol using a general supermap in a similar manner of quantum circuits, we use a notation of a general supermap represented by a box with windows for plugging input maps (see Figure~\ref{fig:general_superinstrument_and_complex_conjugation_proof}~(a)).  As the windows of the box are not causally ordered, the box represents a general supermap, a transformation of input maps with unspecified order to the output map\footnote{This notation is adopted from Ref.~\cite{Yokojima2021Consequences}.}.

 Using the notion of general superinstrument, we can state the following no-go theorem for isometry complex conjugation.

\begin{Thm}
\label{thm:isometry_complex_conjugation_nogo}
If $D\geq 2d$, it is impossible to transform finite calls of an isometry operation $\widetilde{\mathcal{V}}: \mathcal{L}(\mathbb{C}^d)\to\mathcal{L}(\mathbb{C}^D)$ into its complex conjugate map $\widetilde{\mathcal{V}}^*$ with a non-zero success probability  using any general superinstrument.
\end{Thm}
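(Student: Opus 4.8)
The plan is to turn the hypothetical protocol into a single positivity statement about a Choi operator and then contradict it with a two‑parameter family of isometries that exists precisely because $D\ge 2d$.

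First I would pass to the Choi picture. Suppose a general superinstrument $\{\supermap{C}_S,\supermap{C}_F\}$ satisfies $\supermap{C}_S(\widetilde{\mathcal V}^{\otimes k})=p\,\widetilde{\mathcal V}^*$ for all $V\in\mathbb{V}_{\mathrm{iso}}(d,D)$, with $p>0$ independent of $V$ (as required by the paper's notion of success probability). Let $C_S\geq 0$ be the Choi operator of the CP‑preserving supermap $\supermap{C}_S$ on $\mathcal P\otimes\mathcal F\otimes\bigotimes_{i=1}^k(\mathcal I_i\otimes\mathcal O_i)$; positivity of $C_S$ is all I will use, and it holds for the successful element of \emph{any} general superinstrument, with no normalization constraint invoked. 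Writing the Choi vector of $\widetilde{\mathcal V}$ on a slot as $\dket{V}=\sum_j\ket{j}\otimes V\ket{j}$ and recalling that the Choi operator of $\widetilde{\mathcal V}^*$ is $\dketbra{V^*}$, the link product for rank‑one inputs turns the hypothesis into
\begin{align}
    \dbra{W}^{\otimes k}\,C_S\,\dket{W}^{\otimes k}=p\,\dket{W}_{\mathcal{PF}}\dbra{W}_{\mathcal{PF}}\qquad(\forall\,W\in\mathbb{V}_{\mathrm{iso}}(d,D)),
\end{align}
where the sandwich on the left contracts the $k$ slot spaces and, after relabelling $W=V^*$ (a bijection of $\mathbb{V}_{\mathrm{iso}}(d,D)$), $W$ ranges over all isometries; taking $W$ with real entries below makes this step insensitive to the transpose conventions inside the link product.

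Next I would use $D\ge 2d$. Let $W_1,W_2$ be the isometric embeddings of $\mathbb{C}^d$ onto two orthogonal coordinate blocks of $\mathbb{C}^D$, so $W_1^\dagger W_2=0$ and $W(a,b):=aW_1+bW_2$ is an isometry whenever $|a|^2+|b|^2=1$. Substituting $W=W(a,b)$ and expanding in $a,\bar a,b,\bar b$, the left‑hand side becomes $\sum_{m,m'=0}^k\bar a^m\bar b^{\,k-m}a^{m'}b^{\,k-m'}\,B_{mm'}$, where $(B_{mm'})_{m,m'=0}^{k}$ is, up to a positive diagonal congruence, a compression of $C_S$ onto an orthonormal family of vectors, hence a positive semidefinite $(k{+}1)\times(k{+}1)$ block matrix of operators on $\mathcal P\otimes\mathcal F$. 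The right‑hand side equals $p$ times $|a|^2P_{11}+a\bar b\,P_{12}+\bar a b\,P_{21}+|b|^2P_{22}$ with $P_{ij}=\dket{W_i}_{\mathcal{PF}}\dbra{W_j}_{\mathcal{PF}}$ and $\dket{W_1}_{\mathcal{PF}}\perp\dket{W_2}_{\mathcal{PF}}$. Since the monomials $\bar a^m\bar b^{\,k-m}a^{m'}b^{\,k-m'}$ are linearly independent as functions on the sphere (after reducing $|a|^2=|a|^2(|a|^2+|b|^2)^{k-1}$, etc., on the right), matching coefficients forces $B_{mm'}=0$ for $|m-m'|\ge 2$ and, in particular, $B_{00}=pP_{22}$, $B_{01}=pP_{12}$, $B_{10}=pP_{21}$, and $B_{11}=p\bigl(P_{11}+(k{-}1)P_{22}\bigr)$.

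Finally I would read off the contradiction from positivity of $(B_{mm'})$. Restrict this block matrix to block indices $\{0,1\}$ and, inside $\mathcal P\otimes\mathcal F$, to $\mathrm{span}\{\dket{W_1}_{\mathcal{PF}},\dket{W_2}_{\mathcal{PF}}\}$; this compression is a positive semidefinite $4\times4$ matrix whose principal $2\times2$ submatrix indexed by $(0,\dket{W_1}_{\mathcal{PF}})$ and $(1,\dket{W_2}_{\mathcal{PF}})$ equals $\begin{pmatrix}0 & pd\\ pd & pd(k-1)\end{pmatrix}$, using $\dbraket{W_a}_{\mathcal{PF}}=d$ and $\dbra{W_1}W_2\rangle\!\rangle_{\mathcal{PF}}=0$. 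Its determinant is $-p^2d^2<0$, which is incompatible with positive semidefiniteness unless $p=0$, contradicting $p>0$; so no such superinstrument exists for $D\ge 2d$. I expect the only genuine work to be the bookkeeping in the first step — writing the link product carefully in the presence of the past/future and slot spaces and tracking complex conjugation versus transposition — together with the (elementary) check that the sphere monomials are linearly independent; Steps two and three are then the short linear‑algebra computation sketched above, and nothing there uses more than $C_S\succeq 0$, so the argument automatically covers indefinite‑causal‑order protocols.
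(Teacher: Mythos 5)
Your proof is correct, and it takes a genuinely different route from the paper's. The paper argues in two reduction steps: Lemma~\ref{lem:isometry_conjugate_nogo1} shows that for $D=dd'$ a conjugation protocol applied to the ``append-a-state'' isometries $V_{\ket{\phi}}\ket{\psi}=\ket{\psi}\otimes\ket{\phi}$ would yield probabilistic exact complex conjugation of the state $\ket{\phi}\in\mathbb{C}^{d'}$ from finitely many copies, contradicting the known no-go result of Ref.~\cite{Yang2014Certifying}; Theorem~\ref{thm:isometry_complex_conjugation_nogo} then reduces general $D\geq 2d$ to $D'=2d$ by composing with the embedding isometry $V^{\mathrm{embed}}$ and a recovery channel $\widetilde{\Xi}$. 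You instead work directly in the Choi picture: only $C_S\succeq 0$ and the link-product action are used (which is exactly how the paper itself characterizes general superinstruments in its SDP appendix, so your starting point is legitimate within its framework), the two-parameter family $W(a,b)=aW_1+bW_2$ --- available precisely because $D\geq 2d$ supplies two embeddings with orthogonal images --- plays the role of the unknown state $(a,b)$, and the contradiction is extracted by matching bihomogeneous monomial coefficients on the sphere and exhibiting a principal $2\times 2$ submatrix that cannot be positive semidefinite. Your approach is self-contained (it in effect re-derives the needed instance of the state-conjugation no-go rather than citing it), dispenses with the embedding reduction, and makes explicit that nothing beyond positivity of the success branch is required, so indefinite causal order is covered automatically; the paper's proof is shorter and more modular at the cost of invoking the external no-go. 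The only flaw is a cosmetic normalization in the last step: $\dbra{W_1}B_{01}\dket{W_2}=p\,\Tr(W_1^\dagger W_1)\Tr(W_2^\dagger W_2)=pd^2$ rather than $pd$ (and $p(k-1)d^2$ in the corner entry), which changes nothing --- indeed the vanishing diagonal entry $\dbra{W_1}B_{00}\dket{W_1}=0$ together with the nonzero off-diagonal entry already violates positive semidefiniteness without computing the determinant.
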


Note that  this no-go theorem states the impossibility of isometry complex conjugation from finite calls even if we can implement a general superinstrument, which is beyond the quantum circuit model.  To prove this no-go theorem, we first show the impossibility for $D$ is a multiple of $d$.

\begin{figure}[tb]
    \centering
    \includegraphics[width=\linewidth]{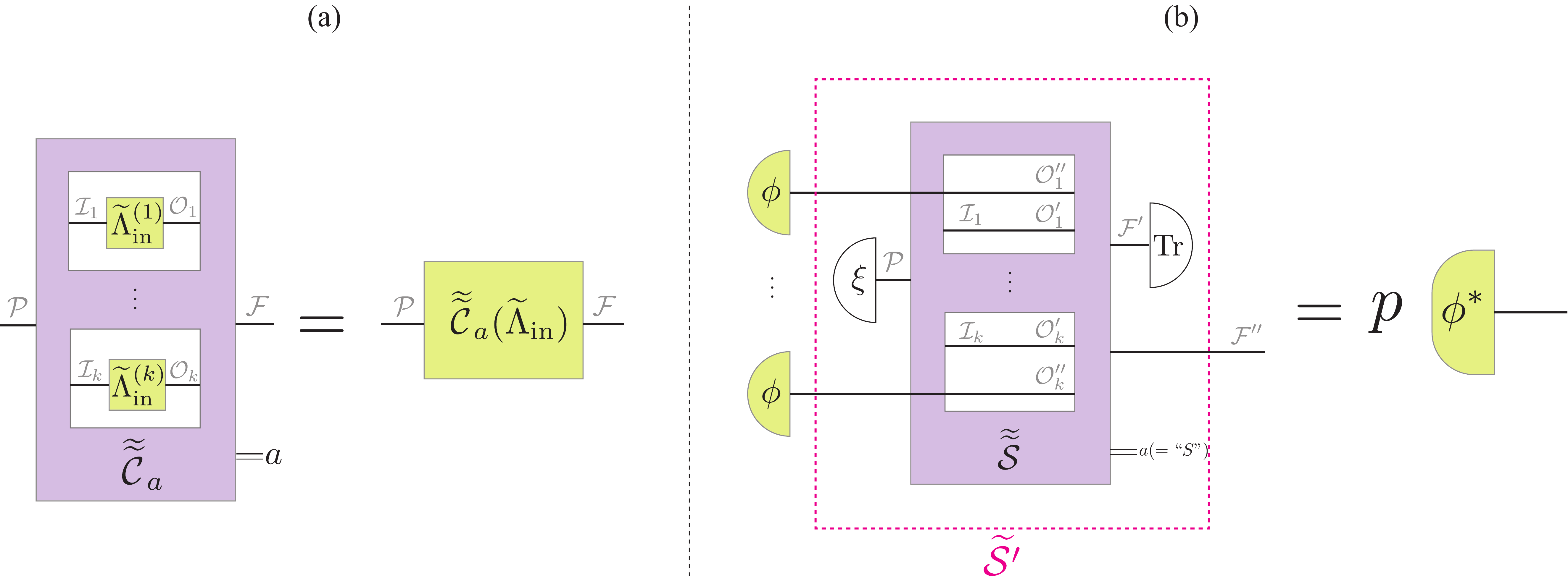}
    \caption{(a) The left hand side shows an illustration of a general superinstrument $\{\doublewidetilde{\mathcal{C}^{}}_a\}$ as a purple box with windows where input maps $\widetilde{\Lambda}_{\mathrm{in}}^{(i)}\;(i\in\{1, \cdots, k\})$ are plugged. The right hand side is the output map of a general superinstrument $\{\doublewidetilde{\mathcal{C}^{}}_a\}$ for input maps $\widetilde{\Lambda}_{\mathrm{in}}^{(i)}$. (b) A hypothetical protocol using a general superinstrument $\{\doublewidetilde{\mathcal{S}^{}}, \doublewidetilde{\mathcal{F}^{}}\}$ for isometry complex conjugation to prove Lemma~\ref{lem:isometry_conjugate_nogo1} by contradiction. The wire between $\mathcal{I}_i$ and $\mathcal{O}_i$ ($i\in \{1, \cdots, k\}$) represent the identity map and $\xi$ is an arbitrary quantum state.}
    \label{fig:general_superinstrument_and_complex_conjugation_proof}
\end{figure}

\begin{Lem}
\label{lem:isometry_conjugate_nogo1}
Let $d'$ be a natural number greater than 1.
If $D=dd'$, it is impossible to transform finite calls of an isometry operation $\widetilde{\mathcal{V}}: \mathcal{L}(\mathbb{C}^d)\to\mathcal{L}(\mathbb{C}^D)$ into its complex conjugate map $\widetilde{\mathcal{V}}^*$ with a non-zero success probability  using any general superinstrument.
\end{Lem}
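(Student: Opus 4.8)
The plan is to argue by contradiction and reduce the claim to the impossibility of \emph{pure-state} complex conjugation. Assume a general superinstrument $\{\doublewidetilde{\mathcal{S}^{}}, \doublewidetilde{\mathcal{F}^{}}\}$ transforms $k$ calls of every $\widetilde{\mathcal{V}}$ with $V\in\mathbb{V}_{\mathrm{iso}}(d,D)$ into $p_{\mathrm{succ}}\,\widetilde{\mathcal{V}}^*$ with $p_{\mathrm{succ}}>0$, where $D=dd'$ and $d'\geq 2$. Identify $\mathbb{C}^D\cong\mathbb{C}^d\otimes\mathbb{C}^{d'}$ so that the computational basis of $\mathbb{C}^D$ is the tensor-product basis, and restrict attention to the subfamily of isometries $V_\psi:=\1_d\otimes\ket{\psi}$ for $\ket{\psi}\in\mathbb{C}^{d'}$. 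Then $\widetilde{\mathcal{V}_\psi}$ is the channel $\rho\mapsto\rho\otimes\proj{\psi}$ that appends the fixed pure state $\ket{\psi}$ (this is the input depicted in Figure~\ref{fig:general_superinstrument_and_complex_conjugation_proof}~(b)), while $\widetilde{\mathcal{V}_\psi^*}$ is $\rho\mapsto\rho\otimes\proj{\bar\psi}$.

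First I would note that feeding the $k$ slots of $\{\doublewidetilde{\mathcal{S}^{}}\}$ with the channels $\widetilde{\mathcal{V}_\psi}$ is legitimate even when the superinstrument has indefinite causal order: since each $\widetilde{\mathcal{V}_\psi}$ is a memoryless channel fully specified by the state $\ket{\psi}$ it appends, the link product defining the superinstrument's output is a completely positive, multilinear function of the slot Choi operators, hence yields a genuine CP map. Concretely, intercepting each slot so that it appends one of $k$ registers holding $\ket{\psi}$, fixing the past-space input to some $\rho_0\in\mathcal{L}(\mathbb{C}^d)$, and discarding the $\mathbb{C}^d$ factor of the output produces a CP map $\mathcal{Q}:\mathcal{L}((\mathbb{C}^{d'})^{\otimes k})\to\mathcal{L}(\mathbb{C}^{d'})$ with $\mathcal{Q}(\proj{\psi}^{\otimes k})=p_{\mathrm{succ}}\proj{\bar\psi}$ for all $\ket{\psi}\in\mathbb{C}^{d'}$. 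So it suffices to show that no such CP map exists when $d'\geq 2$ and $p_{\mathrm{succ}}>0$.

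For this, restrict $\ket{\psi}$ to a real two-dimensional subspace of $\mathbb{C}^{d'}$ spanned by two computational basis vectors --- complex conjugation maps this subspace to itself and acts nontrivially on it --- so without loss of generality $d'=2$. Since $\{\proj{\psi}^{\otimes k}:\ket{\psi}\in\mathbb{C}^2\}$ spans $\mathcal{L}(\mathrm{Sym}^k(\mathbb{C}^2))$, linearity of $\mathcal{Q}$ forces its restriction to $\mathcal{L}(\mathrm{Sym}^k(\mathbb{C}^2))$ to equal $p_{\mathrm{succ}}$ times the map $\rho\mapsto(\Tr_{2\cdots k}\rho)^{T}$ (trace out all but one tensor factor, then transpose in the computational basis), because that map reproduces the required values $\proj{\psi}^{\otimes k}\mapsto p_{\mathrm{succ}}(\proj{\psi})^T=p_{\mathrm{succ}}\proj{\bar\psi}$. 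Complete positivity of $\mathcal{Q}$ then requires, for every $\sigma\geq 0$ supported on $\mathbb{C}^2\otimes\mathrm{Sym}^k(\mathbb{C}^2)$, that $(\widetilde{\1}_2\otimes\mathcal{Q})(\sigma)=p_{\mathrm{succ}}\,(\widetilde{\1}_2\otimes\widetilde{T})\big((\widetilde{\1}_2\otimes\Tr_{2\cdots k})(\sigma)\big)\geq 0$, where $\widetilde{T}$ transposes the single surviving $\mathbb{C}^2$. It therefore suffices to exhibit one state $\ket{\Psi}\in\mathbb{C}^2\otimes\mathrm{Sym}^k(\mathbb{C}^2)$ whose one-particle reduction $\Tr_{2\cdots k}\proj{\Psi}$ on $\mathbb{C}^2\otimes\mathbb{C}^2$ has a non-positive partial transpose. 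For $k=2$ this already holds for $\ket{\Psi}=\ket{0}\ket{00}+\ket{1}(\ket{01}+\ket{10})$, whose reduction is $\proj{00}+\ketbra{00}{11}+\ketbra{11}{00}+\proj{10}+\proj{11}$, a short computation giving partial-transpose eigenvalue $(1-\sqrt{5})/2<0$. This yields $(\widetilde{\1}_2\otimes\mathcal{Q})(\proj{\Psi})\not\geq 0$, contradicting complete positivity, so $p_{\mathrm{succ}}=0$.

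The step I expect to be the main obstacle is producing such an NPT-reduction witness uniformly in $k$ (the $k=2$ case is easy, but a clean argument valid for every finite $k$ is less immediate). The natural route is representation-theoretic: the forced map intertwines the spin-$k/2$ representation of $\mathrm{SU}(2)$ on $\mathrm{Sym}^k(\mathbb{C}^2)$ with the complex conjugate of the spin-$1/2$ representation, so its Choi operator lies in the relevant commutant and decomposes along the spin-$(k+1)/2$ and spin-$(k-1)/2$ sectors; one then checks that the component forced by $\rho\mapsto(\Tr_{2\cdots k}\rho)^T$ is not a nonnegative combination of the two sector projectors, which is exactly the failure of complete positivity above. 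A secondary point to handle carefully is the claim in the second paragraph that composing a causally indefinite superinstrument with fixed memoryless input channels gives an ordinary CP map; this follows from the definition of a general superinstrument as a valid higher-order transformation but warrants an explicit line.
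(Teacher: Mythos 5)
Your reduction is exactly the paper's: you restrict to the appending isometries $V_\psi=\1_d\otimes\ket{\psi}$ (the paper's $V_{\ket{\phi}}$), use the defining property of a general superinstrument to conclude that plugging these channels into the slots and discarding the $\mathbb{C}^d$ factor of the output yields a bona fide quantum instrument, and thereby obtain a CP map $\mathcal{Q}$ with $\mathcal{Q}(\proj{\psi}^{\otimes k})=p_{\mathrm{succ}}\proj{\bar\psi}$ for all $\ket{\psi}$. Up to that point the two arguments coincide. The divergence is in how one then rules out such a $\mathcal{Q}$. The paper simply cites the known no-go theorem for probabilistic exact complex conjugation of an unknown pure state from finitely many copies (Ref.~\cite{Yang2014Certifying}) and stops. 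You instead attempt a self-contained proof: restrict to a conjugation-invariant $\mathbb{C}^2$ coordinate subspace, use the (correct) fact that $\{\proj{\psi}^{\otimes k}\}$ spans $\mathcal{L}(\mathrm{Sym}^k(\mathbb{C}^2))$ to pin down $\mathcal{Q}$ on the symmetric subspace as $p_{\mathrm{succ}}$ times trace-out-then-transpose, and exhibit an entangled witness whose image under $\widetilde{\1}_2\otimes\mathcal{Q}$ fails to be positive. Your $k=2$ witness and its partial-transpose eigenvalue $(1-\sqrt{5})/2$ check out, and $k=1$ is the classic non-complete-positivity of transposition.

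The gap is the one you flag yourself: the lemma asserts impossibility for \emph{every} finite $k$, and impossibility for $k=2$ does not imply it for $k\geq 3$ (extra black-box calls can only help, so the statements get strictly stronger as $k$ grows). As written, your argument establishes the lemma only for $k\leq 2$; what is missing is a proof, uniform in $k$, that the forced map $\rho\mapsto(\Tr_{2\cdots k}\rho)^{T}$ on $\mathcal{L}(\mathrm{Sym}^k(\mathbb{C}^2))$ is never completely positive. Your proposed route through the $\mathrm{SU}(2)$ commutant is sound in outline --- the Choi operator of the forced covariant map must be a combination of the projectors onto the spin-$(k+1)/2$ and spin-$(k-1)/2$ sectors of $\mathrm{Sym}^k(\mathbb{C}^2)\otimes\mathbb{C}^2$, and one of the two coefficients comes out negative --- but you do not carry out that computation, so the general case remains open in your write-up. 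The quickest repair is the one the paper adopts: invoke the established no-go theorem for probabilistic exact state complex conjugation from finite copies, which is precisely the statement your (correct) reduction arrives at.
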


\begin{proof}
We prove this Lemma~by contradiction.  Probabilistic exact state complex conjugation of $\ket{\phi}$ is shown to be possible by applying probabilistic isometry complex conjugation to an isometry operator $V_{\ket{\phi}}$, which joints an auxiliary quantum state $\ket{\phi}$ to an input state. If probabilistic isometry complex conjugation can be implemented with a non-zero success probability, probabilistic exact state complex conjugation is also possible with a non-zero success probability, which contradicts the no-go theorem of state complex conjugation\footnote{The impossibility of probabilistic exact state complex conjugation $\ket{\phi}^{\otimes k} \mapsto \ket{\phi^*}$ can be easily understood for $k=1$. Since $\ketbra{\phi^*}{\phi^*}=\ketbra{\phi}{\phi}^T$ holds due to the hermicity of $\ketbra{\phi}{\phi}$, a quantum instrument $\{\widetilde{\mathcal{S}}', \widetilde{\mathcal{F}}'\}$ satisfies $\widetilde{\mathcal{S}}'(\ketbra{\phi}{\phi})=p_{\text{succ}}\ketbra{\phi^*}{\phi^*}$ if and only if $\widetilde{\mathcal{S}}'(\rho) = p_{\text{succ}} \rho^T$ for any $\rho\in \mathcal{L}(\mathcal{H})$. However, since transposition of a state is not completely positive, any quantum instrument $\{\widetilde{\mathcal{S}}', \widetilde{\mathcal{F}}'\}$ does not achieve non-zero success probability.} \cite{Yang2014Certifying}. See Figure~\ref{fig:general_superinstrument_and_complex_conjugation_proof}~(b) for  a hypothetical protocol using a general superinstrument to prove this Lemma~by contradiction.

We consider a $k$-input  general superinstrument $\{\doublewidetilde{\mathcal{S}^{}}, \doublewidetilde{\mathcal{F}^{}}\}: [\mathcal{L}(\mathcal{I})\to\mathcal{L}(\mathcal{O})]\to[\mathcal{L}(\mathcal{P})\to\mathcal{L}(\mathcal{F})]$, where the Hilbert spaces are given by $\mathcal{P}= \mathbb{C}^d$, $\mathcal{I}_i= \mathbb{C}^d$, $\mathcal{F}= \mathbb{C}^D=\mathbb{C}^{dd'}$, $\mathcal{O}_i= \mathbb{C}^D=\mathbb{C}^{dd'}$ for $i\in\{1, \cdots, k\}$, and $\mathcal{I}$ and $\mathcal{O}$ are the joint Hilbert spaces defined by $\mathcal{I}\coloneqq \bigotimes_{i=1}^{k}\mathcal{I}_i$ and $\mathcal{O}\coloneqq \bigotimes_{i=1}^{k}\mathcal{O}_i$, respectively.
We assume that the  general superinstrument $\{\doublewidetilde{\mathcal{S}^{}}, \doublewidetilde{\mathcal{F}^{}}\}$ implements an isometry complex conjugation protocol with a non-zero success probability $p_{\mathrm{succ}}>0$, i.e.,
\begin{align}
    \doublewidetilde{\mathcal{S}^{}}(\widetilde{\mathcal{V}}^{\otimes k})=p_{\mathrm{succ}}\widetilde{\mathcal{V}}^*\;\;\;(\forall V\in \mathbb{V}_{\mathrm{iso}}(d,dd')).
\end{align}

To construct the hypothetical protocol shown in Figure~\ref{fig:general_superinstrument_and_complex_conjugation_proof}~(b), we decompose the Hilbert space $\mathcal{O}_i$ as $\mathcal{O}_i=\mathcal{O}'_i\otimes \mathcal{O}_i''$, where $\mathcal{O}_i'= \mathbb{C}^d$ and $\mathcal{O}''_i= \mathbb{C}^{d'}$. We take the computational basis of $\mathcal{O}_i$ as $\{\ket{j'}_{\mathcal{O}'_i}\otimes \ket{j''}_{\mathcal{O}''_i}\}$, where $\{\ket{j'}_{\mathcal{O}'_i}\}$ and $\{\ket{j''}_{\mathcal{O}''_i}\}$ are the computational basis of $\mathcal{O}'_i$ and $\mathcal{O}''_i$, respectively. Similarly, we decompose the Hilbert space $\mathcal{F}$ as $\mathcal{F}=\mathcal{F}'\otimes \mathcal{F}''$,where $\mathcal{F}'= \mathbb{C}^d$ and $\mathcal{F}''= \mathbb{C}^{d'}$ and choose the computational basis of $\mathcal{F}$ as $\{\ket{j'}_{\mathcal{F}'}\otimes \ket{j''}_{\mathcal{F}''}\}$, where $\{\ket{j'}_{\mathcal{F}'}\}$ and $\{\ket{j''}_{\mathcal{F}''}\}$ are the computational basis of $\mathcal{F}'$ and $\mathcal{F}''$, respectively.

We define an isometry operator $V_{\ket{\phi}}\in \mathbb{V}_{\mathrm{iso}}(d, dd')$ by $V_{\ket{\phi}}\ket{\psi}\coloneqq \ket{\psi}\otimes \ket{\phi}$, where $\ket{\phi}\in \mathbb{C}^{d'}$ and $\ket{\psi}\in \mathbb{C}^d$.
Inserting $k$ calls of the isometry operation $\widetilde{\mathcal{V}}_{\ket{\phi}}$ corresponding to $V_{\ket{\phi}}$ into $\doublewidetilde{\mathcal{S}^{}}$, we obtain \begin{align}
    \doublewidetilde{\mathcal{S}^{}}(\widetilde{\mathcal{V}}_{\ket{\phi}}^{\otimes k})=p_{\mathrm{succ}}\widetilde{\mathcal{V}}_{\ket{\phi}^*}.
\end{align}
Next, inserting an arbitrary quantum state $\ket{\xi}\in\mathcal{P}$ into $\doublewidetilde{\mathcal{S}^{}}(\widetilde{\mathcal{V}}_{\ket{\phi}}^{\otimes k})$ and discarding $\mathcal{F}'$, we obtain
\begin{align}
    \Tr_{\mathcal{F}'}
    \left[
    \doublewidetilde{\mathcal{S}^{}}
    \left(
    \widetilde{\mathcal{V}}_{\ket{\phi}}^{\otimes k}
    \right)
    (\ketbra{\xi}{\xi})
    \right]
    &=p_{\mathrm{succ}}\Tr_{\mathcal{F}'}
    \left[
    \widetilde{\mathcal{V}}_{\ket{\phi}^*}(\ketbra{\xi}{\xi})
    \right]\\
    &=p_{\mathrm{succ}}\ketbra{\phi^*}{\phi^*}.\label{eq:state_conjugation}
\end{align}
We define $\widetilde{\mathcal{S}}':\bigotimes_{i=1}^{k}\mathcal{L}(\mathcal{O}''_i)\to\mathcal{L}(\mathcal{F}')$ by
\begin{align}
    \widetilde{\mathcal{S}}'\left(\bigotimes_{i=1}^{k}\rho^{(i)}\right)\coloneqq \Tr_{\mathcal{F}'}
    \left[
    \doublewidetilde{\mathcal{S}^{}}\left(\bigotimes_{i=1}^{k}\widetilde{\Lambda}_{{\rho^{(i)}}}(\ketbra{\xi}{\xi})\right)
    \right],
\end{align}
where $\widetilde{\Lambda}_{\rho}:\mathcal{L}(\mathbb{C}^d)\to\mathcal{L}(\mathbb{C}^{dd'})$ is a CPTP map defined by $\widetilde{\Lambda}_{\rho}(\sigma)=\sigma\otimes \rho$. By defining $\widetilde{\mathcal{F}}':\bigotimes_{i=1}^{k}\mathcal{L}(\mathcal{O}''_i)\to\mathcal{L}(\mathcal{F}')$ similarly,
$\{\widetilde{\mathcal{S}}', \widetilde{\mathcal{F}}'\}$ is a  quantum instrument. From Eq.~(\ref{eq:state_conjugation}), we obtain
\begin{align}
    \widetilde{\mathcal{S}}'(\ketbra{\phi}{\phi}^{\otimes k})=p_{\mathrm{succ}}\ketbra{\phi^{*}}{\phi^{*}}\;\;\;(\forall\ket{\phi}\in\mathbb{C}^{d'}).
\end{align}
However, this contradicts the fact that probabilistic exact quantum state complex conjugation is impossible using finite copies of quantum states \cite{Yang2014Certifying}.
\end{proof}

\begin{proof}[Proof of Theorem~\ref{thm:isometry_complex_conjugation_nogo}]
 We show this no-go theorem by contradiction. We first show that, for any $D\geq D'\geq d$, if probabilistic exact isometry complex conjugation of $V\in \mathbb{V}_{\mathrm{iso}}(d, D)$ is possible, we can implement probabilistic exact isometry complex conjugation of $V\in \mathbb{V}_{\mathrm{iso}}(d, D')$ with the same success probability by embedding $\mathbb{V}_{\mathrm{iso}}(d, D')$ to $\mathbb{V}_{\mathrm{iso}}(d, D)$. Then, if probabilistic exact isometry complex conjugation is possible with a non-zero success probability for $D\geq 2d$, it is possible for $D=2d$, which contradicts Lemma~\ref{lem:isometry_conjugate_nogo1}.

We consider Hilbert spaces $\mathcal{P}=\mathbb{C}^d$, $\mathcal{F}=\mathbb{C}^D$, $\mathcal{I}_i= \mathbb{C}^d$ and $\mathcal{O}_i=\mathbb{C}^D$ for $i\in \{1, \cdots, k\}$ and the joint Hilbert spaces $\mathcal{I}$ and $\mathcal{O}$ defined as $\mathcal{I}\coloneqq \bigotimes_{i=1}^k \mathcal{I}_i$ and 
$\mathcal{O}\coloneqq \bigotimes_{i=1}^k \mathcal{O}_i$, respectively. Suppose $\{\doublewidetilde{\mathcal{S}^{}}, \doublewidetilde{\mathcal{F}^{}}\}:[\mathcal{L}(\mathcal{I})\to \mathcal{L}(\mathcal{O})]\to[\mathcal{L}(\mathcal{P})\to\mathcal{L}(\mathcal{F})]$ be a $k$-input general superinstrument implementing isometry inversion of $V\in \mathbb{V}_{\mathrm{iso}}(d,D)$ with a non-zero success probability $p_{\mathrm{succ}}>0$, i.e.,
\begin{align}
    \doublewidetilde{\mathcal{S}^{}}(\widetilde{\mathcal{V}}^{\otimes k})=p_{\mathrm{succ}}\widetilde{\mathcal{V}}^*\;\;\;(\forall V\in  \mathbb{V}_\mathrm{iso}).\label{eq:isometry_cc}
\end{align}
To construct a parallel protocol of isometry complex conjugation of $V\in \mathbb{V}_{\mathrm{iso}}(d, D')$, we first introduce Hilbert spaces $\mathcal{F}'=\mathbb{C}^{D'}$ and $\mathcal{O}'_i=\mathbb{C}^{D'}$ for $i\in \{1, \cdots, k\}$ and the joint Hilbert space $\mathcal{O}'\coloneqq \bigotimes_{i=1}^{k}\mathcal{O}'_i$. We define the embedding isometry operator $V^{\mathrm{embed}}: \mathbb{C}^{D'}\to \mathbb{C}^D$ similarly to Eq.~(\ref{eq:def_embedding}) and the CPTP map $\widetilde{\Xi}: \mathcal{L}(\mathcal{F})\to \mathcal{L}(\mathcal{F}')$ as
\begin{align}
    \widetilde{\Xi}(\rho)\coloneqq \sum_{i,j=0}^{D'-1}\ketbra{i}{j}_{\mathcal{F}'}\bra{i}\rho \ket{j}+\sum_{i=D'}^{D-1}\frac{\1_{D'}}{D'}\Tr(\ketbra{i}{i} \rho),
\end{align}
where $\{\ket{i}\}_{i=0}^{D'-1}$ and $\{\ket{i}\}_{i=0}^{D-1}$ are the computational bases of $\mathcal{F}'$ and $\mathcal{F}$, respectively. The CPTP map $\widetilde{\Xi}$ is an inverse map of $V^{\mathrm{embed}}$, i.e.,
\begin{align}
    \widetilde{\Xi}\circ \widetilde{\mathcal{V}}^{\mathrm{embed}}=\widetilde{\mathcal{I}}_d. \label{eq:extract_embedding}
\end{align}
By defining $\{\doublewidetilde{\mathcal{S}'}, \doublewidetilde{\mathcal{F}'}\}:[\mathcal{L}(\mathcal{I})\to \mathcal{L}(\mathcal{O}')]\to[\mathcal{L}(\mathcal{P})\to\mathcal{L}(\mathcal{F}')]$ as
\begin{align}
    \doublewidetilde{\mathcal{S}'}(\widetilde{\Lambda}_{\mathrm{in}})&\coloneqq \widetilde{\Xi}_{\mathcal{F}'\to \mathcal{F}}\circ \doublewidetilde{\mathcal{S}^{}}\left(\bigotimes_{i=1}^{k}\widetilde{\mathcal{V}}^{\mathrm{embed}}_{\mathcal{O}'_i\to \mathcal{O}_i}\circ \widetilde{\Lambda}_{\mathrm{in}}\right),\\
    \doublewidetilde{\mathcal{F}'}(\widetilde{\Lambda}_{\mathrm{in}})&\coloneqq \widetilde{\Xi}_{\mathcal{F}'\to \mathcal{F}}\circ \doublewidetilde{\mathcal{F}^{}}\left(\bigotimes_{i=1}^{k}\widetilde{\mathcal{V}}^{\mathrm{embed}}_{\mathcal{O}'_i\to \mathcal{O}_i}\circ \widetilde{\Lambda}_{\mathrm{in}}\right),
\end{align}
$\{\doublewidetilde{\mathcal{S}'}, \doublewidetilde{\mathcal{F}'}\}$ is a $k$-input general superinstrument. Since $V^{\mathrm{embed}}V_{d, D'}$ is an isometry operator for all $V_{d, D'}\in \mathbb{V}_{\mathrm{iso}}(d, D')$, we obtain
\begin{align}
    \doublewidetilde{\mathcal{S}'}(\widetilde{\mathcal{V}}_{d, D'})=p_{\mathrm{succ}} \widetilde{\Xi}\circ (\widetilde{\mathcal{V}}^{\mathrm{embed}}\circ \widetilde{\mathcal{V}}_{d, D'})^*
\end{align}
for all $V_{d, D'}\in \mathbb{V}_{\mathrm{iso}}(d, D')$ from Eq.~(\ref{eq:isometry_cc}). Since $V^{\mathrm{embed} *}=V^{\mathrm{embed}}$ holds (see Eq.~(\ref{eq:def_embedding})) and the CPTP map $\widetilde{\Xi}$ is an inverse map of $V^{\mathrm{embed}}$ (see Eq.~(\ref{eq:extract_embedding})), we obtain
\begin{align}
    \doublewidetilde{\mathcal{S}'}(\widetilde{\mathcal{V}}_{d, D'})
    &=p_{\mathrm{succ}} \widetilde{\Xi}\circ \widetilde{\mathcal{V}}^{\mathrm{embed}} \circ \widetilde{\mathcal{V}}_{d, D'}^*\\
    &=p_{\mathrm{succ}} \widetilde{\mathcal{V}}_{d, D'}^*
\end{align}
for all $V_{d, D'}\in \mathbb{V}_{\mathrm{iso}}(d, D')$. Thus, the general superinstrument $\{\doublewidetilde{\mathcal{S}'}, \doublewidetilde{\mathcal{F}'}\}$ implements isometry complex conjugation with the success probability $p_{\mathrm{succ}}$. However, Lemma~\ref{lem:isometry_conjugate_nogo1} immediately shows that probabilistic exact isometry complex conjugation with a non-zero success probability is impossible for $D=2d$. Therefore, probabilistic exact isometry complex conjugation with a non-zero success probability is impossible for $D\geq2d$.
\end{proof}

\subsection{Any isometry inversion protocol  concatenating isometry pseudo complex conjugation and the gate teleportation is inefficient}
\label{subsec:isometry_pseudo_cc}

Due to the no-go theorem for isometry complex conjugation, isometry inversion cannot be decomposed into isometry complex conjugation and isometry transposition. However, we can consider another similar strategy to construct isometry inversion  since the inverse map $\widetilde{\mathcal{V}}_{\mathrm{inv}}$ of an isometry operation $\widetilde{\mathcal{V}}$ is not necessarily the adjoint map $\widetilde{\mathcal{V}}^{\dagger}$. We consider ``isometry pseudo complex conjugation'', which is a task to implement the transposed map of $\widetilde{\mathcal{V}}_{\mathrm{inv}}$, which we call the pseudo complex conjugate map.  In other words, we say that a CP map $\widetilde{\mathcal{V}}_{\text{pcc}}: \mathcal{L}(\mathbb{C}^d) \to \mathcal{L}(\mathbb{C}^D)$ is a pseudo complex conjugate map of an isometry operation $\widetilde{\mathcal{V}}: \mathcal{L}(\mathbb{C}^d) \to \mathcal{L}(\mathbb{C}^D)$ if and only if $\widetilde{\mathcal{V}}_{\text{pcc}}^T \circ \widetilde{\mathcal{V}}(\rho) =\rho$ for all $\rho\in\mathcal{L}(\mathbb{C}^d)$. If  isometry pseudo complex conjugation is implementable, isometry inversion can be obtained by transposing the pseudo complex conjugate map using the gate teleportation \cite{Gottesman1999Demonstrating}.  We show that such a protocol is possible but not  as efficient as the protocol proposed in Section~\ref{sec:isometry_inversion}.

For $V\in \mathbb{V}_\mathrm{iso} (d,D)$, we define a CP map $\widetilde{\mathcal{V}}'': \mathcal{L}(\mathbb{C}^d)\to \mathcal{L}(\mathbb{C}^D)$ as
\begin{align}
    \widetilde{\mathcal{V}}''(\rho_{\mathrm{in}})=V^{*}\rho_{\mathrm{in}} (V^{*})^{\dagger} +\Pi^{*}_{(\Im V)^{\perp}}\Tr(\rho_{\mathrm{in}}).\label{eq:f(V)}
\end{align}
Then,  $\widetilde{\mathcal{V}}''$ is a pseudo complex conjugate map of $\widetilde{\mathcal{V}}$ since $\widetilde{\mathcal{V}}'\coloneqq \widetilde{\mathcal{V}}''^{T}$ is an inverse operation of $\widetilde{\mathcal{V}}$ given by
\begin{align}
    \widetilde{\mathcal{V}}'(\rho_{\mathrm{in}})=V^{\dagger} \rho_{\mathrm{in}} V+\1_d\Tr
    \left[
    \Pi_{(\Im V)^{\perp}}\rho_\mathrm{in}
    \right]
    .
\end{align}
The pseudo complex conjugate map $\widetilde{\mathcal{V}}''$ is implementable with a certain number of calls of $\widetilde{\mathcal{V}}$  using the protocol shown by Figure~\ref{fig:isometry_conjugate_modoki}.
 
In the protocol shown by Figure~\ref{fig:isometry_conjugate_modoki}, the Hilbert spaces are given by  $\mathcal{P}=\mathbb{C}^d$, $\mathcal{I}_i=\mathbb{C}^d$, $\mathcal{F}=\mathbb{C}^D$, and $\mathcal{O}_i=\mathbb{C}^D$ for $i\in \{1, \cdots, d-1\}$.  We also define the joint Hilbert spaces of $(d-1)$-input spaces and $(d-1)$-output spaces by $\mathcal{I}\coloneqq \bigotimes_{i=1}^{d-1} \mathcal{I}_i$ and $\mathcal{O}\coloneqq \bigotimes_{i=1}^{d-1} \mathcal{O}_i$, respectively.   The isometry operator $V^{\mathrm{a.s.}}$  encodes quantum information of $\rho_\mathrm{in}$ on $\mathcal{P}=\mathbb{C}^d$ into the $d$-dimensional totally antisymmetric subspace of $\mathcal{I}$ and is given by
\begin{align}
    V^{\mathrm{a.s.}}\coloneqq \sum_{\vec{k}\in\{0, \cdots, d-1\}^{d}}\frac{\epsilon_{\vec{k}}}{\sqrt{(d-1)!}}\ketbra{k_1\cdots k_{d-1}}{k_d},\label{eq:def_V^as}
\end{align}
where $\epsilon_{\vec{k}}$ is the antisymmetric tensor with rank $d$ and $\{\ket{k}\}_{k=0}^{d-1}$ is the computational basis of $\mathcal{P}$ and $\mathcal{I}_i$.    The CPTP map $\widetilde{\Lambda}:\mathcal{L}(\mathcal{O})\to \mathcal{L}(\mathcal{F})$  decodes quantum information encoded in the $d$-dimensional subspace of $\mathcal{O}$ into $\mathcal{F}=\mathbb{C}^D$ and given by
\begin{align}
    \widetilde{\Lambda}(\rho)\coloneqq \frac{1}{D-d+1}\sum_{0\leq j_1<\cdots < j_d\leq D-1}A_{\vec{j}}\rho A_{\vec{j}}^{\dagger}  + \frac{\1_D}{D} \Tr[(\1_{\mathcal{O}}-\Pi_{\mathcal{O}}^{\text{a.s.}})\rho],\label{eq:def_Lambda}
\end{align}
where $A_{\vec{j}}: \mathcal{O}\to\mathcal{F}$ is defined by
\begin{align}
    A_{\vec{j}}\coloneqq \sum_{\vec{k}\in \{1,\cdots d\}^d}\frac{\epsilon_{\vec{k}}}{\sqrt{(d-1)!}}\ketbra{j_{k_d}}{j_{k_1}\cdots j_{k_{d-1}}}
\end{align}
and $\{\ket{j}\}_{j=0}^{D-1}$ is the computational basis of $\mathcal{F}$ and $\mathcal{O}_i$.  The operator $\Pi_{\mathcal{O}}^{\text{a.s.}}$ is the orthogonal projector onto the antisymmetric subspace of $\mathcal{O}=\bigotimes_{i=1}^{d-1} \mathcal{O}_i$.

 The protocol presented above implements isometry pseudo complex conjugation as shown in the following Theorem.

\begin{Thm}
\label{thm:conjugate_modoki}
A parallel protocol shown in Figure~\ref{fig:isometry_conjugate_modoki} transforms $d-1$ calls of an isometry operation $\widetilde{\mathcal{V}}$ corresponding to $V \in \mathbb{V}_\mathrm{iso} (d, D)$ into its pseudo complex conjugate map $\widetilde{\mathcal{V}}''$ with a success probability $p_{\mathrm{succ}}=1/(D-d+1)$.
\end{Thm}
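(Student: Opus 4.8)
The plan is to show that the composed channel $\widetilde{\mathcal{N}}_V \coloneqq \widetilde{\Lambda}\circ\widetilde{\mathcal{V}}^{\otimes d-1}\circ\widetilde{\mathcal{V}}^{\mathrm{a.s.}}$ equals $\tfrac{1}{D-d+1}\,\widetilde{\mathcal{V}}''$. First I would check that $\widetilde{\Lambda}$ is a genuine CPTP map; the only nontrivial point is trace preservation, and here one uses that $A_{\vec j}^{\dagger}A_{\vec j}$ is exactly the orthogonal projector onto the totally antisymmetric subspace of $(\mathrm{span}\{\ket{j_1},\dots,\ket{j_d}\})^{\otimes d-1}$, so that $\sum_{\vec j}A_{\vec j}^{\dagger}A_{\vec j}=(D-d+1)\,\Pi^{\mathrm{a.s.}}_{\mathcal{O}}$ because every $(d-1)$-element subset of $\{0,\dots,D-1\}$ lies in precisely $D-d+1$ of the $d$-element subsets $\vec j$; together with the contribution $\1_{\mathcal{O}}-\Pi^{\mathrm{a.s.}}_{\mathcal{O}}$ from the garbage branch this gives $\tfrac{1}{D-d+1}\sum_{\vec j}A_{\vec j}^{\dagger}A_{\vec j}+(\1_{\mathcal{O}}-\Pi^{\mathrm{a.s.}}_{\mathcal{O}})=\1_{\mathcal{O}}$. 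Since $\widetilde{\mathcal{N}}_V$ is then CPTP, once I show $\widetilde{\mathcal{N}}_V=c\,\widetilde{\mathcal{V}}''$ for some scalar $c$, comparing traces on a normalized input and using $\Tr[\widetilde{\mathcal{V}}''(\rho)]=1+\dim(\Im V)^{\perp}=D-d+1$ forces $c=1/(D-d+1)$, which is the claimed success probability. (For $D=d$ this reduces to the known deterministic unitary complex conjugation protocol.)

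It remains to prove $\widetilde{\mathcal{N}}_V\propto\widetilde{\mathcal{V}}''$, and by linearity it suffices to evaluate it on pure inputs $\proj{\psi}$. Because $V^{\otimes d-1}$ commutes with permutations of its $d-1$ tensor factors and $V^{\mathrm{a.s.}}\ket{\psi}$ lies in the totally antisymmetric subspace of $(\mathbb{C}^d)^{\otimes d-1}$, the state $\ket{\Psi}\coloneqq V^{\otimes d-1}V^{\mathrm{a.s.}}\ket{\psi}$ lies in the totally antisymmetric subspace of $(\mathbb{C}^D)^{\otimes d-1}$; hence the garbage branch of $\widetilde{\Lambda}$ annihilates $\proj{\Psi}$ and the statement collapses to the operator identity $\sum_{\vec j}A_{\vec j}\proj{\Psi}A_{\vec j}^{\dagger}=V^{*}\proj{\psi}(V^{*})^{\dagger}+\Pi^{*}_{(\Im V)^{\perp}}$, equivalently, writing $\ket{\Psi^{(m)}}\coloneqq V^{\otimes d-1}V^{\mathrm{a.s.}}\ket{m}$, to
\begin{align}
\sum_{\vec j}A_{\vec j}\ket{\Psi^{(m)}}\bra{\Psi^{(n)}}A_{\vec j}^{\dagger}=V^{*}\ketbra{m}{n}(V^{*})^{\dagger}+\delta_{mn}\,\Pi^{*}_{(\Im V)^{\perp}}\qquad(0\le m,n\le d-1).\nonumber
\end{align}

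To establish this I would unpack the two building blocks in terms of minors of $V$. A direct computation shows that $\ket{\Psi^{(m)}}$ is, up to the sign $(-1)^{d-1-m}$, the unit vector $((d-1)!)^{-1/2}\bigwedge_{i<d,\,i\ne m}V\ket{i}$, the normalized wedge of all columns of $V$ except the $m$-th, so its computational-basis components are (up to sign) the $(d-1)\times(d-1)$ minors $\det V[T,\{0,\dots,d-1\}\setminus\{m\}]$ indexed by $(d-1)$-subsets $T$ of rows. Evaluating the explicit form of $A_{\vec j}$ on such a wedge then yields the compact formula $A_{\vec j}\ket{\Psi^{(m)}}=\sum_{q\in\vec j}\mathrm{cof}_{q,m}\!\big(V[\vec j,\{0,\dots,d-1\}]\big)\ket{q}$, i.e.\ $A_{\vec j}$ sends $\ket{\Psi^{(m)}}$ to the $m$-th column of the cofactor matrix of the $d\times d$ row block $V[\vec j,\cdot]$ of $V$, re-embedded into $\mathrm{span}\{\ket{q}:q\in\vec j\}$ (all signs cancel since $(d-1-m)+(\mathrm{pos}_{\vec j}(q)+m+1)+(d-\mathrm{pos}_{\vec j}(q))\equiv 0 \pmod 2$). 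Substituting both formulas, the left side of the identity becomes a sum over all $d$-subsets $\vec j$ of products of two $(d-1)$-minors of $V$; grouping by row sets and invoking the Cauchy--Binet identity $\sum_{T}\lvert\det V[T,S]\rvert^{2}=1$ for any $(d-1)$-element column set $S$ (the columns of $V$ are orthonormal), together with its bilinear refinements — which follow from exterior-algebra contraction identities of the type $\langle\iota_{\bra{b}}(\bigwedge_l\ket{c_l}),\,\iota_{\bra{a}}(\bigwedge_l\ket{c_l})\rangle\propto\bra{a}\big(\textstyle\sum_l\proj{c_l}\big)\ket{b}$ for orthonormal $\{\ket{c_l}\}$ — one checks that the contributions supported on $\Im V$ reassemble into $V^{*}\ketbra{m}{n}(V^{*})^{\dagger}$ and the remaining ones into $\delta_{mn}\Pi^{*}_{(\Im V)^{\perp}}$.

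The hardest part is this last collapse of the minor sums, in particular the sign bookkeeping that guarantees the outcome is exactly $V^{*}\rho(V^{*})^{\dagger}+\Pi^{*}_{(\Im V)^{\perp}}$ rather than a transpose- or conjugate-twisted variant; carrying the whole computation in exterior-algebra language keeps the signs manageable. The work can be shortened by exploiting covariance: from the standard relation $U^{\otimes d-1}V^{\mathrm{a.s.}}=\det(U)\,V^{\mathrm{a.s.}}U^{*}$ for $U\in\mathbb{U}(d)$ one gets $\widetilde{\mathcal{N}}_{VU}=\widetilde{\mathcal{N}}_{V}\circ\widetilde{\mathcal{U}}^{*}$, matching the fact that the pseudo complex conjugate map attached to $VU$ equals $\widetilde{\mathcal{V}}''\circ\widetilde{\mathcal{U}}^{*}$ (since $\Im(VU)=\Im V$), so the identity only needs to be verified for one isometry in each orbit of right multiplication by $\mathbb{U}(d)$; and if one additionally proves that $\widetilde{\Lambda}$, restricted to antisymmetric inputs, intertwines $\widetilde{\mathcal{W}}^{\otimes d-1}$ with $\widetilde{\mathcal{W}}^{*}$ for $W\in\mathbb{U}(D)$, the verification reduces to the single representative $V=V^{\mathrm{embed}}$, where $V^{*}=V^{\mathrm{embed}}$ and $\Pi^{*}_{(\Im V)^{\perp}}$ are the coordinate projectors and the identity is transparent.
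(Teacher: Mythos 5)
Your proposal is correct, and it reaches the result by a route that differs from the paper's. The paper works at the level of the Choi operator: it computes $J_{\widetilde{\Lambda}}=\frac{d}{D-d+1}\Pi^{\mathrm{a.s.}}_{\mathcal{OF}}+(\1_{\mathcal{O}}-\Pi^{\mathrm{a.s.}}_{\mathcal{O}})\otimes\1_{\mathcal{F}}/D$, exploits the $W^{\otimes d}$-invariance of the antisymmetric projector to re-express it in the basis $\{\ket{v(j)}\}$ adapted to $V$ (and its conjugate), and then evaluates the link product by an explicit index count over which tuples $\vec{j}$ survive. You instead reduce the claim to the operator identity $\sum_{\vec j}A_{\vec j}\ket{\Psi^{(m)}}\bra{\Psi^{(n)}}A_{\vec j}^{\dagger}=V^{*}\ketbra{m}{n}(V^{*})^{\dagger}+\delta_{mn}\Pi^{*}_{(\Im V)^{\perp}}$ and attack it through $(d-1)\times(d-1)$ minors, cofactor matrices, and Cauchy--Binet; this is a genuinely different, more exterior-algebraic computation, and your preliminary observations (the counting argument giving $\sum_{\vec j}A_{\vec j}^{\dagger}A_{\vec j}=(D-d+1)\Pi^{\mathrm{a.s.}}_{\mathcal{O}}$, hence trace preservation, and the trace argument fixing the constant $1/(D-d+1)$) are all sound and are not made explicit in the paper. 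The one soft spot is that the collapse of the bilinear minor sums into $V^{*}\ketbra{m}{n}(V^{*})^{\dagger}+\delta_{mn}\Pi^{*}_{(\Im V)^{\perp}}$ --- which you correctly identify as the hardest step --- is asserted rather than carried out, and the sign bookkeeping there is exactly where an error could hide. However, your covariance shortcut closes this: the intertwining of $\widetilde{\Lambda}$ (on antisymmetric inputs) with $W^{\otimes d-1}\mapsto W^{*}$ for $W\in\mathbb{U}(D)$ is precisely the same invariance the paper uses on $\Pi^{\mathrm{a.s.}}_{\mathcal{OF}}$, and writing $V=WV^{\mathrm{embed}}$ reduces the whole verification to the transparent case $V=V^{\mathrm{embed}}$, where only the subsets $\vec j\supseteq\{0,\dots,d-1\}\setminus\{m\}$ contribute and the identity is a short check. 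With that reduction spelled out, your argument is complete; what it buys over the paper's is a cleaner separation of the combinatorics (all concentrated in the single representative $V^{\mathrm{embed}}$) from the symmetry, at the cost of having to state and prove the two covariance relations explicitly.
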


\begin{figure}[tbh]
    \centering
    \includegraphics[width=0.7\linewidth]{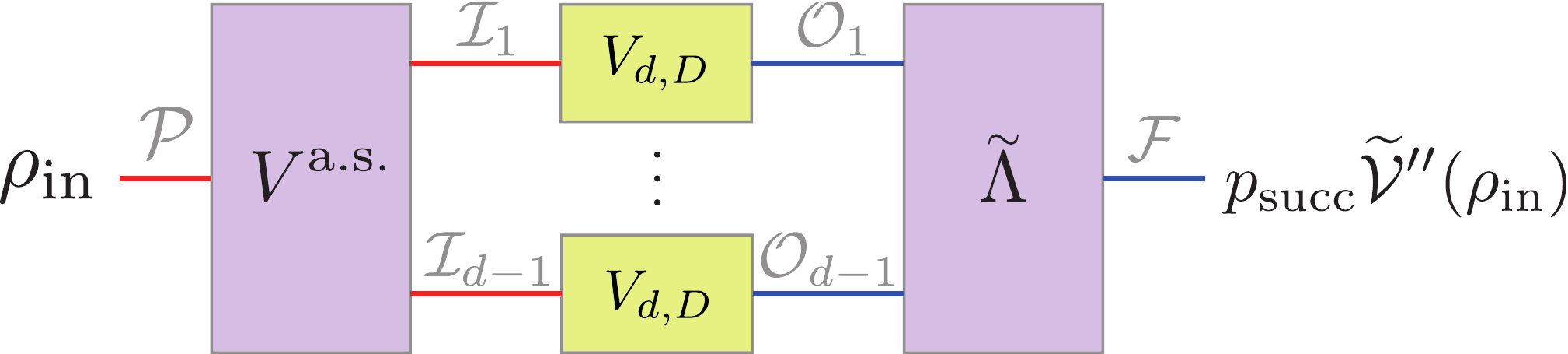}
    \caption{A quantum circuit representation of a  probabilistic parallel protocol to implement the pseudo complex conjugate map $\widetilde{\mathcal{V}}''$. The isometry operator $V^{\mathrm{a.s.}}$ and the CPTP map $\widetilde{\Lambda}$ are defined in Eqs.~(\ref{eq:def_V^as}) and (\ref{eq:def_Lambda}), respectively.}
    \label{fig:isometry_conjugate_modoki}
\end{figure}

Note that the success probability $p_{\mathrm{succ}}$ is less than 1 for $D>d$ while the protocol is composed of deterministic operations, because $\widetilde{\mathcal{V}}''$ is a trace increasing map for $D>d$ (see Eq.~(\ref{eq:def_success_probability}) for the definition of $p_{\mathrm{succ}}$). See Appendix \ref{sec:appendix_isometry_conjugate_modoki} for the proof.

Next, we consider a variant of gate teleportation \cite{Gottesman1999Demonstrating} given by Figure~\ref{fig:gate_teleportation}, which is able to probabilistically implement the transposed map in terms of the computational basis of {\it any} CPTP map $\widetilde{\Lambda}_{\mathrm{in}}$, denoted by $\widetilde{\Lambda}^T_{\mathrm{in}}$. The transposed map $\widetilde{\Lambda}^T_{\mathrm{in}}$ for a CPTP map $\widetilde{\Lambda}_{\mathrm{in}}$ given by its action as $\widetilde{\Lambda}_{\mathrm{in}}(\rho)=\sum_k K_k \rho K_k^\dagger$ in terms of the Kraus operators $\{K_k\}$ is defined as
\begin{align}
    \widetilde{\Lambda}^T_{\mathrm{in}}(\rho)\coloneqq \sum_{k} K_k^T \rho (K_k^T)^\dagger.
\end{align}
In this variant of the gate teleportation, the Hilbert spaces are given by $\mathcal{P}=\mathbb{C}^D$, $\mathcal{O}_1=\mathbb{C}^D$, $\mathcal{F}=\mathbb{C}^d$, $\mathcal{I}_1=\mathbb{C}^d$, and $\ket{\phi'_{\mathrm{PBT}}}$ in the protocol for general $k$ is reduced to a maximally entangled state in $\mathbb{C}^d \otimes \mathbb{C}^d$ defined by
\begin{align}
    \ket{\Phi^+_d}\coloneqq \frac{1}{\sqrt{d}}\sum_{i=0}^{d-1}\ket{i}\otimes \ket{i}\in \mathcal{I}_1\otimes \mathcal{F}.\label{eq:def_max_ent_state_d}
\end{align}
The POVM $\mathcal{M}$ is reduced to the Bell measurement defined as the projective measurement on the basis $\{(X_{D}^{-i}Z_{D}^{-j}\otimes \1_{\mathcal{O}_1})\ket{\Phi^{+}_{D}}_{\mathcal{P}\mathcal{O}_1}\}_{i=0, j=0}^{D-1, D-1}$, where $X_{D}\coloneqq \sum_{j=0}^{D-1}\ketbra{j\oplus 1}{j}$
is the shift operator and $Z_{D}\coloneqq \sum_{j=0}^{D-1}e^{2\pi j\sqrt{-1}/D}\ketbra{j}{j}$ is the clock operator.   The protocol succeeds only when the outcome is given by $(i, j)=(0, 0)$.  The difference from the standard gate teleportation protocol is the Hilbert space where the input operation is applied, namely on $\mathcal{I}_1$, instead of $\mathcal{F}$ in standard gate teleportation \cite{Gottesman1999Demonstrating}.  The property of the maximally entangled state given by $(\1_d \otimes A) \ket{\Phi^+_d} = (A^T \otimes \1_d) \ket{\Phi^+_d}$ for any $A \in \mathcal{L}(\mathbb{C}^d)$ allows teleporting the state transformed by the transposed CPTP map, instead of the CPTP map. 

\begin{figure}[tbh]
    \centering
    \includegraphics[width=0.6\linewidth]{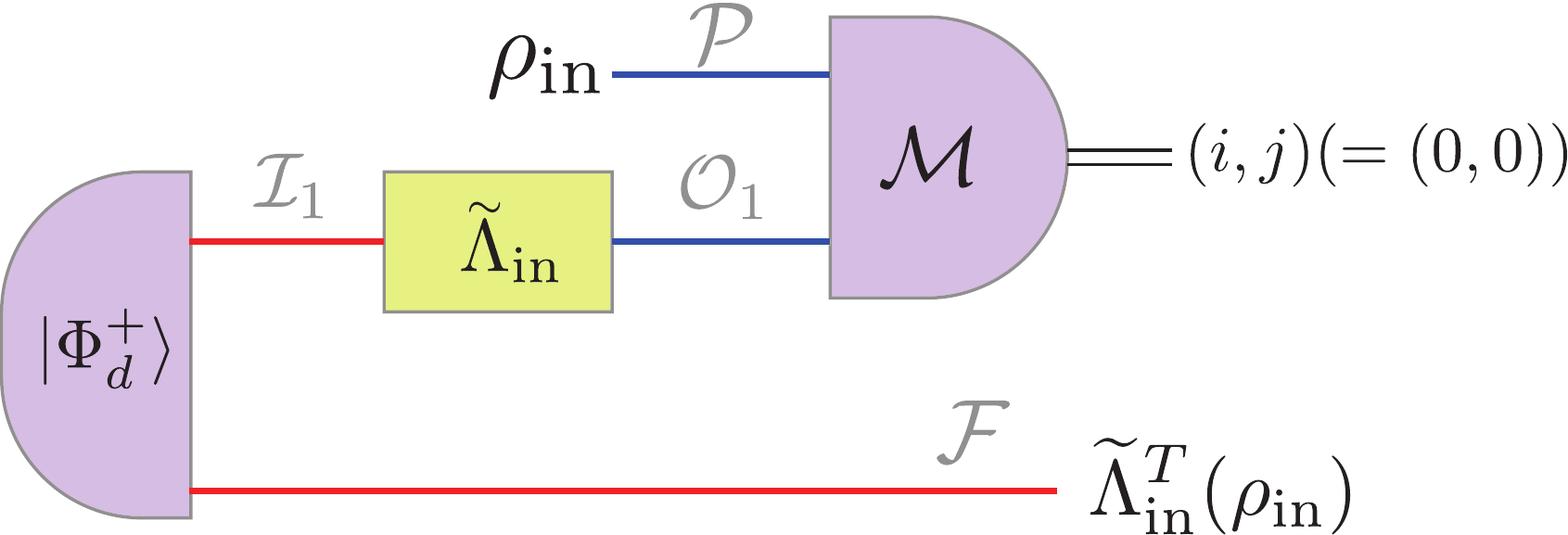}
    \caption{ The gate teleportation circuit implements the transposed map $\widetilde{\Lambda}_{\mathrm{in}}^{T}$ for any CPTP map $\widetilde{\Lambda}_{\mathrm{in}}$ with a success probability $p_{\mathrm{succ}}=1/(Dd)$. The quantum state $\ket{\Phi_d^+}$ is a maximally entangled state $\ket{\Phi_d^+}\coloneqq d^{-1/2}\sum_{i=0}^{d-1}\ket{i}\otimes \ket{i}$ and the POVM $\mathcal{M}$ is the Bell measurement.  This protocol succeeds when the measurement outcome of $\mathcal{M}$ is $(i,j)=(0,0)$.}
    \label{fig:gate_teleportation}
\end{figure}

Concatenating the protocol for  the pseudo complex conjugate map $\widetilde{\mathcal{V}}''$ and the  variant of gate teleportation  shown in Figure~\ref{fig:gate_teleportation}, we obtain  an isometry inversion protocol, whose success probability is $p_{\mathrm{succ}}=1/[Dd(D-d+1)]$. This success probability is less than that of a protocol shown in Figure~\ref{fig:isometry_inversion_protocols}~(c). In general, the success probability of any isometry inversion protocol that is a concatenation of an isometry pseudo complex conjugation protocol and the variant of the gate teleportation  shown in Figure~\ref{fig:gate_teleportation} is bounded by $p_{\mathrm{succ}}\leq 1/(Dd)$ since the success probability of the  variant of the gate teleportation is $1/(Dd)$.

\section{Numerical optimization of success probabilities by semidefinite programming}
\label{sec:SDP}

In the preceding sections, we analytically investigated the success probability of parallel protocols for higher-order quantum transformations of isometry operations.  Though parallel protocol is efficient in terms of the circuit depth compared to other protocols such as sequential ones, more general strategies than parallel protocols can  possibly improve the success probability  by utilizing the temporal resource. To analyze this possibility, we perform a numerical optimization of the success probability of isometry inversion, isometry  (pseudo) complex conjugation, and isometry transposition by semidefinite programming (SDP) in the same way as presented in Ref.~\cite{Quintino2019Probabilistic}  (see Appendix~\ref{appendix:sdp} for the detail). In addition to parallel protocols, we also consider sequential protocols shown in Figure~\ref{fig:adaptive_comb_and_isometry_inversion_sod}~(a) and the most general protocols, which  include the cases with an indefinite causal order  \cite{Oreshkov2012Quantum, Chiribella2013Quantum, Araujo2015Witnessing, Wechs2019Definition, Bisio2019Theoretical, Yokojima2021Consequences, Vanrietvelde2021Routed}.

To construct efficient sequential protocols for isometry inversion, we also consider the ``success-or-draw'' version \cite{Dong2021SuccessorDraw} of an isometry inversion protocol shown in Figure~\ref{fig:adaptive_comb_and_isometry_inversion_sod}~(b). As we define in Section \ref{sec:supermap_superinstrument_higher-order}, we say that a superinstrument $\{\doublewidetilde{\mathcal{S}^{}}, \doublewidetilde{\mathcal{F}^{}}\}$ implements a probabilistic isometry inversion protocol if the element for the success case  $\doublewidetilde{\mathcal{S}^{}}$ satisfies
\begin{align}
    \doublewidetilde{\mathcal{S}^{}}(\widetilde{\mathcal{V}}^{\otimes k})=p\widetilde{\mathcal{V}}_{\mathrm{inv}},
\end{align}
or equivalently,
\begin{align}
    \doublewidetilde{\mathcal{S}^{}}(\widetilde{\mathcal{V}}^{\otimes k})\circ \widetilde{\mathcal{V}}=p\widetilde{\1}_d
\end{align}
for all $V\in \mathbb{V}_{\mathrm{iso}}(d, D)$. No extra condition is required for the element for the fail case $\doublewidetilde{\mathcal{F}^{}}$ as long as $\{\doublewidetilde{\mathcal{S}^{}}, \doublewidetilde{\mathcal{F}^{}}\}$ forms a superinstrument. In general, if the protocol fails, namely, obtaining the outcome corresponding to the fail case $\doublewidetilde{\mathcal{F}^{}}$, the input state $\rho_\mathrm{in}$ is destroyed.  Therefore this type of ``success-or-fail'' probabilistic protocols cannot be repeated or sequentially combined with another protocol.

\begin{figure}[tbh]
    \centering
    \includegraphics[width=0.8\linewidth]{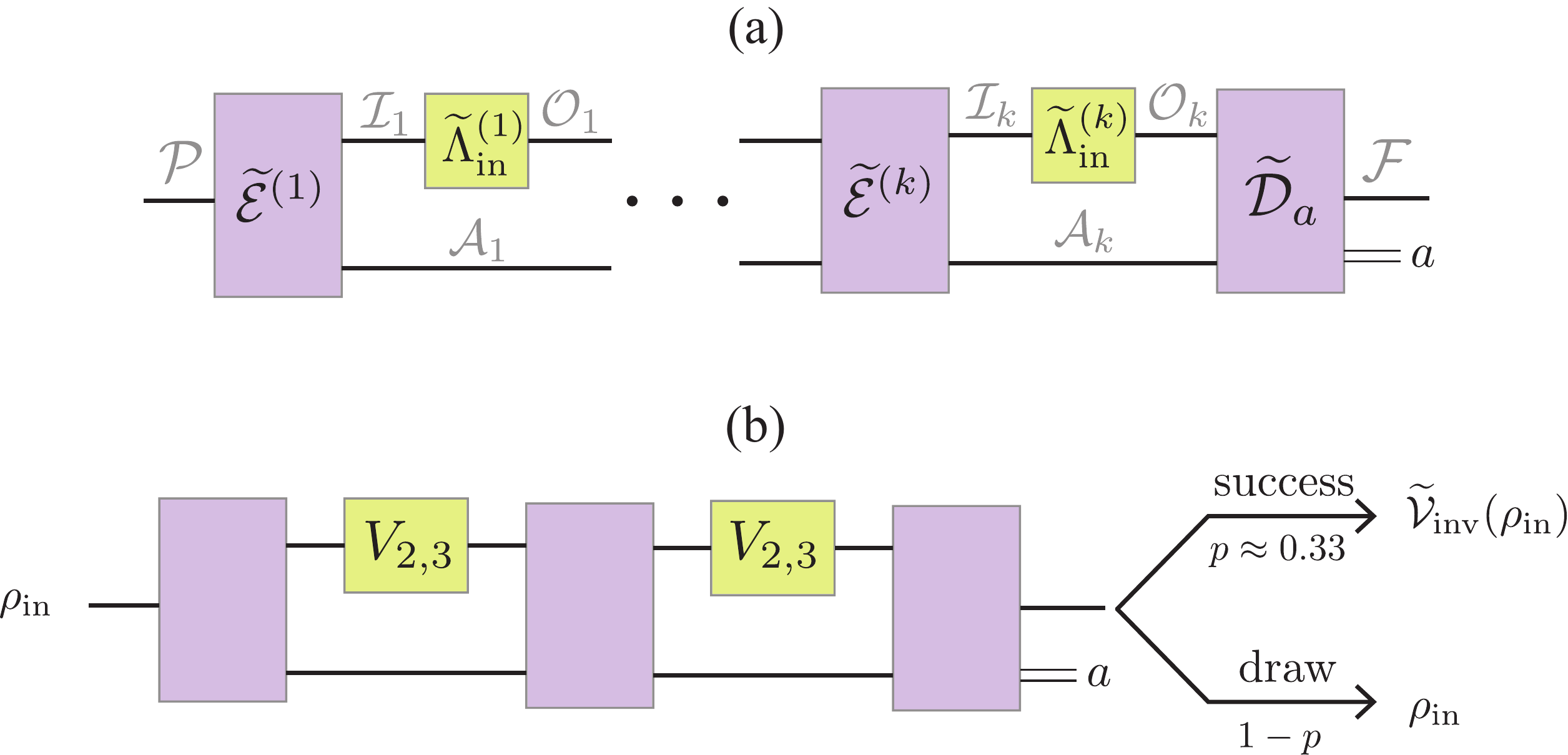}
    \caption{(a) A quantum circuit representation of a sequential superinstrument, where $\widetilde{\Lambda}_{\mathrm{in}}^{(i)}\;(i\in \{1, \cdots, k\})$ are input maps, $\widetilde{\mathcal{E}}^{(i)}\;(i\in\{1, \cdots, k\})$ are CPTP maps, and $\{\widetilde{\mathcal{D}}_a\}$ is a quantum instrument. (b) A quantum circuit representation of a sequential  ``success-or-draw'' protocol for isometry inversion of $V\in \mathbb{V}_{\mathrm{iso}}(2,3)$. For the input quantum state $\rho_{\mathrm{in}}\in \mathcal{L}(\mathrm{Im} V)$, we obtain the output quantum state given by $\widetilde{\mathcal{V}}_{\mathrm{inv}}(\rho_{\mathrm{in}})$ after success or the output quantum state given by $\rho_{\mathrm{in}}$ after a draw. The optimal success probability $p\approx 0.33$ is obtained by the SDP for $d=2$, $D=3$ and $k=2$.}
    \label{fig:adaptive_comb_and_isometry_inversion_sod}
\end{figure}

For the ``success-or-draw'' version of an isometry inversion protocol, we impose an additional condition on $\doublewidetilde{\mathcal{F}^{}}$ given by
\begin{align}
    \doublewidetilde{\mathcal{F}^{}}(\widetilde{\mathcal{V}}^{\otimes k})\circ \widetilde{\mathcal{V}}=(1-p)\widetilde{\mathcal{V}}
\end{align}
for all $V\in \mathbb{V}_{\mathrm{iso}}(d, D)$. This condition means that even after the protocol fails, we recover the original input state as the output of the protocol if the input state is in the image $\mathrm{Im} V$ of the input isometry operator $V\in \mathbb{V}_{\mathrm{iso}}(d, D)$. Such a type of failure is called a \emph{draw} because the protocol can be repeated after failure. By repeating the ``success-or-draw'' protocol until success at most $m$ times, we can achieve the probability given by $p_{\mathrm{succ}}=1-(1-p)^m$. In other words, the failure probability $1-p_{\mathrm{succ}}$ decreases exponentially with the number $k'=mk$ of calls of an input isometry operation.   Therefore, the number of calls of an input isometry operation scales logarithmically with respect to the success probability, i.e., $k' = \mathcal{O}(\ln (1/(1-p_\mathrm{succ})))$. We perform the SDP to obtain such a ``success-or-draw'' protocol for isometry inversion in a similar way as presented in Ref.~\cite{Dong2021SuccessorDraw}  (see Appendix~\ref{appendix:sdp} for the detail).

\begin{table}[tbh]
    \centering
    \begin{tabular}{wc{0.23\linewidth}wc{0.23\linewidth}wc{0.23\linewidth}wc{0.23\linewidth}}\hline\hline
        \multicolumn{4}{c}{isometry inversion}\\\hline
        $d=2, D=3$  & parallel & sequential & general \\\hline
        $k=1$ & \bm{$0.25$} & \bm{$0.25$} & \bm{$0.25$}\\
        $k=2$ & \bm{$0.4$} & $0.4286\approx 3/7$ & $0.4286\approx 3/7$\\\hline\hline
        \multicolumn{4}{c}{isometry complex conjugation}\\\hline
        $d=2, D=3$  & parallel & sequential & general \\\hline
         $k\leq 3$ & 0 & 0 & 0\\\hline\hline
        \multicolumn{4}{c}{isometry pseudo complex conjugation}\\\hline
         $d=2, D=3$  &  parallel &  sequential &  general \\\hline
         $k=1$ &  $0.5000 \approx 1/2$ &  $0.5000 \approx 1/2$ &  $0.5000 \approx 1/2$\\
         $k=2$ &  0.5453 &  0.5453 &  0.5453\\\hline
         $d=2, D=4$  &  parallel &  sequential &  general \\\hline
         $k=1$ &  $0.3333 \approx 1/3$ &  $0.3333 \approx 1/3$ &  $0.3333 \approx 1/3$\\\hline\hline
        \multicolumn{4}{c}{isometry transposition}\\\hline
        $d=2, D=3$  & parallel & sequential & general \\\hline
        $k=1$ & \bm{$1/6$} & \bm{$1/6$} & \bm{$1/6$}\\
        $k=2$ & $0.2857\approx 2/7$ & $0.3077$ & $0.3333\approx 1/3$\\\hline
        $d=2, D=4$  & parallel & sequential & general \\\hline
        $k=1$ & \bm{$0.125$} & \bm{$0.125$} & \bm{$0.125$}\\
        $k=2$ & $0.22\approx 2/9$ & - & -\\
        \hline\hline
    \end{tabular}
    \caption{The optimal success probabilities of isometry inversion,  (pseudo) complex conjugation,  and transposition using $k$ calls of an input isometry operation $\widetilde{\mathcal{V}}$ corresponding to $V\in \mathbb{V}_\mathrm{iso} (d, D)$. The bold values are obtained analytically.}
    \label{tab:SDP}
\end{table}

\begin{table}[tbh]
    \centering
    \begin{tabular}{wc{0.23\linewidth}wc{0.23\linewidth}wc{0.23\linewidth}wc{0.23\linewidth}}\hline\hline
        \multicolumn{4}{c}{unitary inversion}\\\hline
        $d=2$  & parallel & sequential & general \\\hline
        $k=1$ & \bm{$0.25$} & \bm{$0.25$} & \bm{$0.25$}\\
        $k=2$ & \bm{$0.4$} & $0.4286\approx 3/7$ & $0.4444\approx 4/9$\\
        \hline\hline
    \end{tabular}
    \caption{The optimal success probability of unitary inversion using $k$ calls of an input unitary operation $\widetilde{\mathcal{U}}$ corresponding to $U\in \mathbb{U}(d)$  derived in Ref. \cite{Quintino2019Reversing}.}
    \label{tab:prob_unitary_inversion_SDP}
\end{table}

Table~\ref{tab:SDP} and Figure~\ref{fig:adaptive_comb_and_isometry_inversion_sod}~(b) show the results of the SDP optimization. Table \ref{tab:prob_unitary_inversion_SDP} shows the optimal success probability of unitary inversion, which was  already obtained in Ref.~\cite{Quintino2019Reversing}, for the comparison with that of isometry inversion.
We  implement the SDP code in MATLAB \cite{matlab} by modifying the SDP code \cite{unitary_inversion_code} accompanied Refs.~\cite{Quintino2019Reversing, Quintino2019Probabilistic}, originally designed to obtain the optimal success probability of unitary inversion and unitary transposition. The interpreters  CVX \cite{cvx, gb08} and YALMIP  \cite{yalmip, Lofberg2004}  are used with the solvers SDPT3 \cite{sdpt3,toh1999sdpt3,tutuncu2003solving}, SeDuMi \cite{sedumi}, MOSEK \cite{mosek} and  SCS \cite{scs} to perform the SDP.
 The code also uses functions in QETLAB \cite{qetlab}.
 All codes to obtain the results shown in Table~\ref{tab:SDP} and Figure~\ref{fig:adaptive_comb_and_isometry_inversion_sod}~(b) are available at Ref.~\cite{isometry_inversion_code} under the MIT license \cite{mit_license}.

First, we consider isometry inversion. Theorem~\ref{thm:optimal_isometry_inversion_and_PBT} shows that the optimal success probability of probabilistic parallel isometry inversion protocols using $k$ calls of an input isometry operation corresponding to $V\in \mathbb{V}_\mathrm{iso} (d,D)$ is equal to that of  probabilistic parallel unitary inversion protocols using $k$ calls of an input unitary operation corresponding to $U\in \mathbb{U}(d)$.  
This statement can be also checked numerically for $d=2$, $D=3$ and $k\in\{1, 2\}$ by comparing Table~\ref{tab:SDP} with Table \ref{tab:prob_unitary_inversion_SDP}. 
Comparison of Table \ref{tab:SDP} with Table \ref{tab:prob_unitary_inversion_SDP} also shows that the optimal success probability of sequential isometry inversion is the same as that of sequential unitary inversion for $d=2$, $D=3$ and $k=2$. This numerical result implies the possibility that the optimal success probability of $k$-input sequential protocols for isometry inversion of $V\in \mathbb{V}_\mathrm{iso} (d,D)$ does not depend on $D$. 

We also numerically confirmed the existence of a sequential ``success-or-draw'' protocol for isometry inversion for $d=2$, $D=3$ and $k=2$ by the SDP (see Figure~\ref{fig:adaptive_comb_and_isometry_inversion_sod}~(b)). By repeating this ``success-or-draw'' protocol, we can implement a sequential protocol for isometry inversion with the success probability scaling as $p_{\mathrm{succ}}=1-\exp[-\mathcal{O}(k)]$. In contrast, we can show the upper bound of the success probability of any parallel protocol for isometry inversion given by $p_{\mathrm{succ}}\leq 1-\mathcal{O}(k^{-1})$ from the same upper bound for unitary inversion presented in Ref. \cite{Quintino2019Reversing}. Thus an exponential improvement of the success probability of a sequential protocol for isometry inversion compared to a parallel protocol for $d=2$ and $D=3$ is exhibited. However, a further general protocol does not improve the success probability of isometry inversion compared to a sequential protocol for $d=2$, $D=3$ and $k=2$ as shown numerically in Table~\ref{tab:SDP}, whereas an improvement with a general protocol was observed for the case of $d=2$ and $k=2$ unitary inversion as shown in Table \ref{tab:prob_unitary_inversion_SDP}.

Next, we consider isometry complex conjugation. Theorem~\ref{thm:isometry_complex_conjugation_nogo} shows that it is impossible to transform finite calls of an isometry operation $\widetilde{\mathcal{V}}: \mathcal{L}(\mathbb{C}^d)\to\mathcal{L}(\mathbb{C}^D)$ into its complex conjugate map $\widetilde{\mathcal{V}}^*$ with a non-zero success probability for $D\geq 2d$. In addition to this theorem, the numerical result in Table~\ref{tab:SDP} indicates that it is also impossible for $d=2, D=3$ and $k \in \{1, 2, 3 \}$, although $D\geq 2d$ is not satisfied. This result implies the possibility that probabilistic isometry complex conjugation is impossible even for $d<D<2d$.

 Then, we consider isometry pseudo complex conjugation. The numerical result shown in Table~\ref{tab:SDP} indicates that the optimal success probability of isometry pseudo complex conjugation is equal to the success probability $p_{\mathrm{succ}}=1/(D-d+1)$ of a parallel protocol shown in Figure~\ref{fig:isometry_conjugate_modoki} for $d=2$, $D\in\{3,4\}$ and $k=d-1=1$. The numerical result also shows that a parallel protocol achieves the optimal success probability of isometry pseudo complex conjugation among general protocols including indefinite causal order for $d=2$, $D=3$ and $k=2$.

Finally, we consider isometry transposition  (see also Appendix \ref{sec:appendix_isometry_tranpose_PBT}). The numerical result shown in Table~\ref{tab:SDP} indicates that the optimal success probability of parallel isometry transposition is equal to the success probability $p_{\mathrm{succ}}=k/(Dd+k-1)$ of a protocol shown in Figure~\ref{isometry_transpose_PBT} for $d=2$, $D\in \{3, 4\}$ and $k=2$. This result implies the possibility that a protocol shown in Figure~\ref{isometry_transpose_PBT} achieves the optimal success probability for any choice of $d$, $D$ and $k$. The numerical result also shows that a sequential protocol and a general protocol can improve the success probability of isometry transposition in contrast to the case of isometry inversion.

We present a conjecture and open problems obtained from the discussion for the SDP optimization results.

{\it Conjecture}: 
The optimal success probability of probabilistic parallel protocols that transform $k$ calls of an isometry operation $\widetilde{\mathcal{V}}$ corresponding to $V\in \mathbb{V}_\mathrm{iso} (d,D)$ into its transposed map $\widetilde{\mathcal{V}}^T$ is $p_{\mathrm{succ}}=k/(Dd+k-1)$, which is achieved by the protocol shown in Figure~\ref{isometry_transpose_PBT}.

{\it Open problem 1}: Does the optimal success probability of sequential isometry inversion depend on $D$?

{\it Open problem 2}: Does indefinite causal order improve the success probability of isometry inversion?

{\it Open problem 3}: Is it possible to transform finite calls of an isometry operation $\widetilde{\mathcal{V}}$ corresponding to $V\in \mathbb{V}_\mathrm{iso} (d,D)$ into its complex conjugate map $\widetilde{\mathcal{V}}^*$ with a non-zero success probability for $d<D<2d$?

\section{Conclusion}
\label{sec:conclusion}
We presented a probabilistic exact parallel protocol for isometry inversion that constructs a decoder from multiple calls of a black box encoder implementing an unknown isometry operation  transforming a $d$-dimensional system to a $D$-dimensional system  for $D>d$. {The success probability of this protocol is independent of $D$.} Thus, this protocol  significantly outperforms other isometry inversion protocols that use $D$-dimensional unitary inversion protocols or quantum process tomography of isometry operations  for $D \gg d$. This shows a potential of our protocol for applications in quantum information processing  involving encoding and decoding with black boxes. 
In particular, we consider a typical example of an encoding black box represented by an isometry operation to spread quantum information of a qudit (a $d$-dimensional system) into a $n$-qudit system (a $d^n$-dimensional system). It may seem that  inverting the function from a black box function is difficult due to the exponential dimensionality of $d^n$ of isometry operations, but our result shows that such an implementation is easy whenever $d$ is small enough.

We developed a new technique to construct isometry inversion protocols, since the strategy for unitary inversion used in the previous work \cite{Quintino2019Reversing} is not applicable to isometry inversion.  This is because the probabilistic exact isometry complex conjugation is impossible for $D \geq 2d$, and  the concatenation of isometry pseudo complex conjugation and a variant of the gate teleportation is inefficient. Then, we showed the decomposition of the tensor product of an isometry and invent a CPTP map $\widetilde{\Psi}$ that can be implemented by the quantum Schur transform \cite{harrow2005applications, bacon2006efficient, Krovi2019Efficient} and the ``measure-and-prepare'' strategy \cite{horodecki2003entanglement, Bisio2010Optimal}. This CPTP map transforms $k+1$ parallel calls of an isometry operation into the Haar integral of $k+1$ tensor product of unitary operations as shown in Eq.~(\ref{eq:Psi_U}) of Lemma~\ref{lem:Lambda}, which contributes to keeping the dimension dependence of the success probability to $d$ and independent of $D$. This technique provides a new application of the quantum Schur transform, which is known to have various applications to quantum information processing \cite{harrow2005applications}.

We also performed the SDP to investigate the improvement of the success probability of a sequential protocol or a protocol with indefinite causal order for isometry inversion, isometry complex conjugation and isometry transposition compared to a parallel protocol. From the numerical calculation, we found a ``success-or-draw'' isometry inversion protocol for $d=2$ and $D=3$. By repeating this protocol, we can obtain a sequential protocol for isometry inversion with a failure probability decreasing exponentially with the number of calls of the input isometry operation. This result exhibits an exponential improvement of the success probability of a sequential protocol for isometry inversion for $d=2$ and $D=3$ compared to a parallel protocol.

\begin{acknowledgments}
The authors thank M.~T.~Quintino for inspiring discussions, especially for pointing out the relationship between the uniqueness of isometry transposition and the optimality of the success probability.
 We also thank an anonymous reviewer of Quantum for providing a resource evaluation of process tomography of isometry operations.
This work was supported by MEXT Quantum Leap Flagship Program (MEXT QLEAP) Grant No. JPMXS0118069605, JPMXS0120351339, Japan Society for the Promotion
of Science (JSPS) KAKENHI Grant No. 18K13467, 21H03394 and the Forefront Physics and Mathematics Program to Drive Transformation (FoPM) of the University of Tokyo.
\end{acknowledgments}

\appendix

\section{Higher-order quantum transformations}
\label{sec:appendix_supermap}
Choi-Jamiolkowski (CJ) isomorphism is  a useful tool for treating linear maps as operators called Choi operators by  ``lowering'' the order  of transformations.  Similarly, the CJ isomorphism is used for describing higher-order quantum transformations by Choi operators of higher-order quantum transformation.  We summarize the formulation of linear maps and higher-order quantum transformations using the Choi operators.

Any linear map $\widetilde{\Lambda}: \mathcal{L}(\mathcal{I})\to \mathcal{L}(\mathcal{O})$ can be represented uniquely by the Choi operator defined by
\begin{align}
    J_{\widetilde{\Lambda}}\coloneqq \sum_{i, j} \ketbra{i}{j}_{\mathcal{I}} \otimes \widetilde{\Lambda}(\ketbra{i}{j})_{\mathcal{O}} \in \mathcal{L}(\mathcal{I}\otimes \mathcal{O}),
\end{align}
where $\{\ket{i}\}$ is an orthonormal basis in $\mathcal{I}$.
The map $\widetilde{\Lambda}$ can be obtained from $ J_{\widetilde{\Lambda}}$ as
\begin{align}
    \widetilde{\Lambda}(\rho_{\mathcal{I}})=\Tr_{\mathcal{I}}[J_{\widetilde{\Lambda}}(\rho_{\mathcal{I}}^{T} \otimes \1_{\mathcal{O}})],
\end{align}
where $\rho_{\mathcal{I}}^{T}$ is the transposition of an input quantum state $\rho_{\mathcal{I}}$ in the computational basis.

A CPTP map $\widetilde{\Lambda}$ can be characterized in terms of its  Choi operator.
First, we consider a CP map $\widetilde{\Lambda}$ and let $\{K_k \}: \mathcal{I}\to \mathcal{O}$ be its Kraus operators, i.e., $\widetilde{\Lambda}(\rho)=\sum_{{k}}K_{k} \rho K_{k}^{\dagger}$. Then, its  Choi operator is given by
\begin{align}
    J_{\widetilde{\Lambda}}=\sum_{k}\dketbra{K_{k}},
    \label{eq:CP_choi}
\end{align}
where $\dket{K_k}$ is a vector representing a rank-1 Choi vector defined as  $\dket{K_{k}} \coloneqq \sum_{i} \ket{i}_{\mathcal{I}} \otimes (K_k \ket{i})_{\mathcal{O}}\in \mathcal{I}\otimes \mathcal{O}$, also referred to as a dual ket vector. Since $J_{\widetilde{\Lambda}}$ is represented by a sum of rank-1 positive operators, it is a positive  operator. Conversely, if $J_{\widetilde{\Lambda}}$ is positive, $J_{\widetilde{\Lambda}}$ can be written in the form of Eq.~(\ref{eq:CP_choi}). Then, $\widetilde{\Lambda}(\rho)=\sum_{{k}}K_{k}\rho K_{k}^{\dagger}$, which means that $\widetilde{\Lambda}$ is a CP map. Therefore, $\widetilde{\Lambda}$ is CP if and only if its  Choi operator $J_{\widetilde{\Lambda}}$ is positive. 
Next, a map $\widetilde{\Lambda}$ is TP if and only if $\Tr_{\mathcal{O}}\widetilde{\Lambda}(\ketbra{i}{j})=\delta_{i,j}$ for all $i, j$. Therefore, $\widetilde{\Lambda}$ is TP if and only if $\Tr_{\mathcal{O}} J_{\widetilde{\Lambda}}=I_{\mathcal{I}}$. Similarly, a quantum instrument is represented by a set of CP maps $\{\widetilde{\Lambda}_a \}$ and it can be also characterized by the corresponding set of Choi operators $\{J_{\widetilde{\Lambda}_a} \}$.

The composition of two maps can be represented by a link product denoted by $\star$. Let $X\in \mathcal{L}(\mathcal{X}\otimes \mathcal{Y})$ and $Y\in \mathcal{L}(\mathcal{Y}\otimes \mathcal{Z})$. The link product of $X$ and $Y$ is defined as
\begin{align}
    Y\star X\coloneqq \Tr_{\mathcal{H}}[(\1_{\mathcal{X}}\otimes Y)(X^{T_{\mathcal{Y}}}\otimes \1_{\mathcal{Z}})],
\end{align}
where $X^{T_{\mathcal{Y}}}$ is the partial transpose of $X$ on $\mathcal{Y}$. We consider two maps $\widetilde{\Lambda}_1:\mathcal{L}(\mathcal{I}_1)\to \mathcal{L}(\mathcal{I}_2)$ and $\widetilde{\Lambda}_2:\mathcal{L}(\mathcal{I}_2)\to \mathcal{L}(\mathcal{O}_1)$. The  Choi operator of $\widetilde{\Lambda}_2\circ \widetilde{\Lambda}_1$ is obtained by
\begin{align}
    J_{\widetilde{\Lambda}_2\circ \widetilde{\Lambda}_1}=J_{\widetilde{\Lambda}_2}\star J_{\widetilde{\Lambda}_1}.
\end{align}

Similarly to CPTP maps and quantum instruments, higher order-quantum transformations can be represented by Choi operators.   In particular, we use the Choi operator representation of a single-input superinstrument for the analysis in this paper.  A single-input superinstrument $\{\doublewidetilde{\mathcal{C}^{}}_{a}\}$ can be characterized by the corresponding set of Choi operators $\{C_a\}$ as
\begin{align}
    C_a&\geq 0,\label{eq:parallel_comb_cond1}\\
    \Tr_{\mathcal{F}}C&=\Tr_{\mathcal{FO}}C\otimes \frac{\1_{\mathcal{O}}}{d_{\mathcal{O}}},\label{eq:parallel_comb_cond2}\\
    \Tr_{\mathcal{FOI}}C&=\Tr C\otimes \frac{\1_{\mathcal{P}}}{d_{\mathcal{P}}},\label{eq:parallel_comb_cond3}\\
    \Tr C&=d_{\mathcal{P}}d_{\mathcal{O}},\label{eq:parallel_comb_cond4}
\end{align}
where $C$ is defined as $C\coloneqq \sum_{{a}} C_a$ \cite{Chiribella2008Transforming}. The Choi operators $\{C_a\}$ are related to the corresponding single-input superinstrument $\{\doublewidetilde{\mathcal{C}^{}}_{a}\}$ as
\begin{align}
    C_a\star J_{\widetilde{\Lambda}_\mathrm{in}}=J_{\doublewidetilde{\mathcal{C}^{}}_{a}(\widetilde{\Lambda}_\mathrm{in})}.
\end{align}
See Ref.~\cite{Araujo2017Purification} for the characterization of $k$-input superinstruments.

\section{{Extension of the Schur-Weyl duality to the decomposition of the tensor product of {isometry operators}}}
\label{sec:schur-weyl_duality}
{We first review the Schur-Weyl duality.}
We consider Hilbert spaces $\mathcal{X}_i=\mathbb{C}^d$ for $i\in \{1, \cdots, k\}$ and define the joint Hilbert space by $\mathcal{X}\coloneqq \bigotimes _{i=1}^k \mathcal{X}_i$. We consider representations of the unitary group $\mathbb{U}(d)$ and the symmetric group $\mathfrak{S}_k$ defined as
\begin{align}
    \mathbb{U}(d)\to &\mathcal{L}(\mathcal{X}) ;\;\;\; U\mapsto U^{\otimes k},\\
    \mathfrak{S}_k\to &\mathcal{L}(\mathcal{X}) ;\;\;\; \sigma\mapsto P_{\sigma}\label{eq:def_W},
\end{align}
where $P_{\sigma}$ is the permutation operator defined by 
\begin{align}
    P_{\sigma}\left(\bigotimes _{i=1}^k\ket{\psi_{i}}\right)=\bigotimes_{i=1}^k \ket{\psi_{\sigma^{-1}(i)}}.
\end{align}
These representations can be decomposed as
\begin{align}
    \mathcal{X}&= \bigoplus_{\mu\vdash k} \mathcal{U}_{\mu, \mathcal{X}}^{(d)}\otimes \mathcal{S}_{\mu, \mathcal{X}}^{(k)},\label{eq:schur_weyl}\\
    U^{\otimes k}&=\bigoplus_{\mu\vdash k} U_{\mu}\otimes \1_{\mathcal{S}_{\mu, \mathcal{X}}^{(k)}},\\
    P_{\sigma}&=\bigoplus_{\mu\vdash k} \1_{\mathcal{U}_{\mu, \mathcal{X}}^{(d)}}\otimes P_{\sigma, \mu},
\end{align}
where the summands are indexed by the Young diagrams $\mu$ with $k$ boxes,  $\mathbb{U}(d) \ni U \mapsto U_{\mu} \in \mathcal{L}(\mathcal{U}_{\mu, \mathcal{X}}^{(d)})$ are irreducible representations of $\mathbb{U}(D)$,  $\mathfrak{S}_k \ni \sigma \mapsto P_{\sigma, \mu} \in \mathcal{L}(\mathcal{S}_{\mu, \mathcal{X}}^{(k)})$ are irreducible representations of $\mathfrak{S}_k$ and $\1_{\mathcal{S}_{\mu, \mathcal{X}}^{(k)}}$ and $\1_{\mathcal{U}_{\mu, \mathcal{X}}^{(d)}}$ are the identity operators on $\mathcal{S}_{\mu, \mathcal{X}}^{(k)}$ and $\mathcal{U}_{\mu, \mathcal{X}}^{(d)}$, respectively \cite{Iwahori1978Representation}.
The dimension of $\mathcal{U}_{\mu, \mathcal{X}}^{(d)}$ is non-zero if and only if $\mu$ has at most $k$ rows, and {$\mathcal{U}_{\mu, \mathcal{X}}^{(d)}$ and} $\mathcal{S}_{\mu, \mathcal{X}}^{(k)}$ {are} spanned by {bases} called {the Gel'fand-Zetlin basis and the} Young orthonormal basis \cite{Krovi2019Efficient, sagan2001symmetric}{, respectively}. Each element in {the Gel'fand-Zetlin basis and} the Young orthonormal basis is labeled by {a semi-standard tableau $u_\mu$ and} a standard tableau {$s_\mu$} whose frame is $\mu${, respectively. In total, the Hilbert space $\mathcal{X}$ is spanned by the set of vectors $\{\ket{\mu, u_\mu, s_\mu}\}$, which is called the Schur basis.}

We extend the Schur-Weyl duality to show the decomposition of the tensor product of isometry {operators}. We consider Hilbert spaces $\mathcal{Y}_i= \mathbb{C}^D$ for $i\in \{1, \cdots, k\}$ and define the joint Hilbert space $\mathcal{Y}$ {by $\mathcal{Y} \coloneqq \bigotimes_{i=1}^{k}\mathcal{Y}_i$}. We consider the tensor product $V^{\otimes k}: \mathcal{X}\to \mathcal{Y}$ of an isometry {operator} $V\in \mathbb{V}_{\mathrm{iso}}(d,D)$. We decompose $V^{\otimes k}$ in the Schur basis as
\begin{align}
    V^{\otimes k}=\bigoplus_{\substack{\mu, \mu' \vdash k\\l(\mu)\leq d, l(\mu')\leq D}}  \sum_{\alpha} A^\alpha_{\mu, \mu'}\otimes B^\alpha_{\mu, \mu'},
\end{align}
where $l(\mu)$ is the number of rows of a Young diagram $\mu$, $\{A^{\alpha}_{\mu, \mu'}\}$ is a basis of the set of linear operators $\mathcal{L}(\mathcal{U}^{(d)}_{\mu, \mathcal{X}}\to \mathcal{U}^{(D)}_{\mu', \mathcal{Y}})$ and $B^{\alpha}_{\mu, \mu'}$ is a linear operator $B^{\alpha}_{\mu, \mu'}: \mathcal{S}^{(k)}_{\mu, \mathcal{X}}\to \mathcal{S}^{(k)}_{\mu', \mathcal{Y}}$.
Since $V^{\otimes k}$ is  invariant under the action of the symmetric group $\mathfrak{S}_k$, i.e.,
\begin{align}
    P_{\sigma, \mathcal{Y}}^{\dagger} V^{\otimes k} P_{\sigma, \mathcal{X}}=V^{\otimes k}
\end{align}
holds for all $\sigma\in \mathfrak{S}_k$,
\begin{align}
    P_{\sigma, \mu'}^{\dagger} B^{\alpha}_{\mu, \mu'} P_{\sigma, \mu}=B^\alpha_{\mu, \mu'}
\end{align}
holds for all $\mu, \mu'\vdash k$ and $\sigma\in \mathfrak{S}_k$. From Schur's lemma, if $\mu=\mu'$ (i.e., the irreducible representations $\mathcal{S}^{(k)}_{\mu, \mathcal{X}}$ and $\mathcal{S}^{(k)}_{\mu', \mathcal{Y}}$ are unitarily equivalent), $B^{\alpha}_{\mu, \mu'}$ is the isomorphism between the irreducible representations $\mathcal{S}^{(k)}_{\mu, \mathcal{X}}$ and $\mathcal{S}^{(k)}_{\mu', \mathcal{Y}}$, and if $\mu\neq \mu'$, $B^{\alpha}_{\mu, \mu'}=0$. Note that the isomorphism between the irreducible representations of the symmetric group is unique up to a constant multiplication.
Thus, we obtain the decomposition of $V^{\otimes k}$ given by
\begin{align}
    V^{\otimes k}=\bigoplus_{\substack{\mu\vdash k\\l(\mu)\leq d}} V_{\mu}\otimes I_{\mathcal{S}_{\mu, \mathcal{X}}^{(k)}\to \mathcal{S}_{\mu, \mathcal{Y}}^{(k)}},
    \label{eq:isometry_schur_weyl}
\end{align}
where $V_{\mu}\in \mathcal{L}(\mathcal{U}_{\mu, \mathcal{X}}^{(d)}\to \mathcal{U}_{\mu, \mathcal{Y}}^{(D)})$ is a linear operator and $I_{\mathcal{S}_{\mu, \mathcal{X}}^{(k)}\to \mathcal{S}_{\mu, \mathcal{Y}}^{(k)}}$ is the isomorphism between irreducible representations $\mathcal{S}_{\mu, \mathcal{X}}^{(k)}$ and $\mathcal{S}_{\mu, \mathcal{Y}}^{(k)}$, which transforms a basis vector of $\mathcal{S}_{\mu, \mathcal{X}}^{(k)}$ corresponding to a standard tableau into a basis vector of $\mathcal{S}_{\mu, \mathcal{Y}}^{(k)}$ corresponding to the same standard tableau. Since $V^{\otimes k}$ and $I_{\mathcal{S}_{\mu, \mathcal{X}}^{(k)}\to \mathcal{S}_{\mu, \mathcal{Y}}^{(k)}}$ are isometry operators, $V_{\mu}$ is also an isometry operator.

Note that the decomposition of the tensor product of isometric extension $V_{\widetilde{\Lambda}}: \mathcal{A}\to \mathcal{B}\otimes \mathcal{E}$ of a quantum channel $\widetilde{\Lambda}: \mathcal{L}(\mathcal{A})\to \mathcal{L}(\mathcal{B})$ is discussed in Ref.~\cite{harrow2005applications}. {Reference} \cite{harrow2005applications} uses the basis of the input space $\mathcal{A}^{\otimes k}$ as the Schur basis, and the basis of the output space $(\mathcal{B}\otimes \mathcal{E})^{\otimes k}$ as the tensor product of the Schur bases of $\mathcal{B}^{\otimes k}$ and $\mathcal{E}^{\otimes k}$ to represent the tensor product $V_{\widetilde{\Lambda}}^{\otimes k}$. However, the expression of $V_{\widetilde{\Lambda}}^{\otimes k}$ is not block diagonal in that basis. In contrast, we show the block diagonal decomposition of the tensor product $V^{\otimes k}$ of an isometry operator $V$ as shown in Eq.~(\ref{eq:isometry_schur_weyl}).

\section{Haar measure on the unitary group}
The Haar measure $\dd U$ is the uniform measure defined on the set of unitary operators $\mathbb{U}(d)$. More precisely, it is uniquely determined by the following properties \cite{Kobayashi2005Lie}:
\begin{align}
    \int \dd U&=1,\\
    \dd (U'UU'')&=\dd U\;\;\;(\forall U', U''\in \mathbb{U}(D)).\label{eq:uniformity}
\end{align}

We consider the action of the Haar random unitary operations on a quantum state.
Suppose $\mathcal{X}_i= \mathbb{C}^d$ for $i\in\{1, \cdots,k\}$ and define the joint Hilbert space by $\mathcal{X}=\bigotimes_{i=1}^{k} \mathcal{X}_i$. For a quantum state $\rho\in \mathcal{L}(\mathcal{X})$, we define a quantum state $\rho'\in\mathcal{L}(\mathcal{X})$ by
\begin{align}
    \rho'\coloneqq \int \dd U U^{\otimes k}\rho U^{\dagger \otimes k},
\end{align}
where $\dd U$ is the Haar measure on $\mathbb{U}(d)$. From Eq.~(\ref{eq:uniformity}), we obtain
\begin{align}
    [U^{\otimes k}, \rho']=0\;\;\;(\forall U\in \mathbb{U}(D)),
\end{align}
where $[A,B]\coloneqq AB-BA$ is a commutator. By Schur's lemma, the operator $\rho'$ satisfies
\begin{align}
    \rho'=\bigoplus_{\mu\vdash k} \frac{I_{\mathcal{U}_{\mu, \mathcal{X}}^{(d)}}}{d_{\mathcal{U}_{\mu}^{(d)}}}\otimes X_{\mathcal{S}^{(k)}_{\mu, \mathcal{X}}},\label{eq:haar_unitary_1}
\end{align}
where $X_{\mathcal{S}^{(k)}_{\mu, \mathcal{X}}}$ is a positive  operator on $\mathcal{S}^{(k)}_{\mu, \mathcal{X}}$. The operator $X_{\mathcal{S}^{(k)}_{\mu, \mathcal{X}}}$ is calculated as
\begin{align}
    X_{\mathcal{S}^{(k)}_{\mu, \mathcal{X}}}
    &=\Tr_{\mathcal{U}^{(d)}_{\mu, \mathcal{X}}}(\Pi_{\mu, \mathcal{X}}\rho')\\
    &=\int \dd U \Tr_{\mathcal{U}^{(d)}_{\mu, \mathcal{X}}}
    \left\{
    \Pi_{\mu, \mathcal{X}} \left[
    \widetilde{\mathcal{U}}_{\mu}\otimes \widetilde{\1}_{\mathcal{S}^{(k)}_{\mu, \mathcal{X}}}(\rho)
    \right]
    \right\}\\
    &=\int \dd U \Tr_{\mathcal{U}^{(d)}_{\mu, \mathcal{X}}}
    \left\{
    \left[
    \widetilde{\mathcal{U}}^{\dagger}_{\mu}\otimes \widetilde{\1}_{\mathcal{S}^{(k)}_{\mu, \mathcal{X}}}(\Pi_{\mu, \mathcal{X}})
    \right] 
    \rho
    \right\}\\
    &=\int \dd U \Tr_{\mathcal{U}^{(d)}_{\mu, \mathcal{X}}}(\Pi_{\mu, \mathcal{X}} \rho)\\
    &=\Tr_{\mathcal{U}^{(d)}_{\mu, \mathcal{X}}}(\Pi_{\mu, \mathcal{X}} \rho),\label{eq:haar_unitary_2}
\end{align}
where $\Pi_{\mu, \mathcal{X}}$ is a projector from the Hilbert space $\mathcal{X}$ to its subspace $\mathcal{U}_{\mu, \mathcal{X}}^{(d)}\otimes \mathcal{S}_{\mu, \mathcal{X}}^{(k)}$. 

\section{The parallel unitary inversion protocol}
\label{sec:appendix_unitary_inversion}

\begin{figure}[tb]
    \centering
    \includegraphics[width=0.7\linewidth]{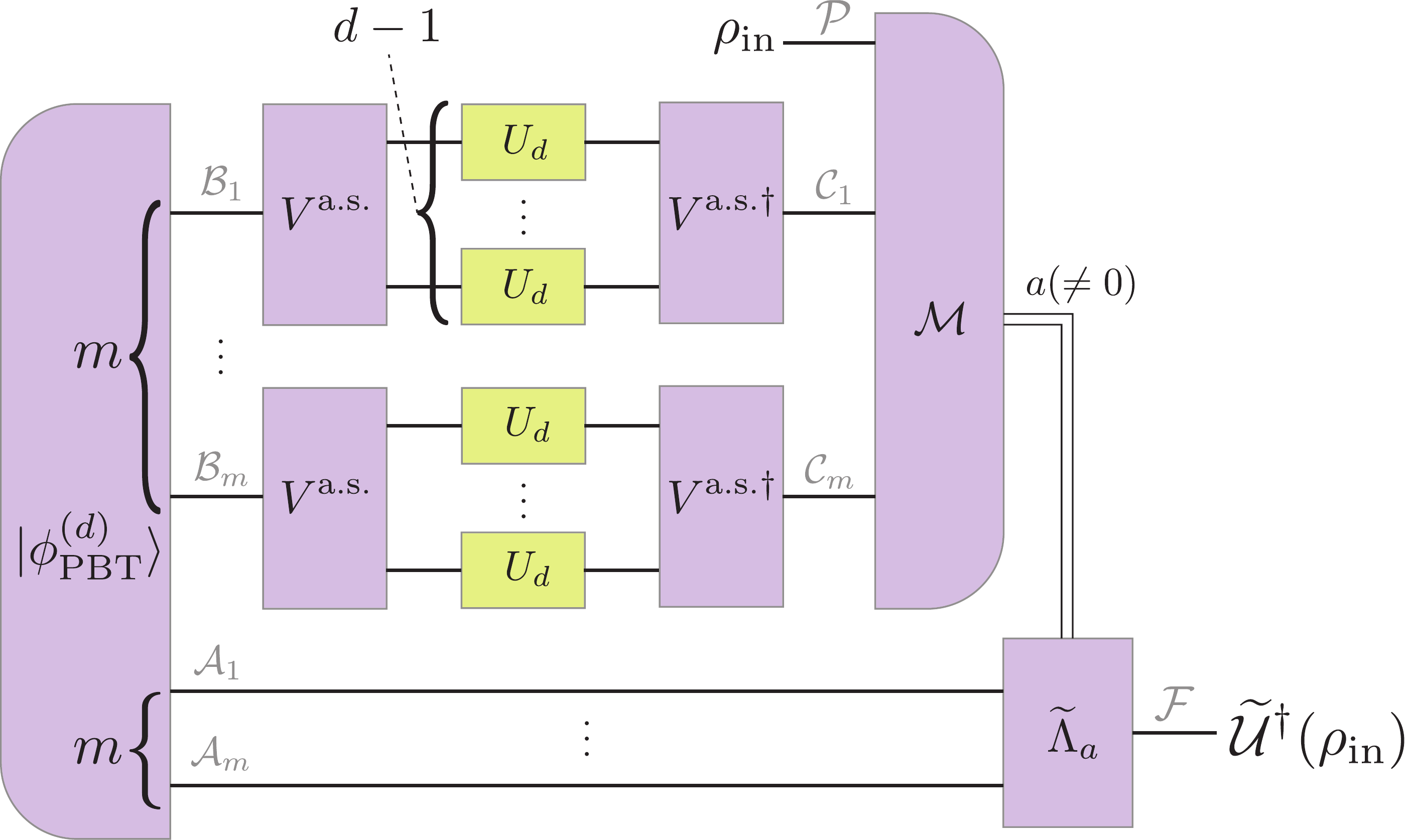}
    \caption{A quantum circuit representation of a parallel protocol for unitary inversion from $k=m(d-1)$ calls of $U_d\in \mathbb{U}(d)$ \cite{Quintino2019Reversing}. Each wire corresponds to a $d$-dimensional system. The quantum state $ \ket{\phi^{(d)}_{\mathrm{PBT}}}$ and the POVM $\mathcal{M}=\{\Gamma_a^{(d)}\}_{a=0}^{k}$ are the optimal resource state and the POVM for the probabilistic port-based teleportation \cite{Ishizaka2008Asymptotica,Studzinski2017Portbased}, which are defined in Eqs.~(\ref{eq:def_phi_PBT}) and (\ref{eq:def_gamma_a^d}), respectively. The isometry operator $V^{\mathrm{a.s.}}$ represents an encoding of quantum information on a totally antisymmetric state defined in Eq.~(\ref{eq:def_V^as}). The conditional CPTP map $\widetilde{\Lambda}_a$ is the operation to select the quantum state in $\mathcal{A}_a$ corresponding to the measurement outcome $a$ of $\mathcal{M}$ as the output state for $a\neq 0$, which is defined in Eq.~(\ref{eq:def_V_a}). This protocol succeeds when the measurement outcome $a$ is $a\neq 0$.}
    \label{fig:unitary_inverse_parallel}
\end{figure}

We show a quantum circuit representation of a $k$-input unitary inversion protocol presented in Ref.~\cite{Quintino2019Reversing} (See Figure~\ref{fig:unitary_inverse_parallel}). This protocol achieves a success probability $p_{\mathrm{succ}}=\lfloor k/(d-1) \rfloor/[d^2+\lfloor k/(d-1) \rfloor -1]$. In Figure~\ref{fig:unitary_inverse_parallel}, each wire corresponds to a $d$-dimensional system. We define the joint Hilbert spaces by $\mathcal{A}\coloneqq \bigotimes_{i=1}^{m} \mathcal{A}_i$, $\mathcal{B}\coloneqq \bigotimes_{i=1}^{m} \mathcal{B}_i$, $\mathcal{C}\coloneqq \bigotimes_{i=1}^{m} \mathcal{C}_i$ and $\overline{\mathcal{C}}_a\coloneqq \bigotimes_{i\neq a} \mathcal{C}_i$ for $a\in \{1, \cdots, m\}$. The quantum state $ \ket{\phi^{(d)}_{\mathrm{PBT}}}\in \mathcal{A}\otimes \mathcal{B}$ and the POVM $\mathcal{M}=\{\Gamma_a^{(d)}\}_{a=0}^{k}$ are the optimal resource state and the POVM for the probabilistic port-based teleportation \cite{Ishizaka2008Asymptotica,Studzinski2017Portbased}, respectively, which are defined by
\begin{align}
    &\ket{\phi^{(d)}_{\mathrm{PBT}}}
    \coloneqq
    (X'^{\frac{1}{2}}_{\mathcal{B}}\otimes \1_{\mathcal{A}})\ket{\Phi^+_{d^k}}_{\mathcal{BA}},\label{eq:def_phi_PBT}\\
    &\Gamma^{(d)}_a \coloneqq (\1_{\mathcal{P}}\otimes X'^{-\frac{1}{2}}_{\mathcal{C}})\left(\ketbra{\Phi^+_d}{\Phi^+_d}_{\mathcal{P}\mathcal{C}_a}\otimes \Theta'_{\overline{\mathcal{C}}_{a}}\right)(\1_{\mathcal{P}}\otimes X'^{-\frac{1}{2}}_{\mathcal{C}}),\label{eq:def_gamma_a^d}
\end{align}
where $\Theta'_{\overline{\mathcal{C}}_{a}}$ and $X'_{\mathcal{B}}$ are defined similarly to Eqs.~(\ref{eq:theta}) and (\ref{eq:X}) as
\begin{align}
    \Theta'_{\overline{\mathcal{C}}_{a}} &\coloneqq \sum_{\alpha\vdash  m-1} \frac{d^{m+1}g_d(m)d_{\mathcal{U}^{(d)}_{\alpha}}}{m d_{\mathcal{S}^{(m-1)}_{\alpha}}}\Pi_{\alpha, \overline{\mathcal{C}}_{a}},\\
    X'_{\mathcal{B}}&\coloneqq \sum_{\mu\vdash m}\frac{d^m g_d(m)d_{\mathcal{U}^{(d)}_{\mu}}}{d_{\mathcal{S}^{(m)}_{\mu}}}\Pi_{\mu, \mathcal{B}},
\end{align}
and $\ket{\Phi^+_{d^k}}$ and $\ket{\Phi^+_{d}}$ are the maximally entangled states defined in Eqs.~(\ref{eq:def_max_ent_state_d^k}) and (\ref{eq:def_max_ent_state_d}).
The conditional CPTP map $\widetilde{\Lambda}_a$ and the isometry operator $V^{\mathrm{a.s.}}$ are defined in Eqs.~(\ref{eq:def_V_a}) and (\ref{eq:def_V^as}), respectively.

\section{Parallel isometry transposition}
\label{sec:appendix_isometry_tranpose_PBT}

 \subsection{Construction of parallel isometry transposition protocol}
We can construct a parallel protocol for isometry transposition similarly  to unitary transposition \cite{Quintino2019Probabilistic} based on the port-based teleportation \cite{Ishizaka2008Asymptotica, Studzinski2017Portbased} and obtain the success probability of the protocol as stated in the following Theorem.

\begin{Thm}
\label{theorem:isometry_transpose_PBT}
A parallel protocol shown in Figure~\ref{isometry_transpose_PBT} transforms $k$ calls of an isometry operation $\widetilde{\mathcal{V}}$ corresponding to $V \in \mathbb{V}_\mathrm{iso} (d, D)$ into its transposed map $\widetilde{\mathcal{V}}^T$ with a success probability $p_{\mathrm{succ}}=k/(Dd+k-1)$.
\end{Thm}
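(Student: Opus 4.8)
The plan is to realize Figure~\ref{isometry_transpose_PBT} as the isometry analogue of the parallel unitary transposition protocol of Ref.~\cite{Quintino2019Probabilistic}, i.e.\ a \emph{gate port-based teleportation} \cite{Ishizaka2008Asymptotica,Studzinski2017Portbased}. Concretely, I would prepare $k$ maximally entangled states $\ket{\Phi^+_d}$ (Eq.~(\ref{eq:def_max_ent_state_d})), route the first leg of the $i$-th pair into the $i$-th call of $\widetilde{\mathcal V}$, thereby obtaining a resource state on $\mathcal O\otimes\mathcal B$ with $\mathcal O=(\mathbb C^D)^{\otimes k}$ the $k$ isometry outputs and $\mathcal B=(\mathbb C^d)^{\otimes k}$ the $k$ ``ports''; then perform an (optimal) probabilistic port-based teleportation measurement on the past register $\mathcal P=\mathbb C^D$ together with $\mathcal O$, of the same shape as the POVM in Appendix~\ref{sec:appendix_unitary_inversion} but with the Bell-type component taken in dimension $D$ (matching $\mathcal P$ and $\mathcal O_i$); and, conditioned on a nonzero outcome $a$, output port $a$ (a $\mathbb C^d$ register) as $\mathcal F$ via the selection map $\widetilde\Lambda_a$. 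This is literally the wiring of Appendix~\ref{sec:appendix_unitary_inversion} with $\widetilde{\mathcal U}$ replaced by $\widetilde{\mathcal V}$ and the teleported leg enlarged from $\mathbb C^d$ to $\mathbb C^D$.

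Correctness — that the output channel is $\widetilde{\mathcal V}^T$ with a prefactor independent of $V$ and of $\rho_{\mathrm{in}}$ — I would establish via the transpose-transfer identity
\begin{align}
    (A\otimes\1_d)\,\ket{\Phi^+_d}=\sqrt{D/d}\,\bigl(\1_D\otimes A^T\bigr)\ket{\Phi^+_D}\qquad(\forall\,A\in\mathcal L(\mathbb C^d\to\mathbb C^D)).
\end{align}
Applied to $A=V$ on each pair, this moves the black-box isometry off the measured leg and onto the port legs, where it becomes $V^T$; since $V^T$ commutes with the measurement on $\mathcal P\otimes\mathcal O$ and the unselected ports are discarded, the joint state of $\mathcal P$, $\mathcal O$ and port $a$ is — up to the scalar $(D/d)^{k/2}$ — exactly the standard $D$-dimensional probabilistic-PBT configuration for teleporting $\rho_{\mathrm{in}}$ from $\mathcal P$ to a port, post-composed with $V^T$ on that port. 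Because the probabilistic-PBT POVM is the one for which a nonzero outcome leaves the selected port holding $\rho_{\mathrm{in}}$ \emph{exactly}, the output is $V^T\rho_{\mathrm{in}}(V^T)^\dagger=\widetilde{\mathcal V}^T(\rho_{\mathrm{in}})$. Universality of the POVM (no $V$-dependence) despite the resource's $\mathcal O$-reduced state being $\propto(\Pi_{\Im V})^{\otimes k}$ rather than maximally mixed follows because $V^T=V^T\Pi_{\overline{\Im V}}$, so the $V$-dependence of the resource is precisely absorbed by the $V^T$ being produced; only $V^\dagger V=\1_d$ (hence the Schmidt rank $d$ of each pair) enters.

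The remaining and principal task is to show that the optimal success probability of this configuration is exactly $p_{\mathrm{succ}}=k/(Dd+k-1)$. The teleported logical system $\mathcal P$ now has dimension $D$ while each resource pair has Schmidt rank only $d$, so this is an \emph{asymmetric} port-based teleportation, and I would rerun the square-root-measurement computation of Ref.~\cite{Studzinski2017Portbased} for it: decomposing $\mathcal O$ and $\mathcal P\otimes\mathcal O$ by Schur--Weyl duality (Appendix~\ref{sec:schur-weyl_duality}), the resource is supported only on Young diagrams with at most $d$ rows with multiplicities controlled by $d$, whereas the Bell-type overlap with $\mathcal P$ brings in the dimension $D$; the resulting geometric-type sum collapses to $1-(Dd-1)/(Dd+k-1)=k/(Dd+k-1)$ (the role of ``$q^2$'' in the symmetric case being played by the product $Dd$). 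As a sanity check, at $k=1$ this reduces to the probability $1/(Dd)$ of the single-copy gate-teleportation circuit of Figure~\ref{fig:gate_teleportation}.

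The hard part is exactly this last step: carrying the two distinct dimensions $D$ (measured leg) and $d$ (port leg) through the representation-theoretic bookkeeping of probabilistic port-based teleportation, and verifying that the co-designed POVM is genuinely a POVM — that the ``square-root'' normalization operator is bounded by $\1$ — which is what caps $p_{\mathrm{succ}}$ below $1$ and pins down the constant $Dd-1$. The protocol description and the transpose-transfer reduction are, by contrast, routine.
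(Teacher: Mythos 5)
Your reduction has the right shape --- move the black box off the measured leg via a transpose-transfer identity and then invoke $D$-dimensional probabilistic port-based teleportation --- and this is indeed the strategy of the paper. But there is a genuine gap in the choice of resource state, and it is not a cosmetic one. You start from the plain product of maximally entangled states $\ket{\Phi^+_d}^{\otimes k}$, and your identity $(V\otimes\1_d)\ket{\Phi^+_d}=\sqrt{D/d}\,(\1_D\otimes V^T)\ket{\Phi^+_D}$ (which is correct) then lands you on the \emph{maximally entangled} $D$-dimensional PBT resource $\ket{\Phi^+_{D^k}}$ with $V^T$ on the ports. The POVM $\{\Gamma^{(D)}_a\}$ of Eq.~(\ref{eq:def_gamma_a}) is co-designed with the \emph{optimized} resource $\ket{\phi^{(D)}_{\mathrm{PBT}}}=(X_{\mathcal O}^{1/2}\otimes\1)\ket{\Phi^+_{D^k}}$: exact teleportation happens only because the $X_{\mathcal O}^{-1/2}$ in the POVM cancels the $X_{\mathcal O}^{1/2}$ in the resource, leaving a clean Bell projection. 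With your EPR resource that cancellation does not occur, so either the selected port does not hold $\rho_{\mathrm{in}}$ exactly (if you keep $\Gamma^{(D)}_a$), or you must switch to the EPR-optimal POVM, whose success probability is not of the form $N/(q^2+N-1)$ and will not telescope to $k/(Dd+k-1)$. Your $k=1$ sanity check passes only because for $k=1$ there is a single Young diagram and the distortion operator is a scalar.

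The paper's fix is exactly the ingredient you are missing: the input-side resource is pre-distorted as $\ket{\phi'_{\mathrm{PBT}}}=(Y_{\mathcal I}^{1/2}\otimes\1)\ket{\Phi^+_{d^k}}$ with $Y_{\mathcal I}\propto\sum_{\mu\vdash k}\frac{d^k\,d_{\mathcal U^{(D)}_\mu}}{d_{\mathcal S^{(k)}_\mu}}\Pi_{\mu,\mathcal I}$ --- note the weights involve the \emph{$D$-dimensional} irrep dimensions $d_{\mathcal U^{(D)}_\mu}$ even though the operator lives on $(\mathbb C^d)^{\otimes k}$. By the block decomposition $V^{\otimes k}=\bigoplus_\mu V_\mu\otimes I_{\mathcal S_\mu}$ of Eq.~(\ref{eq:isometry_schur_weyl}), this choice is precisely what makes $(V^{\otimes k}\otimes\1)\ket{\phi'_{\mathrm{PBT}}}$ equal, up to the computable constant $\sqrt{\sum_\mu (d_{\mathcal U^{(D)}_\mu})^2/\sum_\mu d_{\mathcal U^{(d)}_\mu}d_{\mathcal U^{(D)}_\mu}}$, to $(\1\otimes V^{T\otimes k})\ket{\phi^{(D)}_{\mathrm{PBT}}}$ (Eq.~(\ref{eq:h1})); a blockwise version of your transfer identity, not the global one. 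Only then does $\Gamma^{(D)}_a$ teleport exactly, and the success probability becomes $\sum_{\alpha\vdash k-1}d_{\mathcal U^{(d)}_\alpha}d_{\mathcal U^{(D)}_\alpha}/\sum_{\mu\vdash k}d_{\mathcal U^{(d)}_\mu}d_{\mathcal U^{(D)}_\mu}$, which reduces to $k/(Dd+k-1)$ via the character identity $\sum_{\mu\vdash k}d_{\mathcal U^{(d)}_\mu}d_{\mathcal U^{(D)}_\mu}=\frac{dD+k-1}{k}\sum_{\alpha\vdash k-1}d_{\mathcal U^{(d)}_\alpha}d_{\mathcal U^{(D)}_\alpha}$. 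You correctly flag this last computation as the hard part but leave it as an assertion by analogy; that, combined with the wrong resource, means the proposal as written does not establish the theorem.
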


\begin{figure}[t]
    \centering
    \includegraphics[width=0.7\linewidth]{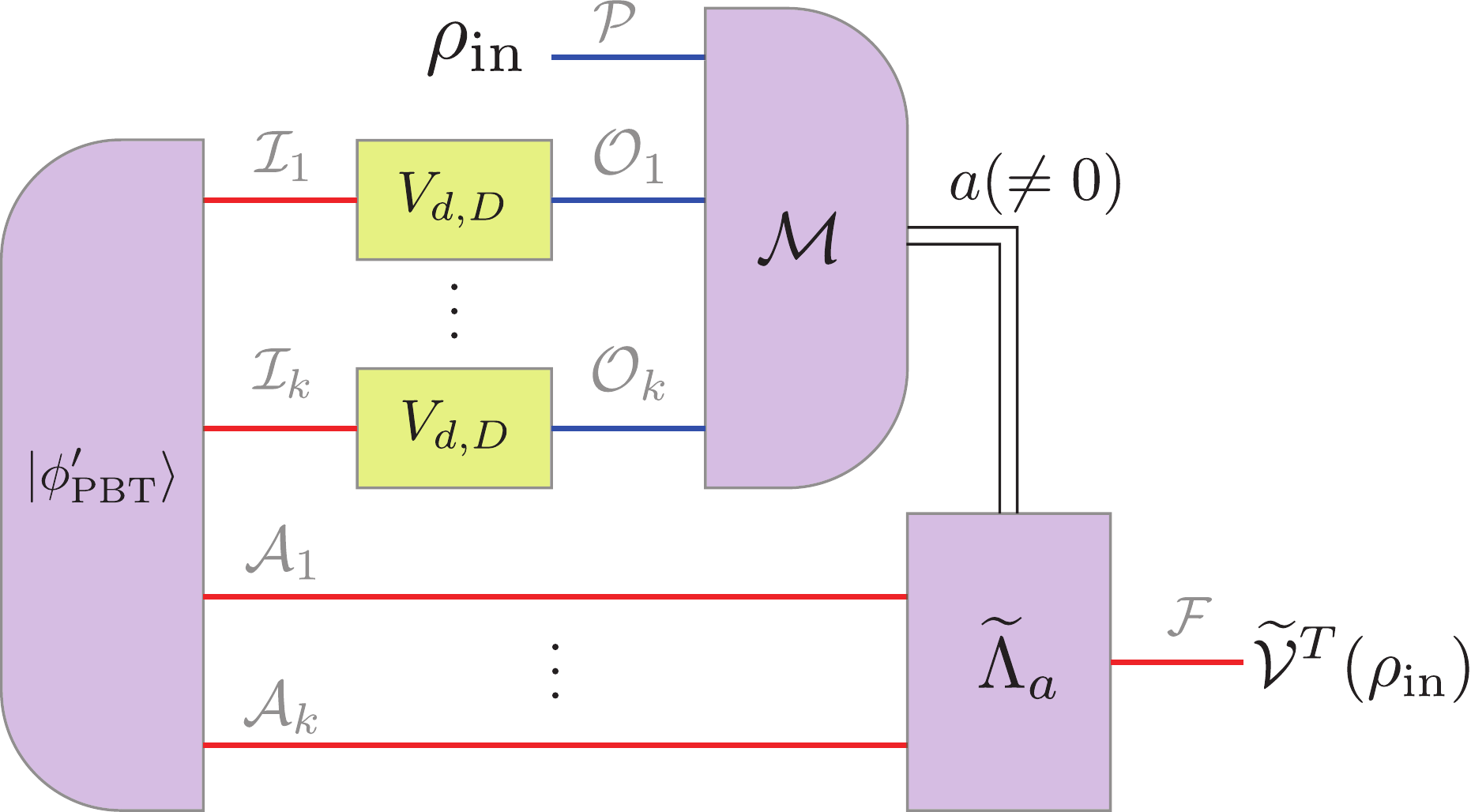}
    \caption{A quantum circuit representation of a  probabilistic parallel protocol for isometry transposition.  The quantum state $\ket{\phi_{\mathrm{PBT}}'}$ is a modified state of the optimal resource state for the probabilistic port-based teleportation \cite{Ishizaka2008Asymptotica, Studzinski2017Portbased} defined in Eq.~(\ref{eq:def_phi_PBT'}). The POVM $\mathcal{M}=\{\Gamma^{(D)}_a\}_{a=0}^{k}$ is also the optimal POVM for the probabilistic port-based teleportation, which is defined in Eq.~(\ref{eq:def_gamma_a}). The conditional CPTP map $\widetilde{\Lambda}_a$ is the operation to select the quantum state in $\mathcal{A}_a$ corresponding to the measurement outcome $a$ of $\mathcal{M}$ as the output state for $a\neq 0$, which is defined in Eq.~(\ref{eq:def_V_a}). This protocol succeeds when the measurement outcome $a$ is $a\neq 0$.}
    \label{isometry_transpose_PBT}
\end{figure}

In Figure~\ref{isometry_transpose_PBT}, the Hilbert spaces are given by $\mathcal{F}=\mathbb{C}^d$, $\mathcal{I}_i=\mathbb{C}^d$, $\mathcal{A}_i=\mathbb{C}^d$, $\mathcal{P}=\mathbb{C}^D$, and $\mathcal{O}_i=\mathbb{C}^D$ for $i\in \{1, \cdots, k\}$. We define the joint Hilbert spaces by $\mathcal{I}\coloneqq \bigotimes_{i=1}^k \mathcal{I}_i$, $\mathcal{O}\coloneqq \bigotimes_{i=1}^k \mathcal{O}_i$, $\overline{\mathcal{O}}_{a}\coloneqq \bigotimes_{i\neq a}\mathcal{O}_i$ and $\mathcal{A}\coloneqq \bigotimes_{i=1}^k \mathcal{A}_i$. The quantum state $\ket{\phi_{\mathrm{PBT}}'}$, the POVM $\mathcal{M}$, and the conditional CPTP map $\widetilde{\Lambda}_a$ are defined as follows.

To define $\ket{\phi'_{\mathrm{PBT}}}\in \mathcal{I}\otimes \mathcal{A}$, we firstly define an operator $Y_{\mathcal{I}}$ by
\begin{align}
    Y_{\mathcal{I}}\coloneqq \frac{1}{\sum_{\mu\vdash k} d_{\mathcal{U}^{(d)}_{\mu}}d_{\mathcal{U}^{(D)}_{\mu}}}\sum_{\mu\vdash k}\frac{d^k d_{\mathcal{U}^{(D)}_{\mu}}}{d_{\mathcal{S}_{\mu}^{(k)}}}\Pi_{\mu, \mathcal{I}},
\end{align}
where $d_{\mathcal{U}^{(d')}_{\mu}}$ and $d_{\mathcal{S}_{\mu}^{(k)}}$ are the dimensions of
$\mathcal{U}^{(D)}_{\mu, \mathbb{C}^{d'\otimes k}}$ and $\mathcal{S}^{(k)}_{\mu, \mathbb{C}^{D\otimes {k}}}$ for $d' \in \{d, D\}$, respectively, 
and $\Pi_{\mu, \mathcal{I}}$ is the orthogonal projector on $\mathcal{I}$ onto its subspace $\mathcal{U}^{({d'})}_{\mu, \mathcal{I}}\otimes \mathcal{S}^{(k)}_{\mu, \mathcal{I}}$. Then, we define $\ket{\phi'_{\mathrm{PBT}}}\in \mathcal{I}\otimes \mathcal{A}$ by
\begin{align}
    \ket{\phi'_{\mathrm{PBT}}}\coloneqq &(Y^{\frac{1}{2}}_{\mathcal{I}}\otimes I_{\mathcal{A}})\ket{\Phi^+_{d^k}}_{\mathcal{IA}},\label{eq:def_phi_PBT'}
\end{align}
where $\ket{\Phi^+_{d^k}}_{\mathcal{IA}}$ is the maximally entangled state given by
\begin{align}
    \ket{\Phi^+_{d^k}}_{\mathcal{IA}}\coloneqq \frac{1}{\sqrt{d^k}}\sum_{i=0}^{d^k-1}\ket{i}\otimes \ket{i}\in\mathcal{I}\otimes \mathcal{A}\label{eq:def_max_ent_state_d^k}
\end{align}
in terms of the computational basis $\{\ket{i}\}$ of $\mathcal{I}$ and $\mathcal{A}$.
The state $\ket{\phi'_{\mathrm{PBT}}}$ is a modified state of the optimal resource state $\ket{\phi^{(d)}_{\mathrm{PBT}}}$ for the probabilistic port-based teleportation of a $d$-dimensional quantum state \cite{Ishizaka2008Asymptotica, Studzinski2017Portbased} (see Appendix \ref{sec:appendix_unitary_inversion} for the definition of $\ket{\phi^{(d)}_{\mathrm{PBT}}}$).

The POVM $\mathcal{M}  = \{\Gamma^{(D)}_a\}_{a=0}^{k}$ is defined as follows. 
We define operators $\Theta_{\overline{\mathcal{O}}_{a}}$ and  $X_{\mathcal{O}}$ by
\begin{align}
    \Theta_{\overline{\mathcal{O}}_{a}} &\coloneqq \sum_{\alpha\vdash k-1}\frac{D^{k+1}g_D(k)d_{\mathcal{U}^{(D)}_{\alpha}}}{kd_{\mathcal{S}^{(k-1)}_{\alpha}}}\Pi_{\alpha, \overline{\mathcal{O}}_{a}},\label{eq:theta}\\
    X_{\mathcal{O}}&\coloneqq \sum_{\mu\vdash k}\frac{D^k g_D(k)d_{\mathcal{U}^{(D)}_{\mu}}}{d_{\mathcal{S}^{(k)}_{\mu}}}\Pi_{\mu, \mathcal{O}},\label{eq:X}
\end{align}
where $d_{\mathcal{U}^{(D)}_{\alpha}}$ and $d_{S^{(k-1)}_{\alpha}}$ are defined similarly to $d_{\mathcal{U}^{(D)}_{\mu}}$ and $d_{\mathcal{S}_{\mu}^{(k)}}$, respectively, $g_D(k)$ is defined by 
\begin{align}
    g_D(k)\coloneqq \left[\sum_{\mu\vdash k} \left(d_{\mathcal{U}^{(D)}_{\mu}}\right)^2\right]^{-1}
\end{align}
and $\Pi_{\alpha, \overline{\mathcal{O}}_{a}}$ and $\Pi_{\mu, \mathcal{O}}$ are orthogonal projectors defined similarly to $\Pi_{\mu, \mathcal{I}}$. Using $\Theta_{\overline{\mathcal{O}}_{a}}$ and $X_{\mathcal{O}}$, we define an operator  $\Gamma^{(D)}_a\;(a\in\{1, \cdots, k\})$ on $\mathcal{P}\otimes \mathcal{O}$ by
\begin{align}
    \Gamma^{(D)}_a &\coloneqq (\1_{\mathcal{P}}\otimes X_{\mathcal{O}}^{-\frac{1}{2}})\left(\ketbra{\Phi^+_D}{\Phi^+_D}_{\mathcal{P}\mathcal{O}_a}\otimes \Theta_{\overline{\mathcal{O}}_{a}}\right)(\1_{\mathcal{P}}\otimes X_{\mathcal{O}}^{-\frac{1}{2}}).\label{eq:def_gamma_a}
\end{align}
The set of operators $\{\Gamma^{(D)}_a \}_{a=1}^{k}$ satisfies $\Gamma^{(D)}_a\geq 0$ for $a\in\{1, \cdots, k\}$ and $\sum_{a=1}^{k} \Gamma^{(D)}_a\leq  \1_{\mathcal{P}\mathcal{O}}$ \cite{Studzinski2017Portbased}.  Thus, by defining $\Gamma^{(D)}_0\coloneqq  \1_{\mathcal{PO}}- \sum_{a=1}^{k} \Gamma^{(D)}_a$, a set of operators {$\{\Gamma^{(D)}_a\}_{a=0}^{k}$ forms a POVM.} The POVM $\{\Gamma^{(D)}_a\}_{a=0}^{k}$ is the optimal POVM for the probabilistic port-based teleportation of a $D$-dimensional quantum state \cite{Ishizaka2008Asymptotica, Studzinski2017Portbased}.

The conditional CPTP map $\widetilde{\Lambda}_a: \mathcal{L}(\mathcal{A})\to \mathcal{L}(\mathcal{F})$  for the outcome $a\in\{1, \cdots, k\}$ of $\mathcal{M}$ is defined by
\begin{align}
    \widetilde{\Lambda}_a(\rho_{\mathcal{A}})=\sum_{j, j'}\ketbra{j}{j'}_{\mathcal{F}} \bra{j}\Tr_{\overline{\mathcal{A}}_a}\rho_{\mathcal{A}}\ket{j'},\label{eq:def_V_a}
\end{align}
where $\{\ket{j}\}$ is the computational basis of Hilbert spaces $\mathcal{A}_a$ and $\mathcal{F}$. This conditional CPTP map represents the operation to select the quantum state in $\mathcal{A}_a$ corresponding to the measurement outcome $a$ as the output state.  The conditional CPTP map corresponding to the measurement outcome $a=0$ is not needed since our protocol only succeeds for $a\neq 0$ and the output state is aborted if $a=0$, the failure case.

 The protocol presented above is shown to implement isometry transposition with the success probability $p_{\text{succ}}=k/(Dd+k-1)$.

\begin{proof}[Proof of Theorem~\ref{theorem:isometry_transpose_PBT}]
First, we show the equality
\begin{align}
    (V^{\otimes k}_{\mathcal{I}\to \mathcal{O}}\otimes \1_{\mathcal{A}})\ket{\phi'_{\mathrm{PBT}}}_{\mathcal{I}\mathcal{A}}
    =\sqrt{\frac{\sum_{\mu\vdash k}(d_{\mathcal{U}^{(D)}_{\mu}})^2}{\sum_{\mu\vdash k}d_{\mathcal{U}^{(d)}_{\mu}} d_{\mathcal{U}^{(D)}_{\mu}}}}(\1_{\mathcal{O}}\otimes V^{T\otimes k}_{\mathcal{B}\to \mathcal{A}})\ket{\phi^{(D)}_{\mathrm{PBT}}}_{\mathcal{O}\mathcal{B}},\label{eq:h1}
\end{align}
where the joint Hilbert space $\mathcal{B}$ is defined by $\mathcal{B}\coloneqq \bigotimes_{i=1}^{k}\mathcal{B}_i$  for $\mathcal{B}_i = \mathbb{C}^D$, and the quantum state
$\ket{\phi^{(D)}_{\mathrm{PBT}}}\in \mathcal{O}\otimes \mathcal{B}$ is the optimal resource state for the probabilistic port-based teleportation of a $D$-dimensional quantum state \cite{Ishizaka2008Asymptotica, Studzinski2017Portbased} defined by
\begin{align}
    \ket{\phi^{(D)}_{\mathrm{PBT}}}\coloneqq (X_{\mathcal{O}}^{\frac{1}{2}}\otimes \1_{\mathcal{B}})\ket{\Phi^+_{D^k}}_{\mathcal{OB}}
\end{align}
using $X_{\mathcal{O}}$ given in Eq.~(\ref{eq:X}) and the maximally entangled state $\ket{\Phi^+_{D^k}}$. To show this equality, we define the maximally entangled state for each $\mu$ denoted as $\ket{\phi^+_{d, \mu}}\in (\mathcal{U}^{(d)}_{\mu, \mathcal{I}}\otimes \mathcal{S}^{(k)}_{\mu, \mathcal{I}})\otimes (\mathcal{U}^{(d)}_{\mu, \mathcal{A}}\otimes \mathcal{S}^{(k)}_{\mu, \mathcal{A}})$ given by
\begin{align}
    \ket{\phi^+_{d, \mu}}_{\mathcal{IA}}\coloneqq \frac{1}{\sqrt{d_{\mathcal{U}^{(d)}_{\mu}}d_{\mathcal{S}^{(k)}_{\mu}}}}\sum_{u_\mu, s_\mu} \ket{\mu, u_\mu, s_\mu}_{\mathcal{I}}\otimes \ket{\mu, u_\mu, s_\mu}_{\mathcal{A}},
\end{align}
where $\{\ket{\mu, u_\mu, s_\mu}\}$ is {the Schur basis.} Similarly, we define the maximally entangled state for each $\mu$ denoted as $\ket{\phi^+_{D, \mu}}\in (\mathcal{U}^{(D)}_{\mu, \mathcal{O}}\otimes \mathcal{S}^{(k)}_{\mu, \mathcal{O}})\otimes (\mathcal{U}^{(D)}_{\mu, \mathcal{B}}\otimes \mathcal{S}^{(k)}_{\mu, \mathcal{B}})$. Then, the quantum state $\ket{\phi'_{\mathrm{PBT}}}$ defined in Eq.~(\ref{eq:def_phi_PBT'}) can be written as

\begin{align}
    &\ket{\phi'_{\mathrm{PBT}}}\nonumber\\
    &= \sum_{\mu\vdash k} \sqrt{\frac{d^k d_{\mathcal{U}^{(D)}_{\mu}}}{d_{\mathcal{S}^{(k)}_{\mu}}\sum_{\nu\vdash k} d_{\mathcal{U}^{(d)}_{\nu}}d_{\mathcal{U}^{(D)}_{\nu}}}} (\Pi_{\mu, \mathcal{I}}\otimes \1_{\mathcal{A}})\ket{\Phi^+_{d^k}}_{\mathcal{IA}}\\
    &= \sum_{\mu\vdash k} \sqrt{\frac{d^k d_{\mathcal{U}^{(D)}_{\mu}}}{d_{\mathcal{S}^{(k)}_{\mu}}\sum_{\nu\vdash k} d_{\mathcal{U}^{(d)}_{\nu}}d_{\mathcal{U}^{(D)}_{\nu}}}} (\Pi_{\mu, \mathcal{I}}\otimes \1_{\mathcal{A}}) \sum_{\nu\vdash k} \sqrt{\frac{d_{\mathcal{U}_\nu^{(d)}}d_{\mathcal{S}_\nu^{(k)}}}{d^k}} (U^{\mathrm{Sch}\dagger}_{\mathcal{I}} \otimes U^{\mathrm{Sch}\dagger}_{\mathcal{A}}) \ket{\phi^+_{d, \nu}}_{\mathcal{IA}}\\
    &=\sum_{\mu \vdash k}p_{\mu} [\1_{\mathcal{I}}\otimes {U^{\mathrm{Sch}\dagger}_{\mathcal{A}}(U^{\mathrm{Sch}\dagger}_{\mathcal{A}})^t}]\ket{\phi^+_{d, \mu}}_{\mathcal{IA}},
\end{align}
where $U^{\mathrm{Sch}}$ is the quantum Schur transforms defined above Lemma~\ref{lem:Lambda}, $p_{\mu}$ is a positive value given by

\begin{align}
    p_{\mu}\coloneqq \sqrt{\frac{d_{\mathcal{U}_\mu^{(d)}}d_{\mathcal{U}_\mu^{(D)}}}{\sum_{\nu\vdash k}d_{\mathcal{U}_\nu^{(d)}}d_{\mathcal{U}_\nu^{(D)}}}},
\end{align}
and $X^{t}$ denotes the transpose of $X$ in the Schur basis. 
Similarly, $\ket{\phi^{(D)}_{\mathrm{PBT}}}$ is calculated as
\begin{align}
    \ket{\phi^{(D)}_{\mathrm{PBT}}}_{\mathcal{OB}}=
    \sum_{\mu \vdash k}q_{\mu} [\1_{\mathcal{I}}\otimes {U^{\mathrm{Sch}\dagger}_{\mathcal{B}}(U^{\mathrm{Sch}\dagger}_{\mathcal{B}})^t}]\ket{\phi^+_{D, \mu}}_{\mathcal{OB}},\label{eq:h12}
\end{align}
where $q_{\mu}$ is a positive value given by
\begin{align}
    q_{\mu}\coloneqq \frac{d_{\mathcal{U}^{(D)}_{\mu}}}{\sqrt{\sum_{\nu\vdash k} (d_{\mathcal{U}^{(D)}_{\nu}})^2}}.\label{eq:h13}
\end{align}
Since a tensor product $V^{\otimes k}$ of an isometry operator $V\in \mathbb{V}_\mathrm{iso} (d,D)$ can be decomposed in the irreducible representation form as Eq.~(\ref{eq:isometry_schur_weyl}), we obtain
\begin{align}
    (V^{\otimes k}_{\mathcal{I}\to \mathcal{O}}\otimes \1_{\mathcal{A}})\ket{\phi'_{\mathrm{PBT}}}_{\mathcal{IA}}
    &=\sum_{\mu} p_{\mu}(V^{\otimes k}_{\mathcal{I}\to \mathcal{O}}\otimes \1_{\mathcal{A}})
    [\1_{\mathcal{I}}\otimes {U^{\mathrm{Sch}\dagger}_{\mathcal{A}}(U^{\mathrm{Sch}\dagger}_{\mathcal{A}})^t}]
    \ket{\phi^+_{d, \mu}}_{\mathcal{IA}}\\
    &=\sum_{\mu} p_{\mu}
    \left[
    V_{\mu}\otimes I_{\mathcal{S}^{(k)}_{\mu, \mathcal{I}}\to \mathcal{S}^{(k)}_{\mu, \mathcal{O}}}\otimes {U^{\mathrm{Sch}\dagger}_{\mathcal{A}}(U^{\mathrm{Sch}\dagger}_{\mathcal{A}})^t}
    \right]
    \ket{\phi^+_{d, \mu}}_{\mathcal{IA}}\\
    &=\sum_{\mu}r_{\mu}
    [\1_{\mathcal{O}}\otimes {U^{\mathrm{Sch}\dagger}_{\mathcal{A}}(U^{\mathrm{Sch}\dagger}_{\mathcal{A}})^t}]
    \left(
    V_{\mu}\otimes I_{\mathcal{S}^{(k)}_{\mu, \mathcal{A}}\to \mathcal{S}^{(k)}_{\mu, \mathcal{B}}}
    \right)^{t}
    \ket{\phi^+_{D, \mu}}_{\mathcal{OB}},\label{eq:h16'}
\end{align}
where $r_\mu$ is a positive value given by
\begin{align}
    r_\mu
    \coloneqq  p_{\mu} \sqrt{\frac{d_{\mathcal{U}^{(D)}_{\mu}}}{d_{\mathcal{U}^{(d)}_{\mu}}}}
    =\frac{d_{\mathcal{U}^{(D)}_{\mu}}}{\sqrt{\sum_{\nu\vdash k} d_{\mathcal{U}^{(d)}_{\nu}}d_{\mathcal{U}^{(D)}_{\nu}}}}.\label{eq:h17}
\end{align}
For $X: \mathcal{A}\to \mathcal{B}$, the transpose $X^{t}$ in the Schur basis can be converted to the transpose $X^{T}$ in the computational basis as
\begin{align}
    X^T
    &={U^{\mathrm{Sch}\dagger}_{\mathcal{A}}(U^{\mathrm{Sch}}_{\mathcal{B}}}X_{\mathcal{A}\to \mathcal{B}}{U^{\mathrm{Sch}\dagger}_{\mathcal{A}})^{t}U^{\mathrm{Sch}}_{\mathcal{B}}}\\
    &={U^{\mathrm{Sch}\dagger}_{\mathcal{A}}( U^{\mathrm{Sch}\dagger}_{\mathcal{A}})^t}
    X^{t}_{\mathcal{B}\to \mathcal{A}}
    {(U^{\mathrm{Sch}}_{\mathcal{B}})^{t}
    U^{\mathrm{Sch}}_{\mathcal{B}}}.
\end{align}
Using this relation, we proceed the calculation in Eq.~(\ref{eq:h16'}) as
\begin{align}
    (V^{\otimes k}_{\mathcal{I}\to \mathcal{O}}\otimes \1_{\mathcal{A}})\ket{\phi'_{\mathrm{PBT}}}_{\mathcal{IA}}
    &=\sum_{\mu}r_{\mu}
    \left\{
    \1_{\mathcal{O}}\otimes  
    \left[
    \left(
    V_{\mu}\otimes I_{\mathcal{S}^{(k)}_{\mu, \mathcal{A}}\to \mathcal{S}^{(k)}_{\mu, \mathcal{B}}}
    \right)^{T}
    {U^{\mathrm{Sch} \dagger}_{\mathcal{B}}
    (U^{\mathrm{Sch} \dagger}_{\mathcal{B}})^t}
    \right]
    \right\}
    \ket{\phi^+_{D, \mu}}_{\mathcal{OB}}.\label{eq:h20}
\end{align}
From Eqs.~(\ref{eq:h12}), (\ref{eq:h13}), (\ref{eq:h17}) and (\ref{eq:h20}), we obtain Eq.~(\ref{eq:h1}).

Let $\rho'_{\mathrm{out},a}\in\mathcal{L}(\mathcal{A}_a)$ be the output state after obtaining the outcome $a$ of the POVM $\mathcal{M}=\{\Gamma^{(D)}_a\}$ given by Eq.~(\ref{eq:def_gamma_a}), but before applying the correction $\widetilde{\Lambda}_a$ given by Eq.~(\ref{eq:def_V_a}). For $a\neq 0$, the output state $\rho'_{\mathrm{out},a}$ multiplied by the probability to obtain the measurement outcome $a$, denoted by $\rho_{\mathrm{out}, a}$, is calculated as
\begin{align}
    &\rho_{\mathrm{out},a}\nonumber\\
    &=\Tr_{\mathcal{P}\mathcal{O}\overline{\mathcal{A}}_a}
    \left(
    (\Gamma_{a, \mathcal{PO}}\otimes \1_{\mathcal{A}})
    \left\{
    \rho_{\mathrm{in}, \mathcal{P}}\otimes
    \left[
    \widetilde{\mathcal{V}}^{\otimes k}_{\mathcal{I}\to \mathcal{O}}\otimes \widetilde{\1}_{\mathcal{A}}(\ketbra{\phi'_{\mathrm{PBT}}}{\phi'_{\mathrm{PBT}}}_{\mathcal{IA}})
    \right]
    \right\}
    \right)\\
    &=\frac{\sum_{\mu\vdash k}(d_{\mathcal{U}^{(D)}_{\mu}})^2}{\sum_{\mu\vdash k}d_{\mathcal{U}^{(d)}_{\mu}} d_{\mathcal{U}^{(D)}_{\mu}}}
    \Tr_{\mathcal{P}\mathcal{O}\overline{\mathcal{A}}_a}
    \left(
    (\Gamma_{a, \mathcal{PO}}\otimes \1_{\mathcal{A}})
    \left\{
    \rho_{\mathrm{in}, \mathcal{P}}\otimes
    \left[
    \widetilde{\1}_{\mathcal{O}}\otimes \widetilde{\mathcal{V}}^{T\otimes k}_{\mathcal{B}\to\mathcal{A}}(\ketbra{\phi_{\mathrm{PBT}}}{\phi_{\mathrm{PBT}}}_{\mathcal{OB}})
    \right]
    \right\}
    \right)\\
    &=\frac{\sum_{\mu\vdash k}(d_{\mathcal{U}^{(D)}_{\mu}})^2}{\sum_{\mu\vdash k}d_{\mathcal{U}^{(d)}_{\mu}} d_{\mathcal{U}^{(D)}_{\mu}}}
    \Tr_{\mathcal{P}\mathcal{O}\overline{\mathcal{A}}_a}
    \left(
    (\Gamma_{a, \mathcal{PO}}\otimes \1_{\mathcal{A}})
    \left\{
    \rho_{\mathrm{in}, \mathcal{P}}\otimes
    \left[
    \widetilde{\mathcal{X}}^{\frac{1}{2}}_{\mathcal{O}}\otimes \widetilde{\mathcal{V}}^{T\otimes k}_{\mathcal{B}\to\mathcal{A}}(\ketbra{\Phi^+_{D^{k}}}{\Phi^+_{D^{k}}}_{\mathcal{OB}})
    \right]
    \right\}
    \right)\\
    &=\frac{\sum_{\mu\vdash k}(d_{\mathcal{U}^{(D)}_{\mu}})^2}{\sum_{\mu\vdash k}d_{\mathcal{U}^{(d)}_{\mu}} d_{\mathcal{U}^{(D)}_{\mu}}}\nonumber\\
    &\hspace{12pt}\times
    \Tr_{\mathcal{P}\mathcal{O}\overline{\mathcal{A}}_a}
    \left(
    \left[
    \widetilde{\1}_{\mathcal{P}}\otimes \widetilde{\mathcal{X}}^{\dagger\frac{1}{2}}_{\mathcal{O}}(\Gamma_{a, \mathcal{PO}})\otimes \1_{\mathcal{A}}
    \right]
    \left\{
    \rho_{\mathrm{in}, \mathcal{P}}\otimes
    \left[
    \widetilde{\1}_{\mathcal{O}}\otimes \widetilde{\mathcal{V}}^{T\otimes k}_{\mathcal{B}\to\mathcal{A}}(\ketbra{\Phi^+_{D^{k}}}{\Phi^+_{D^{k}}}_{\mathcal{OB}})
    \right]
    \right\}
    \right).
\end{align}
Since $X_{\mathcal{O}}=X_{\mathcal{O}}^{\dagger}$ holds and the definition of $\Gamma_a$ is given by Eq.~(\ref{eq:def_gamma_a}), we obtain
\begin{align}
    \left(
    \widetilde{\1}_{\mathcal{P}}\otimes \widetilde{\mathcal{X}}^{\dagger\frac{1}{2}}_{\mathcal{O}}
    \right)
    (\Gamma_{a, \mathcal{PO}})=\ketbra{\Phi^+_D}{\Phi^+_D}_{\mathcal{P}\mathcal{O}_a}\otimes \Theta_{\overline{\mathcal{O}}_a}.
\end{align}
Therefore, $\rho_{\mathrm{out}, a}$ is further calculated as
\begin{align}
    &\rho_{\mathrm{out},a}\nonumber\\
    &=\frac{\sum_{\mu\vdash k}(d_{\mathcal{U}^{(D)}_{\mu}})^2}{\sum_{\mu\vdash k}d_{\mathcal{U}^{(d)}_{\mu}} d_{\mathcal{U}^{(D)}_{\mu}}}\nonumber\\
    &\hspace{12pt}\times
    \Tr_{\mathcal{P}\mathcal{O}\overline{\mathcal{A}}_a}
    \left(
    (\ketbra{\Phi^+_D}{\Phi^+_D}_{\mathcal{P}\mathcal{O}_a}\otimes \Theta_{\overline{\mathcal{O}}_a}\otimes \1_{\mathcal{A}})
    \left\{
    \rho_{\mathrm{in}, \mathcal{P}}\otimes
    \left[
    \widetilde{\1}_{\mathcal{O}}\otimes \widetilde{\mathcal{V}}^{T\otimes k}_{\mathcal{B}\to\mathcal{A}}(\ketbra{\Phi^{+}_{D^{k}}}{\Phi^{+}_{D^{k}}}_{\mathcal{OB}})
    \right]
    \right\}
    \right)\\
    &=\frac{\sum_{\mu\vdash k}(d_{\mathcal{U}^{(D)}_{\mu}})^2}{\sum_{\mu\vdash k}d_{\mathcal{U}^{(d)}_{\mu}} d_{\mathcal{U}^{(D)}_{\mu}}}
    \Tr
    \left\{
    (\Theta_{\overline{\mathcal{O}}_a}\otimes \1_{\overline{\mathcal{A}}_a})
    \left[
    \widetilde{\1}_{\overline{\mathcal{O}}_a}\otimes \widetilde{\mathcal{V}}^{T\otimes k-1}_{\overline{\mathcal{B}}_a\to\overline{\mathcal{A}}_a} (\ketbra{\Phi^+_{D^{k-1}}}{\Phi^+_{D^{k-1}}}_{\overline{\mathcal{O}}_a\overline{\mathcal{B}}_a})
    \right]
    \right\}\nonumber\\
    &\hspace{12pt}\times \Tr_{\mathcal{P}\mathcal{O}_a}
    \left\{
    (\ketbra{\Phi^+_D}{\Phi^+_D}_{\mathcal{P}\mathcal{O}_a}\otimes \1_{\mathcal{B}_a\to\mathcal{A}_a})
    \left[
    \widetilde{\1}_{\mathcal{P}\mathcal{O}_a}\otimes \widetilde{\mathcal{V}}^{T}_{\mathcal{B}_a}
    (\rho_{\mathrm{in}, \mathcal{P}}\otimes \ketbra{\Phi^+_D}{\Phi^+_D}_{\mathcal{O}_a\mathcal{B}_a})
    \right]
    \right\}\\
    &=\frac{\sum_{\mu\vdash k}(d_{\mathcal{U}^{(D)}_{\mu}})^2}{\sum_{\mu\vdash k}d_{\mathcal{U}^{(d)}_{\mu}} d_{\mathcal{U}^{(D)}_{\mu}}}
    \frac{d^{k-1}}{D^{k+1}}
    \Tr
    \left\{
    (\Theta_{\overline{\mathcal{O}}_a}\otimes \1_{\overline{\mathcal{A}}_a})
    \left[
    \widetilde{\mathcal{V}}^{\otimes k-1}_{\overline{\mathcal{I}}_a\to\overline{\mathcal{O}}_a}\otimes \widetilde{\1}_{\overline{\mathcal{A}}_a} (\ketbra{\Phi^+_{d^{k-1}}}{\Phi^+_{d^{k-1}}}_{\overline{\mathcal{I}}_a\overline{\mathcal{A}}_a})
    \right]
    \right\}\nonumber\\
    &\hspace{12pt}\times
    \widetilde{\mathcal{V}}^T_{\mathcal{B}_a}\circ \widetilde{\1}_{\mathcal{P}\to \mathcal{B}_a}
    (\rho_{\mathrm{in}, \mathcal{P}}).
\end{align}
The final successful output state $\rho'_{\mathrm{out}}$ is obtained by applying the conditional CPTP map $\widetilde{\Lambda}_a$ on $\rho'_{\mathrm{out},a}$ for $a\in\{1, \cdots, k\}$, which corrects the index of the Hilbert space from $\mathcal{A}_a$ to $\mathcal{F}$. Let $\rho_{\mathrm{out}}$ be the final successful output state $\rho'_{\mathrm{out}}$ multiplied by the probability that the measurement outcome $a$ satisfies $a\neq 0$. Namely, from the definition of the successful
output state, $\rho_{\mathrm{out}}$ is written by
\begin{align}
    \rho_{\mathrm{out}}=\sum_{a=1}^{k}\rho_{\mathrm{out},a}=p_{\mathrm{succ}}\widetilde{\mathcal{V}}^T(\rho_{\mathrm{in}}),
\end{align}
where the success probability $p_{\mathrm{succ}}$ is given by
\begin{align}
    p_{\mathrm{succ}}
    &=\sum_{a=1}^{k}\frac{\sum_{\mu\vdash k}(d_{\mathcal{U}^{(D)}_{\mu}})^2}{\sum_{\mu\vdash k}d_{\mathcal{U}^{(d)}_{\mu}} d_{\mathcal{U}^{(D)}_{\mu}}}
    \frac{d^{k-1}}{D^{k+1}}
    \Tr
    \left\{
    (\Theta_{\overline{\mathcal{O}}_a}\otimes \1_{\overline{\mathcal{A}}_a})
    \left[
    \widetilde{\mathcal{V}}^{\otimes k-1}_{\overline{\mathcal{I}}_a\to\overline{\mathcal{O}}_a}\otimes \widetilde{\1}_{\overline{\mathcal{A}}_a} (\ketbra{\Phi^+_{d^{k-1}}}{\Phi^+_{d^{k-1}}}_{\overline{\mathcal{I}}_a\overline{\mathcal{A}}_a})
    \right]
    \right\}.
\end{align}
The success probability $p_{\mathrm{succ}}$ is calculated as
\begin{align}
    p_{\mathrm{succ}}
    &=\sum_{a=1}^{k}\frac{\sum_{\mu\vdash k}(d_{\mathcal{U}^{(D)}_{\mu}})^2}{\sum_{\mu\vdash k}d_{\mathcal{U}^{(d)}_{\mu}} d_{\mathcal{U}^{(D)}_{\mu}}}
    \frac{1}{D^{k+1}}
    \Tr(\Theta_{\overline{\mathcal{O}}_a} V^{\otimes k-1}V^{\dagger \otimes k-1})\\
    &=\sum_{a=1}^{k}\frac{\sum_{\mu\vdash k}(d_{\mathcal{U}^{(D)}_{\mu}})^2}{\sum_{\mu\vdash k}d_{\mathcal{U}^{(d)}_{\mu}} d_{\mathcal{U}^{(D)}_{\mu}}}
    \frac{1}{D^{k+1}}
    \Tr(\Theta_{\overline{\mathcal{O}}_a} \Pi_{\Im V^{\otimes k-1}})\\
    &=\sum_{a=1}^{k}\frac{\sum_{\mu\vdash k}(d_{\mathcal{U}^{(D)}_{\mu}})^2}{\sum_{\mu\vdash k}d_{\mathcal{U}^{(d)}_{\mu}} d_{\mathcal{U}^{(D)}_{\mu}}}
    \sum_{\alpha\vdash k-1}\frac{g_D(k)d_{\mathcal{U}^{(D)}_{\alpha}}}{kd_{\mathcal{S}^{(k-1)}_{\alpha}}}
    \Tr(\Pi_{\alpha, \overline{\mathcal{O}}_a} \Pi_{\Im V^{\otimes k-1}})\\
    &=\sum_{a=1}^{k}\frac{\sum_{\mu\vdash k}(d_{\mathcal{U}^{(D)}_{\mu}})^2}{\sum_{\mu\vdash k}d_{\mathcal{U}^{(d)}_{\mu}} d_{\mathcal{U}^{(D)}_{\mu}}}
    \sum_{\alpha\vdash k-1}\frac{g_D(k)d_{\mathcal{U}^{(D)}_{\alpha}}d_{\mathcal{U}^{(d)}_{\alpha}}}{k}\label{eq:h16}\\
    &=\frac{\sum_{\alpha\vdash k-1}d_{\mathcal{U}^{(d)}_{\alpha}} d_{\mathcal{U}^{(D)}_{\alpha}}}{\sum_{\mu\vdash k}d_{\mathcal{U}^{(d)}_{\mu}} d_{\mathcal{U}^{(D)}_{\mu}}},
\end{align}
where Eq.~(\ref{eq:h16}) follows from the relation $\Tr(\Pi_{\alpha, \overline{\mathcal{O}}_a} \Pi_{\Im V^{\otimes k-1}})=d_{\mathcal{U}^{(d)}_\alpha}d_{\mathcal{S}^{(k-1)}_{\alpha}}$, which is obtained by the isomorphism $\Im V^{\otimes k-1} = \bigoplus_{\alpha\vdash k-1}\mathcal{U}^{(d)}_{\alpha}\otimes \mathcal{S}^{(k-1)}_{\alpha}$.

We further calculate the success probability $p_{\mathrm{succ}}$ using techniques similar to those presented in Ref.~\cite{Studzinski2017Portbased}.
Let $\chi_{d',  k}$ be the character of the representation $P_{\sigma}$ of $\mathfrak{S}_{k}$ on $(\mathbb{C}^{d'})^{\otimes  k}$ and $\chi_{\mu}$ be the irreducible character of the representation $ \mu$ of $\mathfrak{S}_{ k}$.
Then, we have
\begin{align}
    \chi_{d', k}=\sum_{\mu\vdash  k}d_{\mathcal{U}^{(d')}_ \mu}\chi_\mu.
\end{align}
We define an inner product of two characters $\chi, \chi'$ by 
\begin{align}
    \langle \chi, \chi'\rangle\coloneqq \frac{1}{|\mathfrak{S}_{k}|}\sum_{\sigma\in \mathfrak{S}_{k}}\chi(\sigma) \chi'^*(\sigma).
\end{align}
Then, $\langle \chi_\mu, \chi_\nu\rangle=\delta_{ \mu,  \nu}$ for irreducible representations $\mu, \nu\vdash  k$. Therefore,
\begin{align}
    \langle \chi_{d, k}, \chi_{D, k} \rangle=\sum_{\mu\vdash  k}d_{\mathcal{U}^{(d)}_\mu}d_{\mathcal{U}^{(D)}_\mu}.
\end{align}
holds.
By definition of the inner product, we obtain
\begin{align}
    \sum_{\mu\vdash  k}d_{\mathcal{U}^{(d)}_{\mu'}}d_{\mathcal{U}^{(D)}_\mu}=\frac{1}{|\mathfrak{S}_{k}|}\sum_{\sigma\in \mathfrak{S}_{k}}\chi_{d, k}(\sigma)\chi^*_{D, k}(\sigma).
\end{align}
Let $l(\sigma)$ be the minimum number $n$ such that $\sigma$ is written as a product of $n$ permutations $\sigma=\tau_1\cdots \tau_n$. 
Then, we have $\chi_{d', k}(\sigma)=d'^{l(\sigma)}$. Therefore, we obtain
\begin{align}
    \sum_{ \mu\vdash  k}d_{\mathcal{U}^{( d)}_{\mu'}}d_{\mathcal{U}^{( D)}_\mu}=\frac{1}{ k!}\sum_{\sigma\in\mathfrak{S}_{k}}(dD)^{l(\sigma)}.
\end{align}
A permutation $\sigma\in \mathfrak{S}_{k}$ can be written uniquely as $\sigma=(a  k)\tau$ using $a\in \{1, \cdots,  k\}$ and $\tau\in\mathfrak{S}_{ k-1})$, and $l(\sigma)$ can be calculated inductively by the relation given by
\begin{align}
    l(\sigma)=
    \begin{cases}
    l(\tau)+1 & (a=  k)\\
    l(\tau) & (a\neq  k)
    \end{cases}.
\end{align}
Therefore, we finally obtain
\begin{align}
    \sum_{ \mu\vdash  k}d_{\mathcal{U}^{(d)}_\mu}d_{\mathcal{U}^{(D)}_\mu}
    &=\frac{1}{ k!}\sum_{\sigma\in\mathfrak{S}_{ k}}(dD)^{l(\sigma)}\\
    &=\frac{dD+ k-1}{ k!}\sum_{\tau\in \mathfrak{S}_{k-1}}(dD)^{l(\tau)}\\
    &=\frac{dD+ k-1}{ k}\sum_{\alpha\vdash  k-1}d_{\mathcal{U}^{(d)}_\alpha} d_{\mathcal{U}^{(D)}_\alpha},
\end{align}
which leads to
\begin{align}
    p_{\mathrm{succ}}
    &=\frac{\sum_{\alpha\vdash k-1}d_{\mathcal{U}^{(d)}_{\alpha}} d_{\mathcal{U}^{(D)}_{\alpha}}}{\sum_{\mu\vdash k}d_{\mathcal{U}^{(d)}_{\mu}} d_{\mathcal{U}^{(D)}_{\mu}}}\\
    &=\frac{k}{Dd+k-1}.
\end{align}
\end{proof}

 \subsection{Optimality of the  parallel isometry transposition for $k=1$}

We show that this isometry transposition protocol is optimal for $k=1$ from the uniqueness of isometry transposition similarly to  the case of unitary transposition where the uniqueness of transposition implies the optimality of the parallel protocol \cite{Dong2021Quantum}.

\begin{Thm}
\label{thm:isometry_transposition_optimality}
The optimal success probability of probabilistic protocols that transform a single call of an isometry operation $\widetilde{\mathcal{V}}: \mathcal{L}(\mathbb{C}^d)\to \mathcal{L}(\mathbb{C}^D)$ into its transposed map $\widetilde{\mathcal{V}}^T$ is $p_{\mathrm{opt}}=1/(Dd)$.
\end{Thm}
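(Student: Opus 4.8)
The plan is to split the statement into the (trivial) achievability and the (substantive) converse. Achievability is immediate from Theorem~\ref{theorem:isometry_transpose_PBT}: setting $k=1$ there gives a protocol with success probability $1/(Dd+1-1)=1/(Dd)$. For the converse, note first that with a single use of the black box there is no nontrivial causal order, so parallel, sequential and general protocols coincide, and it suffices to bound $p$ over all single-input superinstruments $\{\supermap{S},\supermap{F}\}$ with $\mathcal{P}=\mathcal{O}=\mathbb{C}^D$ and $\mathcal{I}=\mathcal{F}=\mathbb{C}^d$ satisfying $\supermap{S}(\widetilde{\mathcal{V}})=p\,\widetilde{\mathcal{V}}^T$ for every $V\in\mathbb{V}_{\mathrm{iso}}(d,D)$; in Choi form this reads $S\star\dketbra{V}=p\,\dketbra{V^T}$, where $\dketbra{V}$ is the Choi operator of $\widetilde{\mathcal{V}}$ on $\mathcal{I}\otimes\mathcal{O}$ and $S=J_{\supermap{S}}$.

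Next I would make the ``uniqueness of isometry transposition'' concrete through symmetrization. The group $\mathbb{U}(D)\times\mathbb{U}(d)$ acts transitively on $\mathbb{V}_{\mathrm{iso}}(d,D)$ by $V\mapsto U_DVU_d$ (every isometry is $U_DV^{\mathrm{embed}}U_d$ with $V^{\mathrm{embed}}$ as in Eq.~(\ref{eq:def_embedding})), and at the channel level $(U_DVU_d)^T$ differs from $\widetilde{\mathcal{V}}^T$ only by unitary channels acting on the past space $\mathcal{P}$ and the future space $\mathcal{F}$. Absorbing those into the input and output legs of $\supermap{S}$ and averaging over the group yields a valid superinstrument (a convex integral of valid ones) that still implements isometry transposition with the same $p$ but whose Choi operator $S$ commutes with the induced representation, in which the two $D$-dimensional legs $\mathcal{P},\mathcal{O}$ transform together and the two $d$-dimensional legs $\mathcal{I},\mathcal{F}$ transform together. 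By Schur's lemma the commutant on each pair is two-dimensional (spanned by $\1$ and the corresponding maximally entangled projector $\ketbra{\Phi^+}{\Phi^+}$), so $S$ --- and likewise the accompanying deterministic supermap $C=S+F$ --- is a combination of only four generators.

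With $S$ confined to this four-dimensional space, the remaining constraints become a finite linear program. The transposition identity $S\star\dketbra{V}=p\,\dketbra{V^T}$, which by covariance need only be checked at $V=V^{\mathrm{embed}}$, together with $\Tr_{\mathcal{O}}\dketbra{V}=\1_{\mathcal{I}}$ and $\Tr_{\mathcal{I}}\dketbra{V}=\Pi_{\Im V}$ (a rank-$d$ projector inside $\mathbb{C}^D$), collapses to a few scalar equations relating $p$ to the coefficients of $S$; the superinstrument conditions (\ref{eq:parallel_comb_cond1})--(\ref{eq:parallel_comb_cond4}) on $C$ together with $C-S\geq0$ become scalar inequalities. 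Maximising $p$ over this program returns $p_{\mathrm{opt}}=1/(Dd)$, matching Figure~\ref{isometry_transpose_PBT}. This is the rigidity meant by ``uniqueness'': up to the irrelevant symmetrization, the only single-call (hence general) supermap producing $\widetilde{\mathcal{V}}^T$ is the gate-teleportation-type circuit of Figure~\ref{isometry_transpose_PBT}, whose success probability is forced to $1/(Dd)$, exactly as uniqueness of transposition forces optimality in the unitary case \cite{Dong2021Quantum}.

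The part I expect to be the real work is that last evaluation in the genuinely non-square regime $d<D$: computing the link product $S\star\dketbra{V}$ and reducing the chain conditions (\ref{eq:parallel_comb_cond2})--(\ref{eq:parallel_comb_cond4}) when $\dketbra{V}$ has rank $d$ rather than $dD$ (so the transposition identity only constrains part of $S$) and when partial traces involve two maximally entangled states of different sizes. Keeping all these normalisations straight is precisely what pins $p$ to $1/(Dd)$ rather than something larger. A smaller companion point is to verify that the symmetrization preserves both the superinstrument property and the transposition identity, which is where transitivity of the $\mathbb{U}(D)\times\mathbb{U}(d)$-action on $\mathbb{V}_{\mathrm{iso}}(d,D)$ is essential.
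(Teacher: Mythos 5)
Your proposal is correct in outline and reaches the same bound, but it pins down the Choi operator $S$ by a genuinely different mechanism than the paper. The paper first proves a strict uniqueness lemma (Lemma~\ref{lem:uniqueness_transpose}): for \emph{any} single-input superinstrument satisfying $S\star\dketbra{V}=p\,\dketbra{V^T}$, the operator $S$ is forced to equal $p\,Dd\,\ketbra{\Phi^+_D}{\Phi^+_D}_{\mathcal{P}\mathcal{O}_1}\otimes\ketbra{\Phi^+_d}{\Phi^+_d}_{\mathcal{I}_1\mathcal{F}}$ \emph{without any symmetrization}. This is done by restricting to the $\binom{D}{d}$ sub-blocks $W_\sigma$, invoking the known uniqueness of $d$-dimensional unitary transposition \cite{Dong2021Quantum} on each block, and gluing the matrix elements together; a byproduct is that $\supermap{S}(\widetilde{\Lambda})=p\widetilde{\Lambda}^T$ extends to arbitrary maps $\widetilde{\Lambda}$. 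You instead twirl over the transitive $\mathbb{U}(D)\times\mathbb{U}(d)$ action, use Schur's lemma to reduce $S$ to a four-parameter commutant, and impose the transposition identity at $V^{\mathrm{embed}}$ plus the comb conditions as a small linear program. Your route is self-contained (no imported unitary-transposition uniqueness) and is the standard covariant-supermap argument, but it only determines $S$ up to symmetrization rather than exactly, which suffices for the optimality claim though not for the stronger uniqueness statement the paper also records. The final step is essentially shared: both arguments extract $p\,Dd\leq 1$ from $\Tr_{\mathcal{I}_1\mathcal{F}}C=\1_{\mathcal{P}}\otimes\1_{\mathcal{O}_1}$ together with $C\geq S$ and $\Tr_{\mathcal{I}_1\mathcal{F}}S\supseteq p\,Dd\,\ketbra{\Phi^+_D}{\Phi^+_D}$. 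One caveat: the decisive computation in your plan --- evaluating the four link products $S\star\dketbra{V^{\mathrm{embed}}}$ and checking that positivity of $S$ kills the $\1\otimes\1$, $\1\otimes\Phi_d$, and $\Phi_D\otimes\1$ components (the outputs $d\,\1_{\mathcal{P}\mathcal{F}}$ and $\Pi_{\Im V}\otimes\1_{\mathcal{F}}$ cannot combine into the rank-one target $\dketbra{V^T}$) --- is asserted rather than carried out; it does go through, but it is the content of the proof and should be written down.
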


First, we show the following Lemma~on the uniqueness of isometry transposition.
\begin{Lem}\label{lem:uniqueness_transpose}
If a single-input superinstrument $\{\doublewidetilde{\mathcal{S}^{}}, \doublewidetilde{\mathcal{F}^{}}\}$ implements a probabilistic exact isometry transposition protocol, i.e.,
\begin{align}
\doublewidetilde{\mathcal{S}^{}}(\widetilde{\mathcal{V}})=p_{\mathrm{succ}}\widetilde{\mathcal{V}}^T\;\;\;(\forall V\in \mathbb{V}_\mathrm{iso} (d,D)),\label{eq:single-input_isometry_transposition}
\end{align}
the  Choi operator $S$ of $\doublewidetilde{\mathcal{S}^{}}$ is uniquely given by
\begin{align}
    S=p_{\mathrm{succ}}Dd\ketbra{\Phi^+_D}{\Phi^+_D}_{\mathcal{P}\mathcal{O}_1}\otimes \ketbra{\Phi^+_d}{\Phi^+_d}_{\mathcal{I}_1\mathcal{F}}.
    \label{eq:isometry_transpose_uniqueness}
\end{align}
Moreover, $\doublewidetilde{\mathcal{S}^{}}(\widetilde{\Lambda})=p_{\mathrm{succ}}\widetilde{\Lambda}^T$ holds for any map $\widetilde{\Lambda}: \mathcal{L}(\mathbb{C}^d)\to \mathcal{L}(\mathbb{C}^D)$.
\end{Lem}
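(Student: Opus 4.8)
The plan is to determine the Choi operator $S$ of $\doublewidetilde{\mathcal{S}^{}}$ from three ingredients: the defining equation, the symmetry $V\mapsto AVC^{\dagger}$ of isometry transposition, and the positivity $S\ge 0$. In Choi form, with $\mathcal{P}=\mathcal{O}_1=\mathbb{C}^D$ and $\mathcal{I}_1=\mathcal{F}=\mathbb{C}^d$, Eq.~(\ref{eq:single-input_isometry_transposition}) reads $S\star\dketbra{V}=p_{\mathrm{succ}}\dketbra{V^T}$ for every $V\in\mathbb{V}_\mathrm{iso}(d,D)$, since $J_{\widetilde{\mathcal{V}}}=\dketbra{V}$ and $J_{\widetilde{\mathcal{V}}^T}=\dketbra{V^T}$. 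First I would use that $V\mapsto AVC^{\dagger}$ maps $\mathbb{V}_\mathrm{iso}(d,D)$ into itself for $A\in\mathbb{U}(D)$, $C\in\mathbb{U}(d)$, together with the elementary identities $\dket{AVC^{\dagger}}=(\overline{C}_{\mathcal{I}_1}\otimes A_{\mathcal{O}_1})\dket{V}$ and $\dket{(AVC^{\dagger})^T}=(A_{\mathcal{P}}\otimes\overline{C}_{\mathcal{F}})\dket{V^T}$. Substituting into the defining equation and moving every conjugation onto $S$ (via $S\star(MXM^{\dagger})=[(\overline{M}\otimes\1)^{\dagger}S(\overline{M}\otimes\1)]\star X$ and $P(S\star X)P^{\dagger}=[(\1\otimes P)S(\1\otimes P^{\dagger})]\star X$) shows that $Q(A,C)^{\dagger}S\,Q(A,C)$ has the same link product with every $\dketbra{V}$ as $S$ itself, where $Q(A,C):=A_{\mathcal{P}}\otimes C_{\mathcal{I}_1}\otimes\overline{A}_{\mathcal{O}_1}\otimes\overline{C}_{\mathcal{F}}$ is a genuine unitary representation of $\mathbb{U}(D)\times\mathbb{U}(d)$.

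Haar-averaging, $\widetilde{S}:=\int\!\!\int Q(A,C)^{\dagger}S\,Q(A,C)\,\dd A\,\dd C$ still satisfies $\widetilde{S}\star\dketbra{V}=p_{\mathrm{succ}}\dketbra{V^T}$ and, in addition, commutes with all $Q(A,C)$. Since $\ket{\Phi^+_D}$ is the unique invariant of $A\otimes\overline{A}$ on $\mathbb{C}^D\otimes\mathbb{C}^D$ with irreducible orthogonal complement (and likewise for $C\otimes\overline{C}$ on $\mathbb{C}^d\otimes\mathbb{C}^d$), the commutant of $\{Q(A,C)\}$ is spanned by products of $\{\1,\Pi_D\}$ and $\{\1,\Pi_d\}$, where $\Pi_D:=\ketbra{\Phi^+_D}{\Phi^+_D}_{\mathcal{P}\mathcal{O}_1}$ and $\Pi_d:=\ketbra{\Phi^+_d}{\Phi^+_d}_{\mathcal{I}_1\mathcal{F}}$; hence
\begin{align}
  \widetilde{S}=c_1\,\Pi_D\otimes\Pi_d+c_2\,\Pi_D\otimes\1_{\mathcal{I}_1\mathcal{F}}+c_3\,\1_{\mathcal{P}\mathcal{O}_1}\otimes\Pi_d+c_4\,\1_{\mathcal{P}\mathcal{O}_1}\otimes\1_{\mathcal{I}_1\mathcal{F}}
\end{align}
for scalars $c_i$. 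A short computation gives the link products of these four blocks with $\dketbra{V}$: the first equals $\tfrac{1}{Dd}\dketbra{V^T}$, the second is proportional to $\Pi_{\Im V,\mathcal{P}}\otimes\1_{\mathcal{F}}$, and the last two are multiples of $\1_{\mathcal{P}\mathcal{F}}$. Imposing $\widetilde{S}\star\dketbra{V}=p_{\mathrm{succ}}\dketbra{V^T}$ for all $V$ — comparing isometries with equal image but differing by a right unitary to isolate the $\dketbra{V^T}$ part, and isometries with distinct images to kill the $\Pi_{\Im V}$ part — together with the positivity $\widetilde{S}\ge 0$ (evaluating its quadratic form on product vectors built from $\ket{\Phi^+_D},\ket{\Phi^+_D}^{\perp}$ and $\ket{\Phi^+_d},\ket{\Phi^+_d}^{\perp}$, which for $d\ge2$ forces the free coefficient to zero) pins down $c_1=p_{\mathrm{succ}}Dd$ and $c_2=c_3=c_4=0$. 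Thus $\widetilde{S}=p_{\mathrm{succ}}Dd\,\Pi_D\otimes\Pi_d$ is rank one.

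Next I would return from $\widetilde{S}$ to $S$ itself. Since $\widetilde{S}$ is a Haar average — a continuous convex combination — of the positive operators $Q(A,C)^{\dagger}S\,Q(A,C)$, and this average is the rank-one operator $p_{\mathrm{succ}}Dd\,\Pi_D\otimes\Pi_d$, each summand must be supported on the line spanned by $\ket{\Phi^+_D}_{\mathcal{P}\mathcal{O}_1}\otimes\ket{\Phi^+_d}_{\mathcal{I}_1\mathcal{F}}$: for any $\ket{\eta}$ orthogonal to this vector the nonnegative function $(A,C)\mapsto\bra{\eta}Q(A,C)^{\dagger}S\,Q(A,C)\ket{\eta}$ integrates to zero, hence vanishes identically by continuity, hence $S\,Q(A,C)\ket{\eta}=0$; taking $(A,C)=(\1,\1)$ gives $S\ket{\eta}=0$. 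Therefore $S=\mu\,\Pi_D\otimes\Pi_d$ for some $\mu\ge 0$, and substituting back into the defining equation forces $\mu=p_{\mathrm{succ}}Dd$, which is Eq.~(\ref{eq:isometry_transpose_uniqueness}). The ``moreover'' claim then follows immediately: from the explicit $S$ one checks $S\star\dketbra{K}=p_{\mathrm{succ}}\dketbra{K^T}$ for \emph{any} $K\in\mathcal{L}(\mathbb{C}^d\to\mathbb{C}^D)$, and expanding $J_{\widetilde{\Lambda}}$ of an arbitrary linear map $\widetilde{\Lambda}$ as a linear combination of rank-one Choi operators $\dketbra{K_k}$ and using linearity of the link product yields $\doublewidetilde{\mathcal{S}^{}}(\widetilde{\Lambda})=p_{\mathrm{succ}}\widetilde{\Lambda}^T$.

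I expect the main obstacle to be twofold: the bookkeeping of transposes and complex conjugations when translating $V\mapsto AVC^{\dagger}$ into an action on $S$, so that $Q(A,C)$ really is a bona fide unitary representation (getting its commutant structure wrong would derail everything), and the conceptual point that the twirl by itself constrains only the twirled operator $\widetilde{S}$ — it is the collapse of $\widetilde{S}$ to a rank-one operator, combined with $S\ge0$, that forces $S=\widetilde{S}$, and without positivity the uniqueness genuinely fails (there is a one-parameter family satisfying the defining equation alone). The intermediate coefficient-fixing is routine once the link products of the four commutant blocks with $\dketbra{V}$ are in hand.
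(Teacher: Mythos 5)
Your proof is correct, and it takes a genuinely different route from the paper's. The paper restricts the $\mathcal{O}_1$ leg to each $d$-dimensional coordinate subspace via the operators $W_\sigma$, observes that the restricted superinstrument implements \emph{unitary} transposition, imports the known uniqueness of unitary transposition from Ref.~\cite{Dong2021Quantum} to fix each restricted block, and then patches together all matrix elements of $S$ by letting $\sigma$ range over all $d$-subsets of $\{1,\dots,D\}$. You instead exploit the $\mathbb{U}(D)\times\mathbb{U}(d)$ covariance $V\mapsto AVC^{\dagger}$ of the task directly: twirling reduces $S$ to the four-dimensional commutant of $A_{\mathcal{P}}\otimes C_{\mathcal{I}_1}\otimes\overline{A}_{\mathcal{O}_1}\otimes\overline{C}_{\mathcal{F}}$, the link products of the four blocks (which I verified are $\tfrac{1}{Dd}\dketbra{V^T}$, $\tfrac{1}{D}\Pi_{\Im V}\otimes\1_{\mathcal{F}}$, and two multiples of $\1_{\mathcal{P}\mathcal{F}}$) together with positivity fix the coefficients, and the rank-one character of the Haar average of the positive operators $Q^{\dagger}SQ$ forces $S$ to equal its own twirl. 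Your approach buys self-containedness — it does not need the external unitary-transposition uniqueness result and in fact reproves it as the $D=d$ special case — and it isolates exactly where positivity enters: the defining equation alone leaves the free direction $\1_{\mathcal{P}\mathcal{O}_1}\otimes(\1_{\mathcal{I}_1\mathcal{F}}-d^2\Pi_d)$, whose link product with every $\dketbra{V}$ vanishes, so the claim would indeed be false for a merely Hermitian $S$. Two minor remarks: for $D=d$ your image-separation step degenerates because $\Pi_{\Im V}=\1_D$ for all $V$, but the same positivity analysis still eliminates the resulting two-parameter family when $d\geq 2$; and, just as in the paper's matrix-element patching (which also needs every pair of $\mathcal{O}_1$ indices to lie in a common $d$-subset), the whole argument requires $d\geq 2$, which you correctly flag.
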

\begin{proof}
We choose a set $\sigma=\{\sigma_1, \cdots, \sigma_d\}\subset \{1, \cdots, D\}$ such that $\sigma_1<\cdots<\sigma_d$.
Let the dimensions of the Hilbert spaces as $\mathcal{P}', \mathcal{O}'_1=\mathbb{C}^d$. 
We also define $W_{\sigma}\coloneqq \sum_{i}\ket{i}\bra{\sigma_i}$ and $\Pi_{\sigma}\coloneqq W_{\sigma}^{\dagger}W_{\sigma}$.
We define operators $S'\in \mathcal{L}(\mathcal{P}'\otimes \mathcal{I}_1\otimes \mathcal{O}'_{1}\otimes \mathcal{F})$ and $S''\in \mathcal{L}(\mathcal{P}\otimes \mathcal{I}_1\otimes \mathcal{O}'_{1}\otimes \mathcal{F})$ by
\begin{align}
    S'&\coloneqq 
    \left(
    \widetilde{W}_{\sigma, \mathcal{P}\to \mathcal{P}'}\otimes \widetilde{W}_{\sigma, \mathcal{O}_1\to \mathcal{O}'_1}\otimes \widetilde{\1}_{\mathcal{I}_1\mathcal{F}}
    \right)
    (S),\\
    S''&\coloneqq  
    \left(
    \widetilde{W}_{\sigma, \mathcal{O}_1\to \mathcal{O}'_1}\otimes \widetilde{\1}_{\mathcal{P}\mathcal{I}_1\mathcal{F}}
    \right)(S).
\end{align}
From Eq.~(\ref{eq:single-input_isometry_transposition}), we obtain
\begin{align}
    S\star \dketbra{V}=p_{\mathrm{succ}}\dketbra{V^T}\;\;\;(\forall V\in \mathbb{V}_\mathrm{iso} (d,D)).
\end{align}
Then, for $U\in \mathbb{U}(D)$, we have
\begin{align}
    S''\star \dketbra{U}_{\mathcal{I}_1\mathcal{O}'_1}
    &=S \star 
    \left[
    \left(
    \widetilde{\1}_{\mathcal{I}_1}\otimes \widetilde{\mathcal{W}}^{T}_{\sigma, \mathcal{O}'_1\to\mathcal{O}_1}
    \right)
    (\dketbra{U}_{\mathcal{I}_1\mathcal{O}'_1})
    \right]\\
    &=S \star \dketbra{V}_{\mathcal{I}_1\mathcal{O}_1}\\
    &=p_{\mathrm{succ}}\dketbra{V^T}_{\mathcal{P}\mathcal{F}},
\end{align}
where $V\coloneqq W^T_{\sigma}U$ is an isometry operator. Then, we obtain
\begin{align}
    S'\star \dketbra{U}_{\mathcal{I}_1\mathcal{O}'_1}=p_{\mathrm{succ}}\dketbra{U^T}_{\mathcal{P}'\mathcal{F}},\label{eq:a6}\\
    [S''-\widetilde{\mathcal{W}}^{\dagger}_{\sigma, \mathcal{P}'\to\mathcal{P}}\otimes \widetilde{\1}_{\mathcal{I}_1\mathcal{O}'_1\mathcal{F}}(S')]\star \dketbra{U}_{\mathcal{I}_1\mathcal{O}'_1}=0.\label{eq:a7}
\end{align}

From Eq.~(\ref{eq:a6}) and the uniqueness of unitary transposition \cite{Dong2021Quantum}, we obtain
\begin{align}
    S'=pd^2\ketbra{\Phi^+_d}{\Phi^+_d}_{\mathcal{P}'\mathcal{O}'_1}\otimes \ketbra{\Phi^+_d}{\Phi^+_d}_{\mathcal{I}_1\mathcal{F}}.
\end{align}
From Eq.~(\ref{eq:a7}) and the fact that $\mathrm{span}\{\dketbra{U}_{\mathcal{I}_1\mathcal{O}'_1}\}=\mathcal{I}_1\otimes \mathcal{O}'_1$ holds, we obtain
\begin{align}
    S''=
    \left(
    \widetilde{\mathcal{W}}^{\dagger}_{\sigma, \mathcal{P}'\to\mathcal{P}}\otimes \widetilde{\1}_{\mathcal{I}_1\mathcal{O}'_1\mathcal{F}}
    \right)
    (S').
\end{align}
We define $S_{\sigma}\coloneqq \widetilde{\Pi}_{\mathcal{O}_1, \sigma}\otimes \widetilde{\1}_{\mathcal{P}\mathcal{I}_1\mathcal{F}}(S)$. Then, $S_{\sigma}$ is calculated as
\begin{align}
    S_{\sigma}
    &=
    \left(
    \widetilde{\mathcal{W}}_{\sigma, \mathcal{O}'_1\to \mathcal{O}_1}^{\dagger}\otimes \widetilde{\1}_{\mathcal{P}\mathcal{I}_1\mathcal{F}}
    \right)
    (S'')\\
    &=
    \left(
    \widetilde{\mathcal{W}}_{\sigma, \mathcal{P}'\to\mathcal{P}}^{\dagger}\otimes \widetilde{\mathcal{W}}_{\sigma, \mathcal{O}'_1\to \mathcal{O}_1}^{\dagger}\otimes \widetilde{\1}_{\mathcal{I}_1\mathcal{F}}
    \right)
    (S'')\\
    &=p_{\mathrm{succ}}d^2\ketbra{\Phi'^+_{\sigma}}{\Phi'^+_{\sigma}}_{\mathcal{P}\mathcal{O}_1}\otimes \ketbra{\Phi^+_d}{\Phi^+_d}_{\mathcal{I}_1\mathcal{F}},
\end{align}
where $\ket{\Phi'^+_{\sigma}}_{\mathcal{P}\mathcal{O}_1}$ is defined as 
\begin{align}
    \ket{\Phi'^+_{\sigma}}_{\mathcal{P}\mathcal{O}_1}\coloneqq \frac{1}{\sqrt{d}}\sum_{i=1}^{d}\ket{\sigma_{i}\sigma_{i}}_{\mathcal{P}\mathcal{O}_1}.
\end{align}
In other words, for $i_m\in \{1, \cdots, D\}$ and $j_m, k_m, l_m\in \{1, \cdots, d\}$ for $m\in \{1, 2\}$, the matrix elements of $S$ are given by 
\begin{align}
    \bra{i_1 j_1\sigma_{k_1}l_1}S\ket{i_2 j_2\sigma_{k_2}l_2}
    &=\bra{i_1 j_1\sigma_{k_1}l_1}S_{\sigma}\ket{i_2j_2\sigma_{k_2}l_2}\\
    &=p_{\mathrm{succ}}\delta_{i_1,\sigma_{k_1}}\delta_{j_1,l_1}\delta_{i_2,\sigma_{k_2}}\delta_{j_2,l_2}.
\end{align}
Since this holds for any $\sigma$, we obtain
\begin{align}
    \bra{i_1j_1k_1l_1}S\ket{i_2j_2k_2l_2}
    =p_{\mathrm{succ}}\delta_{i_1,k_1}\delta_{j_1,l_1}\delta_{i_2,k_2}\delta_{j_2,l_2}.
\end{align}
for $i_m, k_m\in \{1, \cdots, D\}$ and $j_m, l_m\in\{1, \cdots, d\}$ for $m\in \{1, 2\}$. Thus, $S$ is uniquely determined as
\begin{align}
    S=p_{\mathrm{succ}}Dd\ketbra{\Phi^+_D}{\Phi^+_D}_{\mathcal{P}\mathcal{O}_1}\otimes \ketbra{\Phi^+_d}{\Phi^+_d}_{\mathcal{I}_1\mathcal{F}}.
\end{align}
Moreover, for $K\in\mathcal{L}(\mathcal{I}_1\to\mathcal{O}_1)$, we obtain
\begin{align}
    S\star \dketbra{K}_{\mathcal{I}_1\mathcal{O}_1}=p_{\mathrm{succ}}\dketbra{K^T}_{\mathcal{P}\mathcal{F}}.
\end{align}
Thus, we show $\doublewidetilde{\mathcal{S}^{}}(\widetilde{\Lambda})=p\widetilde{\Lambda}^T$ for any map $\widetilde{\Lambda}: \mathcal{L}(\mathbb{C}^d)\to \mathcal{L}(\mathbb{C}^D)$.
\end{proof}

\begin{proof}[Proof of Theorem~\ref{thm:isometry_transposition_optimality}]
Let $\{\doublewidetilde{\mathcal{S}^{}}, \doublewidetilde{\mathcal{F}^{}}\}$ be a single-input superinstrument which implements a probabilistic exact isometry transposition protocol with a success probability $p_{\mathrm{succ}}$.
Let the  Choi operator of $\doublewidetilde{\mathcal{S}^{}}$ and $\doublewidetilde{\mathcal{F}^{}}$ be $S, F\in \mathcal{L}(\mathcal{P}\otimes \mathcal{I}_1\otimes \mathcal{O}_1\otimes \mathcal{F})$, respectively, and we set an operator $C\coloneqq S+F$. From Lemma~\ref{lem:uniqueness_transpose}, the Choi operator $S$ is obtained by Eq.~(\ref{eq:isometry_transpose_uniqueness}).
Since $\{\doublewidetilde{\mathcal{S}^{}}, \doublewidetilde{\mathcal{F}^{}}\}$ is a single-input superinstrument, the conditions for $C$
\begin{align}
    C&\geq S,\\
    \Tr_{\mathcal{F}}C&=\Tr_{\mathcal{O}_1\mathcal{F}}C\otimes \frac{\1_{\mathcal{O}_1}}{d_{\mathcal{O}_1}},\\
    \Tr_{\mathcal{I}_1\mathcal{O}_1 \mathcal{F}}C&=d_{\mathcal{O}_1}\1_{\mathcal{P}}.
\end{align}
have to be satisfied.
Then, we obtain
\begin{align}
    \Tr_{\mathcal{I}_1\mathcal{F}}C=\Tr_{\mathcal{I}_1\mathcal{O}_1\mathcal{F}}C\otimes \frac{\1_{\mathcal{O}_1}}{d_{\mathcal{O}_1}}=\1_{\mathcal{P}}\otimes \1_{\mathcal{O}_1}.
\end{align}
Since $\Tr_{\mathcal{I}_1\mathcal{F}}C\geq \Tr_{\mathcal{I}_1\mathcal{F}}S=p_{\mathrm{succ}}Dd\ketbra{\Phi^+_D}{\Phi^+_D}_{\mathcal{P}\mathcal{O}_1}$ holds, we have
\begin{align}
    p_{\mathrm{succ}}Dd\ketbra{\Phi^+_D}{\Phi^+_D}_{\mathcal{P}\mathcal{O}_1}\leq \1_{\mathcal{P}}\otimes \1_{\mathcal{O}_1}.
\end{align}
Thus, we obtain $p_{\mathrm{succ}}\leq 1/(Dd)$, namely $p_{\mathrm{opt}}=1/(Dd)$.
\end{proof}

 \section{SDP to obtain the maximal success probability of isometry inversion, (pseudo) complex conjugation, and transposition}
\label{appendix:sdp}

Using the Choi representation of a supermap introduced in Appendix \ref{sec:supermap_superinstrument_higher-order}, a superinstrument $\{\doublewidetilde{\mathcal{S}^{}}, \doublewidetilde{\mathcal{F}^{}}\}: \bigotimes_{i=1}^{k}[\mathcal{L}(\mathcal{I}_i) \to \mathcal{L}(\mathcal{O}_i)] \to [\mathcal{L}(\mathcal{P})\to \mathcal{L}(\mathcal{F})]$ can be represented by $S, F \in \mathcal{L}(\mathcal{P}\otimes \mathcal{I}\otimes \mathcal{O}\otimes \mathcal{F})$, where $\mathcal{I}\coloneqq \bigotimes_{i=1}^{k}\mathcal{I}_i$ and $\mathcal{O}\coloneqq \bigotimes_{i=1}^{k}\mathcal{O}_i$. The Choi operators $\{S, F\}$ of a $k$-input parallel/sequential/general superinstrument is characterized by
\begin{align}
    &S, F\geq 0,\\
    &C\coloneqq S+F \text{ is a parallel/sequential/general superchannel},\label{cond:superchannel}
\end{align}
where the second condition (\ref{cond:superchannel}) is composed of the positivity of $C$ and linear constraints on $C$ (see Section II of Ref.~\cite{Quintino2019Probabilistic}).

This Appendix shows how to formulate the problem of finding the optimal success probability of isometry inversion, isometry (pseudo) complex conjugation, isometry transposition, and ``success-or-draw'' isometry inversion as SDP using the Choi representation.

\subsection{Isometry inversion}
The superinstrument $\{\doublewidetilde{\mathcal{S}^{}}, \doublewidetilde{\mathcal{F}^{}}\}$ implements isometry inversion with the success probability $p$ if and only if
\begin{align}
    \doublewidetilde{\mathcal{S}^{}}(\widetilde{\mathcal{V}}^{\otimes k})\circ \widetilde{\mathcal{V}} = p \widetilde{\1}_d
\end{align}
holds for all $V\in \mathbb{V}_{\text{iso}}(d,D)$. This condition can be written in the Choi representation as
\begin{align}
    S\star \dketbra{V}^{\otimes k}_{\mathcal{I}\mathcal{O}} \star \dketbra{V}_{\mathcal{P}'\mathcal{P}} = p \dketbra{\1_d}_{\mathcal{P}'\mathcal{F}},
\end{align}
where $\mathcal{P}'=\mathbb{C}^d$ and $\dket{V}\coloneqq \sum_i \ket{i}\otimes V\ket{i}$ for the computational basis $\{\ket{i}\}$ of $\mathbb{C}^d$.
 Since $\Tr[\dketbra{V}^{\otimes k+1}]=d^{k+1}$ is constant for all $V\in\mathbb{V}_{\text{iso}}(d,D)$, this condition is equivalent to the following condition:
\begin{align}
    S\star \dketbra{V}^{\otimes k}_{\mathcal{I}\mathcal{O}} \star \dketbra{V}_{\mathcal{P}'\mathcal{P}} = p \dketbra{\1_d}_{\mathcal{P}'\mathcal{F}}\times \frac{\Tr[\dketbra{V}^{\otimes k+1}]}{d^{k+1}},\label{eq:cond_isometry_inversion_choi}
\end{align}
Due to the linearity of Eq.~(\ref{eq:cond_isometry_inversion_choi}) with respect to $\dketbra{V}^{\otimes k+1}$, Eq.~(\ref{eq:cond_isometry_inversion_choi}) holds for all $V\in\mathbb{V}_{\text{iso}}(d,D)$ if and only if it holds for finite set of isometries $\{V_i\}_i \subset \mathbb{V}_{\text{iso}}(d,D)$ forming the basis $\{\dketbra{V_i}^{\otimes k+1}\}_i$ of the linear space $\mathcal{H}_{k+1}\coloneqq \text{span}\{\dketbra{V}^{\otimes k+1}|V\in\mathbb{V}_{\text{iso}}(d,D)\}$.

Therefore, the optimal success probability of isometry inversion in parallel/sequential/general protocols can be found as a solution of the following SDP:
\begin{align}
    &\max p\\
    \text{s.t. } & S, F \in \mathcal{L}(\mathcal{P}\otimes \mathcal{I}\otimes \mathcal{O} \otimes \mathcal{F}),\\
    & S, F\geq 0,\\
    & S\star \dketbra{V_i}^{\otimes k}_{\mathcal{I}\mathcal{O}} \star \dketbra{V_i}_{\mathcal{P}'\mathcal{P}} = p \dketbra{\1_d}_{\mathcal{P}'\mathcal{F}}\;\;\;\forall i,\\
    & C\coloneqq S+F \text{ is a parallel/sequential/general superchannel},
\end{align}
where $\{V_i\}_i$ is a finite set of isometries forming the basis $\{\dketbra{V_i}^{\otimes k+1}\}_i$ of the linear space $\mathcal{H}_{k+1}\coloneqq \text{span}\{\dketbra{V}^{\otimes k+1}|V\in\mathbb{V}_{\text{iso}}(d,D)\}$.

\subsection{Isometry pseudo complex conjugation}
The superinstrument $\{\doublewidetilde{\mathcal{S}^{}}, \doublewidetilde{\mathcal{F}^{}}\}$ implements isometry inversion with the success probability $p$ if and only if
\begin{align}
    [\doublewidetilde{\mathcal{S}^{}}(\widetilde{\mathcal{V}}^{\otimes k})]^T\circ \widetilde{\mathcal{V}} = p \widetilde{\1}_d\label{eq:cond_isometry_pss}
\end{align}
holds for all $V\in \mathbb{V}_{\text{iso}}(d,D)$. For any linear map $\widetilde{\Lambda}: \mathcal{L}(\mathcal{I}) \to \mathcal{L}(\mathcal{O})$, the Choi operator of the transposed map $\widetilde{\Lambda}^T$ is expressed as the Choi operator of $\widetilde{\Lambda}$ since
\begin{align}
    J_{\widetilde{\Lambda}^T}
    &= \sum_{i,j} \ketbra{i}{j}_{\mathcal{O}} \otimes \widetilde{\Lambda}^T (\ket{i}{j})_{\mathcal{I}}\\
    &= \sum_{i',j'} \widetilde{\Lambda}(\ketbra{i'}{j'})_{\mathcal{O}} \otimes \ketbra{i'}{j'}_{\mathcal{I}}\\
    &=J_{\widetilde{\Lambda}}
\end{align}
holds for the computational bases $\{\ket{i}\}$ and $\{\ket{i'}\}$ of $\mathcal{O}$ and $\mathcal{I}$, respectively. Then, the condition (\ref{eq:cond_isometry_pss}) can be written in the Choi representation as
\begin{align}
    S\star \dketbra{V}^{\otimes k}_{\mathcal{I}\mathcal{O}} \star \dketbra{V}_{\mathcal{F}'\mathcal{F}} = p \dketbra{\1_d}_{\mathcal{F}'\mathcal{P}},\label{eq:cond_isometry_pss_choi}
\end{align}
where $\mathcal{F}'=\mathbb{C}^d$. Similarly to the case with isometry inversion, Eq.~(\ref{eq:cond_isometry_pss_choi}) holds for all $V\in\mathbb{V}_{\text{iso}}(d,D)$ if and only if it holds for finite set of isometries $\{V_i\}_i \subset \mathbb{V}_{\text{iso}}(d,D)$ forming the basis $\{\dketbra{V_i}^{\otimes k+1}\}_i$ of the linear space $\text{span}\{\dketbra{V}^{\otimes k+1}|V\in\mathbb{V}_{\text{iso}}(d,D)\}$.

Therefore, the optimal success probability of isometry pseudo complex conjugation in parallel/sequential/general protocols can be found as a solution of the following SDP:
\begin{align}
    &\max p\\
    \text{s.t. } & S, F \in \mathcal{L}(\mathcal{P}\otimes \mathcal{I}\otimes \mathcal{O} \otimes \mathcal{F}),\\
    & S, F\geq 0,\\
    & S\star \dketbra{V_i}^{\otimes k}_{\mathcal{I}\mathcal{O}} \star \dketbra{V_i}_{\mathcal{F}'\mathcal{F}} = p \dketbra{\1_d}_{\mathcal{F}'\mathcal{P}}\;\;\;\forall i,\\
    & C\coloneqq S+F \text{ is a parallel/sequential/general superchannel},
\end{align}
where $\{V_i\}_i$ is a finite set of isometries forming the basis $\{\dketbra{V_i}^{\otimes k+1}\}_i$ of the linear space $\mathcal{H}_{k+1}\coloneqq \text{span}\{\dketbra{V}^{\otimes k+1}|V\in\mathbb{V}_{\text{iso}}(d,D)\}$.

\subsection{Isometry complex conjugation and isometry transposition}
The superinstrument $\{\doublewidetilde{\mathcal{S}^{}}, \doublewidetilde{\mathcal{F}^{}}\}$ implements isometry complex conjugation or isometry transposition with the success probability $p$ if and only if
\begin{align}
    \doublewidetilde{\mathcal{S}^{}}(\widetilde{\mathcal{V}}^{\otimes k}) = p  \widetilde{f(V)}
\end{align}
holds for all $V\in \mathbb{V}_{\text{iso}}(d,D)$, where $ \widetilde{f(V)} = \widetilde{\mathcal{V}}^*$ or $ \widetilde{f(V)} = \widetilde{\mathcal{V}}^T$. This condition can be written in the Choi representation as
\begin{align}
    S\star \dketbra{V}^{\otimes k}_{\mathcal{I}\mathcal{O}} = p \dketbra{f(V)}_{\mathcal{P}\mathcal{F}},\label{eq:cond_isometry_fV_old}
\end{align}
where $f(V)=V^*$ or $f(V)=V^T$. The mapping $\dketbra{V} \mapsto \dketbra{f(V)}$ is linear  for $f(V)=V^*$ or $f(V)=V^T$ since
\begin{align}
    &\dket{V^T}=\sum_{i}\ket{i}\otimes V^T\ket{i}=\sum_{i}V\ket{i}\otimes \ket{i}=\dket{V},\\
    &\dketbra{V^*}=\dketbra{V}^*=\dketbra{V}^T,\label{eq:dketbraVstar}
\end{align}
hold, where the hermicity of $\dketbra{V}$ is used to derive Eq.~(\ref{eq:dketbraVstar}). Similarly to the case with isometry inversion, Eq.~(\ref{eq:cond_isometry_fV_old}) can be rewritten as
\begin{align}
    S\star \dketbra{V}^{\otimes k}_{\mathcal{I}\mathcal{O}} = p \dketbra{f(V)}_{\mathcal{P}\mathcal{F}}\times \frac{\Tr[\dketbra{V}^{\otimes k-1}]}{d^{k-1}}.\label{eq:cond_isometry_fV}
\end{align}
Due to the linearity of Eq.~(\ref{eq:cond_isometry_fV}) with respect to $\dketbra{V}^{\otimes k}$, the condition (\ref{eq:cond_isometry_fV}) holds for all $V\in\mathbb{V}_{\text{iso}}(d,D)$ if and only if it holds for finite set of isometries $\{V_j\}_j \subset \mathbb{V}_{\text{iso}}(d,D)$ forming the basis $\{\dketbra{V_j}^{\otimes k}\}_j$ of the linear space $\mathcal{H}_{k}\coloneqq \text{span}\{\dketbra{V}^{\otimes k}|V\in\mathbb{V}_{\text{iso}}(d,D)\}$.

Therefore, the optimal success probability of isometry complex conjugation or transposition in parallel/sequential/general protocols can be found as a solution of the following SDP:
\begin{align}
    &\max p\\
    \text{s.t. } & S, F \in \mathcal{L}(\mathcal{P}\otimes \mathcal{I}\otimes \mathcal{O} \otimes \mathcal{F}),\\
    & S, F\geq 0,\\
    & S\star \dketbra{V_j}^{\otimes k}_{\mathcal{I}\mathcal{O}} = p \dketbra{f(V_j)}_{\mathcal{P}\mathcal{F}}\;\;\;\forall j,\\
    & C\coloneqq S+F \text{ is a parallel/sequential/general superchannel},
\end{align}
where $f(V)=V^*$ or $f(V)=V^T$, and $\{V_j\}_j$ is a finite set of isometries forming the basis $\{\dketbra{V_j}^{\otimes k}\}_j$ of the linear space $\mathcal{H}_{k}\coloneqq \text{span}\{\dketbra{V}^{\otimes k}|V\in\mathbb{V}_{\text{iso}}(d,D)\}$.

\subsection{``Success-or-draw'' isometry inversion}
The superinstrument $\{\doublewidetilde{\mathcal{S}^{}}, \doublewidetilde{\mathcal{F}^{}}\}$ implements ``success-or-draw'' isometry inversion with the success probability $p$ if and only if
\begin{align}
    \doublewidetilde{\mathcal{S}^{}}(\widetilde{\mathcal{V}}^{\otimes k}) \circ \widetilde{\mathcal{V}} &= p \widetilde{\mathcal{V}}^{\text{embed}},\\
    \doublewidetilde{\mathcal{F}^{}}(\widetilde{\mathcal{V}}^{\otimes k}) \circ \widetilde{\mathcal{V}} &= p \widetilde{\mathcal{V}},
\end{align}
holds for all $V\in\mathbb{V}_{\text{iso}}(d,D)$, where $V^{\text{embed}}$ is a natural embedding of $\mathbb{C}^d$ to $\mathbb{C}^D$ defined in Eq.~(\ref{eq:def_embedding}). Note that $V^{\text{embed}}$ is introduced to adjust the dimension of the output space $\mathcal{F}$ of the supermap $\doublewidetilde{\mathcal{S}^{}}$ to that of the supermap $\doublewidetilde{\mathcal{F}^{}}$. These conditions can be written in the Choi representation as
\begin{align}
    S\star \dketbra{V}^{\otimes k}_{\mathcal{I}\mathcal{O}} \star \dketbra{V}_{\mathcal{P}'\mathcal{P}} &= p \dketbra{V^{\text{embed}}}_{\mathcal{P}'\mathcal{F}},\label{eq:isometry_inversion_sod_choi1}\\
    F\star \dketbra{V}^{\otimes k}_{\mathcal{I}\mathcal{O}} \star \dketbra{V}_{\mathcal{P}'\mathcal{P}} &= (1-p) \dketbra{V}_{\mathcal{P}'\mathcal{F}},\label{eq:isometry_inversion_sod_choi2}
\end{align}
where $\mathcal{P}'=\mathbb{C}^d$.
Due to the linearity of the conditions (\ref{eq:isometry_inversion_sod_choi1}) and (\ref{eq:isometry_inversion_sod_choi2}) with respect to $\dketbra{V}^{\otimes k+1}$, the conditions (\ref{eq:isometry_inversion_sod_choi1}) and (\ref{eq:isometry_inversion_sod_choi2}) hold for all $V\in\mathbb{V}_{\text{iso}}(d,D)$ if and only if they hold for finite set of isometries $\{V_i\}_i \subset \mathbb{V}_{\text{iso}}(d,D)$ forming the basis $\{\dketbra{V_i}^{\otimes k+1}\}_i$ of the linear space $\mathcal{H}_{k+1}\coloneqq \text{span}\{\dketbra{V}^{\otimes k+1}|V\in\mathbb{V}_{\text{iso}}(d,D)\}$.

Therefore, the optimal success probability of ``success-or-draw'' isometry inversion in parallel/sequential/general protocols can be found as a solution of the following SDP:
\begin{align}
    &\max p\\
    \text{s.t. } & S, F \in \mathcal{L}(\mathcal{P}\otimes \mathcal{I}\otimes \mathcal{O} \otimes \mathcal{F}),\\
    & S, F\geq 0,\\
    & S\star \dketbra{V_i}^{\otimes k}_{\mathcal{I}\mathcal{O}} \star \dketbra{V_i}_{\mathcal{P}'\mathcal{P}} = p \dketbra{V^{\text{embed}}}_{\mathcal{P}'\mathcal{F}}\;\;\;\forall i,\\
    & F\star \dketbra{V_i}^{\otimes k}_{\mathcal{I}\mathcal{O}} \star \dketbra{V_i}_{\mathcal{P}'\mathcal{P}} = (1-p) \dketbra{V_i}_{\mathcal{P}'\mathcal{F}}\;\;\;\forall i,\\
    & C\coloneqq S+F \text{ is a parallel/sequential/general superchannel},
\end{align}
where $V^{\text{embed}}$ is a natural embedding of $\mathbb{C}^d$ to $\mathbb{C}^D$ defined in Eq.~(\ref{eq:def_embedding}), and $\{V_i\}_i$ is a finite set of isometries forming the basis $\{\dketbra{V_i}^{\otimes k+1}\}_i$ of the linear space $\mathcal{H}_{k+1}\coloneqq \text{span}\{\dketbra{V}^{\otimes k+1}|V\in\mathbb{V}_{\text{iso}}(d,D)\}$.

\section{Proof of Theorem~\ref{theorem:isometry_inverse}}
\label{sec:appendix_isometry_inverse}
By linearity, it is sufficient to consider a pure input state $\rho_{\mathrm{in}}=\ketbra{\psi_\mathrm{in}}{\psi_\mathrm{in}} \in  \mathcal{L}(\mathcal{P})=\mathcal{L}( \mathbb{C}^D)$ for the isometry inversion protocol of $V \in \mathbb{V}_\mathrm{iso} (d, D)$. We decompose $\ket{\psi_\mathrm{in}}$ as
\begin{align}
    \ket{\psi_\mathrm{in}}=\ket{\psi^{\parallel}}+\ket{\psi^{\perp}},
\end{align}
where $\ket{\psi^{\parallel}}\in \Im V$ and $ \ket{\psi^{\perp}}\in (\Im V)^{\perp}$ can be unnormalized.
For the computational basis of $\mathbb{C}^d$ given by $\{\ket{i} \}_{i=0}^{d-1}$, a set of vectors $\{V\ket{i} \}_{i=0}^{d-1} \subset \mathbb{C}^D$ satisfies 
\begin{align}
    \bra{i}V^\dagger V \ket{j}&=\braket{i}{j}=\delta_{i,j},\\
    \bra{\psi^{\perp}}V\ket{i}&=0,
\end{align}
for $i, j\in \{0, \cdots, d-1\}$,
where $\delta_{i, j}$ is Kronecker's delta given by
\begin{align}
    \delta_{i, j}\coloneqq 
    \begin{cases}
    1 & (i=j)\\
    0 & (i\neq j)
    \end{cases}.\label{eq:def_Kronecker_delta}
\end{align}
Therefore, there exists an orthonormal basis $\{\ket{v(i)}\}_{i=0}^{D-1}$ of $\mathcal{P}=\mathbb{C}^D$ such that the first $d+1$ elements satisfy
\begin{align}
    \ket{v(i)}&=V\ket{i}\;\;\;(i\in\{0, \cdots, d-1\}),\\
    \ket{v(d)}&\parallel \ket{\psi^{\perp}}.
\end{align}
Then, the remaining part of the orthonormal basis $\{\ket{v(i)}\}_{i=0}^{D-1}$, namely, each $\ket{v(i)}$ for all $i \in\{d+1, \cdots, D-1\}$ satisfies
\begin{align}
    \braket{v(i)}{\psi_{\mathrm{in}}}=0.
\end{align}

For $0\leq j_1<\cdots <j_d\leq D-1$ and $\vec{j}=(j_1, \cdots, j_d)$, we define a totally antisymmetric state $\ket{a^v_{\vec{j}}}\in \mathcal{P}\otimes \mathcal{O}= (\mathbb{C}^D)^{\otimes d}$ by
\begin{align}
    \ket{a^v_{\vec{j}}}\coloneqq \sum_{\vec{k}\in\{1, \cdots, d\}^{d}}\frac{\epsilon_{\vec{k}}}{\sqrt{d!}}\ket{v(j_{k_1})\cdots v(j_{k_d})},
\end{align}
where $\epsilon_{\vec{k}}$ is the antisymmetric tensor with rank $d$.
The projector $\Pi^{\mathrm{a.s.}}_{\mathcal{PO}}$ on $\mathcal{P}\otimes \bigotimes_{i=1}^{d-1}\mathcal{O}_i$ onto its subspace spanned by totally antisymmetric states satisfies
\begin{align}
    \Pi^{\mathrm{a.s.}}_{\mathcal{PO}}=\sum_{0\leq j_1<\cdots <j_d \leq D-1} \ketbra{a^v_{\vec{j}}}{a^v_{\vec{j}}}.
\end{align}

We calculate the output state $\rho'_{\mathrm{out}} \in  \mathcal{L}(\mathcal{F})=\mathcal{L}(\mathbb{C}^d)$ after obtaining the measurement outcome $a=1$ of $\mathcal{M}$ and the probability $p_{a=1}$ to obtain the measurement outcome $a=1$. To this end, we calculate the unnormalized operator $\rho_{\mathrm{out}}=p_{a=1}\rho'_{\mathrm{out}}$. First, we obtain 
\begin{align}
    \rho_{\mathrm{out}}=\Tr_{\mathcal{PO}}(\ketbra{\phi'}{\phi'}_{\mathcal{POF}}),
\end{align}
where $\ket{\phi'}$ is defined by
\begin{align}
    \ket{\phi'}\coloneqq \Pi^{\mathrm{a.s.}}_{\mathcal{PO}}\otimes \1_{\mathcal{F}}
    \left[
    \ket{\psi_\mathrm{in}}_{\mathcal{P}}\otimes (V^{\otimes d-1}_{\mathcal{I}\to \mathcal{O}}\otimes \1_{\mathcal{F}})\ket{A_d}_{\mathcal{IF}}
    \right].
\end{align}
The vector $\ket{\phi'}$ is calculated as
\begin{align}
    \ket{\phi'}
    &=\sum_{\vec{j}}\ketbra{a^v_{\vec{j}}}{a^v_{\vec{j}}}_{\mathcal{PO}}\otimes \1_{\mathcal{F}} 
    \left\{
    \ket{\psi_\mathrm{in}}_{\mathcal{P}}\otimes 
    \left[
    (V^{\otimes d-1}_{\mathcal{I}\to \mathcal{O}}\otimes \1_{\mathcal{F}})\ket{A_d}_{\mathcal{IF}}
    \right]
    \right\}\\
    &=\sum_{\vec{j}, \vec{k}, \vec{k}'}\ket{a^v_{\vec{j}}}_{\mathcal{PO}}\otimes 
    \frac{\epsilon_{\vec{k}}\epsilon_{\vec{k}'}}{d!}\delta_{j_{k_2},k'_1}\cdots \delta_{j_{k_{d}},k'_{d-1}}\ket{k'_d}_{\mathcal{F}}\braket{v(j_{k_1})}{\psi_\mathrm{in}},\label{eq:b6}
\end{align}
where the summation is taken over $0\leq j_1<\cdots < j_d\leq D-1$, $\vec{k}\in\{1, \cdots, d\}^{d}$ and $\vec{k}'\in\{0, \cdots, d-1\}^{d}$. Since the summand in Eq.~(\ref{eq:b6}) is non-zero only for $(j_1, \cdots, j_d)=(0, \cdots, d-1)$ or $(j_1, \cdots, j_d)=(0, \cdots, j-1, j+1, \cdots, d)$ for $j\in \{0, \cdots, d-1\}$, we obtain
\begin{align}
    \ket{\phi'}
    &=\ket{a^v_{0\cdots d-1}}_{\mathcal{PO}}\otimes \sum_{j=0}^{d-1}\frac{(-1)^d}{d}\ket{j}_{\mathcal{F}}\braket{v(j)}{\psi_\mathrm{in}}+\sum_{j=0}^{d-1}\ket{a^v_{0\cdots j-1 j+1\cdots d}}_{\mathcal{PO}}\otimes \frac{(-1)^d}{d}\ket{j}_{\mathcal{F}}\braket{v(d)}{\psi_\mathrm{in}}\\
    &=\ket{a^v_{0\cdots d-1}}_{\mathcal{PO}}\otimes \frac{(-1)^d}{d}V^{\dagger}_{\mathcal{P}\to\mathcal{F}}\ket{\psi_\mathrm{in}}_{\mathcal{P}}+\sum_{j=0}^{d-1}\ket{a^v_{0\cdots j-1 j+1\cdots d}}_{\mathcal{PO}}\otimes \frac{(-1)^d}{d}\ket{j}_{\mathcal{F}}\braket{v(d)}{\psi_\mathrm{in}}.
\end{align}

Since $\{\ket{a^v_{\vec{j}}}\}$ are orthogonal to each other, the successful output state multiplied by the probability to obtain the measurement outcome $a=1$, denoted by $\rho_{\mathrm{out}}$, can be calculated as
\begin{align}
    \rho_\mathrm{out}
    &=\Tr_{\mathcal{PO}}(\ketbra{\phi'}{\phi'}_{\mathcal{POF}})\\
    &=\frac{1}{d^2}V^{\dagger}\ketbra{\psi_\mathrm{in}}{\psi_\mathrm{in}}V+\frac{1}{d^2}\sum_{j=0}^{d-1}\ketbra{j}{j}\left|\braket{v(d)}{\psi_\mathrm{in}}\right|^2\\
    &=\frac{1}{d^2}V^{\dagger}\ketbra{\psi_\mathrm{in}}{\psi_\mathrm{in}}V+\frac{\pi_d}{d}
    \Tr
    \left[
    \Pi_{(\Im V)^{\perp}}\ketbra{\psi_\mathrm{in}}{\psi_\mathrm{in}}
    \right]\\
    &=\frac{1}{d^2}\widetilde{\mathcal{V}}^\prime(\ketbra{\psi_\mathrm{in}}{\psi_\mathrm{in}}),
\end{align}

by defining a CPTP map $\widetilde{\mathcal{V}}^\prime: \mathcal{L}(\mathbb{C}^D) \rightarrow \mathcal{L}(\mathbb{C}^d)$ as $\widetilde{\mathcal{V}}^\prime(\rho_\mathrm{in})\coloneqq V^{\dagger} \rho_{\mathrm{in}} V+I_d\Tr(\Pi_{(\Im V)^{\perp}}\rho_\mathrm{in})$. Thus, a protocol shown in Figure~\ref{fig:isometry_inversion_protocols}~(c) implements a pseudo complex conjugate map $\widetilde{\mathcal{V}}^\prime$ with a success probability $p_{\mathrm{succ}}=1/d^2$. 

\section{Proof of Lemma~\ref{lem:Lambda}}
\label{sec:appendix_Lambda}
Suppose $\mathcal{P}= \mathbb{C}^D$, $\mathcal{P}'= \mathbb{C}^d$, $\mathcal{P}'= \mathbb{C}^d$, $\mathcal{I}_i= \mathbb{C}^d$, $\mathcal{O}_i= \mathbb{C}^D$, and $\mathcal{O}'_i= \mathbb{C}^d$ for $i\in\{1, \cdots, k\}$ and define the joint Hilbert space by $\mathcal{I}\coloneqq \bigotimes_{i=1}^{k}\mathcal{I}_i$, $\mathcal{O}\coloneqq \bigotimes_{i=1}^{k}\mathcal{O}_i$, and $\mathcal{O}'\coloneqq \bigotimes_{i=1}^{k}\mathcal{O}_i$. For $V\in \mathbb{V}_\mathrm{iso} (d,D)$, we define $\widetilde{\Lambda}_V:\mathcal{L}(\mathcal{P}''\otimes \mathcal{I})\to \mathcal{L}(\mathcal{P}'\otimes \mathcal{O}'')$ by
\begin{align}
    \widetilde{\Lambda}_V(\rho)\coloneqq 
    \left(
    \widetilde{\Psi}_{\mathcal{P}\mathcal{O}\to\mathcal{P}'\mathcal{O}'}
    \circ \widetilde{\mathcal{V}}^{\otimes k+1}_{\mathcal{P}''\mathcal{I}\to \mathcal{P}\mathcal{O}}
    \right)
    (\rho)
\end{align}
using the CPTP map $\widetilde{\Psi}$ given by Eq.~(\ref{eq:def_Psi}).
From Eqs.~(\ref{eq:isometry_schur_weyl}) and (\ref{eq:def_Psi}), we obtain

\begin{align}
    \widetilde{\Lambda}_V(\rho)
    &=\bigoplus_{\mu\vdash k+1}\frac{\1_{\mathcal{U}^{(d)}_{\mu, \mathcal{P}'\mathcal{O}'}}}{d_{\mathcal{U}_{\mu}^{(d)}}}\otimes
    \left\{
    \widetilde{\mathcal{I}}_{\mathcal{S}^{(k+1)}_{\mathcal{PO}}\to \mathcal{S}^{(k+1)}_{\mathcal{P}'\mathcal{O}'}}
    \left(
    \Tr_{\mathcal{U}_{\mu, \mathcal{P}\mathcal{O}}^{(D)}}
    \left\{
    \Pi_{\mu, \mathcal{PO}}
    \left[
    \widetilde{\mathcal{V}}_\mu\otimes \widetilde{\1}_{\mathcal{S}^{(k+1)}_{\mu, \mathcal{PO}}}(\rho)
    \right]
    \right\}
    \right)
    \right\}\\
    &=\bigoplus_{\mu\vdash k+1}\frac{\1_{\mathcal{U}^{(d)}_{\mu, \mathcal{P}'\mathcal{O}'}}}{d_{\mathcal{U}_{\mu}^{(d)}}}\otimes
    \left\{
    \widetilde{\mathcal{I}}_{\mathcal{S}^{(k+1)}_{\mathcal{P}''\mathcal{I}}\to \mathcal{S}^{(k+1)}_{\mathcal{P}'\mathcal{O}'}}\left[\Tr_{\mathcal{U}_{\mu, \mathcal{P}''\mathcal{I}}^{(d)}}(\Pi_{\mu, \mathcal{P}''\mathcal{I}}\rho)\right]
    \right\}.\label{eq:e3}
\end{align}
On the other hand, we define $\widetilde{\Lambda}_U :\mathcal{L}(\mathcal{P}''\otimes \mathcal{I})\to \mathcal{L}(\mathcal{P}'\otimes \mathcal{O}'')$ by
\begin{align}
    \widetilde{\Lambda}_U(\rho)= \int \dd U \mathcal{U}^{\otimes k+1}_{\mathcal{P}''\mathcal{I}\to\mathcal{P}'\mathcal{O}'}(\rho),
\end{align}
where $\dd U$ is the Haar measure on $\mathbb{U}(D)$.
From Eqs.~(\ref{eq:haar_unitary_1}), (\ref{eq:haar_unitary_2}) and (\ref{eq:e3}), we obtain $\widetilde{\Lambda}_V=\widetilde{\Lambda}_U$.

\section{Proof of Theorem~\ref{thm:conjugate_modoki}}
\label{sec:appendix_isometry_conjugate_modoki}
For $\vec{j}=(j_1,\cdots, j_d)\in\{0,\cdots, D-1\}^d$, we define totally antisymmetric states in  $(\mathbb{C}^D)^{\otimes d}$ as
\begin{align}
    \ket{a_{\vec{j}}}&\coloneqq \sum_{\vec{k}\in\{1, \cdots, d\}^{d}}\frac{\epsilon_{\vec{k}}}{\sqrt{d!}}\ket{j_{k_1}\cdots j_{k_d}},\\
    \ket{a^{v*}_{\vec{j}}}&\coloneqq \sum_{\vec{k}\in\{1, \cdots, d\}^{d}}\frac{\epsilon_{\vec{k}}}{\sqrt{d!}}\ket{v(j_{k_1})^{*}\cdots v(j_{k_d})^{*}},
\end{align}
where $\ket{v(j)}$ is defined in Appendix \ref{sec:appendix_isometry_inverse} and $\ket{v(j)^{*}}$ is the complex conjugate of $\ket{v(j)}$ in terms of the computational basis. For simplicity, we introduce short-hand notations
\begin{align}
    \ket{\vec{j}_{d-1}}&\coloneqq \ket{j_1}\otimes\cdots\otimes\ket{j_{d-1}},\\
    \ket{j_{\vec{k}_{d-1}}}&\coloneqq \ket{j_{k_1}}\otimes\cdots\otimes\ket{j_{k_{d-1}}},\\
    \ket{v(j_{\vec{k}_{d-1}})}&\coloneqq \ket{v(j_{k_1})}\otimes \cdots \otimes \ket{v(j_{k_{d-1}})},\\
    \ket{v(j_{\vec{k}})^*}&\coloneqq \ket{v(j_{k_1})^*}\otimes \cdots \otimes \ket{v(j_{k_{d}})^*},
\end{align}
where the vectors are defined by $\vec{j}_{d-1}=(j_1, \cdots, j_{d-1})$,  $\vec{k}_{d-1}=(k_1, \cdots, k_{d-1})$ and $\vec{k}=(k_1, \cdots, k_{d})$.

First, we calculate the  Choi operator of $\widetilde{\Lambda}$ as
\begin{align}
    J_{\widetilde{\Lambda}}
    &=c\sum_{\vec{j}_{d-1}, \vec{j}'_{d-1}, \vec{j}''}\ketbra{\vec{j}_{d-1}}{\vec{j}'_{d-1}}_{\mathcal{O}}\otimes
    A_{\vec{j}''}\ketbra{\vec{j}_{d-1}}{\vec{j}'_{d-1}} A_{\vec{j}''}^{\dagger}  + (\1_{\mathcal{O}}-\Pi_{\mathcal{O}}^{\text{a.s.}}) \otimes \frac{\1_{\mathcal{F}}}{D}\label{eq:j5}\\
    &=c\sum_{\vec{j}'', \vec{k}, \vec{k}'}\frac{\epsilon_{\vec{k}}\epsilon_{\vec{k}'}}{(d-1)!}\ketbra{j''_{\vec{k}_{d-1}}}{j''_{\vec{k}'_{d-1}}}_{\mathcal{O}}\otimes \ketbra{j''_{k_d}}{j''_{k'_d}}_{\mathcal{F}} + (\1_{\mathcal{O}}-\Pi_{\mathcal{O}}^{\text{a.s.}}) \otimes \frac{\1_{\mathcal{F}}}{D}\label{eq:j6}\\
    &=cd\sum_{0\leq j_1<\cdots < j_d\leq D-1}\ketbra{a_{\vec{j}}}{a_{\vec{j}}}_{\mathcal{OF}} + (\1_{\mathcal{O}}-\Pi_{\mathcal{O}}^{\text{a.s.}}) \otimes \frac{\1_{\mathcal{F}}}{D}\\
    &=cd\Pi^{\mathrm{a.s.}}_{\mathcal{OF}} + (\1_{\mathcal{O}}-\Pi_{\mathcal{O}}^{\text{a.s.}}) \otimes \frac{\1_{\mathcal{F}}}{D},
\end{align}
where a coefficient $c$ is given by $c\coloneqq 1/(D-d+1)$ and the summation in Eqs.~(\ref{eq:j5}) and (\ref{eq:j6}) are taken over $\vec{j}_{d-1}, \vec{j}'_{d-1}\in \{0, \cdots, D-1\}^{d-1}$, $0\leq j''_1<\cdots < j''_d\leq D-1$ and $\vec{k}, \vec{k}'\in\{1, \cdots, d\}^d$. Since $\Pi^{\mathrm{a.s.}}$ is invariant under the tensor product $U^{\otimes d}$ of a unitary operator $U\in\mathbb{U}(D)$, we obtain
\begin{align}
    J_{\widetilde{\Lambda}}
    &=cd\sum_{0\leq j_1<\cdots < j_d\leq D-1}\ketbra{a^{v*}_{\vec{j}}}{a^{v*}_{\vec{j}}}_{\mathcal{OF}} + (\1_{\mathcal{O}}-\Pi_{\mathcal{O}}^{\text{a.s.}}) \otimes \frac{\1_{\mathcal{F}}}{D}\\
    &=c\sum_{\vec{j}'', \vec{k}, \vec{k}'}\frac{\epsilon_{\vec{k}}\epsilon_{\vec{k}'}}{(d-1)!}\ketbra{v(j''_{\vec{k}})^*}{v(j''_{\vec{k}'})^*}_{\mathcal{OF}} + (\1_{\mathcal{O}}-\Pi_{\mathcal{O}}^{\text{a.s.}}) \otimes \frac{\1_{\mathcal{F}}}{D},\label{eq:j11}
\end{align}
where the summation in Eq.~(\ref{eq:j11}) is taken over $0\leq j''_1<\cdots < j''_d\leq D-1$ and $\vec{k}, \vec{k}'\in\{1, \cdots, d\}^d$.

Then, the output state $\rho_{\mathrm{out}}\in\mathcal{L}(\mathcal{F})$ of the pseudo complex conjugation protocol is calculated as
\begin{align}
    &\rho_{\mathrm{out}}\nonumber\\
    &=J_{\widetilde{\Lambda}}\star 
    \left[
    \left(
    \widetilde{\mathcal{V}}^{\otimes d-1}_{\mathcal{I}\to\mathcal{O}}\circ \widetilde{\mathcal{V}}^{\mathrm{a.s.}}_{\mathcal{P}\to\mathcal{I}}
    \right)
    (\rho_{\mathrm{in}})
    \right]\\
    &=c\sum_{\vec{j}, \vec{k}, \vec{k}'}\frac{\epsilon_{\vec{k}}\epsilon_{\vec{k}'}}{(d-1)!}
    \Tr
    \left\{
    \ketbra{v(j_{\vec{k}'_{d-1}})}{v(j_{\vec{k}_{d-1}})}_{\mathcal{O}} 
    \left[
    \left(
    \widetilde{\mathcal{V}}^{\otimes d-1}_{\mathcal{I}\to\mathcal{O}}\circ \widetilde{\mathcal{V}}^{\mathrm{a.s.}}_{\mathcal{P}\to\mathcal{I}}
    \right)
    (\rho_{\mathrm{in}})
    \right]
    \right\}
    \ketbra{v(j_{k_{d}})^*}{v(j_{k'_{d}})^*}_{\mathcal{F}}\\
    &=c\sum_{\vec{j}, \vec{k}, \vec{k}'}\frac{\epsilon_{\vec{k}}\epsilon_{\vec{k}'}}{(d-1)!}\Tr
    \left\{
    \left[
    \left(
    \widetilde{\mathcal{V}}^{\dagger \mathrm{a.s.}}_{\mathcal{I}\to\mathcal{P}}\circ \widetilde{\mathcal{V}}^{\dagger\otimes d-1}_{\mathcal{O}\to\mathcal{I}}
    \right)
    \Big(\ketbra{v(j_{\vec{k}'_{d-1}})}{v(j_{\vec{k}_{d-1}})}_{\mathcal{O}}
    \Big)
    \right]
    \rho_{\mathrm{in}}
    \right\}
    \ketbra{v(j_{k_{d}})^*}{v(j_{k'_{d}})^*}_{\mathcal{F}},\label{eq:i12}
\end{align}
where the summation is taken over $0\leq j_1<\cdots < j_d\leq D-1$ and $\vec{k}, \vec{k}'\in\{1, \cdots, d\}^d$. Since the summand in Eq.~(\ref{eq:i12}) is non-zero only when 
\begin{align}
    \begin{cases}
    \vec{j}=(0, \cdots, d-1) & \mathrm{or}\\
    \vec{j}=(0, \cdots, l-1, l+1, \cdots, d-1, m) & (l\in\{0, \cdots, d-1\}, m\in\{d, \cdots, D-1\})
    \end{cases},
\end{align}
we obtain
\begin{align}
    \rho_{\mathrm{out}}
    &=c\sum_{\vec{j}, \vec{j}'\in\{0, \cdots, d-1\}^{d}}\frac{\epsilon_{\vec{j}}\epsilon_{\vec{j}'}}{(d-1)!}
    \Tr
    \left\{
    \left[
    \widetilde{\mathcal{V}}^{\dagger\mathrm{a.s.}}_{\mathcal{I}\to\mathcal{P}}
    \Big(
    \ketbra{\vec{j}'_{d-1}}{\vec{j}_{d-1}}_{\mathcal{I}}
    \Big)
    \right]
    \rho_{\mathrm{in}}
    \right\}
    \ketbra{v(j_{d})^*}{v(j'_{d})^*}_{\mathcal{F}}\nonumber\\
    &\hspace{12pt}+c\sum_{l=0}^{d-1}\sum_{m=d}^{D-1}\sum_{\vec{k}_{d-1}, \vec{k}'_{d-1}\in \{1,\cdots, d-1\}^{d-1}}\frac{\epsilon_{\vec{k}_{d-1}}\epsilon_{\vec{k}'_{d-1}}}{(d-1)!}
    \Tr
    \left\{
    \left[
    \widetilde{\mathcal{V}}^{\dagger\mathrm{a.s.}}_{\mathcal{I}\to\mathcal{P}}
    \Big(
    \ketbra{j_{\vec{k}'_{d-1}}}{j_{\vec{k}_{d-1}}}_{\mathcal{I}}
    \Big)
    \right]
    \rho_{\mathrm{in}}
    \right\}\nonumber\\
    &\hspace{180pt}\times
    \ketbra{v(m)^*}{v(m)^*}_{\mathcal{F}},\label{eq:i13}
\end{align}
where $(j_1, \cdots, j_{d-1})$ in the second term is $(j_1, \cdots, j_{d-1})=(0, \cdots, l-1, l+1, \cdots, d-1)$. Then, we proceed the calculation as
\begin{align}
    \rho_{\mathrm{out}}
    &=c\sum_{\vec{j}, \vec{j}'\in\{0, \cdots, d-1\}^{d}}\frac{1}{[(d-1)!]^2}
    \Tr
    \left[
    \ketbra{j'_d}{j_d}_{\mathcal{P}}
    \rho_{\mathrm{in}}
    \right]
    \ketbra{v(j_{d})^*}{v(j'_{d})^*}_{\mathcal{F}}\nonumber\\
    &\hspace{12pt}+c\sum_{l=0}^{d-1}\sum_{m=d}^{D-1}\sum_{\vec{k}_{d-1}, \vec{k}'_{d-1}\in \{1,\cdots, d-1\}^{d-1}}\frac{1}{[(d-1)!]^2}\Tr[
    \ketbra{l}{l}_{\mathcal{P}}
    \rho_{\mathrm{in}}]\ketbra{v(m)^*}{v(m)^*}_{\mathcal{F}}\\
    &=c\sum_{j_d, j'_d=0}^{d-1}\Tr[\ketbra{j'_d}{k_d}_{\mathcal{P}}
    \rho_{\mathrm{in}}]\ketbra{v(j_{d})^*}{v(j'_{d})^*}_{\mathcal{F}}+c\sum_{l=0}^{d-1}\sum_{m=d}^{D-1}\Tr[
    \ketbra{l}{l}_{\mathcal{P}}
    \rho_{\mathrm{in}}]\ketbra{v(m)^*}{v(m)^*}_{\mathcal{F}}\\
    &=c\sum_{j_d, j'_d=0}^{d-1}
    \ketbra{v(j_d)^*}{j_d} \rho_{\mathrm{in}}\ketbra{j'_d}{v(j'_d)^*}
    +c\sum_{m=d}^{D-1}\sum_{l=0}^{d-1}
    \ketbra{v(m)^*}{v(m)^*}\bra{l}\rho_{\mathrm{in}}\ket{l}\\
    &=c
    \left[
    V^*\rho_{\mathrm{in}}(V^*)^{\dagger}+\Pi^*_{(\Im V)^{\perp}}\Tr(\rho_{\mathrm{in}})
    \right]\\
    &=\frac{1}{D-d+1}\widetilde{\mathcal{V}}''(\rho_{\mathrm{in}}).
\end{align}
Therefore, the protocol shown in Figure~\ref{fig:isometry_conjugate_modoki} implements an pseudo complex conjugate map $\widetilde{\mathcal{V}}''$ with a success probability $p_{\mathrm{succ}}=\frac{1}{D-d+1}$.

\bibliographystyle{apsrev4-2}
\bibliography{main}

\begin{thebibliography}{101}%
\makeatletter
\providecommand \@ifxundefined [1]{%
 \@ifx{#1\undefined}
}%
\providecommand \@ifnum [1]{%
 \ifnum #1\expandafter \@firstoftwo
 \else \expandafter \@secondoftwo
 \fi
}%
\providecommand \@ifx [1]{%
 \ifx #1\expandafter \@firstoftwo
 \else \expandafter \@secondoftwo
 \fi
}%
\providecommand \natexlab [1]{#1}%
\providecommand \enquote  [1]{``#1''}%
\providecommand \bibnamefont  [1]{#1}%
\providecommand \bibfnamefont [1]{#1}%
\providecommand \citenamefont [1]{#1}%
\providecommand \href@noop [0]{\@secondoftwo}%
\providecommand \href [0]{\begingroup \@sanitize@url \@href}%
\providecommand \@href[1]{\@@startlink{#1}\@@href}%
\providecommand \@@href[1]{\endgroup#1\@@endlink}%
\providecommand \@sanitize@url [0]{\catcode `\\12\catcode `\$12\catcode
  `\&12\catcode `\#12\catcode `\^12\catcode `\_12\catcode `\%12\relax}%
\providecommand \@@startlink[1]{}%
\providecommand \@@endlink[0]{}%
\providecommand \url  [0]{\begingroup\@sanitize@url \@url }%
\providecommand \@url [1]{\endgroup\@href {#1}{\urlprefix }}%
\providecommand \urlprefix  [0]{URL }%
\providecommand \Eprint [0]{\href }%
\providecommand \doibase [0]{https://doi.org/}%
\providecommand \selectlanguage [0]{\@gobble}%
\providecommand \bibinfo  [0]{\@secondoftwo}%
\providecommand \bibfield  [0]{\@secondoftwo}%
\providecommand \translation [1]{[#1]}%
\providecommand \BibitemOpen [0]{}%
\providecommand \bibitemStop [0]{}%
\providecommand \bibitemNoStop [0]{.\EOS\space}%
\providecommand \EOS [0]{\spacefactor3000\relax}%
\providecommand \BibitemShut  [1]{\csname bibitem#1\endcsname}%
\let\auto@bib@innerbib\@empty
\bibitem [{\citenamefont {Nielsen}\ and\ \citenamefont
  {Chuang}(2010)}]{Nielsen2010Quantum}%
  \BibitemOpen
  \bibfield  {author} {\bibinfo {author} {\bibfnamefont {M.~A.}\ \bibnamefont
  {Nielsen}}\ and\ \bibinfo {author} {\bibfnamefont {I.~L.}\ \bibnamefont
  {Chuang}},\ }\href {https://doi.org/10.1017/CBO9780511976667} {\emph
  {\bibinfo {title} {Quantum Computation and Quantum Information}}},\ \bibinfo
  {edition} {10th}\ ed.\ (\bibinfo  {publisher} {{Cambridge University
  Press}},\ \bibinfo {year} {2010})\BibitemShut {NoStop}%
\bibitem [{\citenamefont {Chiribella}\ \emph {et~al.}(2005)\citenamefont
  {Chiribella}, \citenamefont {D'Ariano},\ and\ \citenamefont
  {Sacchi}}]{Chiribella2005Optimal}%
  \BibitemOpen
  \bibfield  {author} {\bibinfo {author} {\bibfnamefont {G.}~\bibnamefont
  {Chiribella}}, \bibinfo {author} {\bibfnamefont {G.~M.}\ \bibnamefont
  {D'Ariano}},\ and\ \bibinfo {author} {\bibfnamefont {M.~F.}\ \bibnamefont
  {Sacchi}},\ }\href {https://doi.org/10.1103/PhysRevA.72.042338} {\bibfield
  {journal} {\bibinfo  {journal} {Phys. Rev. A}\ }\textbf {\bibinfo {volume}
  {72}},\ \bibinfo {pages} {042338} (\bibinfo {year} {2005})}\BibitemShut
  {NoStop}%
\bibitem [{\citenamefont {Bisio}\ \emph
  {et~al.}(2010{\natexlab{a}})\citenamefont {Bisio}, \citenamefont
  {Chiribella}, \citenamefont {D'Ariano}, \citenamefont {Facchini},\ and\
  \citenamefont {Perinotti}}]{Bisio2010Optimal}%
  \BibitemOpen
  \bibfield  {author} {\bibinfo {author} {\bibfnamefont {A.}~\bibnamefont
  {Bisio}}, \bibinfo {author} {\bibfnamefont {G.}~\bibnamefont {Chiribella}},
  \bibinfo {author} {\bibfnamefont {G.~M.}\ \bibnamefont {D'Ariano}}, \bibinfo
  {author} {\bibfnamefont {S.}~\bibnamefont {Facchini}},\ and\ \bibinfo
  {author} {\bibfnamefont {P.}~\bibnamefont {Perinotti}},\ }\href
  {https://doi.org/10.1103/PhysRevA.81.032324} {\bibfield  {journal} {\bibinfo
  {journal} {Phys. Rev. A}\ }\textbf {\bibinfo {volume} {81}},\ \bibinfo
  {pages} {032324} (\bibinfo {year} {2010}{\natexlab{a}})}\BibitemShut
  {NoStop}%
\bibitem [{\citenamefont {Sedl{\'a}k}\ \emph {et~al.}(2019)\citenamefont
  {Sedl{\'a}k}, \citenamefont {Bisio},\ and\ \citenamefont
  {Ziman}}]{Sedlak2019Optimala}%
  \BibitemOpen
  \bibfield  {author} {\bibinfo {author} {\bibfnamefont {M.}~\bibnamefont
  {Sedl{\'a}k}}, \bibinfo {author} {\bibfnamefont {A.}~\bibnamefont {Bisio}},\
  and\ \bibinfo {author} {\bibfnamefont {M.}~\bibnamefont {Ziman}},\ }\href
  {https://doi.org/10.1103/PhysRevLett.122.170502} {\bibfield  {journal}
  {\bibinfo  {journal} {Phys. Rev. Lett.}\ }\textbf {\bibinfo {volume} {122}},\
  \bibinfo {pages} {170502} (\bibinfo {year} {2019})}\BibitemShut {NoStop}%
\bibitem [{\citenamefont {Yang}\ \emph {et~al.}(2020)\citenamefont {Yang},
  \citenamefont {Renner},\ and\ \citenamefont {Chiribella}}]{Yang2020Optimal}%
  \BibitemOpen
  \bibfield  {author} {\bibinfo {author} {\bibfnamefont {Y.}~\bibnamefont
  {Yang}}, \bibinfo {author} {\bibfnamefont {R.}~\bibnamefont {Renner}},\ and\
  \bibinfo {author} {\bibfnamefont {G.}~\bibnamefont {Chiribella}},\ }\href
  {https://doi.org/10.1103/PhysRevLett.125.210501} {\bibfield  {journal}
  {\bibinfo  {journal} {Phys. Rev. Lett.}\ }\textbf {\bibinfo {volume} {125}},\
  \bibinfo {pages} {210501} (\bibinfo {year} {2020})}\BibitemShut {NoStop}%
\bibitem [{\citenamefont {Sedl{\'a}k}\ and\ \citenamefont
  {Ziman}(2020)}]{Sedlak2020Probabilistic}%
  \BibitemOpen
  \bibfield  {author} {\bibinfo {author} {\bibfnamefont {M.}~\bibnamefont
  {Sedl{\'a}k}}\ and\ \bibinfo {author} {\bibfnamefont {M.}~\bibnamefont
  {Ziman}},\ }\href {https://doi.org/10.1103/PhysRevA.102.032618} {\bibfield
  {journal} {\bibinfo  {journal} {Phys. Rev. A}\ }\textbf {\bibinfo {volume}
  {102}},\ \bibinfo {pages} {032618} (\bibinfo {year} {2020})}\BibitemShut
  {NoStop}%
\bibitem [{\citenamefont {Chiribella}\ \emph
  {et~al.}(2008{\natexlab{a}})\citenamefont {Chiribella}, \citenamefont
  {D'Ariano},\ and\ \citenamefont {Perinotti}}]{Chiribella2008Optimal}%
  \BibitemOpen
  \bibfield  {author} {\bibinfo {author} {\bibfnamefont {G.}~\bibnamefont
  {Chiribella}}, \bibinfo {author} {\bibfnamefont {G.~M.}\ \bibnamefont
  {D'Ariano}},\ and\ \bibinfo {author} {\bibfnamefont {P.}~\bibnamefont
  {Perinotti}},\ }\href {https://doi.org/10.1103/PhysRevLett.101.180504}
  {\bibfield  {journal} {\bibinfo  {journal} {Phys. Rev. Lett.}\ }\textbf
  {\bibinfo {volume} {101}},\ \bibinfo {pages} {180504} (\bibinfo {year}
  {2008}{\natexlab{a}})}\BibitemShut {NoStop}%
\bibitem [{\citenamefont {Bisio}\ \emph {et~al.}(2014)\citenamefont {Bisio},
  \citenamefont {D'Ariano}, \citenamefont {Perinotti},\ and\ \citenamefont
  {Sedlak}}]{Bisio2014Optimal}%
  \BibitemOpen
  \bibfield  {author} {\bibinfo {author} {\bibfnamefont {A.}~\bibnamefont
  {Bisio}}, \bibinfo {author} {\bibfnamefont {G.~M.}\ \bibnamefont {D'Ariano}},
  \bibinfo {author} {\bibfnamefont {P.}~\bibnamefont {Perinotti}},\ and\
  \bibinfo {author} {\bibfnamefont {M.}~\bibnamefont {Sedlak}},\ }\href
  {https://doi.org/10.1016/j.physleta.2014.04.042} {\bibfield  {journal}
  {\bibinfo  {journal} {Phys. Lett. A}\ }\textbf {\bibinfo {volume} {378}},\
  \bibinfo {pages} {1797} (\bibinfo {year} {2014})}\BibitemShut {NoStop}%
\bibitem [{\citenamefont {D{\"u}r}\ \emph {et~al.}(2015)\citenamefont
  {D{\"u}r}, \citenamefont {Sekatski},\ and\ \citenamefont
  {Skotiniotis}}]{Dur2015Deterministic}%
  \BibitemOpen
  \bibfield  {author} {\bibinfo {author} {\bibfnamefont {W.}~\bibnamefont
  {D{\"u}r}}, \bibinfo {author} {\bibfnamefont {P.}~\bibnamefont {Sekatski}},\
  and\ \bibinfo {author} {\bibfnamefont {M.}~\bibnamefont {Skotiniotis}},\
  }\href {https://doi.org/10.1103/PhysRevLett.114.120503} {\bibfield  {journal}
  {\bibinfo  {journal} {Phys. Rev. Lett.}\ }\textbf {\bibinfo {volume} {114}},\
  \bibinfo {pages} {120503} (\bibinfo {year} {2015})}\BibitemShut {NoStop}%
\bibitem [{\citenamefont {Chiribella}\ \emph {et~al.}(2015)\citenamefont
  {Chiribella}, \citenamefont {Yang},\ and\ \citenamefont
  {Huang}}]{Chiribella2015Universal}%
  \BibitemOpen
  \bibfield  {author} {\bibinfo {author} {\bibfnamefont {G.}~\bibnamefont
  {Chiribella}}, \bibinfo {author} {\bibfnamefont {Y.}~\bibnamefont {Yang}},\
  and\ \bibinfo {author} {\bibfnamefont {C.}~\bibnamefont {Huang}},\ }\href
  {https://doi.org/10.1103/PhysRevLett.114.120504} {\bibfield  {journal}
  {\bibinfo  {journal} {Phys. Rev. Lett.}\ }\textbf {\bibinfo {volume} {114}},\
  \bibinfo {pages} {120504} (\bibinfo {year} {2015})}\BibitemShut {NoStop}%
\bibitem [{\citenamefont {Soleimanifar}\ and\ \citenamefont
  {Karimipour}(2016)}]{Soleimanifar2016Nogo}%
  \BibitemOpen
  \bibfield  {author} {\bibinfo {author} {\bibfnamefont {M.}~\bibnamefont
  {Soleimanifar}}\ and\ \bibinfo {author} {\bibfnamefont {V.}~\bibnamefont
  {Karimipour}},\ }\href {https://doi.org/10.1103/PhysRevA.93.012344}
  {\bibfield  {journal} {\bibinfo  {journal} {Phys. Rev. A}\ }\textbf {\bibinfo
  {volume} {93}},\ \bibinfo {pages} {012344} (\bibinfo {year}
  {2016})}\BibitemShut {NoStop}%
\bibitem [{\citenamefont {Mi{\v c}uda}\ \emph {et~al.}(2016)\citenamefont
  {Mi{\v c}uda}, \citenamefont {St{\'a}rek}, \citenamefont {Straka},
  \citenamefont {Mikov{\'a}}, \citenamefont {Sedl{\'a}k}, \citenamefont {Je{\v
  z}ek},\ and\ \citenamefont {Fiur{\'a}{\v s}ek}}]{Micuda2016Experimental}%
  \BibitemOpen
  \bibfield  {author} {\bibinfo {author} {\bibfnamefont {M.}~\bibnamefont
  {Mi{\v c}uda}}, \bibinfo {author} {\bibfnamefont {R.}~\bibnamefont
  {St{\'a}rek}}, \bibinfo {author} {\bibfnamefont {I.}~\bibnamefont {Straka}},
  \bibinfo {author} {\bibfnamefont {M.}~\bibnamefont {Mikov{\'a}}}, \bibinfo
  {author} {\bibfnamefont {M.}~\bibnamefont {Sedl{\'a}k}}, \bibinfo {author}
  {\bibfnamefont {M.}~\bibnamefont {Je{\v z}ek}},\ and\ \bibinfo {author}
  {\bibfnamefont {J.}~\bibnamefont {Fiur{\'a}{\v s}ek}},\ }\href
  {https://doi.org/10.1103/PhysRevA.93.052318} {\bibfield  {journal} {\bibinfo
  {journal} {Phys. Rev. A}\ }\textbf {\bibinfo {volume} {93}},\ \bibinfo
  {pages} {052318} (\bibinfo {year} {2016})}\BibitemShut {NoStop}%
\bibitem [{\citenamefont {Bisio}\ \emph {et~al.}(2009)\citenamefont {Bisio},
  \citenamefont {Chiribella}, \citenamefont {D'Ariano}, \citenamefont
  {Facchini},\ and\ \citenamefont {Perinotti}}]{Bisio2009Optimalb}%
  \BibitemOpen
  \bibfield  {author} {\bibinfo {author} {\bibfnamefont {A.}~\bibnamefont
  {Bisio}}, \bibinfo {author} {\bibfnamefont {G.}~\bibnamefont {Chiribella}},
  \bibinfo {author} {\bibfnamefont {G.~M.}\ \bibnamefont {D'Ariano}}, \bibinfo
  {author} {\bibfnamefont {S.}~\bibnamefont {Facchini}},\ and\ \bibinfo
  {author} {\bibfnamefont {P.}~\bibnamefont {Perinotti}},\ }\href
  {https://doi.org/10.1103/PhysRevLett.102.010404} {\bibfield  {journal}
  {\bibinfo  {journal} {Phys. Rev. Lett.}\ }\textbf {\bibinfo {volume} {102}},\
  \bibinfo {pages} {010404} (\bibinfo {year} {2009})}\BibitemShut {NoStop}%
\bibitem [{\citenamefont {Bisio}\ \emph
  {et~al.}(2010{\natexlab{b}})\citenamefont {Bisio}, \citenamefont
  {Chiribella}, \citenamefont {D'Ariano},\ and\ \citenamefont
  {Perinotti}}]{Bisio2010Informationdisturbance}%
  \BibitemOpen
  \bibfield  {author} {\bibinfo {author} {\bibfnamefont {A.}~\bibnamefont
  {Bisio}}, \bibinfo {author} {\bibfnamefont {G.}~\bibnamefont {Chiribella}},
  \bibinfo {author} {\bibfnamefont {G.~M.}\ \bibnamefont {D'Ariano}},\ and\
  \bibinfo {author} {\bibfnamefont {P.}~\bibnamefont {Perinotti}},\ }\href
  {https://doi.org/10.1103/PhysRevA.82.062305} {\bibfield  {journal} {\bibinfo
  {journal} {Phys. Rev. A}\ }\textbf {\bibinfo {volume} {82}},\ \bibinfo
  {pages} {062305} (\bibinfo {year} {2010}{\natexlab{b}})}\BibitemShut
  {NoStop}%
\bibitem [{\citenamefont {Miyazaki}\ \emph {et~al.}(2019)\citenamefont
  {Miyazaki}, \citenamefont {Soeda},\ and\ \citenamefont
  {Murao}}]{Miyazaki2019Complex}%
  \BibitemOpen
  \bibfield  {author} {\bibinfo {author} {\bibfnamefont {J.}~\bibnamefont
  {Miyazaki}}, \bibinfo {author} {\bibfnamefont {A.}~\bibnamefont {Soeda}},\
  and\ \bibinfo {author} {\bibfnamefont {M.}~\bibnamefont {Murao}},\ }\href
  {https://doi.org/10.1103/PhysRevResearch.1.013007} {\bibfield  {journal}
  {\bibinfo  {journal} {Phys. Rev. Research}\ }\textbf {\bibinfo {volume}
  {1}},\ \bibinfo {pages} {013007} (\bibinfo {year} {2019})}\BibitemShut
  {NoStop}%
\bibitem [{\citenamefont {Chiribella}\ and\ \citenamefont
  {Ebler}(2016)}]{Chiribella2016Optimal}%
  \BibitemOpen
  \bibfield  {author} {\bibinfo {author} {\bibfnamefont {G.}~\bibnamefont
  {Chiribella}}\ and\ \bibinfo {author} {\bibfnamefont {D.}~\bibnamefont
  {Ebler}},\ }\href {https://doi.org/10.1088/1367-2630/18/9/093053} {\bibfield
  {journal} {\bibinfo  {journal} {New J. Phys.}\ }\textbf {\bibinfo {volume}
  {18}},\ \bibinfo {pages} {093053} (\bibinfo {year} {2016})}\BibitemShut
  {NoStop}%
\bibitem [{\citenamefont {Navascu{\'e}s}(2018)}]{Navascues2018Resetting}%
  \BibitemOpen
  \bibfield  {author} {\bibinfo {author} {\bibfnamefont {M.}~\bibnamefont
  {Navascu{\'e}s}},\ }\href {https://doi.org/10.1103/PhysRevX.8.031008}
  {\bibfield  {journal} {\bibinfo  {journal} {Phys. Rev. X}\ }\textbf {\bibinfo
  {volume} {8}},\ \bibinfo {pages} {031008} (\bibinfo {year}
  {2018})}\BibitemShut {NoStop}%
\bibitem [{\citenamefont {Quintino}\ \emph
  {et~al.}(2019{\natexlab{a}})\citenamefont {Quintino}, \citenamefont {Dong},
  \citenamefont {Shimbo}, \citenamefont {Soeda},\ and\ \citenamefont
  {Murao}}]{Quintino2019Reversing}%
  \BibitemOpen
  \bibfield  {author} {\bibinfo {author} {\bibfnamefont {M.~T.}\ \bibnamefont
  {Quintino}}, \bibinfo {author} {\bibfnamefont {Q.}~\bibnamefont {Dong}},
  \bibinfo {author} {\bibfnamefont {A.}~\bibnamefont {Shimbo}}, \bibinfo
  {author} {\bibfnamefont {A.}~\bibnamefont {Soeda}},\ and\ \bibinfo {author}
  {\bibfnamefont {M.}~\bibnamefont {Murao}},\ }\href
  {https://doi.org/10.1103/PhysRevLett.123.210502} {\bibfield  {journal}
  {\bibinfo  {journal} {Phys. Rev. Lett.}\ }\textbf {\bibinfo {volume} {123}},\
  \bibinfo {pages} {210502} (\bibinfo {year} {2019}{\natexlab{a}})}\BibitemShut
  {NoStop}%
\bibitem [{\citenamefont {Quintino}\ \emph
  {et~al.}(2019{\natexlab{b}})\citenamefont {Quintino}, \citenamefont {Dong},
  \citenamefont {Shimbo}, \citenamefont {Soeda},\ and\ \citenamefont
  {Murao}}]{Quintino2019Probabilistic}%
  \BibitemOpen
  \bibfield  {author} {\bibinfo {author} {\bibfnamefont {M.~T.}\ \bibnamefont
  {Quintino}}, \bibinfo {author} {\bibfnamefont {Q.}~\bibnamefont {Dong}},
  \bibinfo {author} {\bibfnamefont {A.}~\bibnamefont {Shimbo}}, \bibinfo
  {author} {\bibfnamefont {A.}~\bibnamefont {Soeda}},\ and\ \bibinfo {author}
  {\bibfnamefont {M.}~\bibnamefont {Murao}},\ }\href
  {https://doi.org/10.1103/PhysRevA.100.062339} {\bibfield  {journal} {\bibinfo
   {journal} {Phys. Rev. A}\ }\textbf {\bibinfo {volume} {100}},\ \bibinfo
  {pages} {062339} (\bibinfo {year} {2019}{\natexlab{b}})}\BibitemShut
  {NoStop}%
\bibitem [{\citenamefont {Quintino}\ and\ \citenamefont
  {Ebler}(2022)}]{Quintino2021Deterministic}%
  \BibitemOpen
  \bibfield  {author} {\bibinfo {author} {\bibfnamefont {M.~T.}\ \bibnamefont
  {Quintino}}\ and\ \bibinfo {author} {\bibfnamefont {D.}~\bibnamefont
  {Ebler}},\ }\href {https://doi.org/10.22331/q-2022-03-31-679} {\bibfield
  {journal} {\bibinfo  {journal} {Quantum}\ }\textbf {\bibinfo {volume} {6}},\
  \bibinfo {pages} {679} (\bibinfo {year} {2022})}\BibitemShut {NoStop}%
\bibitem [{\citenamefont {Bartlett}\ \emph {et~al.}(2009)\citenamefont
  {Bartlett}, \citenamefont {Rudolph}, \citenamefont {Spekkens},\ and\
  \citenamefont {Turner}}]{Bartlett2009Quantum}%
  \BibitemOpen
  \bibfield  {author} {\bibinfo {author} {\bibfnamefont {S.~D.}\ \bibnamefont
  {Bartlett}}, \bibinfo {author} {\bibfnamefont {T.}~\bibnamefont {Rudolph}},
  \bibinfo {author} {\bibfnamefont {R.~W.}\ \bibnamefont {Spekkens}},\ and\
  \bibinfo {author} {\bibfnamefont {P.~S.}\ \bibnamefont {Turner}},\ }\href
  {https://doi.org/10.1088/1367-2630/11/6/063013} {\bibfield  {journal}
  {\bibinfo  {journal} {New J. Phys.}\ }\textbf {\bibinfo {volume} {11}},\
  \bibinfo {pages} {063013} (\bibinfo {year} {2009})}\BibitemShut {NoStop}%
\bibitem [{\citenamefont {Ara{\'u}jo}\ \emph {et~al.}(2014)\citenamefont
  {Ara{\'u}jo}, \citenamefont {Feix}, \citenamefont {Costa},\ and\
  \citenamefont {Brukner}}]{Araujo2014Quantum}%
  \BibitemOpen
  \bibfield  {author} {\bibinfo {author} {\bibfnamefont {M.}~\bibnamefont
  {Ara{\'u}jo}}, \bibinfo {author} {\bibfnamefont {A.}~\bibnamefont {Feix}},
  \bibinfo {author} {\bibfnamefont {F.}~\bibnamefont {Costa}},\ and\ \bibinfo
  {author} {\bibfnamefont {{\v C}.}~\bibnamefont {Brukner}},\ }\href
  {https://doi.org/10.1088/1367-2630/16/9/093026} {\bibfield  {journal}
  {\bibinfo  {journal} {New J. Phys.}\ }\textbf {\bibinfo {volume} {16}},\
  \bibinfo {pages} {093026} (\bibinfo {year} {2014})}\BibitemShut {NoStop}%
\bibitem [{\citenamefont {Bisio}\ \emph {et~al.}(2016)\citenamefont {Bisio},
  \citenamefont {Dall'Arno},\ and\ \citenamefont
  {Perinotti}}]{Bisio2016Quantum}%
  \BibitemOpen
  \bibfield  {author} {\bibinfo {author} {\bibfnamefont {A.}~\bibnamefont
  {Bisio}}, \bibinfo {author} {\bibfnamefont {M.}~\bibnamefont {Dall'Arno}},\
  and\ \bibinfo {author} {\bibfnamefont {P.}~\bibnamefont {Perinotti}},\ }\href
  {https://doi.org/10.1103/PhysRevA.94.022340} {\bibfield  {journal} {\bibinfo
  {journal} {Phys. Rev. A}\ }\textbf {\bibinfo {volume} {94}},\ \bibinfo
  {pages} {022340} (\bibinfo {year} {2016})}\BibitemShut {NoStop}%
\bibitem [{\citenamefont {Dong}\ \emph {et~al.}(2019)\citenamefont {Dong},
  \citenamefont {Nakayama}, \citenamefont {Soeda},\ and\ \citenamefont
  {Murao}}]{Dong2020Controlled}%
  \BibitemOpen
  \bibfield  {author} {\bibinfo {author} {\bibfnamefont {Q.}~\bibnamefont
  {Dong}}, \bibinfo {author} {\bibfnamefont {S.}~\bibnamefont {Nakayama}},
  \bibinfo {author} {\bibfnamefont {A.}~\bibnamefont {Soeda}},\ and\ \bibinfo
  {author} {\bibfnamefont {M.}~\bibnamefont {Murao}},\ }\Eprint
  {https://arxiv.org/abs/1911.01645} {arXiv:1911.01645}  (\bibinfo {year}
  {2019})\BibitemShut {NoStop}%
\bibitem [{\citenamefont {Milz}\ \emph
  {et~al.}(2018{\natexlab{a}})\citenamefont {Milz}, \citenamefont {Pollock},\
  and\ \citenamefont {Modi}}]{Milz2018Reconstructing}%
  \BibitemOpen
  \bibfield  {author} {\bibinfo {author} {\bibfnamefont {S.}~\bibnamefont
  {Milz}}, \bibinfo {author} {\bibfnamefont {F.~A.}\ \bibnamefont {Pollock}},\
  and\ \bibinfo {author} {\bibfnamefont {K.}~\bibnamefont {Modi}},\ }\href
  {https://doi.org/10.1103/PhysRevA.98.012108} {\bibfield  {journal} {\bibinfo
  {journal} {Phys. Rev. A}\ }\textbf {\bibinfo {volume} {98}},\ \bibinfo
  {pages} {012108} (\bibinfo {year} {2018}{\natexlab{a}})}\BibitemShut
  {NoStop}%
\bibitem [{\citenamefont {Milz}\ \emph
  {et~al.}(2018{\natexlab{b}})\citenamefont {Milz}, \citenamefont {Pollock},
  \citenamefont {Le}, \citenamefont {Chiribella},\ and\ \citenamefont
  {Modi}}]{Milz2018Entanglement}%
  \BibitemOpen
  \bibfield  {author} {\bibinfo {author} {\bibfnamefont {S.}~\bibnamefont
  {Milz}}, \bibinfo {author} {\bibfnamefont {F.~A.}\ \bibnamefont {Pollock}},
  \bibinfo {author} {\bibfnamefont {T.~P.}\ \bibnamefont {Le}}, \bibinfo
  {author} {\bibfnamefont {G.}~\bibnamefont {Chiribella}},\ and\ \bibinfo
  {author} {\bibfnamefont {K.}~\bibnamefont {Modi}},\ }\href
  {https://doi.org/10.1088/1367-2630/aaafee} {\bibfield  {journal} {\bibinfo
  {journal} {New J. Phys.}\ }\textbf {\bibinfo {volume} {20}},\ \bibinfo
  {pages} {033033} (\bibinfo {year} {2018}{\natexlab{b}})}\BibitemShut
  {NoStop}%
\bibitem [{\citenamefont {Pollock}\ \emph
  {et~al.}(2018{\natexlab{a}})\citenamefont {Pollock}, \citenamefont
  {{Rodr{\'i}guez-Rosario}}, \citenamefont {Frauenheim}, \citenamefont
  {Paternostro},\ and\ \citenamefont {Modi}}]{Pollock2018Operational}%
  \BibitemOpen
  \bibfield  {author} {\bibinfo {author} {\bibfnamefont {F.~A.}\ \bibnamefont
  {Pollock}}, \bibinfo {author} {\bibfnamefont {C.}~\bibnamefont
  {{Rodr{\'i}guez-Rosario}}}, \bibinfo {author} {\bibfnamefont
  {T.}~\bibnamefont {Frauenheim}}, \bibinfo {author} {\bibfnamefont
  {M.}~\bibnamefont {Paternostro}},\ and\ \bibinfo {author} {\bibfnamefont
  {K.}~\bibnamefont {Modi}},\ }\href
  {https://doi.org/10.1103/PhysRevLett.120.040405} {\bibfield  {journal}
  {\bibinfo  {journal} {Phys. Rev. Lett.}\ }\textbf {\bibinfo {volume} {120}},\
  \bibinfo {pages} {040405} (\bibinfo {year} {2018}{\natexlab{a}})}\BibitemShut
  {NoStop}%
\bibitem [{\citenamefont {Pollock}\ and\ \citenamefont
  {Modi}(2018)}]{Pollock2018Tomographically}%
  \BibitemOpen
  \bibfield  {author} {\bibinfo {author} {\bibfnamefont {F.~A.}\ \bibnamefont
  {Pollock}}\ and\ \bibinfo {author} {\bibfnamefont {K.}~\bibnamefont {Modi}},\
  }\href {https://doi.org/10.22331/q-2018-07-11-76} {\bibfield  {journal}
  {\bibinfo  {journal} {Quantum}\ }\textbf {\bibinfo {volume} {2}},\ \bibinfo
  {pages} {76} (\bibinfo {year} {2018})}\BibitemShut {NoStop}%
\bibitem [{\citenamefont {Pollock}\ \emph
  {et~al.}(2018{\natexlab{b}})\citenamefont {Pollock}, \citenamefont
  {{Rodr{\'i}guez-Rosario}}, \citenamefont {Frauenheim}, \citenamefont
  {Paternostro},\ and\ \citenamefont {Modi}}]{Pollock2018NonMarkovian}%
  \BibitemOpen
  \bibfield  {author} {\bibinfo {author} {\bibfnamefont {F.~A.}\ \bibnamefont
  {Pollock}}, \bibinfo {author} {\bibfnamefont {C.}~\bibnamefont
  {{Rodr{\'i}guez-Rosario}}}, \bibinfo {author} {\bibfnamefont
  {T.}~\bibnamefont {Frauenheim}}, \bibinfo {author} {\bibfnamefont
  {M.}~\bibnamefont {Paternostro}},\ and\ \bibinfo {author} {\bibfnamefont
  {K.}~\bibnamefont {Modi}},\ }\href
  {https://doi.org/10.1103/PhysRevA.97.012127} {\bibfield  {journal} {\bibinfo
  {journal} {Phys. Rev. A}\ }\textbf {\bibinfo {volume} {97}},\ \bibinfo
  {pages} {012127} (\bibinfo {year} {2018}{\natexlab{b}})}\BibitemShut
  {NoStop}%
\bibitem [{\citenamefont {Sakuldee}\ \emph {et~al.}(2018)\citenamefont
  {Sakuldee}, \citenamefont {Milz}, \citenamefont {Pollock},\ and\
  \citenamefont {Modi}}]{Sakuldee2018NonMarkovian}%
  \BibitemOpen
  \bibfield  {author} {\bibinfo {author} {\bibfnamefont {F.}~\bibnamefont
  {Sakuldee}}, \bibinfo {author} {\bibfnamefont {S.}~\bibnamefont {Milz}},
  \bibinfo {author} {\bibfnamefont {F.~A.}\ \bibnamefont {Pollock}},\ and\
  \bibinfo {author} {\bibfnamefont {K.}~\bibnamefont {Modi}},\ }\href
  {https://doi.org/10.1088/1751-8121/aabb1e} {\bibfield  {journal} {\bibinfo
  {journal} {J. Phys. A}\ }\textbf {\bibinfo {volume} {51}},\ \bibinfo {pages}
  {414014} (\bibinfo {year} {2018})}\BibitemShut {NoStop}%
\bibitem [{\citenamefont {J{\o}rgensen}\ and\ \citenamefont
  {Pollock}(2019)}]{Jorgensen2019Exploiting}%
  \BibitemOpen
  \bibfield  {author} {\bibinfo {author} {\bibfnamefont {M.~R.}\ \bibnamefont
  {J{\o}rgensen}}\ and\ \bibinfo {author} {\bibfnamefont {F.~A.}\ \bibnamefont
  {Pollock}},\ }\href {https://doi.org/10.1103/PhysRevLett.123.240602}
  {\bibfield  {journal} {\bibinfo  {journal} {Phys. Rev. Lett.}\ }\textbf
  {\bibinfo {volume} {123}},\ \bibinfo {pages} {240602} (\bibinfo {year}
  {2019})}\BibitemShut {NoStop}%
\bibitem [{\citenamefont {Taranto}\ \emph
  {et~al.}(2019{\natexlab{a}})\citenamefont {Taranto}, \citenamefont {Pollock},
  \citenamefont {Milz}, \citenamefont {Tomamichel},\ and\ \citenamefont
  {Modi}}]{Taranto2019Quantum}%
  \BibitemOpen
  \bibfield  {author} {\bibinfo {author} {\bibfnamefont {P.}~\bibnamefont
  {Taranto}}, \bibinfo {author} {\bibfnamefont {F.~A.}\ \bibnamefont
  {Pollock}}, \bibinfo {author} {\bibfnamefont {S.}~\bibnamefont {Milz}},
  \bibinfo {author} {\bibfnamefont {M.}~\bibnamefont {Tomamichel}},\ and\
  \bibinfo {author} {\bibfnamefont {K.}~\bibnamefont {Modi}},\ }\href
  {https://doi.org/10.1103/PhysRevLett.122.140401} {\bibfield  {journal}
  {\bibinfo  {journal} {Phys. Rev. Lett.}\ }\textbf {\bibinfo {volume} {122}},\
  \bibinfo {pages} {140401} (\bibinfo {year} {2019}{\natexlab{a}})}\BibitemShut
  {NoStop}%
\bibitem [{\citenamefont {Taranto}\ \emph
  {et~al.}(2019{\natexlab{b}})\citenamefont {Taranto}, \citenamefont {Milz},
  \citenamefont {Pollock},\ and\ \citenamefont {Modi}}]{Taranto2019Structure}%
  \BibitemOpen
  \bibfield  {author} {\bibinfo {author} {\bibfnamefont {P.}~\bibnamefont
  {Taranto}}, \bibinfo {author} {\bibfnamefont {S.}~\bibnamefont {Milz}},
  \bibinfo {author} {\bibfnamefont {F.~A.}\ \bibnamefont {Pollock}},\ and\
  \bibinfo {author} {\bibfnamefont {K.}~\bibnamefont {Modi}},\ }\href
  {https://doi.org/10.1103/PhysRevA.99.042108} {\bibfield  {journal} {\bibinfo
  {journal} {Phys. Rev. A}\ }\textbf {\bibinfo {volume} {99}},\ \bibinfo
  {pages} {042108} (\bibinfo {year} {2019}{\natexlab{b}})}\BibitemShut
  {NoStop}%
\bibitem [{\citenamefont {Milz}\ \emph {et~al.}(2019)\citenamefont {Milz},
  \citenamefont {Kim}, \citenamefont {Pollock},\ and\ \citenamefont
  {Modi}}]{Milz2019Completely}%
  \BibitemOpen
  \bibfield  {author} {\bibinfo {author} {\bibfnamefont {S.}~\bibnamefont
  {Milz}}, \bibinfo {author} {\bibfnamefont {M.~S.}\ \bibnamefont {Kim}},
  \bibinfo {author} {\bibfnamefont {F.~A.}\ \bibnamefont {Pollock}},\ and\
  \bibinfo {author} {\bibfnamefont {K.}~\bibnamefont {Modi}},\ }\href
  {https://doi.org/10.1103/PhysRevLett.123.040401} {\bibfield  {journal}
  {\bibinfo  {journal} {Phys. Rev. Lett.}\ }\textbf {\bibinfo {volume} {123}},\
  \bibinfo {pages} {040401} (\bibinfo {year} {2019})}\BibitemShut {NoStop}%
\bibitem [{\citenamefont {Milz}\ \emph {et~al.}(2020)\citenamefont {Milz},
  \citenamefont {Egloff}, \citenamefont {Taranto}, \citenamefont {Theurer},
  \citenamefont {Plenio}, \citenamefont {Smirne},\ and\ \citenamefont
  {Huelga}}]{Milz2020When}%
  \BibitemOpen
  \bibfield  {author} {\bibinfo {author} {\bibfnamefont {S.}~\bibnamefont
  {Milz}}, \bibinfo {author} {\bibfnamefont {D.}~\bibnamefont {Egloff}},
  \bibinfo {author} {\bibfnamefont {P.}~\bibnamefont {Taranto}}, \bibinfo
  {author} {\bibfnamefont {T.}~\bibnamefont {Theurer}}, \bibinfo {author}
  {\bibfnamefont {M.~B.}\ \bibnamefont {Plenio}}, \bibinfo {author}
  {\bibfnamefont {A.}~\bibnamefont {Smirne}},\ and\ \bibinfo {author}
  {\bibfnamefont {S.~F.}\ \bibnamefont {Huelga}},\ }\href
  {https://doi.org/10.1103/PhysRevX.10.041049} {\bibfield  {journal} {\bibinfo
  {journal} {Phys. Rev. X}\ }\textbf {\bibinfo {volume} {10}},\ \bibinfo
  {pages} {041049} (\bibinfo {year} {2020})}\BibitemShut {NoStop}%
\bibitem [{\citenamefont {Milz}\ and\ \citenamefont
  {Modi}(2021)}]{Milz2021Quantum}%
  \BibitemOpen
  \bibfield  {author} {\bibinfo {author} {\bibfnamefont {S.}~\bibnamefont
  {Milz}}\ and\ \bibinfo {author} {\bibfnamefont {K.}~\bibnamefont {Modi}},\
  }\href {https://doi.org/10.1103/PRXQuantum.2.030201} {\bibfield  {journal}
  {\bibinfo  {journal} {PRX Quantum}\ }\textbf {\bibinfo {volume} {2}},\
  \bibinfo {pages} {030201} (\bibinfo {year} {2021})}\BibitemShut {NoStop}%
\bibitem [{\citenamefont {Giarmatzi}\ and\ \citenamefont
  {Costa}(2021)}]{Giarmatzi2021Witnessing}%
  \BibitemOpen
  \bibfield  {author} {\bibinfo {author} {\bibfnamefont {C.}~\bibnamefont
  {Giarmatzi}}\ and\ \bibinfo {author} {\bibfnamefont {F.}~\bibnamefont
  {Costa}},\ }\href {https://doi.org/10.22331/q-2021-04-26-440} {\bibfield
  {journal} {\bibinfo  {journal} {Quantum}\ }\textbf {\bibinfo {volume} {5}},\
  \bibinfo {pages} {440} (\bibinfo {year} {2021})}\BibitemShut {NoStop}%
\bibitem [{\citenamefont {Theurer}\ \emph {et~al.}(2019)\citenamefont
  {Theurer}, \citenamefont {Egloff}, \citenamefont {Zhang},\ and\ \citenamefont
  {Plenio}}]{theurer2019quantifying}%
  \BibitemOpen
  \bibfield  {author} {\bibinfo {author} {\bibfnamefont {T.}~\bibnamefont
  {Theurer}}, \bibinfo {author} {\bibfnamefont {D.}~\bibnamefont {Egloff}},
  \bibinfo {author} {\bibfnamefont {L.}~\bibnamefont {Zhang}},\ and\ \bibinfo
  {author} {\bibfnamefont {M.~B.}\ \bibnamefont {Plenio}},\ }\href
  {https://doi.org/10.1103/PhysRevLett.122.190405} {\bibfield  {journal}
  {\bibinfo  {journal} {Phys. Rev. Lett.}\ }\textbf {\bibinfo {volume} {122}},\
  \bibinfo {pages} {190405} (\bibinfo {year} {2019})}\BibitemShut {NoStop}%
\bibitem [{\citenamefont {Chitambar}\ and\ \citenamefont
  {Gour}(2019)}]{Chitambar2019Quantum}%
  \BibitemOpen
  \bibfield  {author} {\bibinfo {author} {\bibfnamefont {E.}~\bibnamefont
  {Chitambar}}\ and\ \bibinfo {author} {\bibfnamefont {G.}~\bibnamefont
  {Gour}},\ }\href {https://doi.org/10.1103/RevModPhys.91.025001} {\bibfield
  {journal} {\bibinfo  {journal} {Reviews of Modern Physics}\ }\textbf
  {\bibinfo {volume} {91}},\ \bibinfo {pages} {025001} (\bibinfo {year}
  {2019})}\BibitemShut {NoStop}%
\bibitem [{\citenamefont {Gour}\ and\ \citenamefont
  {Winter}(2019)}]{Gour2019How}%
  \BibitemOpen
  \bibfield  {author} {\bibinfo {author} {\bibfnamefont {G.}~\bibnamefont
  {Gour}}\ and\ \bibinfo {author} {\bibfnamefont {A.}~\bibnamefont {Winter}},\
  }\href {https://doi.org/10.1103/PhysRevLett.123.150401} {\bibfield  {journal}
  {\bibinfo  {journal} {Phys. Rev. Lett.}\ }\textbf {\bibinfo {volume} {123}},\
  \bibinfo {pages} {150401} (\bibinfo {year} {2019})}\BibitemShut {NoStop}%
\bibitem [{\citenamefont {Liu}\ and\ \citenamefont
  {Winter}(2019)}]{Liu2019Resource}%
  \BibitemOpen
  \bibfield  {author} {\bibinfo {author} {\bibfnamefont {Z.-W.}\ \bibnamefont
  {Liu}}\ and\ \bibinfo {author} {\bibfnamefont {A.}~\bibnamefont {Winter}},\
  }\Eprint {https://arxiv.org/abs/1904.04201} {arXiv:1904.04201}  (\bibinfo
  {year} {2019})\BibitemShut {NoStop}%
\bibitem [{\citenamefont {Gour}\ and\ \citenamefont
  {Scandolo}(2021{\natexlab{a}})}]{Gour2020Dynamical}%
  \BibitemOpen
  \bibfield  {author} {\bibinfo {author} {\bibfnamefont {G.}~\bibnamefont
  {Gour}}\ and\ \bibinfo {author} {\bibfnamefont {C.~M.}\ \bibnamefont
  {Scandolo}},\ }\Eprint {https://arxiv.org/abs/2101.01552} {arXiv:2101.01552}
  (\bibinfo {year} {2021}{\natexlab{a}})\BibitemShut {NoStop}%
\bibitem [{\citenamefont {Gour}\ and\ \citenamefont
  {Scandolo}(2020)}]{Gour2020Dynamicala}%
  \BibitemOpen
  \bibfield  {author} {\bibinfo {author} {\bibfnamefont {G.}~\bibnamefont
  {Gour}}\ and\ \bibinfo {author} {\bibfnamefont {C.~M.}\ \bibnamefont
  {Scandolo}},\ }\href {https://doi.org/10.1103/PhysRevLett.125.180505}
  {\bibfield  {journal} {\bibinfo  {journal} {Phys. Rev. Lett.}\ }\textbf
  {\bibinfo {volume} {125}},\ \bibinfo {pages} {180505} (\bibinfo {year}
  {2020})}\BibitemShut {NoStop}%
\bibitem [{\citenamefont {Gour}\ and\ \citenamefont
  {Scandolo}(2021{\natexlab{b}})}]{Gour2021Entanglement}%
  \BibitemOpen
  \bibfield  {author} {\bibinfo {author} {\bibfnamefont {G.}~\bibnamefont
  {Gour}}\ and\ \bibinfo {author} {\bibfnamefont {C.~M.}\ \bibnamefont
  {Scandolo}},\ }\href {https://doi.org/10.1103/PhysRevA.103.062422} {\bibfield
   {journal} {\bibinfo  {journal} {Physical Review A}\ }\textbf {\bibinfo
  {volume} {103}},\ \bibinfo {pages} {062422} (\bibinfo {year}
  {2021}{\natexlab{b}})}\BibitemShut {NoStop}%
\bibitem [{\citenamefont {Liu}\ and\ \citenamefont
  {Yuan}(2020)}]{Liu2020Operational}%
  \BibitemOpen
  \bibfield  {author} {\bibinfo {author} {\bibfnamefont {Y.}~\bibnamefont
  {Liu}}\ and\ \bibinfo {author} {\bibfnamefont {X.}~\bibnamefont {Yuan}},\
  }\href {https://doi.org/10.1103/PhysRevResearch.2.012035} {\bibfield
  {journal} {\bibinfo  {journal} {Phys. Rev. Research}\ }\textbf {\bibinfo
  {volume} {2}},\ \bibinfo {pages} {012035(R)} (\bibinfo {year}
  {2020})}\BibitemShut {NoStop}%
\bibitem [{\citenamefont {Yuan}\ \emph {et~al.}(2020)\citenamefont {Yuan},
  \citenamefont {Zeng}, \citenamefont {Gao},\ and\ \citenamefont
  {Zhao}}]{Yuan2020Oneshot}%
  \BibitemOpen
  \bibfield  {author} {\bibinfo {author} {\bibfnamefont {X.}~\bibnamefont
  {Yuan}}, \bibinfo {author} {\bibfnamefont {P.}~\bibnamefont {Zeng}}, \bibinfo
  {author} {\bibfnamefont {M.}~\bibnamefont {Gao}},\ and\ \bibinfo {author}
  {\bibfnamefont {Q.}~\bibnamefont {Zhao}},\ }\Eprint
  {https://arxiv.org/abs/2012.02781} {arXiv:2012.02781}  (\bibinfo {year}
  {2020})\BibitemShut {NoStop}%
\bibitem [{\citenamefont {Theurer}\ \emph {et~al.}(2020)\citenamefont
  {Theurer}, \citenamefont {Satyajit},\ and\ \citenamefont
  {Plenio}}]{Theurer2020Quantifying}%
  \BibitemOpen
  \bibfield  {author} {\bibinfo {author} {\bibfnamefont {T.}~\bibnamefont
  {Theurer}}, \bibinfo {author} {\bibfnamefont {S.}~\bibnamefont {Satyajit}},\
  and\ \bibinfo {author} {\bibfnamefont {M.~B.}\ \bibnamefont {Plenio}},\
  }\href {https://doi.org/10.1103/PhysRevLett.125.130401} {\bibfield  {journal}
  {\bibinfo  {journal} {Phys. Rev. Lett.}\ }\textbf {\bibinfo {volume} {125}},\
  \bibinfo {pages} {130401} (\bibinfo {year} {2020})}\BibitemShut {NoStop}%
\bibitem [{\citenamefont {Regula}\ and\ \citenamefont
  {Takagi}(2021)}]{Regula2021Fundamental}%
  \BibitemOpen
  \bibfield  {author} {\bibinfo {author} {\bibfnamefont {B.}~\bibnamefont
  {Regula}}\ and\ \bibinfo {author} {\bibfnamefont {R.}~\bibnamefont
  {Takagi}},\ }\href {https://doi.org/10.1038/s41467-021-24699-0} {\bibfield
  {journal} {\bibinfo  {journal} {Nat. Commun.}\ }\textbf {\bibinfo {volume}
  {12}},\ \bibinfo {pages} {4411} (\bibinfo {year} {2021})}\BibitemShut
  {NoStop}%
\bibitem [{\citenamefont {Chen}\ and\ \citenamefont
  {Chitambar}(2020)}]{Chen2020Entanglementbreaking}%
  \BibitemOpen
  \bibfield  {author} {\bibinfo {author} {\bibfnamefont {S.}~\bibnamefont
  {Chen}}\ and\ \bibinfo {author} {\bibfnamefont {E.}~\bibnamefont
  {Chitambar}},\ }\href {https://doi.org/10.22331/q-2020-07-16-299} {\bibfield
  {journal} {\bibinfo  {journal} {Quantum}\ }\textbf {\bibinfo {volume} {4}},\
  \bibinfo {pages} {299} (\bibinfo {year} {2020})}\BibitemShut {NoStop}%
\bibitem [{\citenamefont {Kristj{\'a}nsson}\ \emph {et~al.}(2020)\citenamefont
  {Kristj{\'a}nsson}, \citenamefont {Chiribella}, \citenamefont {Salek},
  \citenamefont {Ebler},\ and\ \citenamefont
  {Wilson}}]{Kristjansson2020Resource}%
  \BibitemOpen
  \bibfield  {author} {\bibinfo {author} {\bibfnamefont {H.}~\bibnamefont
  {Kristj{\'a}nsson}}, \bibinfo {author} {\bibfnamefont {G.}~\bibnamefont
  {Chiribella}}, \bibinfo {author} {\bibfnamefont {S.}~\bibnamefont {Salek}},
  \bibinfo {author} {\bibfnamefont {D.}~\bibnamefont {Ebler}},\ and\ \bibinfo
  {author} {\bibfnamefont {M.}~\bibnamefont {Wilson}},\ }\href
  {https://doi.org/10.1088/1367-2630/ab8ef7} {\bibfield  {journal} {\bibinfo
  {journal} {New J. Phys.}\ }\textbf {\bibinfo {volume} {22}},\ \bibinfo
  {pages} {073014} (\bibinfo {year} {2020})}\BibitemShut {NoStop}%
\bibitem [{\citenamefont {Hsieh}(2021)}]{Hsieh2021Communication}%
  \BibitemOpen
  \bibfield  {author} {\bibinfo {author} {\bibfnamefont {C.-Y.}\ \bibnamefont
  {Hsieh}},\ }\href {https://doi.org/10.1103/PRXQuantum.2.020318} {\bibfield
  {journal} {\bibinfo  {journal} {PRX Quantum}\ }\textbf {\bibinfo {volume}
  {2}},\ \bibinfo {pages} {020318} (\bibinfo {year} {2021})}\BibitemShut
  {NoStop}%
\bibitem [{\citenamefont {Gour}(2021)}]{Gour2021Uniqueness}%
  \BibitemOpen
  \bibfield  {author} {\bibinfo {author} {\bibfnamefont {G.}~\bibnamefont
  {Gour}},\ }\href {https://doi.org/10.1103/PRXQuantum.2.010313} {\bibfield
  {journal} {\bibinfo  {journal} {PRX Quantum}\ }\textbf {\bibinfo {volume}
  {2}},\ \bibinfo {pages} {010313} (\bibinfo {year} {2021})}\BibitemShut
  {NoStop}%
\bibitem [{\citenamefont {Altenkirch}\ and\ \citenamefont
  {Grattage}(2005)}]{Altenkirch2005Functional}%
  \BibitemOpen
  \bibfield  {author} {\bibinfo {author} {\bibfnamefont {T.}~\bibnamefont
  {Altenkirch}}\ and\ \bibinfo {author} {\bibfnamefont {J.}~\bibnamefont
  {Grattage}},\ }\href {https://doi.org/10.1109/LICS.2005.1} {\bibfield
  {journal} {\bibinfo  {journal} {20th Annual IEEE Symposium on Logic in
  Computer Science (LICS' 05)}\ ,\ \bibinfo {pages} {249}} (\bibinfo {year}
  {2005})}\BibitemShut {NoStop}%
\bibitem [{\citenamefont {Ying}(2016)}]{Ying2016Foundations}%
  \BibitemOpen
  \bibfield  {author} {\bibinfo {author} {\bibfnamefont {M.}~\bibnamefont
  {Ying}},\ }\href@noop {} {\emph {\bibinfo {title} {Foundations of {{Quantum
  Programming}}}}}\ (\bibinfo  {publisher} {{Morgan Kaufmann}},\ \bibinfo
  {year} {2016})\BibitemShut {NoStop}%
\bibitem [{\citenamefont {Chiribella}\ \emph
  {et~al.}(2008{\natexlab{b}})\citenamefont {Chiribella}, \citenamefont
  {D'Ariano},\ and\ \citenamefont {Perinotti}}]{Chiribella2008Transforming}%
  \BibitemOpen
  \bibfield  {author} {\bibinfo {author} {\bibfnamefont {G.}~\bibnamefont
  {Chiribella}}, \bibinfo {author} {\bibfnamefont {G.~M.}\ \bibnamefont
  {D'Ariano}},\ and\ \bibinfo {author} {\bibfnamefont {P.}~\bibnamefont
  {Perinotti}},\ }\href {https://doi.org/10.1209/0295-5075/83/30004} {\bibfield
   {journal} {\bibinfo  {journal} {EPL (Europhysics Letters)}\ }\textbf
  {\bibinfo {volume} {83}},\ \bibinfo {pages} {30004} (\bibinfo {year}
  {2008}{\natexlab{b}})}\BibitemShut {NoStop}%
\bibitem [{\citenamefont {Chiribella}\ \emph {et~al.}(2009)\citenamefont
  {Chiribella}, \citenamefont {D'Ariano},\ and\ \citenamefont
  {Perinotti}}]{Chiribella2009Theoretical}%
  \BibitemOpen
  \bibfield  {author} {\bibinfo {author} {\bibfnamefont {G.}~\bibnamefont
  {Chiribella}}, \bibinfo {author} {\bibfnamefont {G.~M.}\ \bibnamefont
  {D'Ariano}},\ and\ \bibinfo {author} {\bibfnamefont {P.}~\bibnamefont
  {Perinotti}},\ }\href {https://doi.org/10.1103/PhysRevA.80.022339} {\bibfield
   {journal} {\bibinfo  {journal} {Phys. Rev. A}\ }\textbf {\bibinfo {volume}
  {80}},\ \bibinfo {pages} {022339} (\bibinfo {year} {2009})}\BibitemShut
  {NoStop}%
\bibitem [{\citenamefont {Kretschmann}\ and\ \citenamefont
  {Werner}(2005)}]{Kretschmann2005Quantum}%
  \BibitemOpen
  \bibfield  {author} {\bibinfo {author} {\bibfnamefont {D.}~\bibnamefont
  {Kretschmann}}\ and\ \bibinfo {author} {\bibfnamefont {R.~F.}\ \bibnamefont
  {Werner}},\ }\href {https://doi.org/10.1103/PhysRevA.72.062323} {\bibfield
  {journal} {\bibinfo  {journal} {Phys. Rev. A}\ }\textbf {\bibinfo {volume}
  {72}},\ \bibinfo {pages} {062323} (\bibinfo {year} {2005})}\BibitemShut
  {NoStop}%
\bibitem [{\citenamefont {Gutoski}\ and\ \citenamefont
  {Watrous}(2007)}]{gutoski2007toward}%
  \BibitemOpen
  \bibfield  {author} {\bibinfo {author} {\bibfnamefont {G.}~\bibnamefont
  {Gutoski}}\ and\ \bibinfo {author} {\bibfnamefont {J.}~\bibnamefont
  {Watrous}},\ }in\ \href {https://doi.org/10.1145/1250790.1250873} {\emph
  {\bibinfo {booktitle} {Proceedings of the thirty-ninth annual ACM symposium
  on Theory of computing}}}\ (\bibinfo {year} {2007})\ pp.\ \bibinfo {pages}
  {565--574}\BibitemShut {NoStop}%
\bibitem [{\citenamefont {Harrow}\ \emph {et~al.}(2009)\citenamefont {Harrow},
  \citenamefont {Hassidim},\ and\ \citenamefont {Lloyd}}]{Harrow2009Quantum}%
  \BibitemOpen
  \bibfield  {author} {\bibinfo {author} {\bibfnamefont {A.~W.}\ \bibnamefont
  {Harrow}}, \bibinfo {author} {\bibfnamefont {A.}~\bibnamefont {Hassidim}},\
  and\ \bibinfo {author} {\bibfnamefont {S.}~\bibnamefont {Lloyd}},\ }\href
  {https://doi.org/10.1103/PhysRevLett.103.150502} {\bibfield  {journal}
  {\bibinfo  {journal} {Phys. Rev. Lett.}\ }\textbf {\bibinfo {volume} {103}},\
  \bibinfo {pages} {150502} (\bibinfo {year} {2009})}\BibitemShut {NoStop}%
\bibitem [{\citenamefont {Gottesman}(2000)}]{Gottesman2000Theory}%
  \BibitemOpen
  \bibfield  {author} {\bibinfo {author} {\bibfnamefont {D.}~\bibnamefont
  {Gottesman}},\ }\href {https://doi.org/10.1103/PhysRevA.61.042311} {\bibfield
   {journal} {\bibinfo  {journal} {Phys. Rev. A}\ }\textbf {\bibinfo {volume}
  {61}},\ \bibinfo {pages} {042311} (\bibinfo {year} {2000})}\BibitemShut
  {NoStop}%
\bibitem [{\citenamefont {Wilde}(2013)}]{Wilde2017Classical}%
  \BibitemOpen
  \bibfield  {author} {\bibinfo {author} {\bibfnamefont {M.~M.}\ \bibnamefont
  {Wilde}},\ }\href {https://doi.org/10.1017/CBO9781139525343} {\emph {\bibinfo
  {title} {Quantum information theory}}}\ (\bibinfo  {publisher} {Cambridge
  University Press},\ \bibinfo {year} {2013})\BibitemShut {NoStop}%
\bibitem [{\citenamefont {Bennett}(1973)}]{Bennett1973Logical}%
  \BibitemOpen
  \bibfield  {author} {\bibinfo {author} {\bibfnamefont {C.~H.}\ \bibnamefont
  {Bennett}},\ }\href {https://doi.org/10.1147/rd.176.0525} {\bibfield
  {journal} {\bibinfo  {journal} {IBM Journal of Research and Development}\
  }\textbf {\bibinfo {volume} {17}},\ \bibinfo {pages} {525} (\bibinfo {year}
  {1973})}\BibitemShut {NoStop}%
\bibitem [{\citenamefont {Aaronson}\ \emph {et~al.}(2015)\citenamefont
  {Aaronson}, \citenamefont {Grier},\ and\ \citenamefont
  {Schaeffer}}]{Aaronson2015Classification}%
  \BibitemOpen
  \bibfield  {author} {\bibinfo {author} {\bibfnamefont {S.}~\bibnamefont
  {Aaronson}}, \bibinfo {author} {\bibfnamefont {D.}~\bibnamefont {Grier}},\
  and\ \bibinfo {author} {\bibfnamefont {L.}~\bibnamefont {Schaeffer}},\
  }\Eprint {https://arxiv.org/abs/1504.05155} {arXiv:1504.05155}  (\bibinfo
  {year} {2015})\BibitemShut {NoStop}%
\bibitem [{\citenamefont {Horodecki}\ \emph {et~al.}(2003)\citenamefont
  {Horodecki}, \citenamefont {Shor},\ and\ \citenamefont
  {Ruskai}}]{horodecki2003entanglement}%
  \BibitemOpen
  \bibfield  {author} {\bibinfo {author} {\bibfnamefont {M.}~\bibnamefont
  {Horodecki}}, \bibinfo {author} {\bibfnamefont {P.~W.}\ \bibnamefont
  {Shor}},\ and\ \bibinfo {author} {\bibfnamefont {M.~B.}\ \bibnamefont
  {Ruskai}},\ }\href {https://doi.org/10.1142/S0129055X03001709} {\bibfield
  {journal} {\bibinfo  {journal} {Rev. Math. Phys.}\ }\textbf {\bibinfo
  {volume} {15}},\ \bibinfo {pages} {629} (\bibinfo {year} {2003})}\BibitemShut
  {NoStop}%
\bibitem [{\citenamefont {Mohseni}\ \emph {et~al.}(2008)\citenamefont
  {Mohseni}, \citenamefont {Rezakhani},\ and\ \citenamefont
  {Lidar}}]{Mohseni2008Quantum}%
  \BibitemOpen
  \bibfield  {author} {\bibinfo {author} {\bibfnamefont {M.}~\bibnamefont
  {Mohseni}}, \bibinfo {author} {\bibfnamefont {A.~T.}\ \bibnamefont
  {Rezakhani}},\ and\ \bibinfo {author} {\bibfnamefont {D.~A.}\ \bibnamefont
  {Lidar}},\ }\href {https://doi.org/10.1103/PhysRevA.77.032322} {\bibfield
  {journal} {\bibinfo  {journal} {Phys. Rev. A}\ }\textbf {\bibinfo {volume}
  {77}},\ \bibinfo {pages} {032322} (\bibinfo {year} {2008})}\BibitemShut
  {NoStop}%
\bibitem [{\citenamefont {Gottesman}\ and\ \citenamefont
  {Chuang}(1999)}]{Gottesman1999Demonstrating}%
  \BibitemOpen
  \bibfield  {author} {\bibinfo {author} {\bibfnamefont {D.}~\bibnamefont
  {Gottesman}}\ and\ \bibinfo {author} {\bibfnamefont {I.~L.}\ \bibnamefont
  {Chuang}},\ }\href {https://doi.org/10.1038/46503} {\bibfield  {journal}
  {\bibinfo  {journal} {Nature}\ }\textbf {\bibinfo {volume} {402}},\ \bibinfo
  {pages} {390} (\bibinfo {year} {1999})}\BibitemShut {NoStop}%
\bibitem [{\citenamefont {Ishizaka}\ and\ \citenamefont
  {Hiroshima}(2008)}]{Ishizaka2008Asymptotica}%
  \BibitemOpen
  \bibfield  {author} {\bibinfo {author} {\bibfnamefont {S.}~\bibnamefont
  {Ishizaka}}\ and\ \bibinfo {author} {\bibfnamefont {T.}~\bibnamefont
  {Hiroshima}},\ }\href {https://doi.org/10.1103/PhysRevLett.101.240501}
  {\bibfield  {journal} {\bibinfo  {journal} {Phys. Rev. Lett.}\ }\textbf
  {\bibinfo {volume} {101}},\ \bibinfo {pages} {240501} (\bibinfo {year}
  {2008})}\BibitemShut {NoStop}%
\bibitem [{\citenamefont {Studzi{\'n}ski}\ \emph {et~al.}(2017)\citenamefont
  {Studzi{\'n}ski}, \citenamefont {Strelchuk}, \citenamefont {Mozrzymas},\ and\
  \citenamefont {Horodecki}}]{Studzinski2017Portbased}%
  \BibitemOpen
  \bibfield  {author} {\bibinfo {author} {\bibfnamefont {M.}~\bibnamefont
  {Studzi{\'n}ski}}, \bibinfo {author} {\bibfnamefont {S.}~\bibnamefont
  {Strelchuk}}, \bibinfo {author} {\bibfnamefont {M.}~\bibnamefont
  {Mozrzymas}},\ and\ \bibinfo {author} {\bibfnamefont {M.}~\bibnamefont
  {Horodecki}},\ }\href {https://doi.org/10.1038/s41598-017-10051-4} {\bibfield
   {journal} {\bibinfo  {journal} {Sci. Rep.}\ }\textbf {\bibinfo {volume}
  {7}},\ \bibinfo {pages} {10871} (\bibinfo {year} {2017})}\BibitemShut
  {NoStop}%
\bibitem [{\citenamefont {Gyongyosi}\ and\ \citenamefont
  {Imre}(2020)}]{Gyongyosi2020Circuit}%
  \BibitemOpen
  \bibfield  {author} {\bibinfo {author} {\bibfnamefont {L.}~\bibnamefont
  {Gyongyosi}}\ and\ \bibinfo {author} {\bibfnamefont {S.}~\bibnamefont
  {Imre}},\ }\href {https://doi.org/10.1038/s41598-020-67014-5} {\bibfield
  {journal} {\bibinfo  {journal} {Sci. Rep.}\ }\textbf {\bibinfo {volume}
  {10}},\ \bibinfo {pages} {11229} (\bibinfo {year} {2020})}\BibitemShut
  {NoStop}%
\bibitem [{\citenamefont {Oreshkov}\ \emph {et~al.}(2012)\citenamefont
  {Oreshkov}, \citenamefont {Costa},\ and\ \citenamefont
  {Brukner}}]{Oreshkov2012Quantum}%
  \BibitemOpen
  \bibfield  {author} {\bibinfo {author} {\bibfnamefont {O.}~\bibnamefont
  {Oreshkov}}, \bibinfo {author} {\bibfnamefont {F.}~\bibnamefont {Costa}},\
  and\ \bibinfo {author} {\bibfnamefont {{\v C}.}~\bibnamefont {Brukner}},\
  }\href {https://doi.org/10.1038/ncomms2076} {\bibfield  {journal} {\bibinfo
  {journal} {Nat. Commun.}\ }\textbf {\bibinfo {volume} {3}},\ \bibinfo {pages}
  {1092} (\bibinfo {year} {2012})}\BibitemShut {NoStop}%
\bibitem [{\citenamefont {Chiribella}\ \emph {et~al.}(2013)\citenamefont
  {Chiribella}, \citenamefont {D'Ariano}, \citenamefont {Perinotti},\ and\
  \citenamefont {Valiron}}]{Chiribella2013Quantum}%
  \BibitemOpen
  \bibfield  {author} {\bibinfo {author} {\bibfnamefont {G.}~\bibnamefont
  {Chiribella}}, \bibinfo {author} {\bibfnamefont {G.~M.}\ \bibnamefont
  {D'Ariano}}, \bibinfo {author} {\bibfnamefont {P.}~\bibnamefont
  {Perinotti}},\ and\ \bibinfo {author} {\bibfnamefont {B.}~\bibnamefont
  {Valiron}},\ }\href {https://doi.org/10.1103/PhysRevA.88.022318} {\bibfield
  {journal} {\bibinfo  {journal} {Phys. Rev. A}\ }\textbf {\bibinfo {volume}
  {88}},\ \bibinfo {pages} {022318} (\bibinfo {year} {2013})}\BibitemShut
  {NoStop}%
\bibitem [{\citenamefont {Ara{\'u}jo}\ \emph {et~al.}(2015)\citenamefont
  {Ara{\'u}jo}, \citenamefont {Branciard}, \citenamefont {Costa}, \citenamefont
  {Feix}, \citenamefont {Giarmatzi},\ and\ \citenamefont
  {Brukner}}]{Araujo2015Witnessing}%
  \BibitemOpen
  \bibfield  {author} {\bibinfo {author} {\bibfnamefont {M.}~\bibnamefont
  {Ara{\'u}jo}}, \bibinfo {author} {\bibfnamefont {C.}~\bibnamefont
  {Branciard}}, \bibinfo {author} {\bibfnamefont {F.}~\bibnamefont {Costa}},
  \bibinfo {author} {\bibfnamefont {A.}~\bibnamefont {Feix}}, \bibinfo {author}
  {\bibfnamefont {C.}~\bibnamefont {Giarmatzi}},\ and\ \bibinfo {author}
  {\bibfnamefont {{\v C}.}~\bibnamefont {Brukner}},\ }\href
  {https://doi.org/10.1088/1367-2630/17/10/102001} {\bibfield  {journal}
  {\bibinfo  {journal} {New J. Phys.}\ }\textbf {\bibinfo {volume} {17}},\
  \bibinfo {pages} {102001} (\bibinfo {year} {2015})}\BibitemShut {NoStop}%
\bibitem [{\citenamefont {Wechs}\ \emph {et~al.}(2019)\citenamefont {Wechs},
  \citenamefont {Abbott},\ and\ \citenamefont
  {Branciard}}]{Wechs2019Definition}%
  \BibitemOpen
  \bibfield  {author} {\bibinfo {author} {\bibfnamefont {J.}~\bibnamefont
  {Wechs}}, \bibinfo {author} {\bibfnamefont {A.~A.}\ \bibnamefont {Abbott}},\
  and\ \bibinfo {author} {\bibfnamefont {C.}~\bibnamefont {Branciard}},\ }\href
  {https://doi.org/10.1088/1367-2630/aaf352} {\bibfield  {journal} {\bibinfo
  {journal} {New J. Phys.}\ }\textbf {\bibinfo {volume} {21}},\ \bibinfo
  {pages} {013027} (\bibinfo {year} {2019})}\BibitemShut {NoStop}%
\bibitem [{\citenamefont {Bisio}\ and\ \citenamefont
  {Perinotti}(2019)}]{Bisio2019Theoretical}%
  \BibitemOpen
  \bibfield  {author} {\bibinfo {author} {\bibfnamefont {A.}~\bibnamefont
  {Bisio}}\ and\ \bibinfo {author} {\bibfnamefont {P.}~\bibnamefont
  {Perinotti}},\ }\href {https://doi.org/10.1098/rspa.2018.0706} {\bibfield
  {journal} {\bibinfo  {journal} {Proceedings of the Royal Society A:
  Mathematical, Physical and Engineering Sciences}\ }\textbf {\bibinfo {volume}
  {475}},\ \bibinfo {pages} {20180706} (\bibinfo {year} {2019})}\BibitemShut
  {NoStop}%
\bibitem [{\citenamefont {Yokojima}\ \emph {et~al.}(2021)\citenamefont
  {Yokojima}, \citenamefont {Quintino}, \citenamefont {Soeda},\ and\
  \citenamefont {Murao}}]{Yokojima2021Consequences}%
  \BibitemOpen
  \bibfield  {author} {\bibinfo {author} {\bibfnamefont {W.}~\bibnamefont
  {Yokojima}}, \bibinfo {author} {\bibfnamefont {M.~T.}\ \bibnamefont
  {Quintino}}, \bibinfo {author} {\bibfnamefont {A.}~\bibnamefont {Soeda}},\
  and\ \bibinfo {author} {\bibfnamefont {M.}~\bibnamefont {Murao}},\ }\href
  {https://doi.org/10.22331/q-2021-04-26-441} {\bibfield  {journal} {\bibinfo
  {journal} {Quantum}\ }\textbf {\bibinfo {volume} {5}},\ \bibinfo {pages}
  {441} (\bibinfo {year} {2021})}\BibitemShut {NoStop}%
\bibitem [{\citenamefont {Vanrietvelde}\ \emph {et~al.}(2021)\citenamefont
  {Vanrietvelde}, \citenamefont {Kristj{\'a}nsson},\ and\ \citenamefont
  {Barrett}}]{Vanrietvelde2021Routed}%
  \BibitemOpen
  \bibfield  {author} {\bibinfo {author} {\bibfnamefont {A.}~\bibnamefont
  {Vanrietvelde}}, \bibinfo {author} {\bibfnamefont {H.}~\bibnamefont
  {Kristj{\'a}nsson}},\ and\ \bibinfo {author} {\bibfnamefont {J.}~\bibnamefont
  {Barrett}},\ }\href {https://doi.org/10.22331/q-2021-07-13-503} {\bibfield
  {journal} {\bibinfo  {journal} {Quantum}\ }\textbf {\bibinfo {volume} {5}},\
  \bibinfo {pages} {503} (\bibinfo {year} {2021})}\BibitemShut {NoStop}%
\bibitem [{\citenamefont {Harrow}(2005)}]{harrow2005applications}%
  \BibitemOpen
  \bibfield  {author} {\bibinfo {author} {\bibfnamefont {A.~W.}\ \bibnamefont
  {Harrow}},\ }\href@noop {} {Ph.D. thesis},\ \bibinfo  {school} {Massachusetts
  Institute of Technology} (\bibinfo {year} {2005}),\ \Eprint
  {https://arxiv.org/abs/quant-ph/0512255} {arXiv:quant-ph/0512255}
  \BibitemShut {NoStop}%
\bibitem [{\citenamefont {Bacon}\ \emph {et~al.}(2006)\citenamefont {Bacon},
  \citenamefont {Chuang},\ and\ \citenamefont {Harrow}}]{bacon2006efficient}%
  \BibitemOpen
  \bibfield  {author} {\bibinfo {author} {\bibfnamefont {D.}~\bibnamefont
  {Bacon}}, \bibinfo {author} {\bibfnamefont {I.~L.}\ \bibnamefont {Chuang}},\
  and\ \bibinfo {author} {\bibfnamefont {A.~W.}\ \bibnamefont {Harrow}},\
  }\href {https://doi.org/10.1103/PhysRevLett.97.170502} {\bibfield  {journal}
  {\bibinfo  {journal} {Phys. Rev. Lett.}\ }\textbf {\bibinfo {volume} {97}},\
  \bibinfo {pages} {170502} (\bibinfo {year} {2006})}\BibitemShut {NoStop}%
\bibitem [{\citenamefont {Krovi}(2019)}]{Krovi2019Efficient}%
  \BibitemOpen
  \bibfield  {author} {\bibinfo {author} {\bibfnamefont {H.}~\bibnamefont
  {Krovi}},\ }\href {https://doi.org/10.22331/q-2019-02-14-122} {\bibfield
  {journal} {\bibinfo  {journal} {Quantum}\ }\textbf {\bibinfo {volume} {3}},\
  \bibinfo {pages} {122} (\bibinfo {year} {2019})}\BibitemShut {NoStop}%
\bibitem [{\citenamefont {Yang}\ \emph {et~al.}(2014)\citenamefont {Yang},
  \citenamefont {Chiribella},\ and\ \citenamefont
  {Adesso}}]{Yang2014Certifying}%
  \BibitemOpen
  \bibfield  {author} {\bibinfo {author} {\bibfnamefont {Y.}~\bibnamefont
  {Yang}}, \bibinfo {author} {\bibfnamefont {G.}~\bibnamefont {Chiribella}},\
  and\ \bibinfo {author} {\bibfnamefont {G.}~\bibnamefont {Adesso}},\ }\href
  {https://doi.org/10.1103/PhysRevA.90.042319} {\bibfield  {journal} {\bibinfo
  {journal} {Phys. Rev. A}\ }\textbf {\bibinfo {volume} {90}},\ \bibinfo
  {pages} {042319} (\bibinfo {year} {2014})}\BibitemShut {NoStop}%
\bibitem [{\citenamefont {Dong}\ \emph
  {et~al.}(2021{\natexlab{a}})\citenamefont {Dong}, \citenamefont {Quintino},
  \citenamefont {Soeda},\ and\ \citenamefont {Murao}}]{Dong2021SuccessorDraw}%
  \BibitemOpen
  \bibfield  {author} {\bibinfo {author} {\bibfnamefont {Q.}~\bibnamefont
  {Dong}}, \bibinfo {author} {\bibfnamefont {M.~T.}\ \bibnamefont {Quintino}},
  \bibinfo {author} {\bibfnamefont {A.}~\bibnamefont {Soeda}},\ and\ \bibinfo
  {author} {\bibfnamefont {M.}~\bibnamefont {Murao}},\ }\href
  {https://doi.org/10.1103/PhysRevLett.126.150504} {\bibfield  {journal}
  {\bibinfo  {journal} {Phys. Rev. Lett.}\ }\textbf {\bibinfo {volume} {126}},\
  \bibinfo {pages} {150504} (\bibinfo {year} {2021}{\natexlab{a}})}\BibitemShut
  {NoStop}%
\bibitem [{\citenamefont {MATLAB}(2021)}]{matlab}%
  \BibitemOpen
  \bibfield  {author} {\bibinfo {author} {\bibnamefont {MATLAB}},\ }\href@noop
  {} {\emph {\bibinfo {title} {version 9.11.0 (R2021b)}}}\ (\bibinfo
  {publisher} {The MathWorks Inc.},\ \bibinfo {address} {Natick,
  Massachusetts},\ \bibinfo {year} {2021})\BibitemShut {NoStop}%
\bibitem [{uni()}]{unitary_inversion_code}%
  \BibitemOpen
  \href@noop {} {}\bibinfo {howpublished}
  {\url{https://github.com/mtcq/unitary_inverse}}\BibitemShut {NoStop}%
\bibitem [{\citenamefont {Grant}\ and\ \citenamefont {Boyd}(2020)}]{cvx}%
  \BibitemOpen
  \bibfield  {author} {\bibinfo {author} {\bibfnamefont {M.}~\bibnamefont
  {Grant}}\ and\ \bibinfo {author} {\bibfnamefont {S.}~\bibnamefont {Boyd}},\
  }\href@noop {} {\bibinfo {title} {{CVX}: Matlab software for disciplined
  convex programming, version 2.2}},\ \bibinfo {howpublished}
  {\url{http://cvxr.com/cvx}} (\bibinfo {year} {2020})\BibitemShut {NoStop}%
\bibitem [{\citenamefont {Grant}\ and\ \citenamefont {Boyd}(2008)}]{gb08}%
  \BibitemOpen
  \bibfield  {author} {\bibinfo {author} {\bibfnamefont {M.}~\bibnamefont
  {Grant}}\ and\ \bibinfo {author} {\bibfnamefont {S.}~\bibnamefont {Boyd}},\
  }in\ \href@noop {} {\emph {\bibinfo {booktitle} {Recent Advances in Learning
  and Control}}},\ \bibinfo {series and number} {Lecture Notes in Control and
  Information Sciences},\ \bibinfo {editor} {edited by\ \bibinfo {editor}
  {\bibfnamefont {V.}~\bibnamefont {Blondel}}, \bibinfo {editor} {\bibfnamefont
  {S.}~\bibnamefont {Boyd}},\ and\ \bibinfo {editor} {\bibfnamefont
  {H.}~\bibnamefont {Kimura}}}\ (\bibinfo  {publisher} {Springer-Verlag
  Limited},\ \bibinfo {year} {2008})\ pp.\ \bibinfo {pages} {95--110},\
  \bibinfo {note} {\url{http://stanford.edu/~boyd/graph_dcp.html}}\BibitemShut
  {NoStop}%
\bibitem [{yal()}]{yalmip}%
  \BibitemOpen
  \href@noop {} {}\bibinfo {howpublished}
  {\url{https://yalmip.github.io/download/}}\BibitemShut {NoStop}%
\bibitem [{\citenamefont {L{\"{o}}fberg}(2004)}]{Lofberg2004}%
  \BibitemOpen
  \bibfield  {author} {\bibinfo {author} {\bibfnamefont {J.}~\bibnamefont
  {L{\"{o}}fberg}},\ }in\ \href {https://doi.org/10.1109/CACSD.2004.1393890}
  {\emph {\bibinfo {booktitle} {In Proceedings of the CACSD Conference}}}\
  (\bibinfo {address} {Taipei, Taiwan},\ \bibinfo {year} {2004})\BibitemShut
  {NoStop}%
\bibitem [{sdp()}]{sdpt3}%
  \BibitemOpen
  \href@noop {} {}\bibinfo {howpublished}
  {\url{https://blog.nus.edu.sg/mattohkc/softwares/sdpt3/}}\BibitemShut
  {NoStop}%
\bibitem [{\citenamefont {Toh}\ \emph {et~al.}(1999)\citenamefont {Toh},
  \citenamefont {Todd},\ and\ \citenamefont
  {T{\"u}t{\"u}nc{\"u}}}]{toh1999sdpt3}%
  \BibitemOpen
  \bibfield  {author} {\bibinfo {author} {\bibfnamefont {K.-C.}\ \bibnamefont
  {Toh}}, \bibinfo {author} {\bibfnamefont {M.~J.}\ \bibnamefont {Todd}},\ and\
  \bibinfo {author} {\bibfnamefont {R.~H.}\ \bibnamefont
  {T{\"u}t{\"u}nc{\"u}}},\ }\href {https://doi.org/10.1080/10556789908805762}
  {\bibfield  {journal} {\bibinfo  {journal} {Optimization methods and
  software}\ }\textbf {\bibinfo {volume} {11}},\ \bibinfo {pages} {545}
  (\bibinfo {year} {1999})}\BibitemShut {NoStop}%
\bibitem [{\citenamefont {T{\"u}t{\"u}nc{\"u}}\ \emph
  {et~al.}(2003)\citenamefont {T{\"u}t{\"u}nc{\"u}}, \citenamefont {Toh},\ and\
  \citenamefont {Todd}}]{tutuncu2003solving}%
  \BibitemOpen
  \bibfield  {author} {\bibinfo {author} {\bibfnamefont {R.~H.}\ \bibnamefont
  {T{\"u}t{\"u}nc{\"u}}}, \bibinfo {author} {\bibfnamefont {K.-C.}\
  \bibnamefont {Toh}},\ and\ \bibinfo {author} {\bibfnamefont {M.~J.}\
  \bibnamefont {Todd}},\ }\href {https://doi.org/10.1007/s10107-002-0347-5}
  {\bibfield  {journal} {\bibinfo  {journal} {Mathematical programming}\
  }\textbf {\bibinfo {volume} {95}},\ \bibinfo {pages} {189} (\bibinfo {year}
  {2003})}\BibitemShut {NoStop}%
\bibitem [{\citenamefont {Sturm}(1999)}]{sedumi}%
  \BibitemOpen
  \bibfield  {author} {\bibinfo {author} {\bibfnamefont {J.~F.}\ \bibnamefont
  {Sturm}},\ }\href {https://doi.org/10.1080/10556789908805766} {\bibfield
  {journal} {\bibinfo  {journal} {Optimization methods and software}\ }\textbf
  {\bibinfo {volume} {11}},\ \bibinfo {pages} {625} (\bibinfo {year}
  {1999})}\BibitemShut {NoStop}%
\bibitem [{\citenamefont {ApS}(2021)}]{mosek}%
  \BibitemOpen
  \bibfield  {author} {\bibinfo {author} {\bibfnamefont {M.}~\bibnamefont
  {ApS}},\ }\href {https://docs.mosek.com/latest/toolbox/index.html} {\emph
  {\bibinfo {title} {The MOSEK optimization toolbox for MATLAB manual. Version
  9.3.6.}}} (\bibinfo {year} {2021})\BibitemShut {NoStop}%
\bibitem [{\citenamefont {O'Donoghue}\ \emph {et~al.}(2019)\citenamefont
  {O'Donoghue}, \citenamefont {Chu}, \citenamefont {Parikh},\ and\
  \citenamefont {Boyd}}]{scs}%
  \BibitemOpen
  \bibfield  {author} {\bibinfo {author} {\bibfnamefont {B.}~\bibnamefont
  {O'Donoghue}}, \bibinfo {author} {\bibfnamefont {E.}~\bibnamefont {Chu}},
  \bibinfo {author} {\bibfnamefont {N.}~\bibnamefont {Parikh}},\ and\ \bibinfo
  {author} {\bibfnamefont {S.}~\bibnamefont {Boyd}},\ }\href@noop {} {\bibinfo
  {title} {{SCS}: Splitting conic solver, version 3.0.0}},\ \bibinfo
  {howpublished} {\url{https://github.com/cvxgrp/scs}} (\bibinfo {year}
  {2019})\BibitemShut {NoStop}%
\bibitem [{\citenamefont {Johnston}(2016)}]{qetlab}%
  \BibitemOpen
  \bibfield  {author} {\bibinfo {author} {\bibfnamefont {N.}~\bibnamefont
  {Johnston}},\ }\href {https://doi.org/10.5281/zenodo.44637} {\bibinfo {title}
  {{QETLAB}: A {MATLAB} toolbox for quantum entanglement, version 0.9}},\
  \bibinfo {howpublished} {\url{http://qetlab.com}} (\bibinfo {year}
  {2016})\BibitemShut {NoStop}%
\bibitem [{iso()}]{isometry_inversion_code}%
  \BibitemOpen
  \href@noop {} {}\bibinfo {howpublished}
  {\url{https://github.com/sy3104/isometry_inversion}}\BibitemShut {NoStop}%
\bibitem [{mit()}]{mit_license}%
  \BibitemOpen
  \href@noop {} {}\bibinfo {howpublished}
  {\url{https://opensource.org/licenses/MIT}}\BibitemShut {NoStop}%
\bibitem [{\citenamefont {Ara{\'u}jo}\ \emph {et~al.}(2017)\citenamefont
  {Ara{\'u}jo}, \citenamefont {Feix}, \citenamefont {Navascu{\'e}s},\ and\
  \citenamefont {Brukner}}]{Araujo2017Purification}%
  \BibitemOpen
  \bibfield  {author} {\bibinfo {author} {\bibfnamefont {M.}~\bibnamefont
  {Ara{\'u}jo}}, \bibinfo {author} {\bibfnamefont {A.}~\bibnamefont {Feix}},
  \bibinfo {author} {\bibfnamefont {M.}~\bibnamefont {Navascu{\'e}s}},\ and\
  \bibinfo {author} {\bibfnamefont {{\v C}.}~\bibnamefont {Brukner}},\ }\href
  {https://doi.org/10.22331/q-2017-04-26-10} {\bibfield  {journal} {\bibinfo
  {journal} {Quantum}\ }\textbf {\bibinfo {volume} {1}},\ \bibinfo {pages} {10}
  (\bibinfo {year} {2017})}\BibitemShut {NoStop}%
\bibitem [{\citenamefont {Iwahori}(1978)}]{Iwahori1978Representation}%
  \BibitemOpen
  \bibfield  {author} {\bibinfo {author} {\bibfnamefont {N.}~\bibnamefont
  {Iwahori}},\ }\href@noop {} {\emph {\bibinfo {title} {Representation
  {{Theory}} of {{Symmetric Group}} and {{General Linear Group}}: {{Irreducible
  Characters}}, {{Young Diagrams}} and {{Decomposition}} of {{Tensor
  Spaces}}}}}\ (\bibinfo  {publisher} {{Iwanami}},\ \bibinfo {year}
  {1978})\BibitemShut {NoStop}%
\bibitem [{\citenamefont {Sagan}(2001)}]{sagan2001symmetric}%
  \BibitemOpen
  \bibfield  {author} {\bibinfo {author} {\bibfnamefont {B.}~\bibnamefont
  {Sagan}},\ }\href@noop {} {\emph {\bibinfo {title} {The symmetric group:
  representations, combinatorial algorithms, and symmetric functions}}},\ Vol.\
  \bibinfo {volume} {203}\ (\bibinfo  {publisher} {Springer Science \& Business
  Media},\ \bibinfo {year} {2001})\BibitemShut {NoStop}%
\bibitem [{\citenamefont {Kobayashi}\ and\ \citenamefont
  {Oshima}(2005)}]{Kobayashi2005Lie}%
  \BibitemOpen
  \bibfield  {author} {\bibinfo {author} {\bibfnamefont {T.}~\bibnamefont
  {Kobayashi}}\ and\ \bibinfo {author} {\bibfnamefont {T.}~\bibnamefont
  {Oshima}},\ }\href@noop {} {\emph {\bibinfo {title} {{Lie Groups and
  Representation Theory}}}}\ (\bibinfo  {publisher} {{Iwanami}},\ \bibinfo
  {year} {2005})\BibitemShut {NoStop}%
\bibitem [{\citenamefont {Dong}\ \emph
  {et~al.}(2021{\natexlab{b}})\citenamefont {Dong}, \citenamefont {Quintino},
  \citenamefont {Soeda},\ and\ \citenamefont {Murao}}]{Dong2021Quantum}%
  \BibitemOpen
  \bibfield  {author} {\bibinfo {author} {\bibfnamefont {Q.}~\bibnamefont
  {Dong}}, \bibinfo {author} {\bibfnamefont {M.~T.}\ \bibnamefont {Quintino}},
  \bibinfo {author} {\bibfnamefont {A.}~\bibnamefont {Soeda}},\ and\ \bibinfo
  {author} {\bibfnamefont {M.}~\bibnamefont {Murao}},\ }\Eprint
  {https://arxiv.org/abs/2106.00034} {arXiv:2106.00034}  (\bibinfo {year}
  {2021}{\natexlab{b}})\BibitemShut {NoStop}%
\end{thebibliography}%

\end{document}